\providecommand{\U}[1]{\protect\rule{.1in}{.1in}}
\newtheorem{theorem}{Theorem}
\newtheorem{condition}[theorem]{Condition}
\newtheorem{definition}[theorem]{Definition}
\newtheorem{lemma}{Lemma}
\newtheorem{proposition}[theorem]{Proposition}
\newtheorem{remark}[theorem]{Remark}
\newenvironment{proof}[1][Proof]{\noindent\textbf{#1.} }{\ \rule{0.5em}{0.5em}}
\newcommand{\ket}[1]{| #1 \rangle}
\newcommand{\bra}[1]{\langle #1 |}
\def\U{\mathrm{U}}
\def\A{\mathcal{A}}
\def\B{\mathcal{B}}
\def\D{\mathcal{D}}
\def\E{\mathcal{E}}
\def\H{\mathcal{H}}
\def\U{\mathcal{U}}
\def\L{\mathcal{L}}
\def\M{\mathcal{M}}
\def\N{\mathcal{N}}
\def\P{\mathcal{P}}
\def\T{\mathcal{T}}
\def\S{\mathcal{S}}
\def\V{\mathcal{V}}
\def\W{\mathcal{W}}
\newcommand{\tr}{\operatorname{Tr}}
\numberwithin{equation}{section}
\begin{document}

\title{\textbf{Bounding the energy-constrained quantum and private capacities of phase-insensitive bosonic Gaussian channels}}

\author{Kunal Sharma\thanks{Hearne Institute for Theoretical Physics, Department of Physics and Astronomy, Louisiana State University,
	Baton Rouge, Louisiana 70803, USA}
\and 
Mark M. Wilde\footnotemark[1] \,
\thanks{Center for Computation and Technology, Louisiana State University, Baton
	Rouge, Louisiana 70803, USA}
\and Sushovit Adhikari\footnotemark[1]
\and Masahiro Takeoka
\thanks{National Institute of Information and Communications Technology,
	Koganei, Tokyo 184-8795, Japan}}

\date{\today}

\maketitle

\begin{abstract}
 We establish several upper bounds on the energy-constrained quantum and private capacities of all single-mode phase-insensitive bosonic Gaussian channels. 
 The first upper bound, which we call the "data-processing bound," is the simplest and is obtained by decomposing a phase-insensitive channel as a pure-loss channel followed by a quantum-limited amplifier channel. We prove that the data-processing bound can be at most 1.45 bits larger than a known lower bound on these capacities of the phase-insensitive Gaussian channel. We discuss another data-processing upper bound as well. Two other  upper bounds, which we call the ``$\varepsilon$-degradable bound'' and the ``$\varepsilon$-close-degradable bound,'' are established using the notion of approximate degradability along with energy constraints. We find a strong limitation on any potential superadditivity of the coherent information of any phase-insensitive Gaussian channel in the low-noise regime, as the data-processing bound is very near to a known lower bound in such cases. 
We also find improved achievable rates of private communication through bosonic thermal channels, by employing coding schemes that make use of displaced thermal states. 
We end by proving that an optimal Gaussian input state for the energy-constrained, generalized channel divergence of two particular Gaussian channels  is  the two-mode squeezed vacuum state that saturates the energy constraint. What remains open for several interesting channel divergences, such as the diamond norm or the R\'enyi channel divergence, is to determine whether, among all input states, a Gaussian state is optimal.
\end{abstract}

\setcounter{tocdepth}{1}

\tableofcontents

\section{Introduction}

One of the main aims of quantum information theory is to characterize the capacities of quantum communication channels \cite{H13book,MH06,W15book}. A quantum channel is a model for a communication link between two  parties. The properties of a quantum channel and its coupling to an environment govern the evolution of a quantum state that is sent through the channel. 

The quantum capacity $Q(\N)$ of a quantum channel $\N$ is the maximum rate at which quantum information (qubits) can be reliably transmitted from a sender to a receiver by using the channel  many times. The private capacity $P(\N)$ of a quantum channel $\N$ is defined to be the maximum rate at which a sender can reliably communicate classical messages to a receiver by using the channel  many times, such that the environment of the channel gets negligible information about the transmitted message. In general, the best known characterization of quantum or private capacity of a quantum channel is given by the optimization of regularized information quantities over an unbounded number of uses of the channel \cite{PhysRevA.55.1613,capacity2002shor,1050633,ieee2005dev}. Since these information quantities are additive for a special class of channels called degradable channels \cite{cmp2005dev,S08}, the capacities of these channels can be calculated without any regularization. However, for the channels that are not degradable, these information quantities can be superadditive \cite{DSS98,SS07,smith:170502,Supadd15,ES15}, and quantum capacities can be superactivated for some of these channels \cite{SY08,GSY11}. Hence, it is difficult to determine the quantum or private capacity of channels that are not degradable, and the natural way to characterize such channels is to bound these capacities from above and below. 

An important class of channels called phase-insensitive, bosonic Gaussian channels act as a good model for the transmission of light through optical fibers or free space (see, e.g., \cite{AS17} for a review). Within the past two decades, there have been advances in finding quantum and private capacities of bosonic channels. In particular, when there is no constraint on the  energy available at the transmitter, the quantum and private capacities of single-mode quantum-limited attenuator and amplifier channels were given in \cite{HW01,WPG07,Mark2012tradeoff,QW16,MH16}. However, the availability of an unbounded amount of energy at the transmitter is not practically feasible, and it is thus natural to place energy constraints on any communication protocol. Recently, a general theory of energy-constrained quantum and private communication has been developed in \cite{MH16}, by building on notions developed in the context of  other energy-constrained information-processing tasks \cite{H04}. For the particular case of bosonic Gaussian channels,
formulas for 
the energy-constrained quantum and private capacities of the single-mode pure-loss channel were conjectured in \cite{GSE08} and proven in \cite{Mark2012tradeoff,MH16}. Also, for a single-mode quantum-limited amplifier channel, the energy-constrained quantum and private capacities have been established in \cite{QW16,MH16}.

What remains a pressing open question in the theory of Gaussian quantum information \cite{AS17}
is to determine formulas for or bounds on the quantum and private capacities of non-degradable bosonic Gaussian
channels. Of particular interest are phase-insensitive bosonic Gaussian channels, which serve as models for several physical processes. In this article, we address this query by providing several bounds on the energy-constrained quantum and private capacities of all phase-insensitive Gaussian channels. 

To motivate the thermal channel model, consider that almost all communication systems are affected by thermal noise \cite{CC82}. Even though the pure-loss channel has relevance in free-space communication \cite{YS78,S09}, it represents an ideal situation in which the environment of the channel is prepared in a vacuum state. Instead, consideration of a thermal state with a fixed mean photon number $N_B$ as the state of the environment is more realistic, and such a channel is called a bosonic Gaussian thermal channel \cite{S09,RGRKVRHWE17}. Hence, quantum thermal channels model free-space communication with background thermal radiation affecting the input state in addition to transmission loss. Additionally, the dark counts of photon detectors can also be modeled as arising from thermal photons in the environment \cite{RGRKVRHWE17,S09}. In the context of private communication, a typical conservative model is to allow an eavesdropper access to the environment of a channel, and in particular, 
tampering by an eavesdropper
can be modeled as the excess noise realized by a thermal channel \cite{NH04,LDTG05}.

Interestingly, quantum amplifier channels model spontaneous parametric down-conversion in a nonlinear optical system \cite{CDGMS10}, along with the dynamical Casimir effect in superconducting circuits \cite{GM70}, the Unruh effect \cite{WU76}, and Hawking radiation \cite{SH72}. Moreover,  an additive-noise channel is ubiquitous in quantum optics due to the fact that the aggregation of many independent random disturbances will typically have a Gaussian distribution \cite{MH94}.   

\section{Summary of results}

Our main contribution is to establish several bounds on the energy-constrained quantum and private capacities of single-mode, phase-insensitive bosonic Gaussian channels.
We start by summarizing our upper bounds on the energy-constrained quantum capacity of thermal channels. A first upper bound is established by decomposing a thermal channel as a pure-loss channel followed by a quantum-limited amplifier channel \cite{dp2006,dp2012} and using a data-processing argument. We note that the same method was employed in \cite{smith2013}, in order to establish an upper bound on the classical capacity of the thermal channel (note that the general idea for the data-processing argument comes from the earlier work in \cite{WP07,SS08}). Throughout, we call this first upper bound the ``data-processing bound.'' We also prove that this upper bound can be at most 1.45 bits larger than a known lower bound \cite{HW01,Mark2012tradeoff} on the energy-constrained quantum and private capacity of a thermal channel. Moreover, the data-processing bound is very near to a known lower bound for the case of low thermal noise and both low and high transmissivity. 

We then prove that any phase-insensitive channel that is not entanglement-breaking \cite{HSR03} can be decomposed as the concatenation of a quantum-limited amplifier channel followed by a pure-loss channel. This theorem was independently proven in \cite{RMG18, NAJ18} (see also \cite{notesSWAT17}). It has been used to bound the unconstrained quantum capacity of a thermal channel in \cite{RMG18}, via a data-processing argument. We use this technique to prove an upper bound on the energy-constrained quantum and private capacities of a thermal channel. This technique has also been used most recently in \cite{NAJ18} in similar contexts. In particular, we find that this upper bound is very near to a known lower bound for the case of low thermal noise and both low and high transmissivity. Furthermore, this alternate data-processing upper bound and the data-processing bound mentioned in the previous paragraph are incomparable, as one is better than the other for certain parameter regimes. 

Recently, the notion of approximate degradability of quantum channels was developed in \cite{sutter2017approx}, and upper bounds on the quantum and private capacities of approximately degradable channels were established for quantum channels with
finite-dimensional input and output systems. In our paper, we establish general upper bounds on the energy-constrained quantum and private capacities of approximately degradable channels for infinite-dimensional systems. These general upper bounds can be applied to any quantum channel that is approximately degradable with energy constraints on the input and output states of the channels. In particular, we apply these general upper bounds to  bosonic Gaussian thermal and amplifier channels. 

Our second upper bound is based on the notion of $\varepsilon$-degradability of thermal channels, and we call this bound the ``$\varepsilon$-degradable bound.'' In this method, we first construct a degrading channel, such that a complementary channel of the thermal channel is close in diamond distance \cite{K97} to the serial concatenation of the thermal channel followed by this degrading channel. In general, it seems to be computationally hard to determine the diamond distance between two quantum channels if the optimization is over input density operators acting on an infinite-dimensional Hilbert space. However, in our setup, we address this difficulty by constructing a simulating channel, which simulates the serial concatenation of the thermal channel and the aforementioned degrading channel. Using this technique, an upper bound on the diamond distance reduces to the calculation of the quantum fidelity between the environmental states of the thermal channel and the simulating channel.
Based on the fact that, for certain parameter regimes, the resulting capacity upper bound is better than all other upper bounds reported here,
we believe that our aforementioned choice of a degrading channel is a good choice. 

A third upper bound on the energy-constrained quantum capacity of thermal channels is established using the concept of $\varepsilon$-close-degradability of a thermal channel, and we call this bound the ``$\varepsilon$-close-degradable bound.'' In particular, we show that a low-noise thermal channel is $\varepsilon$-close degradable, given that it is close in diamond distance to a pure-loss channel. We find that the $\varepsilon$-close-degradable bound is very near to the data-processing bound for the case of low thermal noise.

 We compare these different upper bounds with a known lower bound on the quantum capacity of a  thermal channel \cite{HW01,Mark2012tradeoff}. We find that the data-processing bound is very near to a known capacity lower bound for low thermal noise and for both medium and high transmissivity. Moreover, we show that the maximum difference between the data-processing bound and a known lower bound never exceeds $1/\ln2 \approx 1.45$ bits for all possible values of parameters, and this maximum difference is attained in the limit of infinite input mean photon number. This result places a strong limitation on any possible  superadditivity of coherent information of the thermal channel. We note here that this kind of result was suggested without proof by the heuristic developments in \cite{SS13}.
 Next, we plot these  upper bounds as well as a known lower bound versus input mean photon number for different values of the channel transmissivity $\eta$ and thermal noise~$N_B$. In particular, we find that the $\varepsilon$-close-degradable bound is very near to the data-processing bound for low thermal noise and for both medium and high transmissivity. Moreover, all of these upper bounds are very near to a known lower bound for low thermal noise and high transmissivity. We also examine different parameter regimes where the $\varepsilon$-close-degradable bound is tighter than the $\varepsilon$-degradable bound and vice versa. 
In particular, we find that the $\varepsilon$-degradable bound is tighter than the $\varepsilon$-close degradable bound for the case of high thermal noise. 

We find an interesting parameter regime where the $\varepsilon$-degradable bound is tighter than all other upper bounds, as it becomes closest to a known lower bound for the case of high noise and high input mean photon number. However, for the same parameter regime, if the input mean photon number is low, then the data-processing bound is tighter than the $\varepsilon$-degradable bound. This suggests that the upper bounds based on the notion of approximate degradability are good for the case of high input mean photon number. We suspect that these bounds could be further improved for the case of low input mean photon number if it were possible to compute or tightly bound the energy-constrained diamond norm \cite{Sh17,Wetal17} (see also Section~\ref{sec:generalized-channel-divergence} for some developments in this direction).

Similar to our bounds on the energy-constrained quantum capacity, we establish several upper bounds on the energy-constrained \textit{private} capacity of bosonic thermal channels. We also develop an improved  lower bound on the energy-constrained  private capacity of a bosonic thermal channel. In particular, we find that for  certain values of the channel transmissivity, a higher private communication rate can be achieved by using displaced thermal states as information carriers instead of coherent states.

Related to our bounds on energy-constrained quantum and private capacities of thermal channels, we establish several upper bounds on the same capacities of quantum amplifier channels. We also establish upper bounds on the energy-constrained quantum and private capacities of an additive-noise channel.

As one of the last technical developments of our paper, we address the question of computing energy-constrained channel distances  in a very broad sense, by considering the energy-constrained, generalized channel divergence of two quantum channels, as an extension of the generalized channel divergence developed in \cite{LKDW17}. In particular, we prove that an optimal Gaussian input state for the energy-constrained, generalized channel divergence of two particular Gaussian channels is the two-mode squeezed vacuum state that saturates the energy constraint. It is an interesting open question to determine whether the two-mode squeezed vacuum is optimal among all input states, but we leave this for future work, simply noting for now that an answer would lead to improved upper bounds on the energy-constrained quantum and private capacities of the thermal and amplifier channels. At the least, we have proven that the optimal input state for the particular Gaussian channels is such that its reduction to the channel input system is diagonal in the photon number basis.

The rest of our paper is structured as follows. In Section \ref{sec:prelim}, we summarize definitions and prior results relevant to our paper. We provide general upper bounds on the energy-constrained quantum and private capacities of approximately degradable channels in Section~\ref{sec:bounds-approximate-degradable}. We use these tools to establish several upper bounds on the energy-constrained quantum and private capacities of a thermal channel in Sections~\ref{sec:upper-bound-quantum-cap} and  \ref{sec:upper-bound-private-cap}, respectively. A comparision of these different upper bounds on energy-constrained quantum capacity of a thermal channel is discussed in Section~\ref{sec:comparision-of-bounds}. We present an improvement on the achievable rate of private communication through thermal channels, in Section~\ref{sec:lower-bound-private-cap}. We establish bounds on energy-constrained capacities of a quantum amplifier channel and an additive-noise channel in Sections \ref{sec:bounds-amp} and \ref{sec:bound-additive-noise}, respectively. In Section \ref{sec:recent-dev}, we discuss recent developments
from \cite{RMG18}
on the unconstrained quantum capacity of a thermal channel, and we then provide another upper bound on the energy-constrained quantum and private capacities of a thermal channel. We discuss the optimization of the Gaussian energy-constrained generalized channel divergence in Section~\ref{sec:generalized-channel-divergence}.
Finally, we summarize our results and conclude in Section~\ref{sec:conclusion}.

\bigskip

\section{Preliminaries}\label{sec:prelim}
Background on quantum information in infinite-dimensional systems is available in \cite{H13book} (see also \cite{H04,SH08,HS10,HZ12,S15,S15squashed}). In this section, we explain our notations and discuss prior results relevant for our paper. 

\bigskip
\noindent\textbf{Quantum states and channels.}
 Let $\H$ denote a separable Hilbert space, let $\B(\H)$ denote the set of bounded operators acting on $\H$, and let $\P(\H)$ denote the subset of $\B(\H)$ that consists of positive semi-definite operators. Let $\T(\H)$ denote the set of trace-class operators, defined such that their trace norm is finite: $\left\Vert A \right\Vert_1 \equiv \tr\{\left\vert A \right\vert\} < \infty$, where $\left\vert A \right\vert \equiv \sqrt{A^{\dagger}A}$. Let $\D(\H)$ denote the set of density operators (positive semi-definite with unit trace) acting on $\H$. A quantum channel $\N : \T(\H_A) \rightarrow \T(\H_B)$ is a completely positive, trace-preserving linear map. Using the Stinespring dilation theorem \cite{S55}, a quantum channel can be expressed in terms of a linear isometry: i.e., there exists another Hilbert space $\H_E$ and a linear isometry $U:\H_A \rightarrow \H_B \otimes \H_E$ such that for all $\omega_A \in \T(\H_A)$, the following equality holds: $\N(\omega_A) = \tr_E\{U \omega_A U^{\dagger}\}$. A complementary channel $\hat{\N}_{A \rightarrow E}$ of $\N_{A \rightarrow B}$ is defined as $\hat{\N}_{A \to E}= \tr_B\{U \omega_A U^{\dagger}\}$. A quantum channel $\N_{A \rightarrow B}$ is degradable \cite{cmp2005dev} if there exists a quantum channel $\D_{B\rightarrow E}$ such that $\D_{B \rightarrow E}(\N_{A\rightarrow B}(\omega_A)) = \hat{\N}_{A\rightarrow E}(\omega_A)$, for all $\omega_A \in \T(\H_A)$.  

\bigskip
\noindent\textbf{Quantum entropies and information.}
The quantum entropy of a state $\rho \in \D(H)$ is defined as $H(\rho) \equiv -\tr\{\rho\log_2 \rho\}$. It is a non-negative, concave, lower semicontinuous function \cite{W76} and not necessarily finite \cite{BV13}.
The binary entropy function is defined for $x\in[0,1]$ as
\begin{equation}
h_2(x) \equiv -x\log_2x - (1-x)\log_2(1-x).
\end{equation} 
 Throughout the paper we use a function $g(x)$, which is the entropy of a bosonic thermal state with mean photon number $x\geq0$: 
\begin{equation}\label{eq:g-function}
g(x)\equiv(x+1)\log_2(x+1)-x\log_2 x.
\end{equation}
By continuity, we have that $h_2(0) = \lim_{x \to 0} h_2(x) = 0$ and $g(0) = \lim_{x \to 0} g(x) = 0$.
The quantum relative entropy $D(\rho\Vert\sigma)$ of $\rho,\sigma
\in\mathcal{D}(\mathcal{H})$ is defined as \cite{F70,Lindblad1973} 
\begin{equation}D(\rho\Vert\sigma)\equiv \sum_{i} \bra{i}\rho\log_{2}\rho-\rho\log_{2}\sigma + \sigma - \rho \ket{i},
\end{equation}
where $\{|i\rangle\}_{i=1}^{\infty}$ is an orthonormal basis of eigenvectors
of the state $\rho$, if $\operatorname{supp}(\rho)\subseteq\operatorname{supp}%
(\sigma)$ and $D(\rho\Vert\sigma)=\infty$ otherwise. The quantum relative
entropy $D(\rho\Vert\sigma)$ is non-negative for $\rho,\sigma\in
\mathcal{D}(\mathcal{H})$ and is monotone with respect to a quantum channel \cite{Lindblad1975}
$\mathcal{N}:\mathcal{T}(\mathcal{H}_{A})\rightarrow\mathcal{T}(\mathcal{H}%
_{B})$: 
\begin{equation}D(\rho\Vert\sigma)\geq D(\mathcal{N}(\rho)\Vert\mathcal{N}(\sigma)).
\end{equation}
The quantum mutual information $I(A;B)_{\rho}$ of a bipartite state $\rho
_{AB}\in\mathcal{D}(\mathcal{H}_{A}\otimes\mathcal{H}_{B})$ is defined
as~\cite{Lindblad1973} \begin{equation}I(A;B)_{\rho}
\equiv
D(\rho_{AB}\Vert\rho_{A}\otimes\rho_{B}).
\end{equation}
The coherent information $I(A\rangle B)_{\rho}$ of $\rho_{AB}$ is defined as \cite{PhysRevA.54.2629,HS10,K11}
\begin{equation}
I(A\rangle B)_{\rho}\equiv I(A;B)_{\rho}-H(A)_{\rho},
\end{equation}
when $H(A)_{\rho}<\infty$. This expression reduces to
\begin{equation}
I(A\rangle B)_{\rho}=H(B)_{\rho}-H(AB)_{\rho},%
\end{equation}
if $H(B)_{\rho}<\infty$.

\bigskip
\noindent\textbf{Quantum fidelity, trace distance, and diamond distance.}
The fidelity of two quantum states $\rho,\sigma\in\mathcal{D}(\mathcal{H})$ is
defined as \cite{U76} $F(\rho,\sigma)\equiv\left\Vert \sqrt{\rho}\sqrt{\sigma}\right\Vert _{1}^{2}$. The trace distance between two density operators $\rho, \sigma \in \D(\H)$ is equal to $\left\Vert \rho - \sigma \right\Vert_1$. The operational interpretation of trace distance is that it is linearly related to the maximum success probability in distinguishing two quantum states. The diamond norm of a Hermiticity preserving linear map $\S$ is defined as  $\left\Vert \S \right\Vert_{\diamond} \equiv \sup_{\rho_{RA} \in \D(\H_R \otimes \H_A)} \left\Vert (\operatorname{id}_R \otimes \S_{A \rightarrow B})(\rho_{RA}) \right\Vert_1$, where $\operatorname{id}_R$ is the identity map acting on a Hilbert space~$\H_R$ corresponding to an arbitrarily large reference system \cite{K97}. It suffices to optimize with respect to input states $\rho$ that are pure. The diamond-norm distance $\left\Vert \N  - \M \right\Vert_{\diamond}$ is a measure of the distinguishability of two quantum channels $\N$ and $ \M$.

\bigskip
\noindent\textbf{Approximate degradability.}
The concept of approximate degradability was introduced in \cite{sutter2017approx}. The following two definitions of approximate degradability will be useful in our paper.

\begin{definition}
[$\varepsilon$-degradable
\cite{sutter2017approx}]\label{def:eps-approx} A channel $\N_{A\to B}$ is  $\varepsilon$-degradable if there exists a channel $\D_{B \to E}$ such that $\frac{1}{2}\left\Vert \hat{\N} - \D\circ\N \right\Vert_{\diamond} \leq \varepsilon$, where $\hat{\N}$ denotes a complementary channel of $\N$.
\end{definition}

\begin{definition}
	[$\varepsilon$-close-degradable
	\cite{sutter2017approx}]\label{def:eps-close} A channel $\N_{A\to B}$ is  $\varepsilon$-close-degradable if there exists a degradable channel $\M_{A\to B}$ such that $\frac{1}{2}\left\Vert \N - \M \right\Vert_{\diamond} \leq \varepsilon$.
\end{definition}

\begin{remark}
	Let $\N_{A \rightarrow B}$ be a quantum channel that is $\varepsilon$-close-degradable. Then $\N_{A \rightarrow B}$ is $\varepsilon+2\sqrt{\varepsilon}$-degradable by \cite[Proposition~A.5]{sutter2017approx}.
	A converse implication is not known to hold.
\end{remark}

\bigskip

\noindent\textbf{Energy-constrained continuity bounds.}
Next, we recall the definition of an energy observable and a Gibbs observable \cite{H13book,Winter15}. We also review the uniform continuity of conditional quantum entropy with energy constraints \cite{Winter15}. When defining a Gibbs observable, we follow \cite{H13book,Winter15}.
\begin{definition}
	[Energy observable] Let G be a positive semi-definite operator. We assume that it has discrete spectrum and that it is bounded from below. In particular, let $\{\ket{e_k} \}_k$ be an orthonormal basis for a Hilbert space $\H$, and let $\{g_k\}_k$ be a sequence of non-negative real numbers. Then 
	\begin{equation}
	G = \sum_{k=1}^{\infty} g_k \ket{e_k}\bra{e_k}
	\end{equation}
	is a self-adjoint operator that we call an energy observable. 
\end{definition}

\begin{definition}[Extension of energy observable] The nth extension $\overline{G}_n$ of an energy observable G is defined as 	
	\begin{equation}	
	\overline{G}_n = \frac{1}{n}( G\otimes I\otimes\cdots\otimes I + \cdots + I\otimes \cdots \otimes I \otimes G), 	
	\end{equation}
	where $n$ is the number of factors in each tensor product above.
\end{definition}

\begin{definition}
	[Gibbs Observable]\label{def:Gibbs} An energy observable $G$ is a Gibbs observable if for all $\beta > 0$, we have $\tr\{\exp(-\beta G)\} < \infty$, so that the partition function $Z(\beta):=\tr\{\exp(-\beta G)\}$ has a finite value and hence $\exp(-\beta G)/ \tr\{\exp(-\beta G)\}$ is a well defined thermal state. 
\end{definition}
For a Gibbs observable $G$, let us consider a quantum state $\rho$ such that $\tr\{G \rho \} \leq W$. There exists a unique state that maximizes the entropy $H(\rho)$, and this unique maximizer has the Gibbs form $\gamma(W) = \exp(-\beta(W)G)/Z(\beta(W))$, where $\beta(W)$ is the solution of the equation: 
\begin{align}
\tr\{\exp(-\beta G) (G-W)\} = 0. 
\end{align}
In particular, for the Gibbs observable $G = \hbar \omega \hat{n}$, where $\hat{n}=\hat{a}^{\dagger}\hat{a}$ is the photon number operator, a thermal state (mean photon number $\bar{n}$) that saturates the energy constrained inequality $\tr\{G\rho \} \leq W$, gives the maximum value of the entropy:
\begin{equation}
H(\gamma(W)) = g(\bar{n}) = (\bar{n}+1)\log_2(\bar{n}+1) - \bar{n}\log_2\bar{n}~.%
\end{equation}
Here, we have fixed the ground-state energy to be equal to zero. In some parts of our paper, we take the Gibbs observable to be the number operator, and we use the terminology ``mean photon number" and ``energy" interchangeably. 

The following lemma is a uniform continuity bound for the conditional quantum entropy with energy constraints \cite{Winter15}:
\begin{lemma}[Meta-Lemma 17, \cite{Winter15}]\label{thm:cond-entropy}
	{For a Gibbs observable $G \in \P(\H_A)$, and states $\omega_{AB}, \tau_{AB} \in \D({\H_{A}\otimes \H_B})$, such that $\frac{1}{2} \left\Vert \omega_{AB} -\tau_{AB} \right\Vert_1 \leq \varepsilon < \varepsilon' \leq 1$, $\tr\{(G\otimes I_B) \omega_{AB} \}$, $\tr\{(G\otimes I_B)\tau_{AB} \} \leq W$, where $W\in\lbrack0,\infty)$ and $\delta = (\varepsilon' - \varepsilon)/(1+ \varepsilon')$,  the following inequality holds
		\begin{equation}
		\left\vert H(A|B)_{\omega} - H(A|B)_{\tau}\right\vert \leq (2\varepsilon' + 4\delta) H(\gamma(W/\delta)) + g(\varepsilon') + 2h_2(\delta)~. 
		\end{equation}}
\end{lemma}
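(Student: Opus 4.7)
The strategy is to adapt the Alicki--Fannes--Winter (AFW) coupling argument to the infinite-dimensional, energy-constrained setting. The key device is to couple $\omega_{AB}$ and $\tau_{AB}$ through a common auxiliary state by appending a classical register, then to use the chain rule for conditional entropy to split the comparison into a part that collapses via the shared-marginal property (producing the $g(\varepsilon')$ correction) and a residual part that must be controlled by the Gibbs-observable maximum entropy at a suitable effective energy.

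Concretely, I would first set $\Delta_{AB}=\omega_{AB}-\tau_{AB}$ and take its Jordan decomposition $\Delta=\Delta_+-\Delta_-$ with $\Delta_\pm\geq 0$ on orthogonal supports and $\|\Delta_\pm\|_1\leq \varepsilon$. Using the slack $\varepsilon'>\varepsilon$, pick a positive operator $R_{AB}\geq 0$ with $\tr R_{AB}=\varepsilon'-\varepsilon$ (its choice will be optimized later) and define $\varepsilon'\sigma_{\pm,AB}:=\Delta_\pm+R_{AB}$, so that
\begin{equation}
\omega_{AB}+\varepsilon'\sigma_{-,AB}=\tau_{AB}+\varepsilon'\sigma_{+,AB}=:(1+\varepsilon')\,\Omega_{AB}.
\end{equation}
Append a binary classical register $X$ and form the two classical-quantum extensions
\begin{align}
\Omega'_{ABX}&=\tfrac{1}{1+\varepsilon'}\,\omega_{AB}\otimes|0\rangle\langle 0|_X+\tfrac{\varepsilon'}{1+\varepsilon'}\,\sigma_{-,AB}\otimes|1\rangle\langle 1|_X,\\
\Omega''_{ABX}&=\tfrac{1}{1+\varepsilon'}\,\tau_{AB}\otimes|0\rangle\langle 0|_X+\tfrac{\varepsilon'}{1+\varepsilon'}\,\sigma_{+,AB}\otimes|1\rangle\langle 1|_X,
\end{align}
which share the common marginal $\Omega_{AB}$ on $AB$. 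Since $X$ is classical, $H(A|BX)$ splits linearly: $H(A|BX)_{\Omega'}=\tfrac{1}{1+\varepsilon'}H(A|B)_\omega+\tfrac{\varepsilon'}{1+\varepsilon'}H(A|B)_{\sigma_-}$, and analogously for $\Omega''$. Sandwiching $H(A|B)_\Omega$ between $H(A|BX)$ and $H(A|BX)+H(X)$ for both states, using $H(X)\leq h_2(\varepsilon'/(1+\varepsilon'))$ and the identity $(1+\varepsilon')\,h_2(\varepsilon'/(1+\varepsilon'))=g(\varepsilon')$, then subtracting, leads to
\begin{equation}
|H(A|B)_\omega-H(A|B)_\tau|\leq \varepsilon'\,|H(A|B)_{\sigma_+}-H(A|B)_{\sigma_-}|+g(\varepsilon').
\end{equation}

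The remaining task is to bound the residuals by $|H(A|B)_{\sigma_\pm}|\leq H(A)_{\sigma_\pm}\leq H(\gamma(W'))$ via the Gibbs maximum-entropy principle at a suitable effective energy $W'$. The subtle point is that $\Delta_+$ is \emph{not} operator-dominated by $\omega$ (only $\Delta_+\leq \omega+\Delta_-$ holds), so the naive bound $\tr\{G\Delta_+^A\}\leq W$ is unavailable. To sidestep this, I would perform a further AFW-type splitting of $\sigma_\pm$ that isolates a bulk component lying in the energy-bounded set $\{\rho:\tr\{G\rho^A\}\leq W/\delta\}$ from a remainder of weight $\delta$; the maximum-entropy principle then bounds the bulk contribution by $H(\gamma(W/\delta))$, while the remainder introduces the corrections $4\delta\,H(\gamma(W/\delta))$ together with a $2h_2(\delta)$ term from the classical register of this secondary extension. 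Assembling everything yields the stated inequality.

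The hard part is precisely this energy-control step: choosing $R_{AB}$ and a secondary interpolation that simultaneously produces an effective mean-energy bound of order $W/\delta$ on the dominant pieces of $\sigma_\pm^A$ and keeps the correction terms as small as $4\delta\,H(\gamma(W/\delta))+2h_2(\delta)$. Infinite dimensions also force one to replace global continuity of the von Neumann entropy by lower semicontinuity, and to rely on the Gibbs hypothesis $\tr\{\exp(-\beta G)\}<\infty$ to guarantee that $H(\gamma(W/\delta))$ is finite so that the right-hand side is meaningful.
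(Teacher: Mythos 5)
First, note that the paper does not actually prove Lemma~\ref{thm:cond-entropy}: it is imported verbatim as Meta-Lemma~17 of \cite{Winter15}, so you are reconstructing Winter's argument rather than comparing against a proof contained in this paper. Your first half does so correctly: the Jordan decomposition, the slack operator $R_{AB}$, the classical--quantum coupling through the common marginal $\Omega_{AB}$, and the identity $(1+\varepsilon')h_2(\varepsilon'/(1+\varepsilon'))=g(\varepsilon')$ legitimately give $\left\vert H(A|B)_{\omega}-H(A|B)_{\tau}\right\vert \leq \varepsilon'\left\vert H(A|B)_{\sigma_+}-H(A|B)_{\sigma_-}\right\vert + g(\varepsilon')$.

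The gap is the step you yourself label ``the hard part,'' and the plan you sketch for it provably cannot be executed. You propose to write $\sigma_{\pm}=(1-\delta)\mu_{\mathrm{bulk}}+\delta\mu_{\mathrm{tail}}$ with $\operatorname{Tr}\{G\mu_{\mathrm{bulk}}^{A}\}\leq W/\delta$. But the normalized positive part $\Delta_+/t$ (with $t=\operatorname{Tr}\Delta_+$) sits inside $\sigma_+$ with weight $t/\varepsilon'$, and it inherits no energy bound whatsoever from $\omega_{AB}$ and $\tau_{AB}$. Take $A$ a single mode with $G=\hat{n}$, trivial $B$, $\omega=\ket{\psi}\bra{\psi}$ and $\tau=\ket{\phi}\bra{\phi}$ with $\ket{\psi}=\sqrt{1-p}\ket{0}+\sqrt{p}\ket{N}$ and $\ket{\phi}=\sqrt{1-p}\ket{0}-\sqrt{p}\ket{N}$. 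Both states have energy $pN=W$ and trace distance $\varepsilon=2\sqrt{p(1-p)}$, while $\Delta_+/t$ is the pure state $(\ket{0}+\ket{N})/\sqrt{2}$ with energy $N/2\approx 2W/\varepsilon^{2}$. Since this component is pure and its weight $t/\varepsilon'=\varepsilon/\varepsilon'$ inside $\sigma_+$ exceeds $\delta$ whenever $\varepsilon'$ is not much larger than $\varepsilon$, no weight-$\delta$ tail can absorb it; every weight-$(1-\delta)$ bulk of $\sigma_+$ then has mean energy of order $W/\varepsilon^{2}$, which dwarfs $W/\delta$. (The lemma itself survives this example---$H(A)_{\sigma_+}$ happens to be small because $\sigma_+$ is a mixture of one pure state with $R$---it is only the strategy of controlling $H(A|B)_{\sigma_\pm}$ through the energy of $\sigma_\pm$ that breaks.) So your proposal reduces the theorem to a sublemma that is false. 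Any correct completion must control these residual conditional entropies by some mechanism other than an energy bound on the Jordan pieces, since no such bound holds; in Winter's argument the $\delta$-slack is spent on modifying the states entering the coupling so that the auxiliary states appearing in the final estimate are energy-bounded by construction, rather than on repairing the uncontrolled $\sigma_\pm$ after the fact.
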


\bigskip

Throughout the paper, we consider only those quantum channels that satisfy the following finite output entropy condition:
\begin{condition}
	[Finite output entropy]\label{cond:finite-out-entropy}Let $G$ be a Gibbs
	observable and $W\in\lbrack0,\infty)$. A quantum channel $\mathcal{N}$
	satisfies the finite-output entropy condition with respect to $G$ and $W$ if%
	\begin{equation}
	\sup_{\rho\, :\, \operatorname{Tr}\{G\rho\}\leq W}H(\mathcal{N}(\rho))<\infty,
	\label{eq:finiteness-cap}%
	\end{equation}

\end{condition}

\bigskip 

\noindent\textbf{Gaussian states and channels.} We now deliver a brief review of Gaussian states and channels, and we point to
\cite{AS17} for more details. Gaussian
channels model natural physical processes such as photon loss, photon
amplification, thermalizing noise, or random kicks in phase space. They
satisfy Condition~\ref{cond:finite-out-entropy} when the Gibbs observable for
$m$ modes is taken to be%
\begin{equation}
\hat{E}_{m}\equiv\sum_{j=1}^{m}\omega_{j}\hat{a}_{j}^{\dag}\hat{a}_{j},
\label{eq:photon-num-op-freqs}%
\end{equation}
where $\omega_{j}>0$ is the frequency of the $j$th mode and $\hat{a}_{j}$ is
the photon annihilation operator for the $j$th mode, so that $\hat{a}%
_{j}^{\dag}\hat{a}_{j}$ is the photon number operator for the $j$th mode.

 Let%
\begin{equation}
\hat{x}\equiv\left[  \hat{q}_{1},\ldots,\hat{q}_{m},\hat{p}_{1},\ldots,\hat
{p}_{m}\right]  \equiv\left[  \hat{x}_{1},\ldots,\hat{x}_{2m}\right]
\end{equation}
denote a vector of position- and momentum-quadrature operators, satisfying the
canonical commutation relations:%
\begin{equation}
\left[  \hat{x}_{j},\hat{x}_{k}\right]  =i\Omega_{j,k},\quad\text{where}%
\quad\Omega\equiv%
\begin{bmatrix}
0 & 1\\
-1 & 0
\end{bmatrix}
\otimes I_{m},
\end{equation}
and $I_{m}$ denotes the $m\times m$ identity matrix. We take the annihilation
operator for the $j$th mode as $\hat{a}_{j}=(\hat{q}_{j}+i\hat{p}_{j}%
)/\sqrt{2}$. For $\xi\in\mathbb{R}^{2m}$, we define the unitary displacement
operator $D(\xi)\equiv\exp(i\xi^{T}\Omega\hat{x})$. Displacement operators
satisfy the following relation:%
\begin{equation}
D(\xi)^{\dag}D(\xi^{\prime})=D(\xi^{\prime})D(\xi)^{\dag}\exp(i\xi^{T}%
\Omega\xi^{\prime}).
\end{equation}
Every state $\rho\in\mathcal{D}(\mathcal{H})$ has a corresponding Wigner
characteristic function, defined as%
\begin{equation}
\chi_{\rho}(\xi)\equiv\operatorname{Tr}\{D(\xi)\rho\},
\end{equation}
and from which we can obtain the state $\rho$ as%
\begin{equation}
\rho=\frac{1}{\left(  2\pi\right)  ^{m}}\int d^{2m}\xi\ \chi_{\rho}%
(\xi)\ D^{\dag}(\xi).
\end{equation}
A quantum state $\rho$ is Gaussian if its Wigner characteristic function has a
Gaussian form as%
\begin{equation}
\chi_{\rho}(\xi)=\exp\left(  -\frac{1}{4}\left[  \Omega\xi\right]  ^{T}%
V^{\rho}\Omega\xi+\left[  \Omega\mu^{\rho}\right]  ^{T}\xi\right)  ,
\end{equation}
where $\mu^{\rho}$ is the $2m\times1$ mean vector of $\rho$, whose entries are
defined by $\mu_{j}^{\rho}\equiv\left\langle \hat{x}_{j}\right\rangle _{\rho}$
and $V^{\rho}$ is the $2m\times2m$ covariance matrix of $\rho$, whose entries
are defined as%
\begin{equation}
V_{j,k}^{\rho}\equiv\langle\{\hat{x}_{j}-\mu_{j}^{\rho},\hat{x}_{k}-\mu
_{k}^{\rho}\}\rangle.
\end{equation}
The following condition holds for a valid covariance matrix: $V+i\Omega\geq0$,
which is a manifestation of the uncertainty principle \cite{PhysRevA.49.1567}.

A thermal Gaussian state $\theta_{\beta}$ of $m$ modes with respect to
$\hat{E}_{m}$ from \eqref{eq:photon-num-op-freqs}\ and having inverse
temperature $\beta>0$ thus has the following form:%
\begin{equation}
\theta_{\beta}=e^{-\beta\hat{E}_{m}}/\operatorname{Tr}\{e^{-\beta\hat{E}_{m}%
}\}, \label{eq:thermal-E-m-op}%
\end{equation}
and has a mean vector equal to zero and a diagonal $2m\times2m$ covariance
matrix. One can calculate that the photon number in this state is equal to%
\begin{equation}
\sum_{j}\frac{1}{e^{\beta\omega_{j}}-1}.
\end{equation}
A single-mode thermal state with mean photon number $\bar{n} = 1/(e^{\beta \omega} -1)$ has the following representation in the photon number basis:
\begin{equation}\label{eq:thermalstate}
\theta(\bar{n}) \equiv \frac{1}{1+\bar{n}}\sum_{n=0}^{\infty} \left(\frac{\bar{n}}{\bar{n}+1}\right)^n \ket{n}\bra{n}~.
\end{equation}
It is also well known that thermal states can be written as a Gaussian mixture
of displacement operators acting on the vacuum state:
\begin{equation}
\theta_{\beta}=\int d^{2m}\xi\ p(\xi)\ D(\xi)\left[  |0\rangle\langle
0|\right]  ^{\otimes m}D^{\dag}(\xi),
\end{equation}
where $p(\xi)$ is a zero-mean, circularly symmetric Gaussian distribution.
From this, it also follows that randomly displacing a thermal state in such a
way leads to another thermal state of higher temperature:%
\begin{equation}
\theta_{\beta}=\int d^{2m}\xi\ q(\xi)\ D(\xi)\theta_{\beta^{\prime}}D^{\dag
}(\xi), \label{eq:displaced-thermal-is-thermal}%
\end{equation}
where $\beta^{\prime}\geq\beta$ and $q(\xi)$ is a particular circularly
symmetric Gaussian distribution.

In our paper, we employ the two-mode squeezed vacuum state with parameter $\bar{n}$, which is equivalent to a purification of the thermal state in \eqref{eq:thermalstate} and is defined as
\begin{equation}\label{eq:tms}
\ket{\psi_{\operatorname{TMS}}(\bar{n})} \equiv \frac{1}{\sqrt{\bar{n}+1}}\sum_{n=0}^{\infty}\sqrt{\left(\frac{\bar{n}}{\bar{n}+1}\right)^n} \ket{n}_R\ket{n}_A~.
\end{equation}

A $2m\times2m$ matrix $S$ is symplectic if it preserves the symplectic form:
$S\Omega S^{T}=\Omega$. According to Williamson's theorem \cite{W36}, there is
a diagonalization of the covariance matrix $V^{\rho}$ of the form
\begin{equation}
V^{\rho}=S^{\rho}\left(  D^{\rho}\oplus D^{\rho}\right)  \left(  S^{\rho
}\right)  ^{T},
\end{equation}
where $S^{\rho}$ is a symplectic matrix and $D^{\rho}\equiv\operatorname{diag}%
(\nu_{1},\ldots,\nu_{m})$ is a diagonal matrix of symplectic eigenvalues such
that $\nu_{i}\geq1$ for all $i\in\left\{  1,\ldots,m\right\}  $. Computing
this decomposition is equivalent to diagonalizing the matrix $iV^{\rho}\Omega$
\cite[Appendix~A]{WTLB16}.

The entropy $H(\rho)$\ of a quantum Gaussian state $\rho$\ is a direct
function of the symplectic eigenvalues of its covariance matrix $V^{\rho}$
\cite{AS17}:%
\begin{equation}
H(\rho)=\sum_{j=1}^{m}g((\nu_{j}-1)/2),
\end{equation}
where $g(\cdot)$ is defined in \eqref{eq:g-function}.

The Hilbert--Schmidt adjoint of a Gaussian quantum channel $\mathcal{N}_{X,Y}$\ from $l$ modes to $m$ modes
has the following effect on a displacement operator $D(\xi)$ \cite{AS17}:%
\begin{equation}
D(\xi)\mapsto D(X^T\xi)\exp\left(  -\frac{1}{2}\xi^{T}Y\xi+i\xi^{T}\Omega
d\right)  ,
\end{equation}
 where $X$ is a real $2m\times2l$ matrix, $Y$ is a real $2m\times2m$ positive
semi-definite matrix, and $d\in\mathbb{R}^{2m}$, such that they satisfy%
\begin{equation}
Y+i\Omega-iX\Omega X^{T}\geq0.
\end{equation}
The effect of the channel on the mean vector $\mu^{\rho}$ and the covariance
matrix $V^{\rho}$\ is thus as follows:%
\begin{align}
\mu^{\rho}  &  \longmapsto X\mu^{\rho}+d,\\
V^{\rho}  &  \longmapsto XV^{\rho}X^{T}+Y.
\end{align}
A phase-insensitive, single-mode bosonic Gaussian channel adds an equal amount of noise to each quadrature of the electromagnetic field, such that  
\begin{align}
X & = \operatorname{diag}(\sqrt{\tau}, \sqrt{\tau}),\\
Y & = \operatorname{diag}(\nu, \nu),\\
d & = 0, 
\end{align}
where $\tau \in [0, 1]$ corresponds to attenuation, $\tau \geq 1$ amplification, and $\nu$ is the variance of an additive noise. Moreover, the following inequalities should hold
\begin{align}
& \nu \geq 0, \label{eq:cptp-cond-1}\\
& \nu^2 \geq (1 - \tau)^2 \label{eq:cptp-cond}, 
\end{align}
in order for the map to be a legitimate completely positive and trace preserving map. The channel is entanglement breaking
\cite{HSR03}
 if the following inequality holds \cite{Holevo2008}
 \begin{equation}
 \nu \geq \tau + 1.
 \end{equation}

All Gaussian channels are covariant with respect to displacement operators.
That is, the following relation holds%
\begin{equation}
\mathcal{N}_{X,Y}(D(\xi)\rho D^{\dag}(\xi))=D(X\xi)\mathcal{N}_{X,Y}%
(\rho)D^{\dag}(X\xi), \label{eq:covariance-gaussian}%
\end{equation}
and note that $D(X\xi)$ is a tensor product of local displacement operators.

Just as every quantum channel can be implemented as a unitary transformation
on a larger space followed by a partial trace, so can Gaussian channels be
implemented as a Gaussian unitary on a larger space with some extra modes
prepared in the vacuum state, followed by a partial trace \cite{CEGH08}. Given
a Gaussian channel $\mathcal{N}_{X,Y}$\ with $Z$ such that $Y=ZZ^{T}$ we can
find two other matrices $X_{E}$ and $Z_{E}$ such that there is a symplectic
matrix
\begin{equation}
S=%
\begin{bmatrix}
X & Z\\
X_{E} & Z_{E}%
\end{bmatrix}
, \label{eq:gaussian-dilation}%
\end{equation}
which corresponds to the Gaussian unitary transformation on a larger space.
The complementary channel $\mathcal{\hat{N}}_{X_{E},Y_{E}}$\ from input to the
environment then effects the following transformation on mean vectors and
covariance matrices:%
\begin{align}
\mu^{\rho}  &  \longmapsto X_{E}\mu^{\rho},\\
V^{\rho}  &  \longmapsto X_{E}V^{\rho}X_{E}^{T}+Y_{E},
\end{align}
where $Y_{E}\equiv Z_{E}Z_{E}^{T}$.

\bigskip

\noindent\textbf{Quantum thermal channel.} \label{def:thermal channel}
A quantum thermal channel is a Gaussian channel that can be characterized by a beamsplitter of transmissivity $\eta \in (0,1)$, coupling the signal input state with a thermal state with mean photon number $N_B\geq 0$, and followed by a partial trace over the environment. In the Heisenberg picture, the beamsplitter transformation is given by the following Bogoliubov transformation:
\begin{align}
&\hat{b} = \sqrt{\eta} \hat{a} - \sqrt{1-\eta} \hat{e},\\
&\hat{e}' = \sqrt{1-\eta}\hat{a} + \sqrt{\eta} \hat{e},
\end{align}
where $\hat{a}, \hat{b}$, $\hat{e}$, and $\hat{e}'$  are the annihilation operators representing the sender's input mode, the receiver's output mode, an environmental input mode, and an environmental output mode of the channel, respectively. Throughout the paper, we represent the thermal channel by $\L_{\eta, N_B}$. If the mean photon number at the input of a thermal channel is no larger than $N_S$, then the total number of photons that make it through the channel to the receiver is no larger than $\eta N_S + (1-\eta)N_B$.

\bigskip 

\noindent \textbf{Quantum amplifier channel}. A quantum amplifier channel is a Gaussian channel that can be characterized by a two-mode squeezer with parameter $G > 1$, coupling the signal input state with a thermal state with mean photon number $N_B \geq 0$, and followed by a partial trace over the environment. In the Heisenberg picture, the two-mode squeezer implementing a quantum amplifier channel has the following Bogoliubov transformation:
\begin{align}\label{eq:amp-unitary}
&\hat{b} = \sqrt{G} \hat{a} + \sqrt{G-1} {\hat{e}}^{\dagger},\\
&\hat{e}' = \sqrt{G-1} \hat{a}^{\dagger}+ \sqrt{G} \hat{e}~,
\end{align}
where $\hat{a}, \hat{b}$, $\hat{e}$, and $\hat{e}'$ correspond to the same parties as discussed above. Throughout the paper, we represent the noisy amplifier channel by $\A_{G, N_B}$, and the quantum-limited amplifier channel (with $N_B = 0$) by $\A_{G,0}$. 
\bigskip

\noindent \textbf{Additive-noise channel}. An additive-noise channel is specified by the following completely positive and trace preserving map:
\begin{equation}
\N_{\bar{n}}(\rho) \equiv \int d^2 \alpha \ P_{\bar{n}}(\alpha) D(\alpha) \rho D^{\dagger}(\alpha),
\end{equation}
where $P_{\bar{n}} = \exp(-\vert\alpha\vert^2/\bar{n})/(\pi \bar{n}) $ and $D(\alpha)$ is a displacement operator for the input mode. The variance $\bar{n} > 0 $ completely characterizes the channel $\N_{\bar{n}}$, and it roughly represents the number of noise photons added to the input mode by the channel. 
\bigskip

\noindent\textbf{Continuity of output entropy.}
The following theorem on continuity of output entropy for infinite-dimensional systems with finite average energy constraints is a direct consequence of \cite[Theorem~11]{Debbi15} and Lemma \ref{thm:cond-entropy}.

\begin{theorem}\label{thm:continuity-output-entropy}
	Let $\N_{A\to B}$ and $\M_{A\to B}$ be quantum channels, $G\in \P(\H_{B})$ be a Gibbs observable, such that 
		\begin{equation}
		\tr\{\overline{G}_n \N^{\otimes n}(\rho_{A^{n}}) \},\, \tr\{\overline{G}_n\M^{\otimes n}(\rho_{A^{n}}) \} \leq W~,
		\end{equation}
		where $W\in\lbrack0,\infty)$ and $\rho_{RA^n} \in \D(\H_{R} \otimes \H^{\otimes n}_{A})$. If $\frac{1}{2}\left\Vert \N - \M \right\Vert_{\diamond} \leq \varepsilon <\varepsilon' \leq1$ and $\delta = (\varepsilon' - \varepsilon)/(1+ \varepsilon')$, then the following inequality holds
		\begin{multline}
		\left\vert H((\operatorname{id}_R \otimes \N^{\otimes n}_{A\to B} )(\rho_{RA^n})) - H ((\operatorname{id}_R \otimes \M^{\otimes n}_{A\to B}) (\rho_{RA^n})) \right\vert
		\\ \leq n\lbrack(2\varepsilon' + 4\delta) H(\gamma(W/\delta)) + g(\varepsilon')+ 2h_2(\delta)\rbrack.
		\end{multline}
\end{theorem}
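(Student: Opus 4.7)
The most natural route is to apply Meta-Lemma~\ref{thm:cond-entropy} directly to the $n$-copy output states, together with an auxiliary trace-distance bound identified in Theorem~11 of \cite{Debbi15}. The key observations are that the reductions on the reference system are equal, that the conditional entropy appearing in Meta-Lemma~\ref{thm:cond-entropy} is the quantity we need to control, and that $\overline{G}_n$ is itself a legitimate Gibbs observable whose Gibbs-state entropy scales linearly in~$n$.

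First I would set $\omega_{RB^n} := (\operatorname{id}_R \otimes \N^{\otimes n})(\rho_{RA^n})$ and $\tau_{RB^n} := (\operatorname{id}_R \otimes \M^{\otimes n})(\rho_{RA^n})$ and observe that $\omega_R = \tau_R = \rho_R$, since neither channel acts on $R$. Using $H(RB^n) = H(R) + H(B^n\mid R)$, this immediately converts the stated quantity into a difference of conditional entropies:
\begin{equation*}
H(\omega_{RB^n}) - H(\tau_{RB^n}) = H(B^n\mid R)_\omega - H(B^n\mid R)_\tau,
\end{equation*}
which is exactly the form that Meta-Lemma~\ref{thm:cond-entropy} is designed to bound. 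Next I would verify that $\overline{G}_n$ is a Gibbs observable: its partition function factorizes as $Z_{\overline{G}_n}(\beta) = Z_G(\beta/n)^n$, which is finite for all $\beta > 0$ by Definition~\ref{def:Gibbs}. Its Gibbs state is a product of single-mode thermal states, and maximizing entropy over all states with $\tr\{\overline{G}_n \sigma\} \leq W/\delta$ yields $H(\gamma^{(n)}(W/\delta)) = n\, H(\gamma(W/\delta))$, the entropy of that product state.

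Third, I would invoke Theorem~11 of \cite{Debbi15} to obtain the trace-distance bound $\tfrac{1}{2}\|\omega_{RB^n} - \tau_{RB^n}\|_1 \leq \varepsilon$ needed to match the hypothesis of Meta-Lemma~\ref{thm:cond-entropy} with the clean single-copy parameter $\varepsilon$. Plugging this and the Gibbs-state entropy computation into Meta-Lemma~\ref{thm:cond-entropy} applied to $\omega_{RB^n}$ and $\tau_{RB^n}$ (with ``A''$=B^n$ and ``B''$=R$) produces
\begin{equation*}
\bigl|H(B^n\mid R)_\omega - H(B^n\mid R)_\tau\bigr|
\leq (2\varepsilon' + 4\delta)\, n\, H(\gamma(W/\delta)) + g(\varepsilon') + 2h_2(\delta),
\end{equation*}
and a final trivial loosening, multiplying the last two (non-negative) terms by $n\geq 1$, yields the bound exactly as stated.

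The main obstacle is step three: the crude inequality $\|\N^{\otimes n} - \M^{\otimes n}\|_\diamond \leq n\|\N - \M\|_\diamond$ would force $n\varepsilon$ into Meta-Lemma~\ref{thm:cond-entropy}, which both spoils the single-copy parameters $(\varepsilon,\varepsilon',\delta)$ appearing in the statement and can exceed~$1$. The role of Theorem~11 of \cite{Debbi15} is precisely to supply a trace-distance bound on the $n$-copy outputs that stays at the single-copy level $\varepsilon$ under the hypothesized energy constraints on $\overline{G}_n$; without that improvement the clean ``$n$ times single-copy'' form of the right-hand side cannot be obtained.
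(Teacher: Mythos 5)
There is a genuine gap at your step three, and it is fatal to the route you chose. No result (and certainly not Theorem~11 of \cite{Debbi15}, which is the finite-dimensional telescoping analogue of this very theorem, not a trace-distance estimate) gives $\tfrac{1}{2}\Vert \omega_{RB^n} - \tau_{RB^n}\Vert_1 \leq \varepsilon$ for the $n$-copy outputs. Such a bound is false in general: take $\rho_{RA^n} = \psi_{RA}^{\otimes n}$ with $\psi$ a low-energy state nearly achieving the diamond distance, so that $\tfrac{1}{2}\Vert \N(\psi) - \M(\psi)\Vert_1 \approx \varepsilon$; then $\tfrac{1}{2}\Vert \N(\psi)^{\otimes n} - \M(\psi)^{\otimes n}\Vert_1 \to 1$ as $n \to \infty$, since many copies of two distinct states become asymptotically perfectly distinguishable, and the energy constraints on $\overline{G}_n$ do nothing to prevent this. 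So a single application of Lemma~\ref{thm:cond-entropy} to the full $n$-copy states cannot produce a right-hand side whose $\varepsilon$-dependent prefactor stays at the single-copy level; the quantity fed into the lemma would have to be whatever the true $n$-copy trace distance is, which grows with $n$.

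The fix is the hybrid (telescoping) argument that the paper actually uses. Define $\rho^j = (\operatorname{id}_R \otimes \M^{\otimes j} \otimes \N^{\otimes (n-j)})(\rho_{RA^n})$ and write $H(RB^n)_{\rho^0} - H(RB^n)_{\rho^n}$ as a sum of $n$ consecutive differences. Neighboring hybrids differ only in the channel applied to the $j$th input, so $\tfrac{1}{2}\Vert \rho^j - \rho^{j-1}\Vert_1 \leq \tfrac{1}{2}\Vert \N - \M\Vert_\diamond \leq \varepsilon$ genuinely holds for each pair, and each difference of total entropies equals a difference of conditional entropies $H(B_j \mid R B_1 \cdots B_{j-1} B_{j+1}\cdots B_n)$ because the marginals on everything except $B_j$ coincide. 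Lemma~\ref{thm:cond-entropy} applied to the single mode $B_j$ with per-copy energy $W_j$ gives $(2\varepsilon'+4\delta)H(\gamma(W_j/\delta)) + g(\varepsilon') + 2h_2(\delta)$ for each term, and concavity of entropy together with $\tfrac{1}{n}\sum_j W_j \leq W$ collects the energy terms into $n(2\varepsilon'+4\delta)H(\gamma(W/\delta))$. Your observations about $H(R B^n) = H(R) + H(B^n\mid R)$ and the additivity of the Gibbs entropy for $\overline{G}_n$ are correct but do not rescue the single-shot application; the per-copy structure of the bound comes precisely from swapping one channel at a time.
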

\begin{proof}
 Let
 \begin{equation}
 \rho^j = (\operatorname{id}_R \otimes \M^{\otimes j}_{A \to B} \otimes \N^{\otimes (n-j)}_{A\to B} )(\rho_{RA^{n}})~,
 \end{equation}
 and consider the following chain of inequalities:
 \begin{align}
& \left\vert H(RB^n)_{\rho^0} -H(RB^n)_{\rho^n}\right\vert \nonumber \\
& = \left\vert \sum_{j=1}^{n} H(RB^n)_{\rho^{j-1}} - H(RB^n)_{\rho^j} \right\vert\\
 & \leq \sum_{j=1}^{n}\left\vert H(RB^n)_{\rho^{j-1}} - H(RB^n)_{\rho^{j}}\right\vert\\
 &= \sum_{j=1}^{n} \left\vert H(B_j|RB_1\cdots B_{j-1}B_{j+1} \cdots B_n)_{\rho^{j-1}} -H(B_j|RB_1\cdots B_{j-1}B_{j+1}\cdots B_n)_{\rho^j}\right\vert\\
 &\leq n\lbrack(2\varepsilon' + 4\delta) \left(\sum_{j=1}^{n} \frac{1}{n}H(\gamma(W_j/\delta))\right) + g(\varepsilon') + 2h_2(\delta)\rbrack\\
 &\leq n\lbrack(2\varepsilon' + 4\delta) H\! \left(\frac{1}{n}\sum_{j=1}^{n} \gamma(W_j/\delta)\right) + g(\varepsilon') + 2h_2(\delta)\rbrack\\
 &\leq n\lbrack(2\varepsilon' + 4\delta) H\left( \gamma(W/\delta)\right) + g(\varepsilon') + 2h_2(\delta)\rbrack~.
 \end{align}
 The first inequality follows from the triangle inequality. The second equality follows from the fact that the states $\rho^j$ and $\rho^{j-1}$ are the same except for the $j$th output system. Let $W_j$ denote an energy constraint on the $j$th output state of both the channels $\N$ and $\M$, i.e., $\tr\{G \N(\rho_{A_j})\}$, $\tr\{G \M(\rho_{A_j})\} \leq W_j$
 and $\frac{1}{n} \sum_j W_j \leq W$. Then the second inequality follows because $\frac{1}{2}\Vert \rho^j - \rho^{j-1}\Vert_1 \leq \varepsilon$ for the given channels, and we use Lemma~\ref{thm:cond-entropy} for the $j$th output system. The third inequality follows from concavity of entropy. The final inequality follows because
 \begin{equation}
 \tr\left\{ \frac{1}{n}\sum_{j=1}^{n} G~\gamma(W_j/\delta) \right\} = \frac{1}{n}\sum_{j=1}^{n} \tr\{G~\gamma(W_j/\delta)  \} \leq  W/\delta,
 \end{equation}
 and $\gamma(W/\delta)$ is the Gibbs state that maximizes the entropy corresponding to the energy $W/\delta$.
\end{proof}

\bigskip
\noindent\textbf{Continuity of capacities for channels.}
The continuity of various capacities of quantum channels has been discussed in \cite[Lemma~12]{Debbi15}. The general form for the classical, quantum, or private capacity of a channel $\N$ can be defined as $F(\N) = \lim_{n \rightarrow \infty} \frac{1}{n}\sup_{P^{(n)}} f_n(\N^{ \otimes n}, P^{(n)})$, where $\{f_n\}_n$ denotes a family of functions, and $P^{(n)}$ represents states or parameters over which an optimization is performed. Then the following lemma holds \cite{Debbi15}.
\begin{lemma}[Lemma 12, \cite{Debbi15}]\label{thm:continuous-cap}
{If $F(\N) = \lim_{n \rightarrow \infty}  \frac{1}{n} \sup_{P^{(n)}} f_n(\N^{\otimes n}, P^{(n)})$ for a channel $\N$ and $\forall$ $n, P^{(n)}$, $\left\vert  f_n(\N^{\otimes n}, P^{(n)}) - f_n(\M^{\otimes n}, P^{(n)}) \right\vert\leq n c$, then $\left\vert F(\N) - F(\M) \right\vert \leq c$.}
\end{lemma}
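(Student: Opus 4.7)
The plan is to reduce the statement to the elementary fact that a uniform pointwise bound between two real-valued functions transfers to a uniform bound between their suprema, and that uniform bounds between sequences transfer to their limits. Nothing about the specific structure of $f_n$ (coherent information, Holevo quantity, private information, etc.) is needed beyond the hypothesis.

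First, I would fix $n$ and an arbitrary parameter choice $P^{(n)}$. By hypothesis,
\begin{equation}
  f_n(\N^{\otimes n}, P^{(n)}) - f_n(\M^{\otimes n}, P^{(n)}) \leq nc,
\end{equation}
and symmetrically with $\N$ and $\M$ interchanged. From the first inequality, $f_n(\N^{\otimes n}, P^{(n)}) \leq \sup_{Q^{(n)}} f_n(\M^{\otimes n}, Q^{(n)}) + nc$. Since the right-hand side no longer depends on $P^{(n)}$, I may take the supremum over $P^{(n)}$ on the left, obtaining
\begin{equation}
  \sup_{P^{(n)}} f_n(\N^{\otimes n}, P^{(n)}) \leq \sup_{Q^{(n)}} f_n(\M^{\otimes n}, Q^{(n)}) + nc,
\end{equation}
and by the symmetric argument the reverse inequality (with $\N$ and $\M$ swapped) also holds. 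Dividing by $n$ yields
\begin{equation}
  \left\vert \tfrac{1}{n}\sup_{P^{(n)}} f_n(\N^{\otimes n}, P^{(n)}) - \tfrac{1}{n}\sup_{P^{(n)}} f_n(\M^{\otimes n}, P^{(n)}) \right\vert \leq c.
\end{equation}

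Finally, I would take $n \to \infty$. Both sequences converge by assumption to $F(\N)$ and $F(\M)$ respectively, and since each term of the difference sequence is bounded in absolute value by $c$, the limit is too, giving $|F(\N) - F(\M)| \leq c$.

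I do not foresee a genuine obstacle; the only point that deserves care is the step passing from a pointwise bound to a bound on suprema, which must be argued via the one-sided inequalities above rather than by naively asserting that the supremum of a difference equals the difference of suprema. Apart from that, the argument is purely a bookkeeping exercise with limits and absolute values, and it applies uniformly to the classical, quantum, and private capacity regularizations because the hypothesis is stated at the level of the generic functional family $\{f_n\}_n$.
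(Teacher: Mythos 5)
Your proof is correct, and it handles the one delicate step properly: passing from the pointwise bound to a bound on the suprema via the two one-sided inequalities rather than by manipulating $\sup$ of a difference directly. The paper itself states this lemma without proof, citing it directly from the reference, and your argument is the standard one that the cited source uses; the only implicit point worth keeping in mind is that the hypothesis is read as asserting that the regularized limits defining both $F(\N)$ and $F(\M)$ exist, since the uniform bound alone would not force convergence of the second sequence from convergence of the first.
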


\bigskip
\noindent\textbf{Energy-constrained quantum and private capacities.}
The energy-constrained quantum and private capacities of quantum channels have been defined in \cite[Section~III]{MH16}. In what follows, we review the definition of quantum communication and private communication codes, achievable rates, and regularized formulas for energy-constrained quantum and private capacities.

\bigskip

\noindent\textbf{Energy-constrained quantum capacity.}
An $(n,M,G,W,\varepsilon)$ code for
energy-constrained
quantum communication   consists of an encoding channel $\mathcal{E}^{n}:\mathcal{T} (\mathcal{H}_{S})\rightarrow\mathcal{T}(\mathcal{H}_{A}^{\otimes n})$ and a
decoding channel $\mathcal{D}^{n}:\mathcal{T}(\mathcal{H}_{B}^{\otimes n})\rightarrow\mathcal{T}(\mathcal{H}_{S})$, where $M=\dim(\mathcal{H}_{S})$. The energy constraint is such that the following bound holds for all states resulting from the output of the encoding channel $\mathcal{E}^{n}$:
 \begin{equation}
 \tr\{ \overline{G}_n \E^n(\rho_S) \}\leq W~, \label{eq:energy-constraint-q-cap}
 \end{equation}
 where $\rho_S \in \D(\H_S)$. Note that%
 \begin{equation}
 \operatorname{Tr}\left\{  \overline{G}_{n}\mathcal{E}^{n}(\rho_{S})\right\}
 =\operatorname{Tr}\left\{  G\overline{\rho}_{n}\right\}  ,
 \end{equation}
 where%
 \begin{equation}
 \overline{\rho}_{n}\equiv\frac{1}{n}\sum_{i=1}^{n}\operatorname{Tr}%
 _{A^{n}\backslash A_{i}}\{\mathcal{E}^{n}(\rho_{S})\}.
 \end{equation}
 due to the i.i.d.~nature of the observable $\overline{G}_{n}$. Furthermore, the quantum communication code satisfies the following reliability condition such that for all pure states $\phi_{RS}\in\mathcal{D}(\mathcal{H}_{R} \otimes\mathcal{H}_{S})$,
\begin{equation}
F(\phi_{RS},(\operatorname{id}_{R}\otimes\lbrack\mathcal{D}^{n}\circ
\mathcal{N}^{\otimes n}\circ\mathcal{E}^{n}])(\phi_{RS}))\geq1-\varepsilon~,
\label{eq:q-code-fidelity}%
\end{equation}
where $\mathcal{H}_{R}$ is isomorphic to$~\mathcal{H}_{S}$. A rate $R$ is achievable for quantum communication over $\mathcal{N}$ subject
to the energy constraint $W$\ if for all $\varepsilon\in(0,1)$,
$\delta>0$, and sufficiently large $n$, there exists an $(n,2^{n[R-\delta
	]},G,W,\varepsilon)$ energy-constrained quantum communication code. The energy-constrained quantum capacity $Q(\mathcal{N},G,W)$\ of $\mathcal{N}$ is equal to the supremum of all achievable rates.
	
If the channel $\N$ satisfies Condition \ref{cond:finite-out-entropy}
and $G$ is a Gibbs observable, then the quantum capacity $Q(\N, G, W)$ is equal to the regularized energy-constrained coherent information of the channel $\N$ \cite{MH16}
\begin{equation}
Q(\N, G, W) = \lim_{n \rightarrow \infty} \frac{1}{n} I_{c}(\N^{\otimes n}, \overline{G}_n, W),
\label{eq:reg-en-constr-cap-equality}
\end{equation}
where the energy-constrained coherent information of the channel is defined as \cite{MH16}
\begin{equation}
I_c(\N, G, W) \equiv \sup_{\rho: \tr\{\rho G\}\leq W} H(\N(\rho)) - H(\hat{\N}(\rho)),\label{eq:energy-cons-co-inf}
\end{equation} 
and $\hat{\N}$ denotes a complementary channel of $\N$. Note that another definition of energy-constrained quantum communication is possible, but it leads to the same value for the capacity in the asymptotic limit of many channel uses \cite{MH16}.

\bigskip
\noindent\textbf{Energy-constrained private capacity.}
An $(n,M,G,W,\varepsilon)$ code for private communication consists of a set
$\{\rho_{A^{n}}^{m}\}_{m=1}^{M}$\ of quantum states, each in $\mathcal{D}%
(\mathcal{H}_{A}^{\otimes n})$, and a POVM\ $\{\Lambda_{B^{n}}^{m}\}_{m=1}%
^{M}$ such that%
\begin{align}
\operatorname{Tr}\left\{  \overline{G}_{n}\rho_{A^{n}}^{m}\right\}   &  \leq
W,\label{eq:energy-constraint}\\
\operatorname{Tr}\{\Lambda_{B^{n}}^{m}\mathcal{N}^{\otimes n}(\rho_{A^{n}}%
^{m})\}  &  \geq1-\varepsilon,\label{eq:private-good-comm}\\
\frac{1}{2}\left\Vert \hat{\N}^{\otimes n}(\rho_{A^{n}}^{m}%
)-\omega_{E^{n}}\right\Vert _{1}  &  \leq\varepsilon, \label{eq:security-cond}%
\end{align}
for all $m\in\left\{  1,\ldots,M\right\}  $, with $\omega_{E^{n}}$ some fixed
state in $\mathcal{D}(\mathcal{H}_{E}^{\otimes n})$. In the above,
$\hat{\N}$ is a channel complementary to $\N$. 
A rate $R$ is achievable for private communication over $\mathcal{N}$ subject
to energy constraint $W$ if for all $\varepsilon\in(0,1)$, $\delta
>0$, and sufficiently large $n$, there exists an $(n,2^{n[R-\delta
	]},G,W,\varepsilon)$ private communication code. The energy-constrained private capacity
$P(\mathcal{N},G,W)$ of $\mathcal{N}$  is equal
to the supremum of all achievable rates.

An upper bound on the energy-constrained private capacity of a channel has been established in \cite{MH16}, but the lower bound still needs a detailed proof. However, the results in \cite{MH16} suggest the validity of the following form. If the channel $\N$ satisfies Condition~\ref{cond:finite-out-entropy} and $G$ is a Gibbs observable, then the energy-constrained private capacity $P(\mathcal{N},G,W)$ is given by the regularized energy-constrained private information of the channel:
\begin{equation}\label{eq:reg-en-constr-pcap-equality}
P(\N, G, W) = \lim_{n \rightarrow \infty} \frac{1}{n} P^{(1)}(\N^{\otimes n}, \overline{G}_n, W),
\end{equation} 
where the energy-constrained private information is defined as 
\begin{equation}
P^{(1)}(\N, G, W) \equiv \sup_{\bar{\rho}_{\E_A}: \tr\{G\bar{\rho}_{\E_A}\}\leq W}\int dx~ p_X(x)[D(\N(\rho^x_A)\Vert \N(\bar{\rho}_{\E_A}) )- D(\hat{\N}(\rho^x_A)\Vert \hat{\N}(\bar{\rho}_{\E_A})) ],\label{eq:private-information}
\end{equation} 
and $\bar{\rho}_{\E_A} \equiv \int dx~p_X(x) \rho^x_A$ is an average state of the ensemble 
\begin{equation}
\E_{A} \equiv \{p_X(x), \rho^x_A   \},
\end{equation}
and $\hat{\N}$ denotes a complementary channel of $\N$. 
Note that another definition of energy-constrained private communication is possible, but it leads to the same value for the capacity in the asymptotic limit of many channel uses \cite{MH16}.

\begin{remark} \label{rem:unconstrained-cap}
	The unconstrained quantum and private capacities of a quantum channel $\N$ are defined in the same way as above but without the energy constraints demanded in \eqref{eq:energy-constraint-q-cap} and \eqref{eq:energy-constraint}. As a consequence of these definitions and the fact that the set of states with finite but arbitrarily large energy is dense in the set of all states, for channels satisfying the finite output-entropy condition for every energy $W \geq 0$, the unconstrained quantum and private capacities are respectively given by
\begin{equation}
\sup_{W \geq 0} Q(\N, G, W), \qquad \sup_{W \geq 0} P(\N, G, W).
\end{equation}	
\end{remark}

\bigskip
\section{Bounds on energy-constrained quantum  and private capacities  of approximately degradable channels}

\label{sec:bounds-approximate-degradable}

In this section, we derive upper bounds on the energy-constrained quantum and private capacities of approximately degradable channels. We derive these bounds for both $\varepsilon$-degradable (Definition~\ref{def:eps-approx}) and $\varepsilon$-close-degradable (Definition~\ref{def:eps-close}) channels. This general form for the upper bounds on the energy-constrained quantum and private capacities of approximately degradable channels will be directly used in establishing bounds on the capacities of quantum thermal channels. 

We begin by defining the \textit{conditional entropy of degradation}, which will be useful for finding upper bounds on the energy-constrained quantum and private capacities of an $\varepsilon$-degradable channel. A similar quantity has been defined for the finite-dimensional case in \cite{sutter2017approx}.

\begin{definition}[Conditional entropy of degradation] Let $\N_{A\to B}$ and $\D_{B\to E}$ be quantum channels, and let $G \in \P(\H_{A})$ be a Gibbs observable. We define the conditional entropy of degradation as follows:
\begin{equation}\label{eq:Ud-def}
{U_{\D}(\N,G,W) =  \sup_{\rho\, :\, \operatorname{Tr}\{G\rho\}\leq W} \lbrack H(\N(\rho))- H(\D\circ\N(\rho)) \rbrack}~,
\end{equation} 
where $W\in[0,\infty)$.	For a Stinespring dilation $\V: \T(B) \to \T(E) \otimes \T(F)$ of the channel $\D$,
\begin{align}\label{eq:Ud}
U_{\D}(\N,G,W) =  \sup_{\rho\, :\, \operatorname{Tr}\{G\rho\}\leq W} \lbrack H(F\vert E)_{\V\circ\N(\rho)} \rbrack~.
\end{align} 
\end{definition}

We note that the conditional entropy of degradation can be understood as the negative entropy gain of the channel $\D_{B\to E}$ \cite{A04,Holevo2010,Holevo2011,H11ISIT}, with the optimization over input states $\N(\rho)$ restricted to being in the image of $\N$ and obeying the energy constraint $\operatorname{Tr}\{G\rho\}\leq W$.
Next, we show that the conditional entropy of degradation in \eqref{eq:Ud} is additive. 

\begin{lemma}\label{thm:add-Ud}
	Let $\N_{A\to B}$ and $\D_{B\to E}$ be quantum channels,  let $G \in \P(\H_{A})$ be a Gibbs observable, and let $W\in[0,\infty)$. 
	Then for all integer $n\geq 1$
	\begin{equation}
	U_{\D^{\otimes n}}(\N^{\otimes n}, \overline{G}_n, W) = n\lbrack U_{\D}(\N, G, W)\rbrack~.
	\end{equation}
\end{lemma}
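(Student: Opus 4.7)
The proof splits into the two standard inequalities. The lower bound $U_{\D^{\otimes n}}(\N^{\otimes n}, \overline{G}_n, W) \geq n\, U_\D(\N,G,W)$ is the easy direction: fix any $\rho$ with $\tr\{G\rho\}\leq W$, feed in $\rho^{\otimes n}$, note that $\tr\{\overline{G}_n\, \rho^{\otimes n}\} = \tr\{G\rho\} \leq W$, and observe that on the product state $(\V\circ\N)^{\otimes n}(\rho^{\otimes n})$ the conditional entropy $H(F^n|E^n)$ equals $n\cdot H(F|E)_{\V\circ\N(\rho)}$ by additivity of conditional entropy on tensor products. Taking the supremum over $\rho$ yields the claim.

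For the upper bound, fix any $\omega_{A^n}$ with $\tr\{\overline{G}_n\, \omega_{A^n}\}\leq W$, let $\omega^{(i)} = \tr_{A^n\setminus A_i}\{\omega_{A^n}\}$, and let $W_i \equiv \tr\{G\,\omega^{(i)}\}$, so that $\frac{1}{n}\sum_i W_i \leq W$. Write $\sigma_{E^n F^n} = (\V\circ\N)^{\otimes n}(\omega_{A^n})$, whose marginal on $E_iF_i$ equals $(\V\circ\N)(\omega^{(i)})$. The plan is to apply the chain rule
\begin{equation}
H(F^n|E^n)_\sigma \;=\; \sum_{i=1}^n H(F_i\,|\,E^n F_1\cdots F_{i-1})_\sigma,
\end{equation}
and then to use strong subadditivity in the form ``conditioning on more systems reduces entropy'' to obtain $H(F_i\,|\,E^n F_1\cdots F_{i-1})_\sigma \leq H(F_i | E_i)_\sigma$. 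Since the marginal on $E_iF_i$ is obtained by $\V\circ\N$ acting on $\omega^{(i)}$, each term satisfies $H(F_i|E_i)_\sigma \leq U_\D(\N,G,W_i)$ by the definition in \eqref{eq:Ud}.

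To close the argument, I need $\frac{1}{n}\sum_i U_\D(\N,G,W_i) \leq U_\D(\N,G,W)$. This follows from two monotonicity/convexity properties of $W \mapsto U_\D(\N,G,W)$: (i) it is nondecreasing, because enlarging the energy budget enlarges the feasible set; and (ii) it is concave, which follows from concavity of conditional entropy in the state together with linearity of the channels $\N$ and $\V$ and of the energy functional $\rho \mapsto \tr\{G\rho\}$ (given feasible $\rho_0,\rho_1$ with energies $W_0,W_1$, the mixture $\lambda\rho_0 + (1-\lambda)\rho_1$ is feasible for energy $\lambda W_0 + (1-\lambda)W_1$ and achieves at least $\lambda H(F|E)_{\V\circ\N(\rho_0)} + (1-\lambda)H(F|E)_{\V\circ\N(\rho_1)}$). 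Jensen's inequality applied to concavity then gives $\frac{1}{n}\sum_i U_\D(\N,G,W_i) \leq U_\D\!\left(\N,G,\frac{1}{n}\sum_i W_i\right)$, and monotonicity pushes this up to $U_\D(\N,G,W)$. Summing and taking the supremum over $\omega_{A^n}$ yields $U_{\D^{\otimes n}}(\N^{\otimes n},\overline{G}_n,W) \leq n\,U_\D(\N,G,W)$.

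The only real subtlety is justifying the strong-subadditivity step and the chain rule for conditional entropy on infinite-dimensional systems; for states with finite marginal entropies this is handled by the standard extension of SSA to the separable-Hilbert-space setting (as used throughout the preliminaries, e.g., via \cite{Lindblad1975,HS10}), so the main obstacle is essentially bookkeeping rather than conceptual. The key structural ingredients are therefore additivity of $H(F|E)$ on product inputs (for $\geq$), subadditivity via SSA together with concavity-plus-monotonicity of $U_\D$ in $W$ (for $\leq$).
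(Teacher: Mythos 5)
Your proof is correct and follows essentially the same route as the paper's: the lower bound via product inputs and additivity of conditional entropy, and the upper bound via strong subadditivity to reduce $H(F^n|E^n)$ to the sum $\sum_i H(F_i|E_i)$ followed by a concavity argument. The only difference is in how the concavity is packaged: the paper applies concavity of conditional entropy directly to the averaged marginal $\bar{\rho}_n = \frac{1}{n}\sum_i \rho_{A_i}$, which satisfies $\operatorname{Tr}\{G\bar{\rho}_n\}\leq W$ and is therefore immediately feasible in the single-letter optimization, whereas you first bound each term by $U_{\D}(\N,G,W_i)$ and then establish concavity and monotonicity of $W\mapsto U_{\D}(\N,G,W)$ to conclude via Jensen --- equivalent in substance, just slightly more auxiliary machinery.
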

\begin{proof}
	The following inequality  
	\begin{equation}
	U_{\D^{\otimes n}}(\N^{\otimes n},\overline{G}_n,W) \geq n \lbrack U_{\D}(\N, G,W)\rbrack
	\end{equation}
	follows trivially because a product input state is a particular state of the form required in the optimization of $U_{\D^{\otimes n}}(\N^{\otimes n},\overline{G}_n,W)$. We now prove the  less trivial inequality 
	\begin{equation}
	U_{\D^{\otimes n}}(\N^{\otimes n},\overline{G}_n,W) \leq n \lbrack U_{\D}(\N, G,W)\rbrack~.
	\end{equation}
	Consider the following chain of inequalities:
	\begin{align}
	H(F^n\vert E^n)_{(\V^{\otimes n}\circ \N^{\otimes n}) (\rho_{A^n})} 
	&\leq  \sum_{i=1}^{n} H(F_i \vert E_i)_{(\V\circ\N)(\rho_{A_i})} \\
	&\leq  n\lbrack  H(F\vert E)_{(\V\circ \N)(\bar{\rho}_n)} \rbrack\\
	& \leq n\lbrack U_{\D}(\N, G,W)\rbrack~,
	\end{align}
	where $\bar{\rho}_n = \frac{1}{n} \sum_{i=1}^{n} \rho_{A_i}$. The first inequality follows from several applications of strong subadditivity \cite{LR73,PhysRevLett.30.434}. The second inequality follows from concavity of conditional entropy
	\cite{LR73,PhysRevLett.30.434}. The last inequality follows because
	$\operatorname{Tr}\{\overline{G}_n \rho_{A^n}\}
	=
	\operatorname{Tr}\{G \bar{\rho}_n\}\leq W
	$  and the conditional entropy of degradation $U_{\D}(\N, G,W)$ involves an optimization over all input states obeying this energy constraint. Since the chain of inequalities is true for all input states $\rho_{A^n}$ satisfying the input energy constraint, the desired result follows. 
\end{proof}

\subsection{Bound on the energy-constrained quantum capacity of an $\varepsilon$-degradable channel}
 
An upper bound on the quantum capacity of an $\varepsilon$-degradable channel was established as \cite[Theorem 3.1(ii)]{sutter2017approx} for the finite-dimensional case. Here, we prove a related bound for the infinite-dimensional case with finite average energy constraints on the input and output states of the channels.

\begin{theorem}\label{thm:qcbound-eps-approx}
	Let $\N_{A\to B}$ be an $\varepsilon$-degradable channel with a degrading channel $\D_{B\to E'}$, and let $G \in \P(\H_{A})$ and $G' \in \P(\H_{E'}) $ be Gibbs observables, such that for all input states $\rho_{A^n} \in \D(H^{\otimes n}_{A})$ satisfying input average energy constraints $\tr\{\overline{G}_n\rho_{A^{n}}\}\leq W$, the following output average energy constraints are satisfied: 
	\begin{equation}
	\tr\{\overline{G}'_n\hat{\N}^{\otimes n}(\rho_{A^{n}}) \}, \quad \tr\{\overline{G}'_n (\D^{\otimes n}\circ\N^{\otimes n})(\rho_{A^{n}})\} \leq W'~,  
	\end{equation}
	where $\hat{\N}_{A \to E}$ is a complementary channel of $\N$ and $E' \simeq E$. Then the energy-constrained quantum capacity $	Q(\N, G, W)$ is bounded from above as
	\begin{equation}
	Q(\N, G, W) \leq U_{\D}(\N,G,W) +(2\varepsilon' + 4\delta) H(\gamma(W'/\delta)) + g(\varepsilon') + 2 h_2(\delta)~,
	\end{equation}
	with $\varepsilon'\in(\varepsilon,1]$, $W, W'\in\lbrack0,\infty)$, and $\delta = (\varepsilon' - \varepsilon)/(1+ \varepsilon')$.
\end{theorem}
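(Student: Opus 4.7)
The plan is to reduce the statement to a per-use bound on the energy-constrained coherent information and then invoke the regularized capacity formula \eqref{eq:reg-en-constr-cap-equality}. Specifically, I will establish that, for every integer $n\geq 1$ and every input state $\rho_{A^n}\in\D(\H_A^{\otimes n})$ satisfying $\tr\{\overline{G}_n\rho_{A^n}\}\leq W$, the coherent information obeys
\[
H(\N^{\otimes n}(\rho_{A^n})) - H(\hat{\N}^{\otimes n}(\rho_{A^n})) \leq n\!\left[U_{\D}(\N,G,W) + (2\varepsilon' + 4\delta) H(\gamma(W'/\delta)) + g(\varepsilon') + 2 h_2(\delta)\right].
\]
Dividing by $n$, taking the supremum over such states, and passing to the $n\to\infty$ limit will then deliver the claimed upper bound on $Q(\N,G,W)$.

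Fix $\rho_{A^n}$ as above. Let $\V_{B\to E'F}$ be a Stinespring isometry of the degrading channel $\D_{B\to E'}$, and write $\omega_{B^n}=\N^{\otimes n}(\rho_{A^n})$, $\sigma_{E^n}=\hat{\N}^{\otimes n}(\rho_{A^n})$, and $\xi_{E'^n F^n}=\V^{\otimes n}(\omega_{B^n})$. I split the coherent information as
\[
H(B^n)_\omega - H(E^n)_\sigma = \bigl[H(B^n)_\omega - H(E'^n)_\xi\bigr] + \bigl[H(E'^n)_\xi - H(E^n)_\sigma\bigr].
\]
Because $\V^{\otimes n}$ is an isometry, $H(E'^n F^n)_\xi = H(B^n)_\omega$, so the first bracket is equal to $H(F^n \vert E'^n)_\xi$. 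By the definition of the conditional entropy of degradation in \eqref{eq:Ud}, this first bracket is at most $U_{\D^{\otimes n}}(\N^{\otimes n},\overline{G}_n,W)$, which, by Lemma \ref{thm:add-Ud} (additivity), equals $n\,U_{\D}(\N,G,W)$.

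For the second bracket, I use the $\varepsilon$-degradability assumption $\tfrac{1}{2}\Vert\hat{\N}-\D\circ\N\Vert_\diamond\leq\varepsilon$ together with the hypothesis that both $\hat{\N}^{\otimes n}(\rho_{A^n})$ and $(\D^{\otimes n}\circ\N^{\otimes n})(\rho_{A^n})$ have $\overline{G}'_n$-energy at most $W'$. Applying Theorem \ref{thm:continuity-output-entropy} (with trivial reference system) to the pair of channels $\hat{\N}$ and $\D\circ\N$ and to the input $\rho_{A^n}$ yields
\[
\bigl\vert H(E'^n)_\xi - H(E^n)_\sigma\bigr\vert \leq n\!\left[(2\varepsilon' + 4\delta) H(\gamma(W'/\delta)) + g(\varepsilon') + 2 h_2(\delta)\right].
\]
Summing the two estimates gives the per-use bound above and completes the proof. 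The main delicate point is the infinite-dimensional continuity step: without the output-energy hypothesis, the analogous finite-dimensional Alicki--Fannes-type argument used in \cite{sutter2017approx} breaks down, since $H(\gamma(W'/\delta))$ would be replaced by an unbounded $\log\dim$ factor. It is precisely the Winter-type energy-constrained continuity bound (Lemma \ref{thm:cond-entropy}) lifted to the channel level in Theorem \ref{thm:continuity-output-entropy} that controls this term and makes the bound meaningful for bosonic channels.
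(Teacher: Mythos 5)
Your proof is correct and follows essentially the same route as the paper's: split the $n$-letter coherent information as $[H(B^n)-H(E'^n)]+[H(E'^n)-H(E^n)]$, bound the first bracket by $n\,U_{\D}(\N,G,W)$ via the definition of the conditional entropy of degradation and its additivity (Lemma~\ref{thm:add-Ud}), and control the second bracket with the energy-constrained continuity bound before invoking \eqref{eq:reg-en-constr-cap-equality}. The only cosmetic difference is that you cite Theorem~\ref{thm:continuity-output-entropy} (with trivial reference) for the second bracket, whereas the paper re-runs the same telescoping/conditional-entropy argument inline; both rest on Lemma~\ref{thm:cond-entropy} and yield the identical error term.
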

\begin{proof}
	Let 
	\begin{align}
	&\sigma_{B^n} = \N^{\otimes n} (\rho_{A^{n}})~,\nonumber\\
	& \rho^j_{E'^jE^{(n-j)}} = (\D^{\otimes j} \circ \N^{\otimes j})\otimes \hat{\N}^{\otimes (n-j)} (\rho_{A^{n}})\nonumber~,
	\end{align}
	and consider the following chain of inequalities:
	\begin{align}
	&
	H(B^n)_{\sigma} - H(E^n)_{\rho^0} \nonumber \\
	& = 
	H(B^n)_{\sigma} -H(E'^n)_{\rho^n} + H(E'^n)_{\rho^n} - H(E^n)_{\rho^0}
	\\
	& \leq U_{\D^{\otimes n}}(\N^{\otimes n}, \overline{G}_n, W) + H(E'^n)_{\rho^n} - H(E^n)_{\rho^0}\\
&= n~U_{\D}(\N, G, W) \nonumber \\
	& \qquad+ \sum_{j=1}^{n} [H(E'_j\vert E'_1\dots E'_{j-1}E_{j+1}\dots E_n)_{\rho^j} - H(E_j\vert E'_1\dots E'_{j-1}E_{j+1}\dots E_n)_{\rho^{j-1}} ] \\
	&  \leq n \lbrack U_{\D}(\N, G, W)+ (2\varepsilon' + 4\delta)  \left(\sum_{j=1}^{n} \frac{1}{n}H(\gamma(W'_j/\delta))\right) + g(\varepsilon') + 2 h_2(\delta) \rbrack\\
	&\leq n\lbrack U_{\D}(\N, G, W)+  (2\varepsilon' + 4\delta)    H\left(\frac{1}{n}\sum_{j=1}^{n} \gamma(W'_j/\delta)\right) + g(\varepsilon') + 2 h_2(\delta) \rbrack\\
	&\leq n\lbrack U_{\D}(\N, G, W)  + (2\varepsilon' + 4\delta)    H\left( \gamma(W'/\delta)\right)+ g(\varepsilon') + 2 h_2(\delta)\rbrack\label{eq:coh-inf-eps-approx}~,
	\end{align}
	The first inequality follows from the definition in \eqref{eq:Ud-def}. The second equality follows from Lemma \ref{thm:add-Ud} and the telescoping technique.
	Let $W'_j$ denote the energy constraint on the $j$th output state of both the channels $\D\circ\N$ and $\hat{\N}$, i.e., $\tr\{G' (\D\circ\N)(\rho_{A_j})\}, \tr\{G'	 \hat{\N}(\rho_{A_j})\} \leq W'_j$
	where $\frac{1}{n} \sum_j W_j' \leq W'$. Then the second inequality holds because $\frac{1}{2}\Vert \rho^j - \rho^{j-1}\Vert_1 \leq \varepsilon$ for the given channels, and we use Lemma \ref{thm:cond-entropy} for the $j$th output system. The third inequality follows from concavity of entropy. The last inequality follows because $\tr\{ \frac{1}{n}\sum_{j=1}^{n} G \gamma(W'_j/\delta) \} = \frac{1}{n}\sum_{j=1}^{n} \tr\{G\gamma(W'_j/\delta)  \}\leq  W'/\delta$, and $\gamma(W'/\delta)$ is the Gibbs state that maximizes the entropy corresponding to the energy $W'/\delta$. Since the chain of inequalities is true for all $\rho_{A^{n}}$ satisfying the input average energy constraint, from \eqref{eq:energy-cons-co-inf} and the above, we get that
	\begin{align}
	\frac{1}{n}I_{c}(\N^{\otimes n}, \overline{G}_n, W) &\leq U_{\D}(\N, G, W)  + (2\varepsilon' + 4\delta)    H\left( \gamma(W'/\delta)\right)+ g(\varepsilon') + 2 h_2(\delta)~.
	\end{align} 
	Since the last inequality holds for all $n$, we obtain the desired result by taking the limit $n \to \infty$ and applying \eqref{eq:reg-en-constr-cap-equality}.
\end{proof}

\bigskip
\subsection{Bound on the energy-constrained quantum capacity of an $\varepsilon$-close-degradable channel}

An upper bound on the quantum capacity of an $\varepsilon$-close-degradable channel was established as \cite[Proposition~A.2(i)]{sutter2017approx} for the finite-dimensional case. Here, we provide a bound for the infinite-dimensional case with finite average energy constraints on the input and output states of the channels.

\begin{theorem}\label{thm:qcbound-eps-close}
	{Let $\N_{A\to B}$ be an $\varepsilon$-close-degradable channel, i.e., $\frac{1}{2} \left\Vert \N -\M \right\Vert_{\diamond} \leq \varepsilon <\varepsilon' \leq 1$, where $\M_{A\to B}$ is a degradable channel.  Let $G \in \P(\H_{A})$, $G' \in \P(\H_{B})$ be Gibbs observables, such that for all input states $\rho_{RA^n} \in \D(\H_R \otimes \H^{\otimes n}_A)$ satisfying the input average energy constraint $\tr\{\overline{G}_n\rho_{A^{n}}\}\leq W$, the following output average energy constraints are satisfied: 
		\begin{equation}
		\tr\{ \overline{G}'_n\N^{\otimes n}(\rho_{A^{n}})\}, \ \tr\{ \overline{G}'_n\M^{\otimes n}(\rho_{A^{n}})\} \leq W'~,
		\end{equation}
		where $W, W'\in\lbrack0,\infty)$. Then
		the energy-constrained quantum capacity $	Q(\N, G, W)$ is bounded from above as
	\begin{equation}
	Q(\N, G, W) \leq I_c(\M, G, W)	+(4\varepsilon' + 8\delta) H(\gamma(W'/\delta)) + 2g(\varepsilon') + 4h_2(\delta)~,
	\end{equation}}
with $\varepsilon'\in(\varepsilon,1]$ and $\delta = (\varepsilon' - \varepsilon)/(1+ \varepsilon')$.
\end{theorem}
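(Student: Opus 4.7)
The plan is to combine the continuity of energy-constrained capacities (Lemma~\ref{thm:continuous-cap}) with the additivity of coherent information for degradable channels. Since $\M$ is degradable, its energy-constrained quantum capacity admits the single-letter expression $Q(\M,G,W)=I_c(\M,G,W)$, so it will suffice to bound $Q(\N,G,W)-Q(\M,G,W)$ via a uniform per-copy bound on the difference of the coherent informations of $\N^{\otimes n}$ and $\M^{\otimes n}$.

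I would fix $n\geq 1$ and any input state $\rho_{A^n}$ with $\tr\{\overline{G}_n\rho_{A^n}\}\leq W$, write $I_c = H(B^n) - H(E^n)$, and apply the triangle inequality to obtain
\begin{align*}
& \left\vert I_c(\N^{\otimes n},\rho) - I_c(\M^{\otimes n},\rho) \right\vert \\
& \leq \left\vert H(\N^{\otimes n}(\rho)) - H(\M^{\otimes n}(\rho)) \right\vert + \left\vert H(\hat{\N}^{\otimes n}(\rho)) - H(\hat{\M}^{\otimes n}(\rho)) \right\vert.
\end{align*}
The first term would be bounded via Theorem~\ref{thm:continuity-output-entropy} applied with $\frac{1}{2}\left\Vert\N-\M\right\Vert_{\diamond}\leq\varepsilon$ and the stated output energy constraints, yielding the quantity $n\lbrack(2\varepsilon'+4\delta)H(\gamma(W'/\delta))+g(\varepsilon')+2h_2(\delta)\rbrack$. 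For the second term I would apply the same theorem to the complementary channels $\hat{\N}$ and $\hat{\M}$, after selecting compatible Stinespring dilations so that the diamond-norm distance between them is also bounded by $\varepsilon$ and so that the complementary outputs satisfy an analogous energy constraint. Summing the two contributions produces the per-copy constant $c=(4\varepsilon'+8\delta)H(\gamma(W'/\delta))+2g(\varepsilon')+4h_2(\delta)$.

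Next, because $\M$ is degradable, additivity and concavity of its coherent information give $\frac{1}{n}I_c(\M^{\otimes n},\rho)\leq I_c(\M,G,W)$ for every admissible $\rho_{A^n}$, so combining this with the preceding step delivers $\frac{1}{n}I_c(\N^{\otimes n},\rho) \leq I_c(\M,G,W)+c$. Taking the supremum over $\rho$ satisfying the input energy constraint and then the limit $n\to\infty$ via \eqref{eq:reg-en-constr-cap-equality} completes the argument; equivalently, one may invoke Lemma~\ref{thm:continuous-cap} with $F=Q(\cdot,G,W)$ and $f_n=I_c$ to reach the same conclusion in a single step.

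The main obstacle will be the second continuity bound. Theorem~\ref{thm:continuity-output-entropy} requires both a diamond-norm bound on the complementary channels and an output energy constraint on the environment systems, and neither is immediately implied by the stated hypothesis $\frac{1}{2}\left\Vert\N-\M\right\Vert_{\diamond}\leq\varepsilon$. The diamond-norm bound can be arranged by choosing compatible Stinespring dilations, e.g., via the Kretschmann--Schlingemann--Werner theorem, though in full generality this may weaken the constant; the environmental energy constraint must either be added as an additional hypothesis or verified directly in the bosonic Gaussian applications for which this theorem is intended, where the environmental mean photon number is controlled by the input constraint together with the beamsplitter/two-mode-squeezer structure of the channel.
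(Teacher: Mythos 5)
Your overall strategy --- continuity of the capacity via Lemma~\ref{thm:continuous-cap} plus the single-letter formula $Q(\M,G,W)=I_c(\M,G,W)$ for the degradable channel $\M$ --- matches the paper's, and your treatment of the first term $\left\vert H(\N^{\otimes n}(\rho)) - H(\M^{\otimes n}(\rho))\right\vert$ is fine. The gap is exactly where you suspect it: the second term. Passing to the complementary channels via compatible Stinespring dilations does \emph{not} preserve the diamond-norm bound $\varepsilon$; the continuity theorem for Stinespring dilations only yields $\frac{1}{2}\Vert\hat{\N}-\hat{\M}\Vert_{\diamond}\lesssim\sqrt{2\varepsilon}$, so your route would degrade the additive error from order $\varepsilon$ to order $\sqrt{\varepsilon}$ and would not produce the constant $(4\varepsilon'+8\delta)H(\gamma(W'/\delta))+2g(\varepsilon')+4h_2(\delta)$ claimed in the theorem. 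It also leaves you needing an environmental energy constraint that is not among the hypotheses.

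The paper sidesteps both problems with a purification identity rather than the complementary channel. Take the input on $A^n$ to be purified by a reference $R$; since the global state on $RB^nE^n$ is pure, $H(E^n)=H(RB^n)$, so the coherent information equals $H(B^n)-H(RB^n)$, and the quantity to be bounded becomes
\begin{equation*}
\left[H(B^n)_{\omega}-H(B^n)_{\tau}\right]+\left[H(RB^n)_{\tau}-H(RB^n)_{\omega}\right],
\end{equation*}
with $\omega=(\operatorname{id}_R\otimes\N^{\otimes n})(\rho_{RA^n})$ and $\tau=(\operatorname{id}_R\otimes\M^{\otimes n})(\rho_{RA^n})$. Both brackets are differences of output entropies of $\N^{\otimes n}$ versus $\M^{\otimes n}$ themselves; the second merely carries the reference system along, which Theorem~\ref{thm:continuity-output-entropy} already accommodates, and the Gibbs observable $G'$ acts only on the $B$ systems, so the stated output energy constraint $W'$ suffices. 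Applying Theorem~\ref{thm:continuity-output-entropy} twice gives the factor of $2$ and hence the stated constant, with no appeal to complementary channels at all. With that substitution your argument closes.
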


\begin{proof}
Let $\omega_{RB^n} = (\operatorname{id}_R \otimes \N^{\otimes n})(\rho_{RA^{n}})$ and $\tau_{RB^n} =  (\operatorname{id}_R \otimes \M^{\otimes n})(\rho_{RA^{n}})$, and consider the following chain of inequalities:
	\begin{align}
 & \!\!\!\! H(B^n)_{\omega} - H(RB^n)_{\omega} - H(B^n)_{\tau} + H(RB^n)_{\tau} \notag \\
 &= H(B^n)_{\omega} -H(B^n)_{\tau}+  H(RB^n)_{\tau} - H(RB^n)_{\omega} \\
	& \leq 2n\lbrack (2\varepsilon' + 4\delta) H( \gamma(W/\delta)) + g(\varepsilon') + 2h_2(\delta)\rbrack~,
	\end{align}
The first inequality follows from  applying Theorem \ref{thm:continuity-output-entropy} twice.
Then from Lemma \ref{thm:continuous-cap},
\begin{equation}
Q(\N, G, W) \leq Q(\M,G,W) + (4\varepsilon' + 8\delta) H(\gamma(W'/\delta)) + 2g(\varepsilon') + 4h_2(\delta).
\end{equation}
The desired result follows from the fact that the energy-constrained quantum capacity of a degradable channel is equal to the energy-constrained coherent information of the channel \cite{MH16}.
\end{proof}

\bigskip

\subsection{Bound on the energy-constrained private capacity of an $\varepsilon$-degradable channel}

In this section, we first derive an upper bound on the private capacity of an $\varepsilon$-degradable channel for the finite-dimensional case, which is different from any of the bounds presented in \cite{sutter2017approx}.  
Then, we generalize this bound to the infinite-dimensional case with finite average energy constraints on the input and output states of the channels. 

\begin{theorem}\label{thm:eps-deg-finite}
Let $\N_{A\to B}$ be a finite-dimensional $\varepsilon$-degradable channel with a degrading channel $\D_{B\to E'}$, and let $\hat{\N}:\T(A)\to \T(E)$ be a complementary channel of $\N$, such that $E' \simeq E$. If
\begin{equation} \label{eq:ud-finite}
U_{\D}(\N) = \max_{\rho \in \D(\H_A)} [H(\N(\rho)) - H((\D\circ\N)(\rho))],
\end{equation}
then the private capacity $P(\N)$ of $\N$ is bounded from above as
\begin{equation}
P(\N) \leq U_{\D}(\N) + 6 \varepsilon \log_2 \operatorname{dim}(\H_E)+ 3g(\varepsilon)~.
\end{equation}
\end{theorem}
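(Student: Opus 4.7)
The plan is to adapt the proof of Theorem~\ref{thm:qcbound-eps-approx} by using the regularized private-information formula $P(\N)=\lim_{n\to\infty}\tfrac{1}{n}P^{(1)}(\N^{\otimes n})$ (valid in finite dimensions) together with the additivity of $U_{\D}$ (Lemma~\ref{thm:add-Ud}) and a telescoping continuity argument. Fix an ensemble $\{p_X(x),\rho^x_{A^n}\}$; let $\omega_{XB^nE^n}=\sum_x p_X(x)|x\rangle\langle x|_X\otimes U_\N^{\otimes n}\rho^x_{A^n}(U_\N^\dagger)^{\otimes n}$ be the classical--quantum state from the Stinespring dilation of $\N$, and let $\sigma_{XE'^n}$ be the analogous state in which each instance of $\hat\N$ has been replaced by $\D\circ\N$. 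The goal is to upper bound $I(X;B^n)_\omega-I(X;E^n)_\omega$ by $n\bigl[U_\D(\N)+6\varepsilon\log_2\dim(\H_E)+3g(\varepsilon)\bigr]$; dividing by $n$ and sending $n\to\infty$ then yields the stated inequality.

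The next step is to split the private information as
\[
\bigl[I(X;B^n)_\omega-I(X;E'^n)_\sigma\bigr]+\bigl[I(X;E'^n)_\sigma-I(X;E^n)_\omega\bigr]
\]
and handle the two brackets separately. For the second bracket I would introduce the interpolating family $\{\rho^j\}_{j=0}^n$ in which the first $j$ environment systems are produced by $\D\circ\N$ and the remaining $n-j$ by $\hat\N$, so that $\rho^0=\omega_{XE^n}$, $\rho^n=\sigma_{XE'^n}$, and $\tfrac{1}{2}\|\rho^j-\rho^{j-1}\|_1\leq\varepsilon$ by the $\varepsilon$-degradability condition lifted to include $X$ and the spectator modes. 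A chain-rule telescoping reduces the difference to a sum of $n$ single-mode conditional mutual-information changes; writing each such change as a difference of an unconditioned entropy and a conditional entropy on the $j$-th $E$-mode of dimension $\dim(\H_E)$, and applying Fannes--Audenaert to the former and the Alicki--Fannes--Winter inequality to the latter, contributes altogether $n\bigl[4\varepsilon\log_2\dim(\H_E)+2g(\varepsilon)\bigr]$.

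For the first bracket, I would expand $I(X;\cdot)=H(\cdot)-H(\cdot\mid X)$. The unconditioned piece $H(B^n)_\omega-H(E'^n)_\sigma$ is at most $U_{\D^{\otimes n}}(\N^{\otimes n})=nU_\D(\N)$ by the definition \eqref{eq:ud-finite} and the additivity Lemma~\ref{thm:add-Ud}. The conditional piece $H(B^n\mid X)_\omega-H(E'^n\mid X)_\sigma=\sum_x p_X(x)\bigl[H(\N^{\otimes n}(\rho^x))-H((\D\circ\N)^{\otimes n}(\rho^x))\bigr]$ is handled by purifying each $\rho^x$ on a reference $R_x$, exploiting the Stinespring identity $H(\N^{\otimes n}(\rho^x))=H(R_xE^n)_{\phi^x}$ on the pure joint state $\phi^x$, and telescoping a Fannes--Audenaert bound between $\hat\N$ and $\D\circ\N$ mode by mode to obtain an $n\bigl[2\varepsilon\log_2\dim(\H_E)+g(\varepsilon)\bigr]$ contribution. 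Combining the two brackets, taking the supremum over ensembles, dividing by $n$, and sending $n\to\infty$ yields the claimed bound; the coefficients $6$ and $3$ are precisely the sum of the three AFW/Fannes contributions (one from the conditional-entropy piece of the first bracket, two from the second bracket).

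The main obstacle is exactly the conditional-entropy piece $H(B^n\mid X)_\omega-H(E'^n\mid X)_\sigma$ for mixed $\rho^x$: the Schmidt identity only holds on the purified joint state, so the purification step must be carried out in such a way that the reference dimension $\dim(\H_{R_x})$ does not enter the final estimate, which instead must remain additive in $n$ with single-mode dimension $\dim(\H_E)$ rather than the full-space dimension $\dim(\H_E)^n$. Making this step quantitatively tight is the delicate technical heart of the argument, and is made possible by the mode-wise structure of the $\varepsilon$-degradability condition.
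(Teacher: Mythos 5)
Your decomposition into the two brackets, and your treatment of the second bracket and of the unconditioned part of the first bracket, are fine. But the conditional piece $H(B^n\mid X)_\omega-H(E'^n\mid X)_\sigma=\sum_x p_X(x)\bigl[H(\N^{\otimes n}(\rho^x))-H((\D\circ\N)^{\otimes n}(\rho^x))\bigr]$ is a genuine gap, not merely a delicate step. The $\varepsilon$-degradability hypothesis only relates $H((\D\circ\N)^{\otimes n}(\rho^x))$ to $H(\hat\N^{\otimes n}(\rho^x))=H(E^n)_{\phi^x}$, whereas $H(\N^{\otimes n}(\rho^x))=H(B^n)_{\phi^x}=H(R_xE^n)_{\phi^x}$ on the purification; the residual term is $H(R_x\vert E^n)_{\phi^x}$, which for mixed $\rho^x$ can be as negative as $-n\log_2\dim(\H_A)$. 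Since for private information one cannot restrict to pure-state ensembles without loss of generality, no mode-wise telescoping or Fannes--Audenaert estimate will remove this reference-system contribution: the quantity you need to lower bound is simply not small, and your first bracket can exceed $nU_{\D}(\N)$ by a term of order $n\log_2\dim(\H_A)$ that has nothing to do with $\varepsilon$.

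The paper circumvents this with two ingredients you do not use. First, it takes a Stinespring dilation $\V:\T(B)\to\T(E')\otimes\T(F)$ of the degrading channel, so that $I(X;B^n)=I(X;E'^n)+I(X;F^n\vert E'^n)$ and the role of your unconditioned piece is played by the conditional entropy $H(F^n\vert E'^n)\leq nU_{\D}(\N)$ (using the form \eqref{eq:Ud} of $U_{\D}$ and Lemma~\ref{thm:add-Ud}). Second, and crucially, it refines each $\rho^x_{A^n}$ into pure states $\psi^{x,y}_{A^n}$ and uses data processing for conditional mutual information, $I(X;F^n\vert E'^n)_\omega\leq I(XY;F^n\vert E'^n)_\sigma$; because each $\sigma^{x,y}_{E^nE'^nF^n}$ is pure, the identity $H(F^nE'^n\vert XY)_\sigma=H(E^n\vert XY)_\sigma$ holds with no reference system at all, and the leftover $H(E'^n\vert XY)_\sigma-H(E^n\vert XY)_\sigma$ telescopes with the continuity bound, contributing the third $n[2\varepsilon\log_2\dim(\H_E)+g(\varepsilon)]$. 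This pure-state refinement applied only to the $F^n\vert E'^n$ term (where refinement can only increase the quantity) is exactly the mechanism that makes the Schmidt-duality step dimension-free; without it, your argument does not close.
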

\begin{proof}
	Consider Stinespring dilations $\U:\T(A)\to\T(B)\otimes \T(E)$ and $\V:\T(B) \to \T(E')\otimes \T(F)$ of the channel $\N$ and the degrading channel $\D$, respectively. Let $\rho_{XA^n}$ be a classical--quantum state in correspondence with an ensemble $\{p_X(x),\rho^x_{A^n} \}$:
	\begin{equation} \label{eq:cq-state-finite}
\rho_{XA^n} = \sum_{x} p_X(x) \ket{x}\bra{x}_X \otimes \rho^x_{A^n}~,
	\end{equation}
	and let
	\begin{equation}
	\omega_{XE^nE'^nF^n} =\sum_{x} p_X(x)\ket{x}\bra{x}_X \otimes (\operatorname{id}^{\otimes n}_E\otimes \V^{\otimes n})\circ \U^{\otimes n}(\rho^x_{A^n})~.
	\end{equation}
	Consider the following extension of $\omega_{XE^nE'^nF^n}$:
	\begin{equation}
	\sigma_{XYE^nE'^nF^n} = \sum_{x,y} p_X(x)p_{Y\vert X}(y\vert x)\ket{x}\bra{x}_X \otimes \ket{y}\bra{y}_Y \otimes (\operatorname{id}^{\otimes n}_E \otimes \V^{\otimes n})\circ \U^{\otimes n}(\psi^{x,y}_{A^n})~,
	\end{equation}
	where $\psi^{x,y}_{A^n}$ is a pure state, and let $\sigma^{x,y}_{E^nE'^nF^n} = (\operatorname{id}^{\otimes n}_E \otimes \V^{\otimes n})\circ \U^{\otimes n}(\psi^{x,y}_{A^n})$. Consider the following chain of inequalities:
	\begin{align}
	I(X;B^n)_{\omega} - I(X;E^n)_{\omega}& = I(X;F^n\vert E'^n)_{\omega}+ I(X;E'^n)_{\omega} -I(X;E^n)_{\omega}\\
	&= I(X;F^n\vert E'^n)_{\omega} + H(E'^n)_{\omega} - H(E^n)_{\omega}+ H(E^n\vert X)_{\omega} - H(E'^n\vert X)_{\omega}\\
	&\leq I(X;F^n\vert E'^n)_{\omega} + 2n[2\varepsilon \log_2\operatorname{dim}(\H_E)+g(\varepsilon)]\\
	&\leq I(XY;F^n\vert E'^n)_{\sigma}+n[4\varepsilon \log_2\operatorname{dim}(\H_E)+2g(\varepsilon)]\\
	& = 
	H(F^n\vert E'^n)_{\sigma} - H(F^nE'^n|XY)_\sigma + H(E'^n|XY)_\sigma \nonumber \\
	& \qquad +n[4\varepsilon \log_2\operatorname{dim}(\H_E)+2g(\varepsilon)]\\
&=		H(F^n\vert E'^n)_{\sigma} - H(E^n|XY)_\sigma + H(E'^n|XY)_\sigma \nonumber \\
	& \qquad +n[4\varepsilon \log_2\operatorname{dim}(\H_E)+2g(\varepsilon)]\\
	&\leq n[U_{\D}(\N)+6\varepsilon \log_2\operatorname{dim}(\H_E)+3g(\varepsilon)].
    \end{align}
    The first two equalities follow from entropy identities.
    The first inequality follows by  applying the telescoping technique twice and using the continuity result of the conditional quantum entropy for finite-dimensional quantum systems \cite{Winter15}. The second inequality follows from the quantum data processing inequality for conditional quantum mutual information. The last two equalities follow from entropy identities and by using that  $\sigma^{x,y}_{E^nE'^nF^n}$ is a pure state, so that $H(F^nE'^n)_{\sigma^{x,y}} = H(E^n)_{\sigma^{x,y}}$. The last inequality follows from the definition in \eqref{eq:ud-finite}, and additivity of $U_{\D}(\N)$ \cite{sutter2017approx}. Also, we applied the telescoping technique for each $\sigma^{x,y}$ in the summation, and used the continuity result of the conditional quantum entropy for finite-dimensional systems \cite{Winter15}. Since the chain of inequalities is true for any ensemble $\{p_X(x), \rho^x_{A^n}\}$, the final result follows from the definition of private information of the channel, dividing by $n$, taking the limit $n\to\infty$, and noting that the regularized private information is equal to the private capacity of any channel.
\end{proof}

 Next, we derive an upper bound on the energy-constrained private capacity of an $\varepsilon$-degradable channel. 
 
    \begin{theorem}\label{thm:pu2gen}
	Let $\N_{A\to B}$ be an $\varepsilon$-degradable channel with a degrading channel $\D_{B\to E'}$, and let $G \in \P(\H_{A})$, $G' \in \P(\H_{E'}) $ be Gibbs observables, such that for all input states $\rho_{A^n} \in \D(H^{\otimes n}_{A})$ satisfying input average energy constraints $\tr\{\overline{G}_n\rho_{A^{n}}\}\leq W$, the following output average energy constraints are satisfied: 
	\begin{equation}
	\tr\{\overline{G}'_n\hat{\N}^{\otimes n}(\rho_{A^{n}}) \}, \ \tr\{\overline{G}'_n (\D^{\otimes n}\circ\N^{\otimes n})(\rho_{A^{n}})\} \leq W'~,  
	\end{equation}
	where $\hat{\N}_{A \to E}$ is a complementary channel of $\N$, and $E' \simeq E$. Then the energy-constrained private capacity is bounded from above as
	\begin{equation}
	P(\N, G, W) \leq U_{\D}(\N,G,W) +(6\varepsilon' + 12\delta) H(\gamma(W'/\delta)) + 3g(\varepsilon') + 6 h_2(\delta)~,
	\end{equation}
	with $\varepsilon'\in(\varepsilon,1]$, $W, W'\in\lbrack0,\infty)$, and $\delta = (\varepsilon' - \varepsilon)/(1+ \varepsilon')$.
\end{theorem}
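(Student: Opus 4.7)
The plan is to mimic the finite-dimensional argument in Theorem~\ref{thm:eps-deg-finite} but to replace each use of the standard Fannes--Audenaert continuity bound with the energy-constrained continuity bound from Lemma~\ref{thm:cond-entropy} (used in the telescoping form of Theorem~\ref{thm:continuity-output-entropy}), and to collapse the $n$-letter penalty term $H(F^n \vert E'^n)$ to its single-letter value by invoking Lemma~\ref{thm:add-Ud}. The private capacity $P(\N, G, W)$ is, by \eqref{eq:reg-en-constr-pcap-equality}, the regularized limit of the energy-constrained private information of $\N^{\otimes n}$, so it suffices to upper-bound $\frac{1}{n}[I(X;B^n)_\omega - I(X;E^n)_\omega]$ uniformly over classical--quantum states $\rho_{XA^n}$ whose averaged input $\bar{\rho}_{A^n}$ satisfies $\tr\{\overline{G}_n \bar{\rho}_{A^n}\} \leq W$.

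First, I would introduce the Stinespring dilations $\U : \T(A)\to \T(B\otimes E)$ of $\N$ and $\V : \T(B)\to \T(E' \otimes F)$ of the degrading channel $\D$, and set $\omega_{XE^n E'^n F^n}$ to be the corresponding extension of the classical--quantum state on $XB^nE^n$ obtained by applying $\operatorname{id}_E \otimes \V$ on each copy. To exploit purity in the final step, I would refine the ensemble by spectrally decomposing each $\rho^x_{A^n} = \sum_y p_{Y\vert X}(y\vert x)\psi^{x,y}_{A^n}$ and work with the pure-state extension $\sigma_{XYE^nE'^nF^n}$, whose conditional states on $E^nE'^nF^n$ are pure for each $(x,y)$. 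Then the same identities as in Theorem~\ref{thm:eps-deg-finite} apply verbatim:
\begin{align}
I(X;B^n)_\omega - I(X;E^n)_\omega
&= I(X;F^n\vert E'^n)_\omega + [H(E'^n)_\omega - H(E^n)_\omega] + [H(E^n\vert X)_\omega - H(E'^n\vert X)_\omega] \notag\\
&\leq I(XY;F^n\vert E'^n)_\sigma + [H(E'^n)_\omega - H(E^n)_\omega] + [H(E^n\vert X)_\omega - H(E'^n\vert X)_\omega] \notag\\
&= H(F^n\vert E'^n)_\sigma + [H(E'^n \vert XY)_\sigma - H(E^n \vert XY)_\sigma] \notag\\
&\quad + [H(E'^n)_\omega - H(E^n)_\omega] + [H(E^n\vert X)_\omega - H(E'^n\vert X)_\omega],
\end{align}
where the last equality uses that $\sigma^{x,y}_{E^n E'^n F^n}$ is pure so that $H(F^n E'^n \vert XY)_\sigma = H(E^n\vert XY)_\sigma$. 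By Lemma~\ref{thm:add-Ud} the first term is at most $n\,U_{\D}(\N, G, W)$.

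It then remains to bound the three entropy differences. Each of them is of the form $H(A^n\vert C)_\mu - H(A^n\vert C)_\nu$ where the $A^n$-marginals of $\mu$ and $\nu$ are $\hat{\N}^{\otimes n}$- and $(\D\circ\N)^{\otimes n}$-outputs of the same underlying input (possibly conditioned on classical variables), and where $\frac{1}{2}\lVert \hat{\N} - \D\circ\N\rVert_\diamond \leq \varepsilon$ lifts to $\frac{1}{2}\lVert \mu - \nu\rVert_1 \leq \varepsilon$ by the definition of the diamond norm. The telescoping argument of the proof of Theorem~\ref{thm:continuity-output-entropy}, applied to each conditional component and then averaged using concavity of entropy and convexity of the constraint $\tr\{G' \gamma(\,\cdot\,)\} \leq \,\cdot\,$ (exactly as in Theorem~\ref{thm:qcbound-eps-approx}), gives for each of the three differences the bound $n[(2\varepsilon' + 4\delta)H(\gamma(W'/\delta)) + g(\varepsilon') + 2h_2(\delta)]$. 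Adding the three contributions, dividing by $n$, and taking $n\to \infty$ yields the claimed bound.

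The main technical obstacle is propagating the energy hypotheses through the refinement step: the ensemble $\{p_X(x),\rho^x_{A^n}\}$ only constrains the averaged input, while the telescoping-based continuity bound needs an energy bound on both states appearing in each entropy difference. The hypothesis of the theorem supplies the required output energy bound $W'$ on both $\hat{\N}^{\otimes n}(\bar{\rho}_{A^n})$ and $(\D\circ\N)^{\otimes n}(\bar{\rho}_{A^n})$, and pushing this through the classical conditioning on $X$ (and on $XY$) is done exactly as in the proof of Theorem~\ref{thm:qcbound-eps-approx}: the per-site energies $W'_j$ satisfying $\tfrac{1}{n}\sum_j W'_j \leq W'$ are fed into $H(\gamma(\cdot))$, and concavity of the entropy together with the Gibbs-state maximization produces the single $H(\gamma(W'/\delta))$ term. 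A minor bookkeeping point is that in the infinite-dimensional setting one must justify in passing that the relevant entropies are finite under the energy constraint, which is ensured by Condition~\ref{cond:finite-out-entropy}.
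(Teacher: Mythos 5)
Your proposal is correct and follows essentially the same route as the paper's proof, which itself is only a brief sketch deferring to the finite-dimensional argument of Theorem~\ref{thm:eps-deg-finite}: the same Stinespring dilations, the same entropy identities and purity step, Lemma~\ref{thm:add-Ud} for the $U_{\D}$ term, and three applications of the telescoping technique with Lemma~\ref{thm:cond-entropy} yielding the factor of three in the penalty terms. Your additional remarks on propagating the energy constraint through the classical conditioning (via linearity of the output energy and concavity) fill in details the paper leaves implicit; note only that for the final step one needs just the upper-bound direction of \eqref{eq:reg-en-constr-pcap-equality}, which is what is actually established in the cited reference.
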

\begin{proof}
Since the proof is similar to the above one and previous ones, we just summarize it briefly below.
	 Consider Stinespring dilations $\U:\T(A)\to\T(B)\otimes \T(E)$ and $\V:\T(B) \to \T(E')\otimes \T(F)$ of the channel $\N$ and the degrading channel $\D$, respectively.
Then the action of $\U^{\otimes n}$ followed by $\V^{\otimes n}$ on the  ensemble $\{p_X(x), \rho^x_{A^n}\}$ leads to the following ensemble:
\begin{equation}
\{ p_X(x) , \omega^x_{E^nE'^nF^n} \equiv (\operatorname{id}^{\otimes n}_E\otimes \V^{\otimes n})\circ \U^{\otimes n}(\rho^x_{A^n})\}.
\end{equation}
Similar to the above proof, from applying the telescoping technique three times  and using Lemma~\ref{thm:cond-entropy}, concavity of entropy, and Lemma~\ref{thm:add-Ud}, we get the following bound:
\begin{equation}
I(X;B^n)_{\omega} - I(X;E^n)_{\omega} \leq n [U_{\D}(\N,G,W) +(6\varepsilon' + 12\delta) H(\gamma(W'/\delta)) + 3g(\varepsilon') + 6 h_2(\delta)]~.
\end{equation}
The desired result follows from dividing by $n$,  taking the limit $ n \to \infty$, the definition of the energy-constrained private information of the channel, and using the fact that the regularized energy-constrained private information is an upper bound on the energy-constrained private capacity of a quantum channel \cite{MH16}. 
\end{proof}
\bigskip

\subsection{Bound on the energy-constrained private capacity of an $\varepsilon$-close-degradable channel}

An upper bound on the private capacity of an $\varepsilon$-close-degradable channel was established as \cite[Proposition~A.2(ii)]{sutter2017approx} for the finite-dimensional case. Here, we provide a bound for the infinite-dimensional case with finite average energy constraints on the input and output states of the channels. 
\begin{theorem} \label{thm:eps-close-priv-gen-b}
Let $\N_{A\to B}$ be an $\varepsilon$-close-degradable channel, i.e., $\frac{1}{2} \left\Vert \N -\M \right\Vert_{\diamond} \leq \varepsilon <\varepsilon' \leq 1$, where $\M_{A\to B}$ is a degradable channel.  Let $G \in \P(\H_{A})$, $G' \in \P(\H_{B})$ be Gibbs observables, such that for all input states $\rho_{A^n} \in \D(\H^{\otimes n}_A)$ satisfying input average energy constraints $\tr\{\overline{G}_n\rho_{A^{n}}\}\leq W$, the following output average energy constraints are satisfied: 
\begin{equation}
\tr\{ \overline{G}'_n\N^{\otimes n}(\rho_{A^{n}})\}, \, \tr\{ \overline{G}'_n\M^{\otimes n}(\rho_{A^{n}})\} \leq W'~,
\end{equation}
where $W, W'\in\lbrack0,\infty)$. Then
\begin{equation}
P(\N, G, W) \leq I_c(\M, G, W)	+(8\varepsilon' + 16\delta) H(\gamma(W'/\delta)) + 4g(\varepsilon') + 8h_2(\delta)~,
\end{equation}
with $\varepsilon'\in(\varepsilon,1]$, and $\delta = (\varepsilon' - \varepsilon)/(1+ \varepsilon')$.
\end{theorem}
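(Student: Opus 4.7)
The strategy parallels that of Theorem~\ref{thm:qcbound-eps-close}, with two modifications: the private information expands into four entropy terms (versus two for coherent information), and we must avoid invoking any diamond-norm closeness of the complementary channels $\hat{\N}$ and $\hat{\M}$, which the hypothesis does not directly give. The plan is first to bound $\frac{1}{n}\left\vert P^{(1)}(\N^{\otimes n},\overline{G}_n,W) - P^{(1)}(\M^{\otimes n},\overline{G}_n,W)\right\vert$ by the right-hand side of the claimed inequality, then apply Lemma~\ref{thm:continuous-cap} via \eqref{eq:reg-en-constr-pcap-equality}, and finally invoke $P(\M,G,W) = I_c(\M,G,W)$ for the degradable channel $\M$, which follows by additivity of the private information on degradable channels (as in \cite{MH16}).

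Fix any ensemble $\{p_X(x), \rho^x_{A^n}\}$ whose average $\bar{\rho}_{A^n} \equiv \sum_x p_X(x)\rho^x_{A^n}$ obeys $\tr\{\overline{G}_n \bar{\rho}_{A^n}\}\leq W$, and let $\omega$ and $\tau$ denote the associated classical-quantum output states under the Stinespring isometries of $\N^{\otimes n}$ and $\M^{\otimes n}$. Expanding $I(X;B^n)-I(X;E^n) = H(B^n) - H(B^n\vert X) - H(E^n) + H(E^n\vert X)$, the private-information difference splits into four entropy differences $\Delta_1,\Delta_2,\Delta_3,\Delta_4$. Theorem~\ref{thm:continuity-output-entropy} bounds $\vert\Delta_1\vert = \vert H(B^n)_\omega - H(B^n)_\tau\vert$ directly on input $\bar{\rho}_{A^n}$. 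For $\vert\Delta_2\vert \leq \sum_x p_X(x)\vert H(\N^{\otimes n}(\rho^x_{A^n})) - H(\M^{\otimes n}(\rho^x_{A^n}))\vert$, I would apply Theorem~\ref{thm:continuity-output-entropy} for each $x$ with an individual output energy bound $W'_x$, then use concavity of entropy together with $\sum_x p_X(x) W'_x \leq W'$ to collapse the $x$-dependent bounds into a single $H(\gamma(W'/\delta))$, as done in the proof of Theorem~\ref{thm:qcbound-eps-approx}.

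To handle $\vert\Delta_3\vert$ and $\vert\Delta_4\vert$, which naively would require closeness of $\hat{\N}$ and $\hat{\M}$, I would employ a purification trick. For a purification $\phi^x_{A^n R}$ of $\rho^x_{A^n}$, the state $(U_{\N}^{\otimes n}\otimes I_R)|\phi^x\rangle$ is pure on $B^n E^n R$, so its $E^n$-marginal has the same entropy as its $B^nR$-marginal:
\begin{equation*}
H(\hat{\N}^{\otimes n}(\rho^x_{A^n})) = H(B^n R)_{(\N^{\otimes n}\otimes \operatorname{id}_R)(\phi^x_{A^n R})},
\end{equation*}
and analogously for $\M$. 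Each $E^n$-entropy difference is thereby recast as a $B^nR$-entropy difference, to which Theorem~\ref{thm:continuity-output-entropy} applies using only the hypothesis $\frac{1}{2}\|\N-\M\|_\diamond \leq \varepsilon$. The same trick, applied to a purification of $\bar{\rho}_{A^n}$, handles $\vert\Delta_3\vert$. Summing the four identical per-term bounds yields the required continuity estimate, and Lemma~\ref{thm:continuous-cap} then lifts it to the capacity level.

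The main technical obstacle is precisely these complementary-channel entropies; the purification identity above resolves them without appealing to \cite[Proposition A.5]{sutter2017approx}, and thus without the $O(\sqrt{\varepsilon})$ loss such an appeal would incur. That each of the four entropy differences contributes the same per-term bound as in Theorem~\ref{thm:qcbound-eps-close} is the reason why the constants in the final bound are exactly twice as large as those of Theorem~\ref{thm:qcbound-eps-close}.
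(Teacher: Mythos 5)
Your proposal is correct and follows essentially the same route as the paper: the paper likewise expands the private-information difference into four entropy terms, applies Theorem~\ref{thm:continuity-output-entropy} four times (deferring to the expansions of \cite[Corollary~15]{Debbi15}, which handle the complementary-channel entropies by exactly the purification identity $H(E^n)=H(B^nR)$ you describe), and then concludes via Lemma~\ref{thm:continuous-cap} and the equality of the energy-constrained private capacity and coherent information for the degradable channel $\M$. Your write-up merely makes explicit the purification step that the paper cites from the reference, and the constants you obtain match.
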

\begin{proof}
We follow the proof of \cite[Corollary~15]{Debbi15} closely, but incorporate energy constraints.
Consider Stinespring dilations $\U: \T(A) \to \T(B)\otimes \T(E)$ and $\V:\T(A) \to \T(B)\otimes \T(E)$ of the channels $\N$ and $\M$, respectively. Consider an input ensemble $\{p_X(x), \rho^x_{A^n}\}$, which leads to the output ensembles 
\begin{align}
&\{ p_X(x), \omega^x  \equiv \U^{\otimes n} (\rho^x_{A^n})\}, \\
&\{ p_X(x), \tau^x  \equiv \ \V^{\otimes n} (\rho^x_{A^n})\}.
\end{align}
Supposing at first that the index $x$ is discrete, from four times applying Theorem~\ref{thm:continuity-output-entropy} and employing the same expansions as in the proof of \cite[Corollary~15]{Debbi15} , we get
\begin{equation}
I(X;B^n)_{\omega} - I (X;E^n)_{\omega} - [I(X;B^n)_{\tau} - I(X;E^n)_{\tau}]\leq 4 n [ (2\varepsilon' + 4\delta) H\left( \gamma(W/\delta)\right) + g(\varepsilon') + 2h_2(\delta)].
\end{equation}
The upper bound is uniform and has no dependence on the particular ensemble except via the energy constraints. Thus, by approximation, the same bound applies to ensembles for which the index $x$ is continuous.
Then from Lemma \ref{thm:continuous-cap}, we find that
\begin{align}
P(\N, G, W) &\leq P(\M, G, W) + (8\varepsilon' + 16\delta) H(\gamma(W'/\delta)) + 4g(\varepsilon') + 8h_2(\delta)\\
& = I_c(\M, G, W)+ (8\varepsilon' + 16\delta) H(\gamma(W'/\delta)) + 4g(\varepsilon') + 8h_2(\delta).
\end{align}
The equality in the last line follows from the fact that the energy-constrained private capacity of a degradable channel is equal to the energy-constrained coherent information of the channel \cite{MH16}.
\end{proof}

\bigskip

\section{Upper bounds on energy-constrained quantum capacity of bosonic thermal channels}

 \label{sec:upper-bound-quantum-cap}

In this section, we establish three different upper bounds on the energy-constrained quantum capacity of a thermal channel:
\begin{enumerate}
\item  We establish a first upper bound using the theorem that any thermal channel can be decomposed as the concatenation of a pure-loss channel followed by a quantum-limited amplifier channel \cite{dp2006,dp2012}. We call this bound the data-processing bound and denote it by $Q_{U_1}$.

\item Next, we show that a thermal channel is an $\varepsilon$-degradable channel for a particular choice of degrading channel. Then an upper bound on the energy-constrained quantum capacity of a thermal channel directly follows from Theorem \ref{thm:qcbound-eps-approx}. We call this bound the $\varepsilon$-degradable bound and denote it by $Q_{U_2}$.

\item We establish a third upper bound on the energy-constrained quantum capacity of a thermal channel using the idea of $\varepsilon$-close-degradability. We show that the thermal channel is $\varepsilon$-close to a pure-loss bosonic channel for a particular choice of $\varepsilon$. Since a pure-loss bosonic channel is a degradable channel \cite{WPG07}, the bound on the energy-constrained quantum capacity of a thermal channel follows directly from Theorem \ref{thm:qcbound-eps-close}. We call this bound the $\varepsilon$-close-degradable bound and denote it by $Q_{U_3}$.
 \end{enumerate}
\noindent
In Section \ref{sec:comparision-of-bounds},
we compare,  for different parameter regimes, the closeness of these upper bounds with a known lower bound on the quantum capacity of thermal channels.

\subsection{Data-processing bound on the energy-constrained quantum capacity of bosonic thermal channels}

\label{sec:data-proc}

 In this section, we provide an upper bound using the theorem that any thermal channel $\L_{\eta, N_B}$ can be decomposed as the concatenation of a pure-loss channel $\L_{\eta', 0}$ with transmissivity $\eta'$ followed by a quantum-limited amplifier channel $\A_{G,0}$ with gain $G$ \cite{dp2006,dp2012}, i.e., 
\begin{equation}
\L_{\eta, N_B} = \A_{G,0} \circ \L_{\eta', 0}, \label{eq:decomp-gaussian}%
\end{equation}
where $G= (1-\eta) N_B + 1$, and $\eta' = \eta/G$. In Theorem \ref{thm:1.45bits}, we prove that the data-processing bound can be at most  1.45 bits larger than a known lower bound. 
\begin{theorem}\label{thm:qu1}
	An upper bound on the quantum capacity of a thermal channel $\L_{\eta, N_B}$ with transmissivity $\eta\in\lbrack1/2,1\rbrack$, environment photon number $N_B$, and input mean photon number constraint $N_S$  is given by
	\begin{align} 
	Q(\L_{\eta, N_B}, N_S )  &\leq    \max\{0,
	Q_{U_1}(\L_{\eta, N_B}, N_S )\}~,\\
	Q_{U_1}(\L_{\eta, N_B}, N_S ) &\equiv g(\eta' N_{S})-g[(1-\eta')N_{S}], \label{eq:qu1}
	\end{align}
	with $\eta' = \eta/((1-\eta)N_B+1)$.
\end{theorem}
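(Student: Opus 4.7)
The plan is to combine the Caruso--Giovannetti--Holevo decomposition $\L_{\eta,N_B}=\A_{G,0}\circ\L_{\eta',0}$ with $G=(1-\eta)N_B+1$ and $\eta'=\eta/G$, as recalled in \eqref{eq:decomp-gaussian}, with a data-processing step and the known formula for the energy-constrained quantum capacity of the pure-loss channel. The first move is to show
\begin{equation}
Q(\L_{\eta,N_B},N_S)\leq Q(\L_{\eta',0},N_S).
\end{equation}
This is a standard data-processing observation: the quantum-limited amplifier $\A_{G,0}$ is itself a valid quantum channel, so given any $(n,M,\hat n,N_S,\varepsilon)$ energy-constrained quantum code $(\E^{n},\D^{n})$ for $\L_{\eta,N_B}$, the pair $(\E^{n},\D^{n}\circ \A_{G,0}^{\otimes n})$ is an $(n,M,\hat n,N_S,\varepsilon)$ code for $\L_{\eta',0}$ with the same fidelity guarantee \eqref{eq:q-code-fidelity}. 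The input energy constraint \eqref{eq:energy-constraint-q-cap} is automatically preserved because the encoder is unchanged and the input system of $\L_{\eta,N_B}$ is identical to that of $\L_{\eta',0}$.

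Second, I would substitute the formula from \cite{Mark2012tradeoff,MH16} for the energy-constrained quantum capacity of the single-mode pure-loss channel, namely
\begin{equation}
Q(\L_{\eta',0},N_S)=\max\{0,\,g(\eta' N_S)-g((1-\eta')N_S)\}.
\end{equation}
Combining this with the inequality above produces exactly the claimed bound with $Q_{U_1}(\L_{\eta,N_B},N_S)=g(\eta' N_S)-g((1-\eta')N_S)$.

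The only subtlety to address explicitly is the appearance of the ``$\max\{0,\cdot\}$'' on the right-hand side. Although the statement assumes $\eta\in[1/2,1]$ (so the original thermal channel is not automatically antidegradable), the effective transmissivity $\eta'=\eta/((1-\eta)N_B+1)$ can still drop below $1/2$ when $N_B$ is large enough; in that regime $\L_{\eta',0}$ is antidegradable and its quantum capacity vanishes, which is precisely what the truncation at zero encodes. Accordingly, there is no serious obstacle to the proof: the decomposition is taken as given from \cite{dp2006,dp2012}, the data-processing inequality is a direct operational argument, and the remaining pure-loss capacity formula is a known result. Essentially all the work in this paper is done elsewhere (in establishing the $\eta'\geq 1/2$ tightness comparison with the lower bound in Theorem~\ref{thm:1.45bits}); the present statement is the easy packaging step.
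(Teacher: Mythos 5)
Your proposal is correct and follows essentially the same route as the paper: the decomposition $\L_{\eta,N_B}=\A_{G,0}\circ\L_{\eta',0}$, a data-processing step showing $Q(\A_{G,0}\circ\L_{\eta',0},N_S)\leq Q(\L_{\eta',0},N_S)$ (which you usefully spell out at the level of codes, noting the encoder and hence the input energy constraint are untouched), and substitution of the known energy-constrained pure-loss capacity formula. Your observation that the $\max\{0,\cdot\}$ is inherited from the pure-loss formula when $\eta'<1/2$ is also consistent with the paper's treatment.
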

\begin{proof}
	An upper bound on the energy-constrained quantum capacity can be established by using \eqref{eq:decomp-gaussian} and a data-processing argument. We find that
	\begin{align}
	Q(\L_{\eta, N_B}, N_S )  &  =Q(\mathcal{\mathcal{A}}_{G}\circ
	\L_{\eta', 0},N_S)\\
	&  \leq Q(\L_{\eta', 0},N_S)\\
	&  =\max\{0,g(\eta' N_{S})-g[(1-\eta')N_{S}]\} ~. 
	\end{align}
	The first inequality follows from definitions and data processing---the energy-constrained capacity of $\mathcal{\mathcal{A}}_{G}\circ
	\L_{\eta', 0}$ cannot exceed that of
	$\L_{\eta', 0}$. The second equality follows from the formula for the energy-constrained quantum capacity of a pure-loss bosonic channel with transmissivity $\eta'$ and input mean photon number $N_S$ \cite{Mark2012tradeoff,MH16}. 
\end{proof}

\begin{remark} \label{rem:unconstrained-qcap-thermal}
	Applying Remark \ref{rem:unconstrained-cap}, we find the following data-processing bound
	$Q_{U_1}(\L_{\eta, N_B})$ on the unconstrained quantum capacity of bosonic thermal channels: 
		\begin{align}
		Q(\L_{\eta, N_B}) \leq Q_{U_1}(\L_{\eta, N_B}) &= \sup_{N_S: N_S\in[0,\infty]}Q_{U_1}(\L_{\eta, N_B}, N_S )\\
	&=\lim_{N_S \to \infty} Q_{U_1}(\L_{\eta, N_B}, N_S) \\
	&= \log_{2}(\eta/(1-\eta)) - \log_2(N_B+1),\label{eq:unconstrained-qcap-thermal}
	\end{align}
	where the second equality follows from the monotonicity of $g(\eta N_S) -g[(1-\eta)N_S]$ with respect to $N_S$ for $\eta\geq 1/2$ \cite{GSE08}.
	
	The bound 
	\begin{equation}
	Q(\L_{\eta, N_B}, N_S ) \leq -\log_2([1-\eta] \eta^{N_B}) - g(N_B)
	\label{eq:thermal-bnd-PLOBWTB}
	\end{equation}
	 was found in  \cite{PLOB17,WTB17}. Moreover, the following bound was established quite recently in \cite[Eq.~(40)]{RMG18}:
	 \begin{align}
	 Q(\L_{\eta, N_B}, N_S) \leq \max\left\{0, \log_{2}\frac{\eta - N(1-\eta)}{(1+N)(1-\eta)}  \right\}~. \label{eq:thermal-bnd-RMG}
	 \end{align}
	 As discussed in \cite{RMG18}, a comparison of \eqref{eq:unconstrained-qcap-thermal} with the bounds from \eqref{eq:thermal-bnd-PLOBWTB} and \eqref{eq:thermal-bnd-RMG} leads to the conclusion that the bound given in \eqref{eq:thermal-bnd-RMG} is always tighter than \eqref{eq:unconstrained-qcap-thermal}. However, \eqref{eq:unconstrained-qcap-thermal} and the bound in \eqref{eq:thermal-bnd-PLOBWTB} are incomparable as one is better than the other for certain parameter regimes. Also, \eqref{eq:thermal-bnd-PLOBWTB} is tighter than \eqref{eq:thermal-bnd-RMG} for certain parameter regimes. 
	 
	 We note that the upper bound in \eqref{eq:thermal-bnd-RMG} was indepedently established in \cite{NAJ18}.
 \end{remark}

\begin{remark}
The data-processing bound $Q_{U_1}(\L_{\eta, N_B},N_S)$ on the energy-constrained quantum capacity
$Q(\L_{\eta, N_B},N_S)$
 places a strong restriction on the channel parameters $\eta$ and $N_B$. Since the quantum capacity of a pure-loss channel with transmissivity $\eta'$ is non-zero only for $\eta' > 1/2$, the energy-constrained quantum capacity $Q(\L_{\eta, N_B},N_S)$ is non-zero only for 
\begin{equation}\label{eq:ent-break-thermal}
1\geq \eta > \frac{N_B+1}{N_B+2}.
\end{equation}
However, \cite[Section~4]{dp2006} provides a stronger restriction on $\eta$ and $N_B$  than \eqref{eq:ent-break-thermal} does. 
\end{remark}

\subsection{$\varepsilon$-degradable bound on the energy-constrained quantum capacity of bosonic thermal channels}\label{sec:eps-deg-q-cap}

In this section, we provide an upper bound on the energy-constrained quantum capacity of a thermal channel using the idea of $\varepsilon$-degradability. In Theorem \ref{thm:qcbound-eps-approx}, we established a general upper bound on the energy-constrained quantum capacity of an $\varepsilon$-degradable channel. Hence, our first step is to construct the degrading channel $\D$ given in \eqref{eq:deg-map}, such that the concatenation of a thermal channel $\L_{\eta, N_B}$ followed by $\D$ is close in diamond distance to the complementary channel $\hat{\L}_{\eta, N_B}$ of the thermal channel $\L_{\eta, N_B}$.

We start by motivating the reason for choosing the particular degrading channel in \eqref{eq:deg-map}, 
which is depicted in Figure~\ref{fig:beamsplitter-transformations}, and then we find an upper bound on the diamond distance between $\D\circ\L_{\eta,N_B}$ and $\hat{\L}_{\eta, N_B}$. In general, it is computationally hard to perform the optimization over an infinite dimensional space required in the calculation of the diamond distance between Gaussian channels. However, we address this problem in this particular case  by introducing a channel that simulates the serial concatenation of the thermal channel and the degrading channel, and we call it the simulating channel, as given in \eqref{eq:simulating-ch}. This allows us to bound the diamond distance between the channels from above by the trace distance between the environment states of the complementary channel and the simulating channel (Theorem~\ref{thm:eps-diamond}). Next, we argue that, for a given input mean photon-number constraint $N_S$, a thermal state with mean photon number $N_S$ maximizes the conditional entropy of degradation defined in \eqref{eq:Ud}, which also appears in the general upper bound established in Theorem~\ref{thm:qcbound-eps-approx}. We finally provide an upper bound on the energy-constrained quantum capacity of a thermal channel by using all these tools and invoking Thereom \ref{thm:qcbound-eps-approx}.

We now establish an upper bound on the diamond distance between the complementary channel of the thermal channel and the concatenation of the thermal channel followed by a particular degrading channel. 
Let $\B$ and $\B'$ represent beamsplitter transformations with transmissivity $\eta$ and $(1-\eta)/\eta$, respectively. In the Heisenberg picture, the beamsplitter transformation $\B_{C_1 D_1 \to C_2 D_2}$ is given by
\begin{align}
&\hat{c}_2 = \sqrt{\eta} \hat{c}_1 -\sqrt{1-\eta} \hat{d}_1,\\
&\hat{d}_2 = \sqrt{1-\eta}\hat{c}_1+\sqrt{\eta} \hat{d}_1~.
\end{align}
Similarly, the beamsplitter transformation $\B'_{C_1 D_1 \to C_2 D_2}$ is given by
\begin{align}
&\hat{c}_2 = \sqrt{(1-\eta)/\eta} \hat{c}_1 +\sqrt{(2\eta-1)/\eta} \hat{d}_1,\label{eq:B'1}\\
&\hat{d}_2 =-\sqrt{(2\eta-1)/\eta}\hat{c}_1 +\sqrt{(1-\eta)/\eta} \hat{d}_1 \label{eq:B'2}~,
\end{align}
where $\hat{c}_1, \hat{c}_2$, $\hat{d}_1$, and $\hat{d}_2$  are annihilation operators representing various modes involved in the beamsplitter transformations. Here, $\eta \in [1/2,1]$. It is important to stress that there is a difference in phase between $\B$ and $\B'$ beamsplitter transformations, which is crucial in our development. 

\begin{figure*}
	\begin{center}
		{\includegraphics[width=.80\columnwidth]{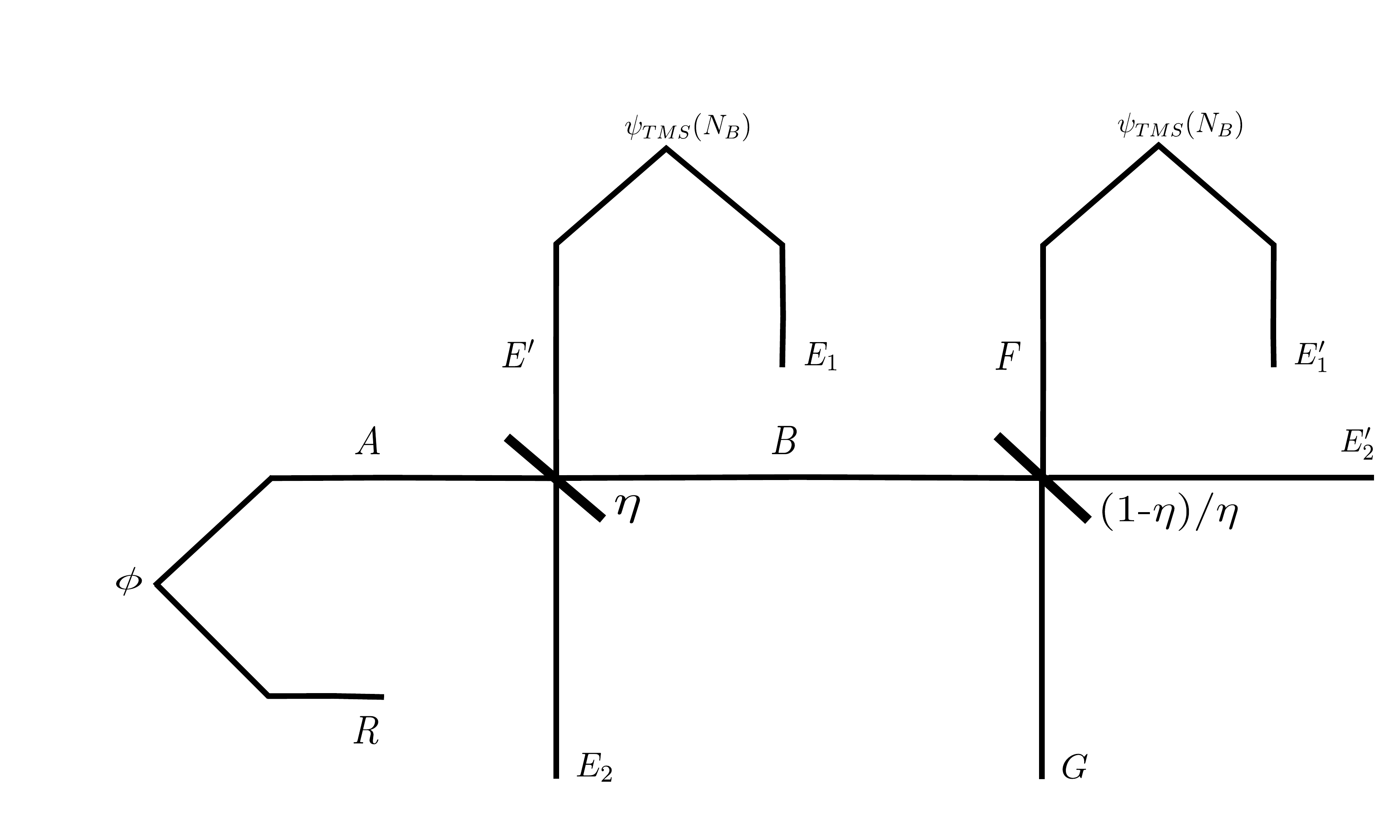}}
	\end{center}
	\caption{The figure plots a thermal channel with transmissivity $\eta\in[1/2,1]$ and a degrading channel as described in \eqref{eq:deg-map}. $\phi_{RA}$ is an input state to the beamsplitter $\B$ with transmissivity $\eta$ and $\psi_{\operatorname{TMS}}(N_B)$ represents a two-mode squeezed vacuum state with parameter $N_B$.  System $B$ is the output of the thermal channel, and systems $E_1E_2$ are the  outputs of the complementary channel. The second beamsplitter $\B'$ has transmissivity $(1-\eta)/\eta$, and system $B$ acts as an input to $\B'$. Systems $E'_1E'_2$ represent the output systems of the degrading channel, whose action is to tensor in the state $\psi_{\operatorname{TMS}}(N_B)_{F E_1'}$, interact the input system $B$ with $F$ according to $\B'$, and then trace over system $G$.  }%
	\label{fig:beamsplitter-transformations}%
\end{figure*}

Consider the following action of the thermal channel $\L_{\eta, N_B}$ on an input state $\phi_{RA}$:
\begin{equation}
\label{eq:th-ch}
(\operatorname{id}_R\otimes \L_{\eta, N_B})(\phi_{RA}) = \tr_{E_1E_2}\{ \B_{AE'\to BE_2}(\phi_{RA}\otimes \psi_{\operatorname{TMS}}(N_B)_{E'E_1})  \}~, 
\end{equation}
where $R$ is a reference system and $\psi_{\operatorname{TMS}}(N_B)_{E'E_1}$ is a two-mode squeezed vacuum state with parameter $N_B$, as defined in \eqref{eq:tms}.

Here and what remains in the proof, we consider the action of various transformations on the covariance matrices of the states involved, and we furthermore track only the submatrices corresponding to the position-quadrature operators of the covariance matrices. It suffices to do so because all channels involved in our discussion are phase-insensitive Gaussian channels.

The submatrix corresponding to the position-quadrature operators of the covariance matrix of $\psi_{\operatorname{TMS}}(N_B)_{E'E_1}$ has the following form:
\begin{equation}\label{eq:tms-cov}
V = %
\begin{bmatrix}
2 N_B+1 & 2\sqrt{N_B(1+N_B)} \\
2\sqrt{N_B(1+N_B)} & 2N_B+1 
\end{bmatrix}
~.\end{equation}
The action of a complementary channel $\hat{\L}_{\eta, N_B}$ on an input state $\phi_{RA}$ is given by
\begin{equation}
\label{eq:comp-ch}
(\operatorname{id}_R \otimes \hat{\L}_{\eta,N_B})(\phi_{RA}) = \tr_{B}\{ \B_{AE'\to BE_2}(\phi_{RA}\otimes \psi_{\operatorname{TMS}}(N_B)_{E'E_1})  \}~.
\end{equation}
 It can be understood from Figure~\ref{fig:beamsplitter-transformations} that the system $R$ is correlated with the input system $A$ for the channel, and the system $E'$ is the environment's input. The beamsplitter transformation $\B$ then leads to systems $B$ and $E_2$. Hence, the output of the thermal channel $\L_{\eta, N_B}$ is system $B$, and the outputs of the complementary channel $\hat{\L}_{\eta, N_B}$ are systems $E_1$ and $E_2$. 
 
Our aim is to introduce a degrading channel $\D$, such that the combined state of $R$ and the output of $\D\circ\L_{\eta, N_B}$ emulate the combined state of $R, E_1$, and $E_2$, to an extent. This will then allow us to bound the diamond distance between $\D\circ\L_{\eta, N_B}$ and $\hat{\L}_{\eta, N_B}$ from above. For the case when there is no thermal noise, i.e., $N_B = 0 $, a thermal channel reduces to a pure-loss channel. Moreover, we know that a pure-loss channel is a degradable channel and the corresponding degrading channel can be realized by a beamsplitter with transmissivity $(1-\eta)/\eta$ \cite{GSE08}. Hence, we consider a degrading channel, such that it also satisfies the conditions for the above described special case.
 
  Consider a beamsplitter with transmissivity $(1-\eta)/\eta$ and the beamsplitter transformation  $\B'$ from \eqref{eq:B'1}-\eqref{eq:B'2}. As described in Figure \ref{fig:beamsplitter-transformations}, the output $B$ of the thermal channel $\L_{\eta, N_B}$ becomes an input to the beamsplitter $\B'$. We consider one mode ($F$ in Figure \ref{fig:beamsplitter-transformations}) of the two-mode squeezed vacuum state $\psi_{\operatorname{TMS}}(N_B)_{FE'_1}$ as an environmental input for $\B'$, so that the subsystem $E'_1$ mimics $E_1$. Hence, our choice of degrading channel seems reasonable, as the combined state of system $R$ and output systems $E'_1$, $E'_2$ of $\D\circ\L_{\eta, N_B}$ emulates the combined state of $R, E_1$, and $E_2$, to an extent. We suspect that our choice of degrading channel is a good choice because an upper bound on the energy-constrained quantum capacity of a thermal channel using this technique outperforms all other upper bounds for certain parameter regimes. 
  We denote our choice of degrading channel by $\D_{(1-\eta)/\eta,N_B}:\T(B) \to \T(E'_1)\otimes \T(E'_2)$. More formally,  $\D_{(1-\eta)/\eta,N_B}$ has the following action on the output state  $\L_{\eta,N_B}(\phi_{RA})$:
\begin{equation}
\label{eq:deg-map}
(\operatorname{id}_R \otimes[\D_{(1-\eta)/\eta,N_B}\circ\L_{\eta,N_B}])(\phi_{RA}) = \tr_{G}\{\B'_{BF\to E'_2G}(\L_{\eta,N_B}(\phi_{RA})\otimes \psi_{\operatorname{TMS}}(N_B)_{FE'_1})  )  \}~.
\end{equation}

Next, we provide a strategy to bound the diamond distance between $\D_{(1-\eta)/\eta,N_B}\circ \L_{\eta, N_B}$ and $\hat{\L}_{\eta, N_B}$. Consider the following submatrix corresponding to the position-quadrature operators of the covariance matrix of an input state $\phi_{RA}$:
\begin{equation}
\gamma=%
\begin{bmatrix}
a & c\\
c & b%
\end{bmatrix}
.\label{eq:input-cov}%
\end{equation}
where $a,b,c\in\mathbb{R}$ are such that the above is the position-quadrature part of a legitimate covariance matrix.
Let $\xi_{RE'_2E_2E_1GE'_1}$ denote the state after the beamsplitter transformations act on an input state $\phi_{RA}$:
\begin{equation}\label{eq:full-dilation-state}
\xi_{RE'_2E_2E_1GE'_1} = \B'_{BF\to E'_2G}\lbrack\B_{AE'\to BE_2}[\phi_{RA}\otimes \psi_{\operatorname{TMS}}(N_B)_{E'E_1}] \otimes \psi_{\operatorname{TMS}}(N_B)_{FE'_1})\rbrack~.
\end{equation}  
Then the submatrix corresponding to the position-quadrature operators of the covariance matrix of the output state in \eqref{eq:deg-map} is given by \cite{Mathematica}:
\begin{equation}\label{eq:deg-cov}
\gamma' = %
\begin{bmatrix}
a & c\sqrt{1-\eta} & 0\\
c\sqrt{1-\eta} & b +\eta(1-b+2 N_B) & 2 \sqrt{N_B(1+N_B)(2-1/\eta)}\\
0 & 2 \sqrt{N_B(1+N_B)(2-1/\eta)} & 2 N_B+1
\end{bmatrix}~.
\end{equation}

\begin{figure*}
	\begin{center}
		{\includegraphics[width=.80\columnwidth]{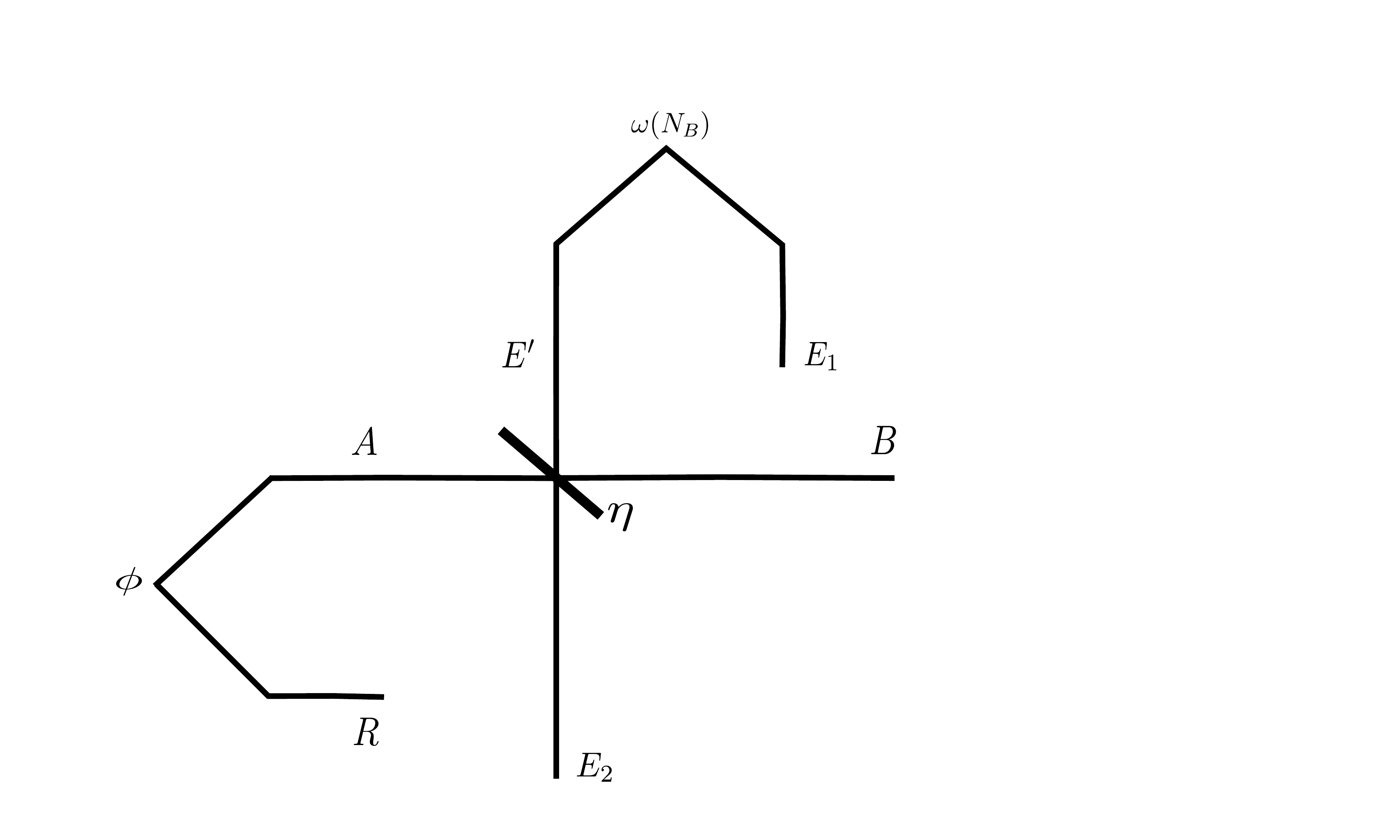}}
	\end{center}
	\caption{The figure plots the simulating channel $\Xi$ described in \eqref{eq:simulating-ch}. $\phi_{RA}$ is an input state to a beamsplitter $\B$ with transmissivity $\eta$ and $\omega(N_B)$ represents a noisy version of a two-mode squeezed vacuum state with parameter $N_B$ (see \eqref{eq:noisy-tms-cov}), one mode of which is an input to the environment mode of the beamsplitter. The simulating channel is such that system $B$ is traced over, so that the channel outputs are $E_1$ and $E_2$. Finally, the simulating channel is exactly the same as the channel from system $A$ to systems  $E'_1E'_2$ in Figure \ref{fig:beamsplitter-transformations}.
	}%
	\label{fig:simulating-channel}%
\end{figure*}

Now, we introduce a particular channel that simulates the action of $\D_{(1-\eta)/\eta,N_B}\circ \L_{\eta, N_B}$ on an input state $\phi_{RA}$. We denote this channel by $\Xi$, and it has the following action on an input state $\phi_{RA}$:
\begin{equation}\label{eq:simulating-ch}
(\operatorname{id}_R \otimes \Xi)(\phi_{RA}) = \tr_{B}\{\B_{AE' \rightarrow BE_2} (\phi_{RA}\otimes \omega(N_B)_{E'E_1}) \}~,
\end{equation}
where $\omega(N_B)_{E'E_1}$ represents a noisy version of a two-mode squeezed vacuum state with parameter $N_B$ and has the following submatrix corresponding to the position-quadrature operators of the covariance matrix:
\begin{equation}\label{eq:noisy-tms-cov}
V' = %
\begin{bmatrix}
2 N_B+1 & 2\sqrt{[N_B(1+N_B)(2\eta-1)]/\eta^2} \\
2\sqrt{[N_B(1+N_B)(2\eta-1)]/\eta^2} & 2N_B+1 
\end{bmatrix}
~.\end{equation}
The matrix $V'$ in \eqref{eq:noisy-tms-cov} is a well defined submatrix of the covariance matrix for the noisy version of a two-mode squeezed vacuum state, because $(2\eta-1)/\eta^2 \in [0,1]$ for $\eta \in [1/2,1]$.
The submatrix of the covariance matrix corresponding to the state in \eqref{eq:simulating-ch} is the same as the submatrix  in \eqref{eq:deg-cov} \cite{Mathematica}. In other words, the covariance matrix for the sytems $R$, $E'_1$, and $E'_2$ in Figure~\ref{fig:beamsplitter-transformations} is exactly the same as the covariance matrix for the systems $R$, $E_1$, and $E_2$ in Figure~\ref{fig:simulating-channel}.  This equality of covariance matrices is sufficient to conclude that the following equivalence holds for any quantum input state $\phi_{RA}$ (see \cite[Chapter 5]{AS17} for a proof):
\begin{equation}\label{eq:equiv}
(\operatorname{id}_R \otimes [ \D_{(1-\eta)/\eta,N_B}\circ\L_{\eta, N_B}])(\phi_{RA}) = (\operatorname{id}_R \otimes \Xi)(\phi_{RA})~.
\end{equation}
Thus, the channels $\D_{(1-\eta)/\eta,N_B}\circ\L_{\eta, N_B}$ and $\Xi$ are indeed the same. 

From \eqref{eq:comp-ch}, \eqref{eq:simulating-ch}, and \eqref{eq:equiv}, the action of both $\hat{\L}_{\eta,N_B}$ and $\Xi$ can be understood as tensoring the state of the environment with the input state of the channel, performing the beamsplitter transformation $\B$, and then tracing out the output of the channels.
Using these techniques, we now establish an upper bound on the diamond distance between the complementary channel in \eqref{eq:comp-ch} and the concatenation of the thermal channel followed by the degrading channel in \eqref{eq:deg-map}. 

\begin{theorem}\label{thm:eps-diamond}
	Fix $\eta \in [1/2,1]$. Let $\L_{\eta, N_B}$ be a thermal channel with transmissivity $\eta$, and let $\D_{(1-\eta)/\eta,N_B}$ be a degrading channel as defined in \eqref{eq:deg-map}. Then
	\begin{equation}
	\frac{1}{2}\left\Vert \hat{\L}_{\eta, N_B} - \D_{(1-\eta)/\eta,N_B}\circ\L_{\eta, N_B}\right\Vert_{\diamond} \leq \sqrt{1-\eta^2/\kappa(\eta,N_B)}~,
	\end{equation}     
	with 
	\begin{equation}
	{\kappa(\eta,N_B) = \eta^2 + N_B(N_B+1)\lbrack1+3 \eta^2 - 2\eta (1+\sqrt{2\eta -1})\rbrack }~.
	\end{equation}
\end{theorem}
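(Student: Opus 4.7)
The plan is to reduce the diamond-distance computation to a fidelity computation between two Gaussian states of the environment, and then to apply the Fuchs--van de Graaf inequality. The starting observation is the exact equivalence \eqref{eq:equiv} already established: $\D_{(1-\eta)/\eta,N_B}\circ\L_{\eta,N_B}=\Xi$, where the simulating channel $\Xi$ of \eqref{eq:simulating-ch} and the complementary channel $\hat{\L}_{\eta,N_B}$ of \eqref{eq:comp-ch} are built from exactly the same isometry: append an environment state, apply the beamsplitter $\B_{AE'\to BE_2}$, and trace out $B$. The only difference is that $\hat{\L}_{\eta,N_B}$ uses the pure two-mode squeezed vacuum $\psi_{\operatorname{TMS}}(N_B)_{E'E_1}$ on the environment, while $\Xi$ uses the noisy two-mode state $\omega(N_B)_{E'E_1}$ with covariance matrix \eqref{eq:noisy-tms-cov}.

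Since adjoining a fixed state, applying a fixed isometry, and taking a partial trace are each trace-nonincreasing, a standard data-processing argument for the diamond norm gives
\begin{equation}
\tfrac{1}{2}\bigl\Vert \hat{\L}_{\eta,N_B}-\D_{(1-\eta)/\eta,N_B}\circ\L_{\eta,N_B}\bigr\Vert_{\diamond}
=\tfrac{1}{2}\bigl\Vert \hat{\L}_{\eta,N_B}-\Xi\bigr\Vert_{\diamond}
\leq \tfrac{1}{2}\bigl\Vert \psi_{\operatorname{TMS}}(N_B)-\omega(N_B)\bigr\Vert_{1}.
\end{equation}
Because $\psi_{\operatorname{TMS}}(N_B)$ is pure, the Fuchs--van de Graaf inequality simplifies and yields
\begin{equation}
\tfrac{1}{2}\bigl\Vert \psi_{\operatorname{TMS}}(N_B)-\omega(N_B)\bigr\Vert_{1}
\leq \sqrt{1-F\bigl(\psi_{\operatorname{TMS}}(N_B),\omega(N_B)\bigr)},
\end{equation}
so the theorem reduces to showing
\begin{equation}
F\bigl(\psi_{\operatorname{TMS}}(N_B),\omega(N_B)\bigr)=\frac{\eta^{2}}{\kappa(\eta,N_B)}.
\end{equation}

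For the final step I would exploit that $\psi_{\operatorname{TMS}}(N_B)$ is pure, so $F=\langle\psi_{\operatorname{TMS}}(N_B)|\omega(N_B)|\psi_{\operatorname{TMS}}(N_B)\rangle$, which for zero-mean Gaussian states has a closed form in terms of the two covariance matrices. Writing out the full (position and momentum) covariance matrices of the two states in the standard phase-insensitive block form, the two off-diagonal blocks are proportional with ratio $\sqrt{(2\eta-1)/\eta^{2}}$, while the diagonal blocks coincide and equal $(2N_B+1)I$. Applying the known single-mode/pure Gaussian fidelity formula (equivalent to $F=1/\sqrt{\det[(V_1+V_2)/2]}$ valid when $V_1$ is pure, or equivalently evaluating the overlap through the Wigner-function integral) gives a rational function of $\eta$ and $N_B$ which, after a direct algebraic simplification, matches $\eta^{2}/\kappa(\eta,N_B)$ with $\kappa$ as stated.

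The main obstacle is the algebraic bookkeeping in that last calculation: verifying that the determinant of the averaged covariance matrix simplifies precisely to $\kappa(\eta,N_B)/\eta^{2}$, including the cross term $-2\eta\sqrt{2\eta-1}$ in $\kappa$, which comes from the square root appearing in the off-diagonal of $V'$ in \eqref{eq:noisy-tms-cov}. Everything else (the equivalence to $\Xi$, data processing for the diamond norm, and Fuchs--van de Graaf) is conceptual and short; the computational heart of the proof is this Gaussian fidelity evaluation.
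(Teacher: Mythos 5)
Your proposal is correct and follows essentially the same route as the paper's proof: reduce to the simulating channel $\Xi$, bound the diamond distance by the trace distance between the two environment states via monotonicity and unitary invariance, pass to fidelity (your Fuchs--van de Graaf step is the same bound the paper invokes as the Powers--Stormer inequality), and evaluate the Gaussian fidelity. Your shortcut of using the pure-state overlap $F=\langle\psi_{\operatorname{TMS}}|\omega|\psi_{\operatorname{TMS}}\rangle = 1/\sqrt{\det[(V_1+V_2)/2]}$ in place of the general two-mode Gaussian fidelity formula is valid and does simplify to $\eta^2/\kappa(\eta,N_B)$, including the $-2\eta\sqrt{2\eta-1}$ cross term.
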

\begin{proof}
	Consider the following chain of inequalities:
	\begin{align}
	&\left\Vert (\operatorname{id}_R \otimes \hat{\L}_{\eta, N_B}) (\phi_{RA}) - (\operatorname{id}_R\otimes [\D_{(1-\eta)/\eta,N_B}\circ\L_{\eta, N_B}])(\phi_{RA})\right\Vert_1\nonumber \\
	\qquad &= \left\Vert (\operatorname{id}_R \otimes \hat{\L}_{\eta, N_B}) (\phi_{RA}) - (\operatorname{id}_R \otimes \Xi)(\phi_{RA}) \right\Vert_1\\
	&= \left\Vert \tr_{B} \{ \B_{AE' \rightarrow BE_2}(\phi_{RA}\otimes\psi_{\operatorname{TMS}}(N_B)_{E'E_1} ) -  \B_{AE' \rightarrow BE_2}(\phi_{RA}\otimes\omega(N_B)_{E'E_1})  \} \right\Vert_1\\
	& \leq \left\Vert \B_{AE' \rightarrow BE_2}(\phi_{RA}\otimes\psi_{\operatorname{TMS}}(N_B)_{E'E_1} ) -  \B_{AE' \rightarrow BE_2}(\phi_{RA}\otimes\omega(N_B)_{E'E_1}) \right\Vert_1\\
	&= \left\Vert \phi_{RA}\otimes\psi_{\operatorname{TMS}}(N_B)_{E'E_1}  -  \phi_{RA}\otimes\omega(N_B)_{E'E_1} \right\Vert_1\\
	&= \left\Vert \psi_{\operatorname{TMS}}(N_B)_{E'E_1}  - \omega(N_B)_{E'E_1} \right\Vert_1\\
	&\leq 2\sqrt{1-F(\psi_{\operatorname{TMS}}(N_B)_{E'E_1} , \omega(N_B)_{E'E_1})}
	\end{align}
	The first equality follows from \eqref{eq:equiv}. The second equality follows from \eqref{eq:comp-ch} and \eqref{eq:simulating-ch}. The first inequality follows from monotonicity of the trace distance. The third equality follows from invariance of the trace distance under a unitary transformation (beamsplitter). The last inequality follows from the Powers-Stormer inequality \cite{powers1970}.
	
	Next, we compute the fidelity between $\psi_{\operatorname{TMS}}(N_B)_{E'E_1}$ and $\omega(N_B)_{E'E_1}$ by using their respective covariance matrices in \eqref{eq:tms-cov} and \eqref{eq:noisy-tms-cov}, in the Uhlmann fidelity formula for two-mode Gaussian states \cite{MM12}. We find \cite{Mathematica}
\begin{equation}
F(\psi_{\operatorname{TMS}}(N_B)_{E'E_1} , \omega(N_B)_{E'E_1}) = \frac{\eta^2}{\eta^2 + N_B(N_B+1)\lbrack1+3 \eta^2 - 2\eta (1+\sqrt{2\eta -1})\rbrack }~.
\end{equation}
Since these inequalities hold for any input state $\phi_{RA}$, the final result follows from the definition of the diamond norm.
\end{proof}
\bigskip

\begin{theorem}\label{thm:qu2}
An upper bound on the quantum capacity of a thermal channel $\L_{\eta, N_B}$ with transmissivity $\eta \in [1/2,1]$, environment photon number $N_B$, and input mean photon-number constraint $N_S$ is given by
\begin{multline} \label{eq:qu2}
Q(\L_{\eta, N_B} , N_S) \leq Q_{U_2}(\L_{\eta, N_B} , N_S) \equiv g(\eta N_S+(1-\eta)N_B) -g(\zeta_{+}) -g(\zeta_{-}) \\ +(2\varepsilon' + 4\delta) g([(1-\eta)N_S+(1+\eta)N_B]/\delta)
+ g(\varepsilon') + 2 h_2(\delta)~,
\end{multline}
with 
\begin{align}
&\varepsilon = \sqrt{1-\eta^2/\left(\eta^2 + N_B(N_B+1)\lbrack1+3 \eta^2 - 2\eta (1+\sqrt{2\eta -1})\rbrack\right)}~,\\
&\zeta_{\pm} = \frac{1}{2}\left(-1+\sqrt{[ (1+2N_B)^2 - 2\varrho + (1+2\vartheta)^2 \pm 4(\vartheta -N_B)\sqrt{[1+N_B+\vartheta]^2-\varrho}]/2}\right)~,\\
&\varrho = 4N_B(N_B+1)(2\eta-1)/\eta~,\\
&\vartheta = \eta N_B+(1-\eta)N_S~,
\end{align}
$\varepsilon' \in (\varepsilon,1]$, and $\delta = (\varepsilon' - \varepsilon)/(1+ \varepsilon')$.
\end{theorem}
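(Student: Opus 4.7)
The plan is to apply Theorem~\ref{thm:qcbound-eps-approx} with the degrading channel $\D_{(1-\eta)/\eta,N_B}$ defined in \eqref{eq:deg-map} and the diamond-distance bound of Theorem~\ref{thm:eps-diamond}. The latter immediately supplies the stated value of~$\varepsilon$. The remaining ingredients are to (i)~bound the output photon numbers produced by $\hat{\L}_{\eta,N_B}$ and $\D_{(1-\eta)/\eta,N_B}\circ\L_{\eta,N_B}$, and (ii)~evaluate the conditional entropy of degradation $U_{\D}(\L_{\eta,N_B},G,N_S)$ in closed form for the thermal channel.

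For the output energy, the $E_1$ mode of $\hat{\L}_{\eta,N_B}$ is the untouched reduction of the TMS, a thermal state of mean~$N_B$, while the $E_2$ mode receives the beamsplitter mixture and carries mean photon number at most $(1-\eta)N_S+\eta N_B$ whenever the input has mean at most~$N_S$. Thus the total $W'=(1-\eta)N_S+(1+\eta)N_B$ is the output-energy parameter entering the bound. The simulating-channel equivalence \eqref{eq:equiv} established in Section~\ref{sec:eps-deg-q-cap} shows that $\D_{(1-\eta)/\eta,N_B}\circ\L_{\eta,N_B}$ has identical single-mode output marginals to~$\hat{\L}_{\eta,N_B}$, which is also visible directly by setting $a=b=2N_S+1$ and $c=0$ in \eqref{eq:deg-cov}. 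Consequently the same $W'$ upper-bounds both output energies.

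For the conditional entropy of degradation, I would argue that the supremum in \eqref{eq:Ud-def} is attained by the thermal input $\theta(N_S)$. Rewriting $U_\D=\sup H(F\vert E)_{\V\circ\L_{\eta,N_B}(\rho)}$ for a Stinespring dilation $\V$ of~$\D$, both $\L_{\eta,N_B}$ and $\D_{(1-\eta)/\eta,N_B}$ are phase-covariant Gaussian channels, so a phase symmetrization of any input does not decrease $H(F\vert E)$ and preserves the energy constraint; Gaussian extremality at fixed second moments then singles out the mean-photon-number–saturating thermal state as the optimizer. At this input, $\L_{\eta,N_B}(\theta(N_S))$ is thermal with mean $\eta N_S+(1-\eta)N_B$, contributing $g(\eta N_S+(1-\eta)N_B)$. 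The two-mode Gaussian state $\D_{(1-\eta)/\eta,N_B}\circ\L_{\eta,N_B}(\theta(N_S))$ has its position-quadrature covariance read off from \eqref{eq:deg-cov} (with its momentum-quadrature counterpart obtained by the usual sign flip on the off-diagonals); diagonalizing $iV\Omega$ via Williamson's theorem yields the two symplectic eigenvalues $2\zeta_\pm+1$ with $\zeta_\pm$ exactly as stated. By the Gaussian entropy formula $H=\sum_j g((\nu_j-1)/2)$, the corresponding entropy is $g(\zeta_+)+g(\zeta_-)$, so $U_\D(\L_{\eta,N_B},G,N_S)=g(\eta N_S+(1-\eta)N_B)-g(\zeta_+)-g(\zeta_-)$. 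Plugging this and $W'=(1-\eta)N_S+(1+\eta)N_B$ into Theorem~\ref{thm:qcbound-eps-approx} and evaluating the Gibbs-state entropy $H(\gamma(W'/\delta))$ for the photon-number observable produces the stated bound.

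The principal obstacle is the rigorous justification that the thermal state is the optimizer for $U_\D$. Because the relevant quantity is a difference of two output entropies rather than a single entropy, the standard maximum-entropy argument does not apply directly; the proof must combine phase covariance of both the primary and the degrading channels with an extremality principle for $H(F\vert E)$ at fixed second moments, together with some care to ensure that the symmetrization and extremality steps remain valid in infinite dimension under the finite-energy constraint. The symplectic diagonalization that produces $\zeta_\pm$ and $\varrho,\vartheta$ is otherwise a routine, if tedious, computation.
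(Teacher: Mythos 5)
Your overall route is the same as the paper's: combine the diamond-distance estimate of Theorem~\ref{thm:eps-diamond} with the general $\varepsilon$-degradable bound of Theorem~\ref{thm:qcbound-eps-approx}, account for the output energy $W'=(1-\eta)N_S+(1+\eta)N_B$ (your bookkeeping for the $E_1$ and $E_2$ marginals is correct, and the simulating-channel equivalence \eqref{eq:equiv} indeed transfers the same bound to $\D_{(1-\eta)/\eta,N_B}\circ\L_{\eta,N_B}$), and then evaluate $U_{\D}$ at the thermal input by symplectic diagonalization of the two-mode output covariance matrix extracted from \eqref{eq:deg-cov}. That computation, yielding symplectic eigenvalues $2\zeta_\pm+1$ with $\Delta=(1+2N_B)^2+(1+2\vartheta)^2-2\varrho$, matches the paper.

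The genuine gap is in your justification that the energy-\emph{saturating} thermal state $\theta(N_S)$ optimizes $U_{\D}$. Gaussian extremality of the conditional entropy plus phase symmetrization (both channels being phase covariant, with concavity of $H(F|E)$ handling the averaging) only gets you to the conclusion that, \emph{for each fixed input photon number} $N_t\leq N_S$, the thermal state $\theta(N_t)$ is optimal among inputs of energy $N_t$. It does not tell you that $N_t=N_S$ is the best choice; ``extremality at fixed second moments'' cannot select which second moments to use. The paper's Proposition~\ref{thm:Ud-optimization} closes this with a third ingredient you omit: displacement covariance of $\W\circ\L_{\eta,N_B}$ together with the fact \eqref{eq:displaced-thermal-is-thermal} that a thermal state of higher mean photon number is a Gaussian mixture of displaced copies of a colder thermal state, so that concavity of $H(G|E_1'E_2')$ and invariance under the resulting local displacements give $H(G|E_1'E_2')_{(\W\circ\L)(\theta(N_t))}\leq H(G|E_1'E_2')_{(\W\circ\L)(\theta(N_S))}$. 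Without this monotonicity-in-energy step your argument establishes only $U_{\D}=\sup_{N_t\leq N_S}[\,\cdot\,]$ evaluated over thermal inputs of unspecified energy, and the closed-form expression $g(\eta N_S+(1-\eta)N_B)-g(\zeta_+)-g(\zeta_-)$ is not yet justified as equal to $U_{\D}(\L_{\eta,N_B},G,N_S)$.
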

\begin{proof}
From Theorem \ref{thm:eps-diamond}, we have an upper bound on the diamond distance between the complementary channel of the thermal channel and the concatenation of the thermal channel followed by the degrading channel, i.e.,
	\begin{multline}
\frac{1}{2}\left\Vert \hat{\L}_{\eta, N_B} - \D_{(1-\eta)/\eta,N_B}\circ\L_{\eta, N_B}\right\Vert_{\diamond} \\
\leq \sqrt{1-\eta^2/\left(\eta^2 + N_B(N_B+1)\lbrack1+3 \eta^2 - 2\eta (1+\sqrt{2\eta -1})\rbrack\right)}  < \varepsilon' \leq 1.
\end{multline}
Due to the input mean photon number constraint $N_S$, and environment photon number $N_B$ for both $\L_{\eta, N_B}$ and $\D_{(1-\eta)/\eta,N_B}$, there is a total photon number constraint $(1-\eta)N_S+(1+\eta)N_B$ for the average output of $n$ channel uses of both $\hat{\L}_{\eta, N_B}$ and $\D_{(1-\eta)/\eta,N_B}\circ\L_{\eta, N_B}$. Using these results in Theorem \ref{thm:qcbound-eps-approx}, we find the following upper bound on the energy-constrained quantum capacity of a thermal channel:
\begin{equation}
Q(\L_{\eta, N_B} , N_S) \leq U_{\D_{(1-\eta)/\eta,N_B}}(\L_{\eta, N_B},N_S) +(2\varepsilon' + 4\delta) g([(1-\eta)N_S+(1+\eta)N_B]/\delta)
+ g(\varepsilon') + 2 h_2(\delta)~.
\end{equation}
Using Proposition \ref{thm:Ud-optimization}, we find that the thermal state with mean photon number $N_S$ optimizes the conditional entropy of degradation $U_{\D_{(1-\eta)/\eta,N_B}}(\L_{\eta, N_B},N_S)$. For the given thermal channel in \eqref{eq:th-ch} and the degrading channel in \eqref{eq:deg-map}, we find the following analytical expression \cite{Mathematica}:  
\begin{equation}
{U_{\D_{(1-\eta)/\eta,N_B}}(\L_{\eta, N_B},N_S)=g(\eta N_S+(1-\eta)N_B) -g(\zeta_{+}) -g(\zeta_{-}) }~,
\end{equation}
 with $\zeta_{\pm}$  defined as in the theorem statement. 
\end{proof}

\begin{proposition}\label{thm:Ud-optimization}
Let $\L_{\eta, N_B}$ be a thermal channel with transmissivity $\eta\in[1/2,1]$, environment photon number $N_B$, and input mean photon number constraint $N_S$. Let $\D_{(1-\eta)/\eta,N_B}$ be the degrading channel from \eqref{eq:deg-map}. Then the thermal state with mean photon number $N_S$ optimizes the conditional entropy of degradation $U_{\D_{(1-\eta)/\eta,N_B}}(\L_{\eta, N_B}, N_S)$, defined from \eqref{eq:Ud}. 
\end{proposition}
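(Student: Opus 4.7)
My plan is to show that the thermal state $\theta(N_S)$ achieves the supremum in $U_{\D_{(1-\eta)/\eta,N_B}}(\L_{\eta,N_B},N_S)$ via a two-step argument: a symmetrization step reducing the optimization to phase-invariant inputs, followed by a Gaussian extremality step that singles out the thermal state among number-diagonal inputs.

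For the symmetrization step, I would use that both $\L_{\eta,N_B}$ and $\D_{(1-\eta)/\eta,N_B}$ are phase-insensitive Gaussian maps, hence covariant with respect to phase rotations $e^{i\theta\hat n}$: conjugation of the input by such a rotation induces conjugation of each of the two output modes of the dilation $\V\circ\L_{\eta,N_B}$ by a local unitary. Writing $U_\D$ in the conditional-entropy form of \eqref{eq:Ud} and using concavity of $H(F\vert E')$ in the bipartite state together with its invariance under local unitaries on $F$ and $E'$, one concludes that the phase-averaged input $\bar\rho = \frac{1}{2\pi}\int_0^{2\pi} e^{i\theta\hat n}\rho e^{-i\theta\hat n}\,d\theta$ preserves the energy constraint and cannot decrease the objective. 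Thus it suffices to optimize over inputs diagonal in the Fock basis, whose first moments vanish and whose covariance matrix is $(2N+1)I$ with $N=\operatorname{Tr}\{\hat n\rho\}\leq N_S$.

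For the Gaussian extremality step, any such number-diagonal input shares its first and second moments with the thermal state $\theta(N)$, so by the affine action of Gaussian channels on covariance matrices, the output of $\V\circ\L_{\eta,N_B}$ has the same first and second moments as the (Gaussian) output produced by $\theta(N)$. I would then invoke a Gaussian extremality argument for the conditional entropy $H(F\vert E')$ to conclude that this quantity is maximized, at fixed output second moments, by the Gaussian output---equivalently, by the Gaussian input $\theta(N)$. Symplectic diagonalization of the covariance matrix in \eqref{eq:deg-cov} then evaluates $H(F\vert E')_{(\V\circ\L_{\eta,N_B})(\theta(N))}$ to $g(\eta N+(1-\eta)N_B)-g(\zeta_+(N))-g(\zeta_-(N))$, and checking that this expression is nondecreasing in $N\in[0,N_S]$ by direct differentiation pins the optimizer at $N=N_S$.

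The main obstacle is the Gaussian extremality step, because standard extremality results à la Wolf--Giedke--Cirac control \emph{unconditional} entropies or mutual informations rather than the \emph{difference} $H(FE')-H(E')$. Resolving this likely requires either adapting the strong-superadditivity machinery to the energy-constrained infinite-dimensional setting, or exploiting the specific two-mode Gaussian structure of the output of $\V\circ\L_{\eta,N_B}$, whose $E'$-marginal is itself the output of an auxiliary Gaussian channel on the input, so that matched second moments on the input propagate to matched second moments separately on $FE'$ and on $E'$, allowing a maximum-entropy comparison on the full $FE'$ system to dominate the (equal) marginal contributions. The monotonicity in $N$ and the explicit symplectic calculation are routine technical steps.
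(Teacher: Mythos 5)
Your overall strategy is essentially the paper's: reduce to number-diagonal inputs by phase covariance plus concavity of conditional entropy, use Gaussian extremality to pass to the thermal state, and then argue that saturating the energy constraint is optimal. The paper orders the first two steps the other way (Gaussian extremality first, then phase averaging among Gaussian states), but that difference is cosmetic.

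Two points deserve comment. First, the step you flag as the "main obstacle" is not actually an obstacle: the conditional entropy $H(F\vert E')$ \emph{is} covered by the standard extremality machinery, because the negative conditional entropy (coherent information) is strongly superadditive as a consequence of strong subadditivity, so among all states with fixed first and second moments the Gaussian state maximizes $H(F\vert E')$. This is exactly the result the paper invokes (citing \cite{EW07}). However, your proposed fallback — comparing maximum entropies on $F E'$ and treating the "marginal contributions" $H(E')$ as equal — is wrong: a non-Gaussian number-diagonal input and the thermal input produce outputs on $E'$ with the same second moments but generally \emph{different} entropies (the Gaussian one is larger), so a max-entropy argument on $FE'$ alone does not control the difference $H(FE')-H(E')$. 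You need the strong-superadditivity route, not that workaround. Second, for the final step you assert monotonicity of $g(\eta N+(1-\eta)N_B)-g(\zeta_+(N))-g(\zeta_-(N))$ in $N$ "by direct differentiation" without carrying it out; this is not obviously routine. The paper instead realizes $\theta(N_S)$ as a Gaussian mixture of displaced copies of $\theta(N_t)$ (using \eqref{eq:displaced-thermal-is-thermal}), and combines displacement covariance of the channel with invariance of conditional entropy under local unitaries and concavity of conditional entropy — a cleaner argument that also generalizes (it is reused in the proof of Theorem~\ref{thm:1.45bits}). With the extremality step justified as above and the monotonicity step replaced by the displacement argument, your proof goes through.
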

\begin{proof}
Consider the Stinespring dilation in \eqref{eq:full-dilation-state} of the degrading channel $\D_{(1-\eta)/\eta,N_B}$ from \eqref{eq:deg-map}, and denote it by $\W$. Then according to \eqref{eq:Ud},
\begin{equation} \label{eq:max-ach-state}
U_{\D_{(1-\eta)/\eta,N_B}}(\L_{\eta, N_B},N_S) = \sup_{\rho \, :\,  \tr\{\hat{n}\rho\}\leq N_S} H(G\vert E'_1E'_2)_{(\W\circ\L_{\eta, N_B})(\rho)}~.
\end{equation}
Our aim is to find an input state $\rho$ with a certain photon number $N_t \leq N_S$, such that it maximizes the conditional entropy in $\eqref{eq:max-ach-state}$. From the extremality of Gaussian states applied to the conditional entropy \cite{EW07}, it suffices to perform the optimization in $\eqref{eq:max-ach-state}$ over only  Gaussian states. 

Now, we argue that for a given input mean photon number $N_t$, a thermal state is the optimal state for the conditional output  entropy in \eqref{eq:max-ach-state}. For a thermal channel and our choice of a degrading channel, a phase rotation on the input state is equivalent to a product of local phase rotations on the outputs. Let us denote the state after the local phase rotations on the outputs by
\begin{equation}
\sigma_{E'_2GE'_1}(\phi) =  (e^{i \phi \hat{n}} \otimes  e^{i \phi \hat{n}} \otimes  e^{-i \phi \hat{n}}) (\W\circ\L_{\eta, N_B})(\rho) (e^{-i \phi \hat{n}} \otimes  e^{-i \phi \hat{n}}\otimes  e^{i \phi \hat{n}}),
\end{equation}
and let
\begin{equation}
\xi_{E'_2GE'_1} = \frac{1}{2\pi} \int_{0}^{2\pi} d\phi~(\W\circ\L_{\eta, N_B})(e^{{i \phi \hat{n}}} \rho e^{{-i \phi \hat{n}}})~.
\end{equation}
Note that the phase covariance property mentioned above is the statement that the following equality holds for all $\phi \in [0,2\pi)$  \cite{Mathematica}:
\begin{equation}
\sigma_{E'_2GE'_1}(\phi) =  (\W\circ\L_{\eta, N_B})(e^{{i \phi \hat{n}}} \rho e^{{-i \phi \hat{n}}}).
\end{equation}
Consider the following chain of inequalities
for a Gaussian input state $\rho$:
\begin{align}
H(G\vert E'_1E'_2)_{(\W\circ\L_{\eta,N_B})(\rho)}
&= 
\frac{1}{2\pi} \int_{0}^{2\pi} d\phi~H(G\vert E'_1E'_2)_{\sigma(\phi)}\\
&= \frac{1}{2\pi} \int_{0}^{2\pi} d\phi~H(G\vert E'_1E'_2)_{(\W\circ\L_{\eta, N_B})(e^{{i \phi \hat{n}}} \rho e^{{-i \phi \hat{n}}})}\\
&\leq H(G\vert E'_1E'_2)_{\xi}\\
&= H(G\vert E'_1E'_2)_{(\W\circ\L_{\eta, N_B})(\theta(N_t))}~,
\end{align}
The first equality follows from invariance of the conditional entropy under local unitaries. The second equality follows from the phase covariance property of the channel. The inequality follows from concavity of conditional entropy. The last equality follows from linearity of the channel, and 
the following identity:
\begin{equation}
\label{eq:phase-avg-t-state}
\theta(N_t) = \frac{1}{2\pi} \int_{0}^{2\pi} d\phi~e^{{i \phi \hat{n}}} \rho e^{{-i \phi \hat{n}}}~.
\end{equation}
In \eqref{eq:phase-avg-t-state}, the state after the phase averaging is diagonal in the number basis, and furthermore, the resulting state has the same photon number $N_t$ as the Gaussian state $\rho$. The thermal state $\theta(N_t)$ is the only Gaussian state of a single mode that is diagonal in the number basis with photon number equal to $N_t$.

Next, we argue that, for a given photon number constraint, a thermal state that saturates the constraint is the optimal state for the  conditional output entropy.
Let
\begin{multline}
\tau_{E'_2GE'_1}(\alpha) = \\
[D(\sqrt{1-\eta}\alpha)\otimes D(\sqrt{2\eta-1}\alpha) \otimes I] [(\W \circ \L_{\eta, N_B})(\theta(N_t))] [D^{\dagger}(\sqrt{1-\eta}\alpha)\otimes D^{\dagger}(\sqrt{2\eta-1}\alpha)\otimes I]~.
\end{multline} 
 Consider the following chain of inequalities:
\begin{align}
H(G\vert E'_1E'_2)_{(\W\circ\L_{\eta, N_B})(\theta(N_t))} &= \int d^2\alpha~q_{(N_S-N_t)}(\alpha)~H(G\vert E'_1E'_2)_{(\W\circ\L_{\eta, N_B})(\theta(N_t))}\\
&= \int d^2\alpha~q_{(N_S-N_t)}(\alpha)~H(G\vert E'_1E'_2)_{\tau(\alpha) }\\
&= \int d^2\alpha~q_{(N_S-N_t)}(\alpha)~H(G\vert E'_1E'_2)_{(\W\circ\L_{\eta, N_B})(D(\alpha)\theta(N_t)D^{\dagger}(\alpha))}\\
&\leq H(G\vert E'_1E'_2)_{(\W\circ\L_{\eta, N_B})\theta(N_S)}~,
\end{align}
where $q_{N}(\alpha) = \exp\{-\vert\alpha\vert^2/N\}/\pi N$ is a complex-centered Gaussian distribution with variance $N\geq 0$.
The first equality follows by placing a probability distribution in front, and the second follows from  invariance of the conditional entropy under local unitaries. The third equality follows because the channel is covariant with respect to displacement operators, as reviewed in \eqref{eq:covariance-gaussian}. The last inequality follows from concavity of conditional entropy, and from the fact that a thermal state with a higher mean photon number can be realized by random Gaussian displacements of a thermal state with a lower mean photon number, as reviewed in  \eqref{eq:displaced-thermal-is-thermal}. Hence, for a given input mean photon number constraint $N_S$, a thermal state with mean photon number $N_S$ optimizes the conditional entropy of degradation defined from \eqref{eq:Ud}. 
\end{proof}

\begin{remark}
The arguments used in the proof of Proposition~\ref{thm:Ud-optimization} can be employed in more general situations beyond that which is discussed there. The main properties that we need are the following, when the channel involved takes a single-mode input to a multi-mode output:
\begin{itemize}
\item The channel should be phase covariant, such that a phase rotation on the input state is equivalent to a product of local phase rotations on the output.
\item The channel should be covariant with respect to displacement operators, such that a displacement operator acting on the input state is equivalent to a product of local displacement operators on the output.
\item The function being optimized should be invariant with respect to local unitaries and concave in the input state.
\end{itemize}
If all of the above hold, then we can conclude that the thermal-state input saturating the energy constraint is an optimal input state. We employ this reasoning again in the proof of Theorem~\ref{thm:1.45bits}.
\end{remark}

\subsection{$\varepsilon$-close-degradable bound on the energy-constrained quantum capacity of bosonic thermal channels}
\label{sec:eps-close-q-cap}

In this section, we first establish an upper bound on the diamond distance between a thermal channel and a pure-loss channel. Since a pure-loss channel is a degradable channel, an upper bound on the energy-constrained quantum capacity of a thermal channel directly follows from Theorem \ref{thm:qcbound-eps-close}.
\begin{theorem} \label{thm:thermal-pureloss}
	If a thermal channel $\L_{\eta, N_B}$  and a pure-loss bosonic channel $\L_{\eta,0}$ have the same transmissivity parameter $\eta \in[0, 1]$, then
	\begin{equation}
	\frac{1}{2}\left\Vert \L_{\eta, N_B} - \L_{\eta,0} \right\Vert_{\diamond} \leq \frac{N_B}{N_B+1}~.
	\end{equation}
\end{theorem}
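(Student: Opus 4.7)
The plan is to exploit the common Stinespring dilation of the two channels via a beamsplitter with transmissivity $\eta$, the sole difference being the environment state that is mixed in: the pure-loss channel $\L_{\eta,0}$ uses the vacuum $\ket{0}\bra{0}$, whereas the thermal channel $\L_{\eta,N_B}$ uses a thermal state $\theta(N_B)$ with mean photon number $N_B$. Since the diamond norm reduces to an input-state optimization, I will first fix an arbitrary pure input $\phi_{RA}$ and bound the trace-norm difference of the two output states.

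The key steps, in order, are the following. First, I will write both channel outputs as partial traces over the environment output of a common beamsplitter unitary $U_\eta$ acting on $\phi_{RA}$ tensored with the respective environment state. Second, I will apply monotonicity of the trace norm under partial trace (over the environment mode) to bound the difference by the trace-norm difference of the joint states after $U_\eta$. Third, I will use unitary invariance of the trace norm to eliminate $U_\eta$, leaving $\|\phi_{RA}\otimes(\theta(N_B)-\ket{0}\bra{0})\|_1$. Fourth, I will factor this as a product of trace norms, using that $\phi_{RA}$ is a state (trace norm one), reducing the problem to computing $\|\theta(N_B)-\ket{0}\bra{0}\|_1$.

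For the last step, I will exploit that both operators are diagonal in the number basis, so the trace norm equals the sum of the absolute values of the eigenvalues of the difference. Using the representation in \eqref{eq:thermalstate}, the $\ket{0}\bra{0}$ component contributes $|1-1/(N_B+1)| = N_B/(N_B+1)$, while the tail $\sum_{n\geq 1}\frac{1}{N_B+1}\bigl(\frac{N_B}{N_B+1}\bigr)^n$ sums to another $N_B/(N_B+1)$ by the geometric series. Thus $\|\theta(N_B)-\ket{0}\bra{0}\|_1 = 2N_B/(N_B+1)$, and dividing by two yields the claimed bound, uniformly in $\phi_{RA}$, so the same bound holds for the diamond norm after taking the supremum.

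I do not expect any real obstacle here: the dilation-plus-monotonicity argument is standard, and the only quantitative input is an explicit (and exact) trace-norm computation in the number basis. One mild subtlety to flag is that I use monotonicity of the trace norm under partial trace on an infinite-dimensional system, which is justified since partial trace is a (completely positive) trace-preserving map; I will also note that the exact value of $\|\theta(N_B)-\ket{0}\bra{0}\|_1$ means the bound is in fact tight for the reduction used, so any slack in the final diamond-norm inequality arises only from the monotonicity step.
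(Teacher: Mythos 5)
Your proposal is correct and follows essentially the same route as the paper's proof: a common beamsplitter dilation with different environment states, monotonicity of the trace norm under partial trace, unitary invariance, and the exact number-basis computation $\left\Vert \theta(N_B)-\ket{0}\bra{0}\right\Vert_1 = 2N_B/(N_B+1)$. The only addition beyond the paper's argument is your (correct) remark on tightness of the reduction, which aligns with the paper's subsequent discussion of a lower bound on this diamond distance.
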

\begin{proof} Let $\B$ represent the beamsplitter transformation, and let $\theta_E(N_B)$ and $\theta'_E(0)$ denote the states of the environment for the thermal channel and pure-loss channel, respectively. For any input state $\psi_{RA}$ to both thermal and pure-loss channels, the following inequalities hold:
 \begin{align}
& \left\Vert (\operatorname{id}_R \otimes \L_{\eta,N_B}) (\psi_{RA}) - (\operatorname{id}_R \otimes \L_{\eta, 0}) (\psi_{RA})\right\Vert_1 
\nonumber \\
& = 
 \left\Vert \tr_{E'}\{\B_{AE \rightarrow BE'} (\psi_{RA}\otimes \theta_E(N_B))  - \B_{AE \rightarrow BE'}( \psi_{RA} \otimes \theta'_E(0))\} \right\Vert_1\\
 & \leq \left\Vert \B_{AE \rightarrow BE'} (\psi_{RA}\otimes \theta_E(N_B))  - \B_{AE \rightarrow BE'}( \psi_{RA} \otimes \theta'_E(0)) \right\Vert_1\\
 &=\left\Vert \psi_{RA} \otimes \theta_E(N_B)   -  \psi_{RA}\otimes \theta'_E(0)  \right\Vert_1\\
 &=\left\Vert \theta_E(N_B) - \theta'_E(0) \right\Vert_1\\
 &= \left\Vert \sum_{n=0}^{\infty} \frac{(N_B)^n}{(N_B+1)^{n+1}}\ \ket{n}\bra{n} - \ket{0}\bra{0}\right\Vert_1 \\
 &= \frac{2N_B}{N_B+1}~.
   \end{align}	
   The first equality follows from the definition of the channel in terms of its environment and a unitary interaction (beam splitter). The first inequality follows from monotonicity of the trace distance. The second equality follows from invariance of the trace distance under a unitary operator (beamsplitter). The last equality follows from basic algebra. Since these inequalities hold for any state $\psi_{RA}$, the final result follows from the definition of the diamond norm.
\end{proof}

\bigskip

\begin{remark} In \cite{2016channel-discrimination}, it has been shown that the optimal strategy to distinguish two quantum thermal channels $\L_{\eta,N^1_B}$ and $\L_{\eta,N^2_B}$, each having the same transmissivity parameter $\eta$, and thermal noises $N^1_B$ and $N^2_B$, respectively, is to use a highly squeezed, two-mode squeezed vacuum state $\psi_{\operatorname{TMS}}(N_S)_{RA}$ as input to the channels. According to \cite[Eq.~(35)]{2016channel-discrimination}, 
	\begin{equation}
	\lim_{N_S \rightarrow \infty} F(\sigma_{N^1_B} , \sigma_{N^2_B}) = F(\theta(N^1_B), \theta(N^2_B)),
	\end{equation}
	where  $\sigma_{N^i_B} \equiv (\operatorname{id}_R \otimes \L_{\eta,N^i_B} )(\psi_{\operatorname{TMS}}(N_S)_{RA})$, and $\theta(N^i_B)$ is a thermal state with mean photon number $N^i_B$. Hence, a lower bound on the diamond distance in Theorem \ref{thm:thermal-pureloss} is given by 
	\begin{equation}
	\frac{1}{2}\left\Vert \L_{\eta, N_B} - \L_{\eta,0} \right\Vert_{\diamond} \geq 1-\sqrt{F(\theta(N_B), \theta(0))} = 1- 1/\sqrt{N_B+1},
	\end{equation} 
	where the inequality follows from the Powers-Stormer inequality \cite{powers1970}. We also suspect that the upper bound in Theorem \ref{thm:thermal-pureloss} is achievable, but we are not aware of a method for computing the trace distance of general quantum Gaussian states, which is what it seems would be needed to verify this suspicion.
\end{remark}

\bigskip
\begin{theorem}\label{thm:qu3}
	An upper bound on the quantum capacity of a thermal channel $\L_{\eta, N_B}$ with transmissivity $\eta \in \lbrack1/2,1\rbrack$, environment photon number $N_B$, and input mean photon number constraint $N_S$ is given by
	\begin{multline} \label{eq:qu3}
	Q(\L_{\eta, N_B},N_S) \leq Q_{U_3}(\L_{\eta, N_B},N_S) \equiv g(\eta N_S)-g[(1-\eta)N_S] \\
	+ (4\varepsilon' + 8\delta) g[(\eta N_S +(1-\eta)N_B)/\delta] + 2g(\varepsilon') + 4h_2(\delta)~, 
	\end{multline} 
	with $\varepsilon = N_B/(N_B+1)$, $\varepsilon' \in (\varepsilon,1]$ and $\delta = (\varepsilon' - \varepsilon)/(1+ \varepsilon')$.
\end{theorem}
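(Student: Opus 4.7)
The plan is to chain together Theorem~\ref{thm:thermal-pureloss} with the general $\varepsilon$-close-degradable bound of Theorem~\ref{thm:qcbound-eps-close}, since all the pieces we need are already in place. First, I would invoke Theorem~\ref{thm:thermal-pureloss} to obtain $\tfrac{1}{2}\left\Vert \L_{\eta,N_B} - \L_{\eta,0}\right\Vert_{\diamond} \leq N_B/(N_B+1) = \varepsilon$. Since the pure-loss channel $\L_{\eta,0}$ with $\eta \in [1/2,1]$ is degradable (a standard fact from \cite{WPG07}, with the natural degrading channel being a beamsplitter of transmissivity $(1-\eta)/\eta$), this immediately establishes that $\L_{\eta,N_B}$ is $\varepsilon$-close-degradable in the sense of Definition~\ref{def:eps-close}.

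Second, I would verify the output energy bookkeeping needed to apply Theorem~\ref{thm:qcbound-eps-close}. For any input state with mean photon number at most $N_S$, the output of $\L_{\eta,N_B}$ has mean photon number at most $\eta N_S + (1-\eta)N_B$, while the output of $\L_{\eta,0}$ has mean photon number at most $\eta N_S$. Both are dominated by $W' \equiv \eta N_S + (1-\eta)N_B$, so the same $W'$ serves as a uniform output energy constraint for both channels, and the same reasoning extends to the tensor-power outputs via the averaged observable $\overline{G}_n$.

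Third, I would apply Theorem~\ref{thm:qcbound-eps-close} directly with $\N = \L_{\eta,N_B}$, $\M = \L_{\eta,0}$, Gibbs observable $G = \hat{n}$ on the input, $W = N_S$, and $W' = \eta N_S + (1-\eta)N_B$. The Gibbs state that maximizes entropy under a mean photon number constraint $W'/\delta$ is the thermal state with that mean photon number, and $H(\gamma(W'/\delta)) = g(W'/\delta) = g[(\eta N_S+(1-\eta)N_B)/\delta]$, which reproduces the continuity term in the statement. Finally, I would substitute the known closed form $I_c(\L_{\eta,0}, \hat{n}, N_S) = g(\eta N_S) - g[(1-\eta)N_S]$ for the energy-constrained coherent information of the pure-loss channel \cite{WPG07,Mark2012tradeoff,MH16}, which yields exactly \eqref{eq:qu3}.

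There is no serious technical obstacle: the heart of the argument was already done in Theorem~\ref{thm:thermal-pureloss} (the diamond-distance bound) and Theorem~\ref{thm:qcbound-eps-close} (the general continuity bound for $\varepsilon$-close-degradable channels). The only point to be careful about is making sure the output energy constraint $W'$ is valid for both channels simultaneously, which is immediate from the Gaussian photon-number accounting above.
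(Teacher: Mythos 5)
Your proposal is correct and follows essentially the same route as the paper's proof: invoke Theorem~\ref{thm:thermal-pureloss} for the diamond-distance bound, note that $W' = \eta N_S + (1-\eta)N_B$ uniformly bounds the output energy of both channels, and then apply Theorem~\ref{thm:qcbound-eps-close} together with the closed form $I_c(\L_{\eta,0},\hat{n},N_S) = g(\eta N_S) - g[(1-\eta)N_S]$ for the degradable pure-loss channel. The energy bookkeeping you spell out is exactly the (brief) argument the paper gives.
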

\begin{proof}
	From Theorem \ref{thm:thermal-pureloss}, we have that $\frac{1}{2}\left\Vert \L_{\eta, N_B} - \L_{\eta,0} \right\Vert_{\diamond} \leq \frac{N_B}{N_B+1} < \varepsilon' \leq 1$. Due to the input mean photon number constraint $N_S$ for $n$ channel uses, the output mean photon number cannot exceed $\eta N_S + (1-\eta)  N_B$ for the thermal channel and $\eta N_S$ for the pure-loss channel. Hence, there is a photon number constraint $\eta N_S + (1-\eta)  N_B$ for the output  of both the thermal and pure-loss channels.  Since the pure-loss channel is a degradable channel for $\eta \in [1/2,1]$ \cite{WPG07,GSE08}, the final result follows directly from Theorem~\ref{thm:qcbound-eps-close}.
\end{proof}

\bigskip

\section{Comparison of upper bounds on the energy-constrained quantum capacity of bosonic thermal channels }\label{sec:comparision-of-bounds}

In this section, we study the closeness of the three different upper bounds when compared to a known lower bound. In particular, 
we use the following lower bound on the quantum capacity of a thermal channel \cite{HW01,Mark2012tradeoff} and denote it by $Q_L$:
\begin{multline}
Q(\L_{\eta, N_B}, N_S) \geq  Q_L(\L_{\eta, N_B}, N_S ) \equiv g(\eta N_S + (1-\eta) N_B) \\
- g([D+(1-\eta)N_S - (1-\eta)N_B-1]/2)  
- g([D-(1-\eta)N_S + (1-\eta)N_B-1]/2), \label{eq:ql}
\end{multline}
where
\begin{equation}
D^2 \equiv [(1+\eta)N_S+(1-\eta)N_B+1]^2 - 4\eta N_S(N_S+1).
\end{equation}
We start by discussing how close the data-processing bound $Q_{U_1}$ is to the aforementioned lower bound. In particular, we show that the data-processing bound $Q_{U_1}$ can be at most $1.45$ bits larger than $Q_L$. 
\begin{theorem}\label{thm:1.45bits}
		Let $\L_{\eta, N_B}$ be a thermal channel with transmissivity $\eta\in[1/2,1]$, environment photon number $N_B$, and input mean photon number constraint $N_S$. Then the following relation holds between the data-processing bound $Q_{U_1}(\L_{\eta, N_B},N_S)$ in \eqref{eq:qu1} and the lower bound $Q_L(\L_{\eta, N_B},N_S) $ in \eqref{eq:ql} on the energy-constrained quantum capacity of a thermal channel:
	\begin{equation} \label{eq:1.45}
	Q_L(\L_{\eta, N_B},N_S) \leq Q_{U_1}(\L_{\eta, N_B},N_S) \leq Q_L(\L_{\eta, N_B},N_S)+ 1/\ln 2 ~.
	\end{equation}
\end{theorem}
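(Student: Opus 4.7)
I would first dispatch the lower inequality $Q_L \leq Q_{U_1}$ via data processing. Both quantities are coherent informations at the thermal input $\theta(N_S)$: by direct computation $Q_L(\L_{\eta, N_B}, N_S) = I_c(\theta(N_S), \L_{\eta, N_B})$, and $Q_{U_1}(\L_{\eta, N_B}, N_S) = I_c(\theta(N_S), \L_{\eta', 0})$, since the pure-loss channel maps thermal inputs to thermal outputs and thermal complementary outputs. Applying the monotonicity of coherent information under post-composition to the decomposition $\L_{\eta, N_B} = \A_{G,0} \circ \L_{\eta', 0}$ from \eqref{eq:decomp-gaussian} then yields $Q_L \leq Q_{U_1}$.

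For the upper inequality $Q_{U_1} \leq Q_L + 1/\ln 2$, the plan is to set $\Delta(N_S) \equiv Q_{U_1}(\L_{\eta, N_B}, N_S) - Q_L(\L_{\eta, N_B}, N_S) \geq 0$ and show $\sup_{N_S \geq 0} \Delta(N_S) \leq 1/\ln 2$. The key step is to compute the $N_S \to \infty$ limit. Expanding $D^2$ as a polynomial in $N_S$, the leading $N_S^2$ coefficient collapses to $(1-\eta)^2$ and one obtains $D = (1-\eta) N_S + (1+\eta) N_B + 1 + O(1/N_S)$; hence $\zeta_+ \to (1-\eta) N_S + \eta N_B \to \infty$ while $\zeta_- \to N_B$. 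Substituting the asymptotic $g(x) = \log_2 x + \log_2 e + O(1/x)$ for large $x$, the divergent $\log_2 N_S$ contributions in $g(\eta' N_S), g((1-\eta') N_S), g(M), g(\zeta_+)$ all cancel, leaving
\begin{equation}
\lim_{N_S \to \infty} \Delta(N_S) = g(N_B) - \log_2(N_B+1) = N_B \log_2(1 + 1/N_B).
\end{equation}
Using $\ln(1+x) \leq x$ at $x = 1/N_B$, this limit equals $(N_B/\ln 2) \ln(1 + 1/N_B) \leq 1/\ln 2$, as desired.

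The main technical obstacle will be to justify that this limit is actually the supremum, i.e., that $\Delta(N_S)$ is non-decreasing in $N_S$. The direct plan is to differentiate, using $g'(x) = \log_2(1 + 1/x)$ together with the chain rule through the square-root $D(N_S)$ defining $\zeta_\pm$, and verify $\Delta'(N_S) \geq 0$ by careful grouping of the resulting terms. A cleaner conceptual alternative is to note that $\Delta(N_S) = I(R; E_2 \vert E_1)_{|\Phi\rangle}$, where $|\Phi\rangle_{RBE_2 E_1}$ is the Gaussian pure state obtained by applying in sequence the Stinespring dilations of $\L_{\eta', 0}$ (producing environment $E_1$) and $\A_{G,0}$ (producing environment $E_2$) to the two-mode squeezed vacuum purification $\psi_{\operatorname{TMS}}(N_S)_{RA}$ of $\theta(N_S)$; one would then argue monotonicity of this conditional mutual information in $N_S$ from the Gaussian structure. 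Either route reduces to an intricate but finite calculation, and I expect the algebraic bookkeeping around $D(N_S)$ and the interlinked $\zeta_\pm$ to be where most of the effort goes.
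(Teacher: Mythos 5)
Your plan follows the same overall route as the paper's proof: dispose of $Q_L \le Q_{U_1}$ separately, then show that $\Delta(N_S) \equiv Q_{U_1} - Q_L$ is non-decreasing in $N_S$ and bound its limit. The lower inequality via data processing of the coherent information through the decomposition \eqref{eq:decomp-gaussian} is correct and complete. Your limit computation is also correct, and in fact sharper than the paper's: the paper applies the large-argument asymptotic $g(x) = \ln(x+1) + 1 + O(1/x)$ uniformly, including to the term $g([D - (1-\eta)N_S + (1-\eta)N_B - 1]/2)$ whose argument tends to the bounded value $N_B$, and consequently reports $\lim_{N_S\to\infty}\Delta = 1/\ln 2$ exactly; the correct limit is $g(N_B) - \log_2(N_B+1) = N_B\log_2(1+1/N_B)$, as you obtain (e.g., it equals $1$ bit at $N_B=1$, not $1.44$ bits). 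Since $N_B\log_2(1+1/N_B) \le 1/\ln 2$, the theorem as stated is unaffected, but your version shows that the constant $1/\ln 2$ is approached only as $N_B\to\infty$.

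The genuine gap is the monotonicity of $\Delta$ in $N_S$, which you correctly flag as the crux but leave unproved. Your identity $\Delta = I(R;E_2|E_1)_{|\Phi\rangle}$ agrees numerically with the paper's decomposition $\Delta = H(E_2|B_2)_{\omega} + H(E_2|E_1)_{\omega}$ in \eqref{eq:sum-of-two-ce} (by purity of the global state), but ``argue monotonicity from the Gaussian structure'' is not yet an argument, and the choice of functional form matters. The paper's mechanism is: for $N_S' \ge N_S$, write $\theta(N_S')$ as a Gaussian mixture of displaced copies of $\theta(N_S)$ as in \eqref{eq:displaced-thermal-is-thermal}; use that the dilations are displacement covariant, so an input displacement becomes a product of local displacements on the output modes $B_2$, $E_1$, $E_2$; then invoke invariance of conditional entropy under local unitaries together with its concavity. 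Concavity is the engine, and it applies because $H(E_2|B_2)+H(E_2|E_1)$ is a concave function of the output state $\omega_{B_2E_1E_2}$, which is linear in the input $\rho_A$. The conditional mutual information involves the purifying system $R$, is not a linear-then-concave functional of $\rho_A$, and is neither concave nor convex in general, so your route (b) would have to be recast into the paper's form before it closes. Your route (a), differentiating through $D(N_S)$, is not what the paper does and is precisely where I would expect the bookkeeping to stall; the covariance-plus-concavity argument is the missing idea you would need to supply.
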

\begin{proof}
To prove this result, we first compute the difference between the data-processing bound in \eqref{eq:qu1} and the lower bound in \eqref{eq:ql} and show that it is equal to $1/\ln 2$ as $N_S \rightarrow \infty$. Next, we prove that the difference is a monotone increasing function with respect to input mean photon number $N_S \geq 0$. Hence, the difference $Q_{U_1}(\L_{\eta, N_B},N_S) - Q_L(\L_{\eta, N_B},N_S)$ attains its maximum value in the limit  $N_S \to \infty$.
We note that a similar statement has been given in \cite{smith2013} to  bound  the classical capacity of a thermal channel, but the details of the approach we develop here are different and are likely to be more broadly applicable to related future questions.

For simplicity, we denote $(1-\eta)N_B$ as $Y$, employ the natural logarithm for $g(x)$, and omit the prefactor $1/\ln 2$ from all instances of $g(x)$. We use the following property of the function $g(x)$: For large~$x$,
\begin{equation}
g(x)=\ln(x+1)+1 + O(1/x)  \label{eq:entropy-trick},
\end{equation}
so that as $x \rightarrow \infty$, the approximation $g(x)\approx\ln(x+1)+1  $ holds. 
Using (\ref{eq:entropy-trick}),  the data-processing bound in \eqref{eq:qu1} can be expressed as follows for large $N_S$:
\begin{equation}
\ln(Y+1+ \eta N_S) - \ln(Y+1+(Y+1-\eta)N_S) + O(1/N_S) ~.
\end{equation}
Similarly, the lower bound $Q_L$ in \eqref{eq:ql} can be expressed as
\begin{equation}
\ln(1+\eta N_S + Y) - \ln([1+D+(1-\eta)N_S-Y]/2) - \ln([1+D-(1-\eta)N_S+Y]/2) + O(1/N_S) - 1 ~.
\end{equation}
Let us denote the difference between $Q_{U_1}$ and $Q_L$ by $\Delta(\L_{\eta, N_B}, N_S )$.
\begin{equation}
\Delta(\L_{\eta, N_B}, N_S ) = Q_{U_1}(\L_{\eta, N_B},N_S) - Q_L(\L_{\eta, N_B},N_S).
\end{equation}
Then the difference simplifies as
\begin{align}
& \Delta(\L_{\eta, N_B}, N_S )\nonumber \\
&= 1- \ln(Y+1+(Y+1-\eta)N_S) +\ln([(1+D)^2-((1-\eta)N_S-Y)^2]/4) + O(1/N_S)~.\\
&= 1 -\ln(Y+1+(Y+1-\eta)N_S) + \ln([1+N_S(1-\eta+2Y)+Y+D]/2) + O(1/N_S)\\
&= 1+ \ln([1+N_S(1-\eta+2Y)+Y+D]/[2(Y+1+(Y+1-\eta)N_S)]) +O(1/N_S).
\end{align}
The second equality follows from the definition of $D^2$. Next, we show that
\begin{equation}
\ln([1+N_S(1-\eta+2Y)+Y+D]/[2(Y+1+(Y+1-\eta)N_S)])  \rightarrow 0
\end{equation}
as $N_S \rightarrow \infty$, and hence we get the desired result. Consider the following expression and take the limit ${N_S \rightarrow \infty}$:
\begin{align}
& \lim_{N_S \rightarrow \infty}  \frac{1+N_S(1-\eta+2Y)+Y+D}{2(Y+1+(Y+1-\eta)N_S)}\\
&=\lim_{N_S \rightarrow \infty}  \frac{1/N_S+(1-\eta+2Y)+Y/N_S+\sqrt{((1+\eta)+(Y+1)/N_S)^2-4\eta -4\eta/N_S}}{2((Y+1)/N_S+(Y+1-\eta))}\\
&\rightarrow \frac{(1-\eta+2Y)+1-\eta}{2(Y+1-\eta)} = 1~. 
\end{align}
Hence, $\lim_{N_S \rightarrow \infty} \Delta(\L_{\eta, N_B}, N_S ) = 1$. After incorporating the $1/\ln 2 $ factor, which was omitted earlier for simplicity, we find that the difference between the upper and lower bounds approaches $1/\ln 2$ ($\approx$ 1.45 bits) as $N_S \rightarrow \infty$. 

Now, we show that the difference $\Delta(\L_{\eta, N_B}, N_S )$ is a monotone increasing function with respect to input mean photon number $N_S \geq 0$. Let $\U^{\eta'}_{A\to B_1E_1}$ and $\V^{G}_{B_1 \to B_2E_2}$ denote Stinespring dilations of a pure-loss channel $\L_{\eta', 0}:A\to B_1$ and a quantum limited amplifier channel $\A_{G,0}:B_1 \to B_2$, respectively.  
For the energy-constrained quantum capacity of a pure-loss  channel, the thermal state as an input is optimal for any fixed energy or input mean photon number constraint $N_S$ \cite{Mark2012tradeoff}. Moreover, the lower bound in \eqref{eq:ql} is obtained for a thermal state with mean photon number $N_S$ as input to the channel. 
Then the action of a thermal channel $\L_{\eta, N_B}$ on an input state $\theta(N_S)$ can be expressed as
\begin{equation}
 \L_{\eta, N_B}(\theta(N_S)) = \tr_{E_1E_2}\{ (\operatorname{id}_{E_1} \otimes  \V^G_{B_1 \to B_2 E_2} )\circ \U^{\eta'}_{A \to B_1 E_1} (\theta(N_S))\}~.
\end{equation}
Consider the following state:
\begin{align}
 \omega_{B_2E_1E_2} =  (\operatorname{id}_{E_1} \otimes  \V^G_{B_1 \to B_2 E_2} )\circ \U^{\eta'}_{A \to B_1 E_1} (\theta(N_S))~.
\end{align}
Since the data-processing bound $Q_{U_1}(\L_{\eta, N_B},N_S)$ is equal to the quantum capacity of a pure-loss channel with transmissivity $\eta'$, which in turn is equal to coherent information for this case, \eqref{eq:qu1} can also be represented as
\begin{align}
Q_{U_1}(\L_{\eta, N_B},N_S) =  H(B_2E_2)_{\omega} - H(E_1)_{\omega}~. \label{eq:qu1-b2e2e1}
\end{align}
Similarly, the lower bound can be expressed as
\begin{align}
Q_L(\L_{\eta, N_B}, N_S) = H(B_2)_{\omega} - H(E_1E_2)_{\omega}. \label{eq:ql-b2e2e1}
\end{align}
Hence the difference between \eqref{eq:qu1-b2e2e1} and \eqref{eq:ql-b2e2e1} is given by
\begin{align}
\Delta(\L_{\eta, N_B},N_S) =  H(E_2\vert B_2)_{\omega}+ H(E_2\vert E_1)_{\omega}~.\label{eq:sum-of-two-ce}
\end{align}
Now, our aim is to show that the conditional entropies in \eqref{eq:sum-of-two-ce} are monotone increasing functions of $N_S$. We employ  displacement covariance of the channels, and note that this argument is similar to that used  in the proof of Proposition \ref{thm:Ud-optimization}.
Let 
\begin{align}
&\sigma_{B_2E_1E_2}(\alpha) = [D(\sqrt{\eta G}\alpha)\otimes I \otimes  D(\sqrt{\eta (G-1)}\alpha) ]~\omega_{B_2E_1E_2}~[D^{\dagger}(\sqrt{\eta G}\alpha)\otimes I \otimes D^{\dagger}(\sqrt{\eta (G-1)}\alpha)],\\
&\tau_{B_2E_1E_2}(\alpha)= [I \otimes D(\sqrt{1-\eta }\alpha)\otimes   D(\sqrt{\eta (G-1)}\alpha)]~\omega_{B_2E_1E_2} ~[I \otimes D^{\dagger}(\sqrt{1-\eta }\alpha)\otimes D^{\dagger}(\sqrt{\eta (G-1)}\alpha)]~.
\end{align}
Let $N'_S - N_S \geq 0$, and consider the following chain of inequalities:
\begin{align}
H(E_2\vert B_2)_{\omega} + H(E_2 \vert E_1)_{\omega} &= \int d^2\alpha ~q_{(N'_S-N_S)}(\alpha)~[H(E_2\vert B_2)_{\omega} + H(E_2 \vert E_1)_{\omega}]\\
&=\int d^2\alpha~q_{(N'_S-N_S)}(\alpha)~[H(E_2\vert B_2)_{\sigma(\alpha)} + H(E_2 \vert E_1)_{\tau(\alpha)}]\\
&= \int d^2\alpha~q_{(N'_S-N_S)}(\alpha)~[H(E_2\vert B_2)_{(\V^G\circ\U^{\eta'})(D(\alpha)\theta(N_S)D^{\dagger}(\alpha))}]\nonumber \\
&\qquad+ \int d^2\alpha~q_{(N'_S-N_S)}(\alpha)~ [H(E_2 \vert E_1)_{(\V^G\circ\U^{\eta'})(D(\alpha)\theta(N_S)D^{\dagger}(\alpha))}]\\
&\leq H(E_2\vert B_2)_{(\V^G\circ\U^{\eta'})(\theta(N'_S))}+ H(E_2\vert E_1)_{(\V^G\circ\U^{\eta'})(\theta(N'_S))}~.
\end{align}
The first equality follows by placing a probability distribution in front, and the second follows from invariance of the conditional entropy under local unitaries. The third equality follows because the channel is covariant with respect to displacement operators, as reviewed in \eqref{eq:covariance-gaussian}. The last inequality follows from concavity of conditional entropy, and from the fact that a thermal state with a higher mean photon number can be realized by random Gaussian displacements of a thermal state with a lower mean photon number, as reviewed in \eqref{eq:displaced-thermal-is-thermal}. 

Hence, the difference between the data-processing bound in \eqref{eq:qu1} and the lower bound in \eqref{eq:ql}  attains its maximum value in the limit  $ N_S \to \infty$.
\end{proof}

\begin{figure*}
	\begin{center}
		\subfloat[]{\includegraphics[width=.42\columnwidth]{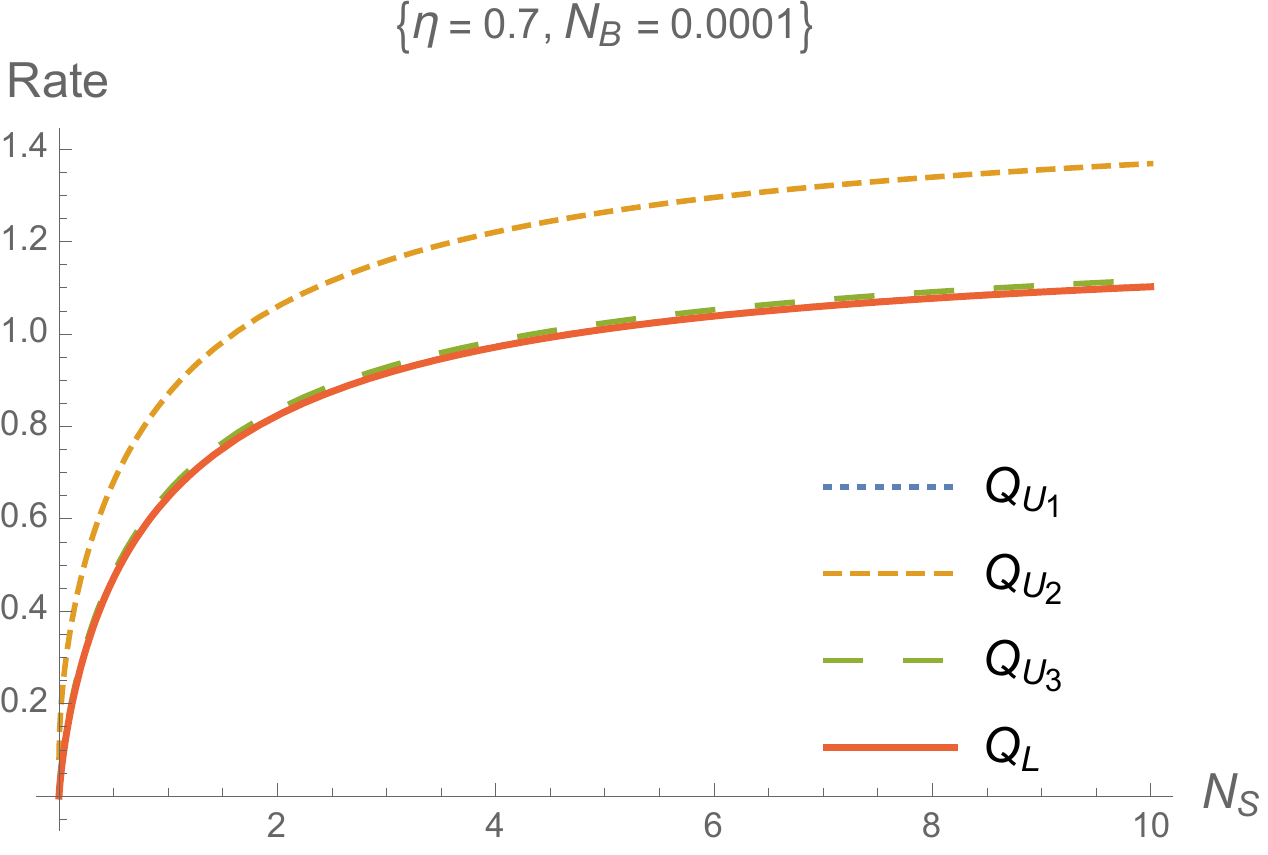}}\qquad
		\qquad\subfloat[]{\includegraphics[width=.42\columnwidth]{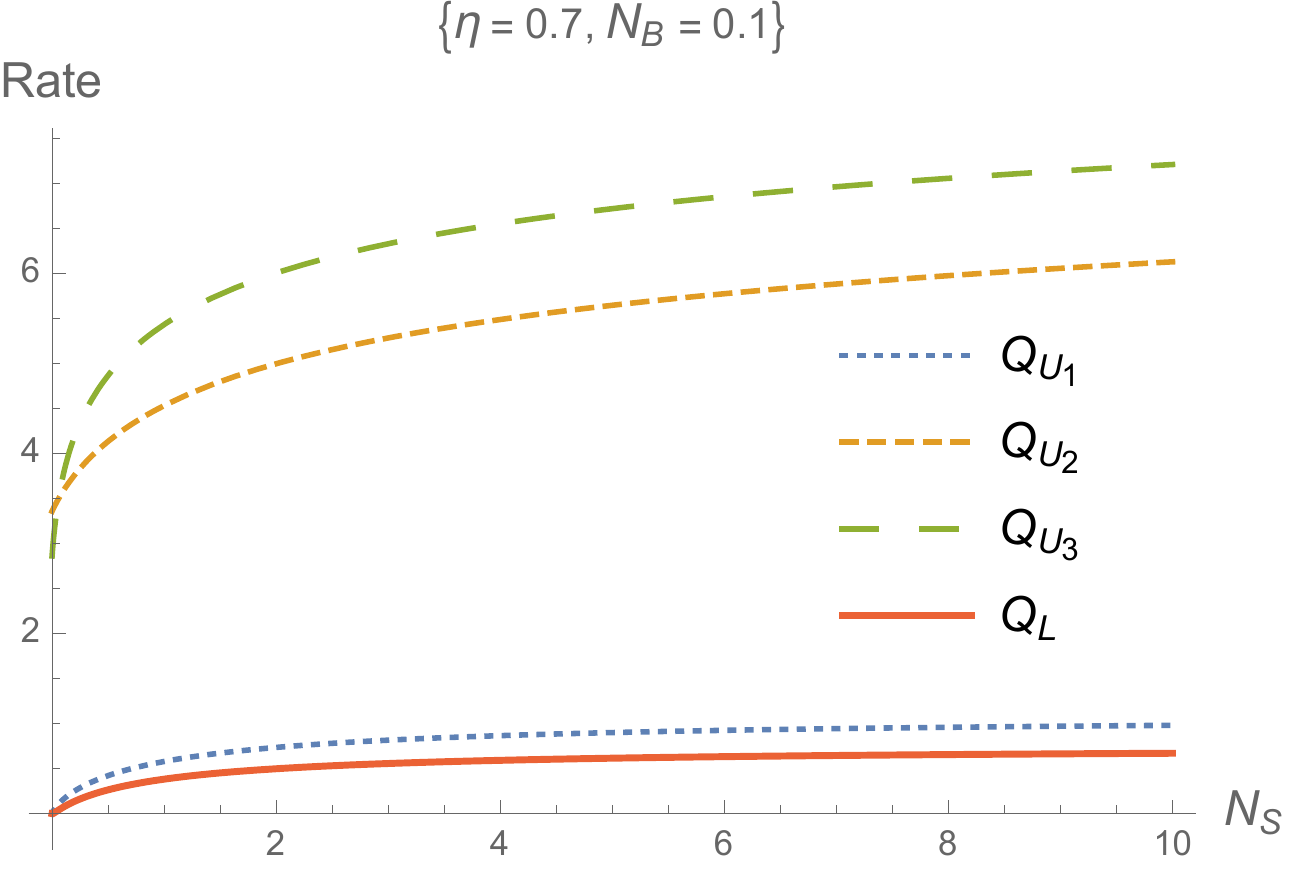}}\newline%
		\subfloat[]{\includegraphics[width=.42\columnwidth]{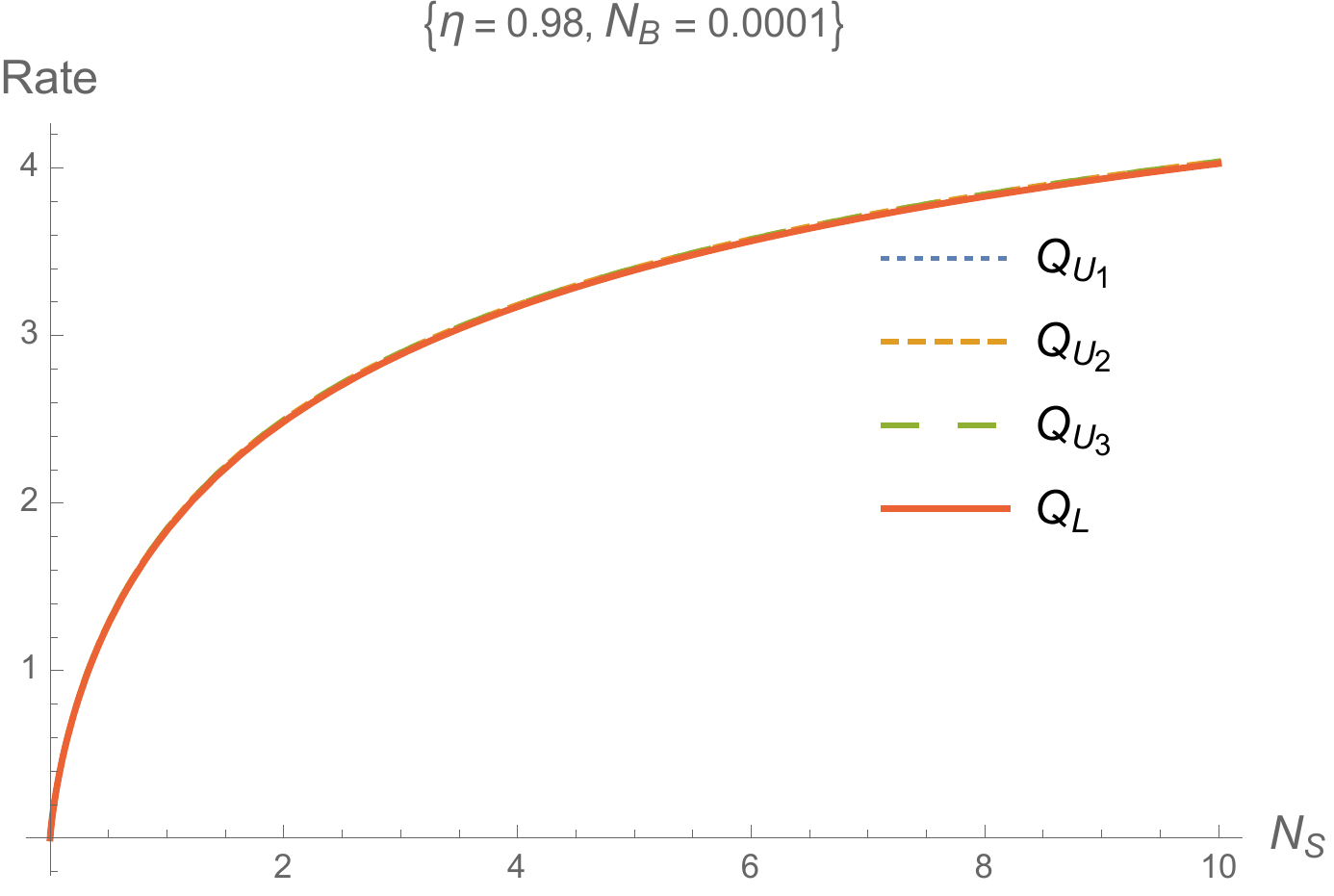}}\qquad
		\qquad\subfloat[]{\includegraphics[width=.42\columnwidth]{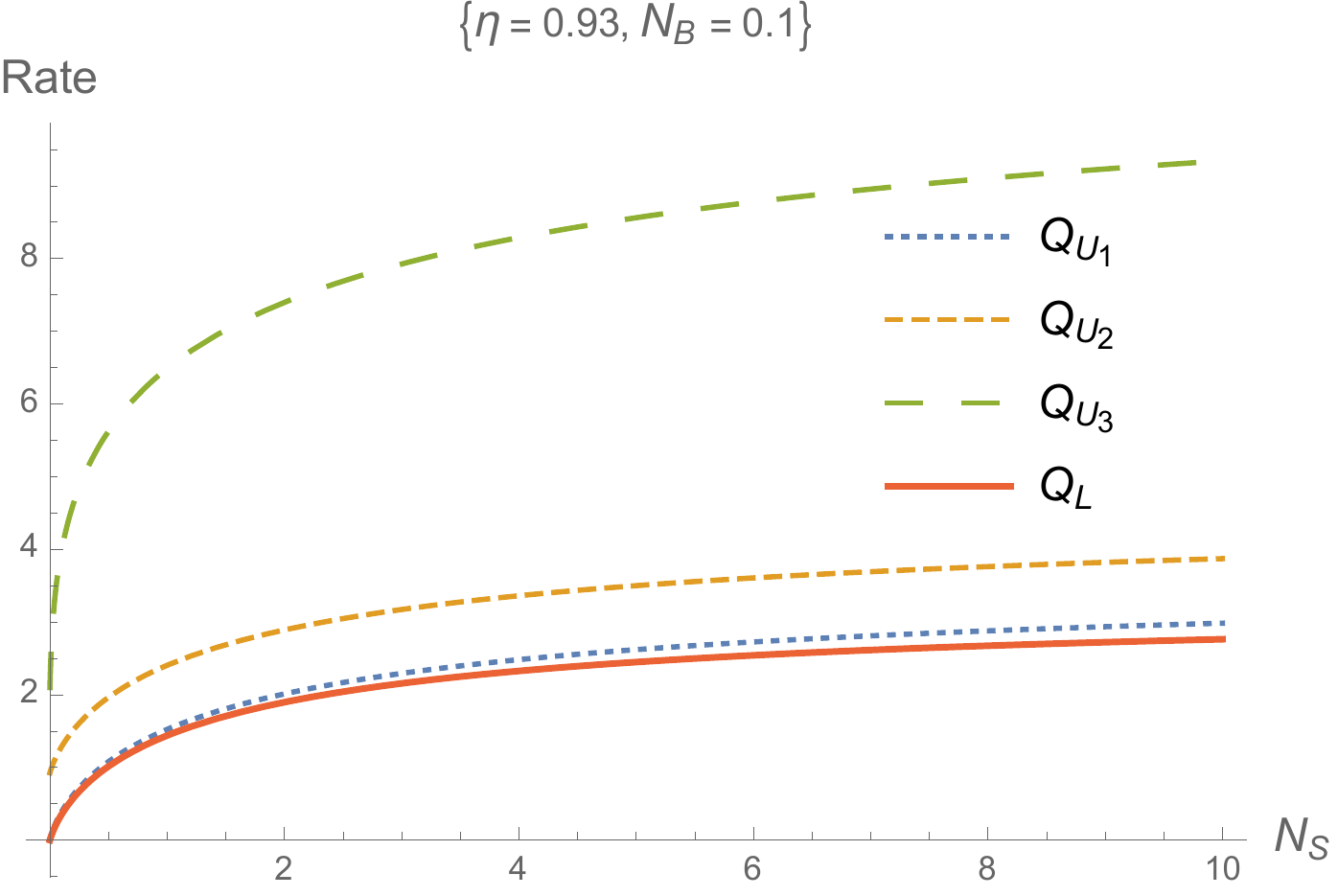}}
	\end{center}
	\caption{The figures plot the data-processing bound $(Q_{U_1})$, the $\varepsilon$-degradable bound $(Q_{U_2})$, the $\varepsilon$-close-degradable bound $(Q_{U_3})$ and the lower bound $(Q_{L})$ on energy-constrained quantum capacity of thermal channels. In each figure, we select certain values of $\eta$ and $N_B$, with the choices indicated above each figure. In all the cases, the data-processing bound $Q_{U_1}$ is close to the lower bound $Q_{U_L}$. In (a), for medium transmissivity and low thermal noise, the $\varepsilon$-close-degradable bound is close to the data-processing bound, and they are tighter than the $\varepsilon$-degradable bound. In (b), for medium transmissivity and high thermal noise, only the data-processing bound is close to the lower bound. Also the $\varepsilon$-degradable bound is tighter than the $\varepsilon$-close-degradable bound. In (c), for high transmissivity and low thermal noise, all upper bounds are very near to the lower bound. In (d), for high transmissivity and high noise, the $\varepsilon$-degradable bound is tighter  than the $\varepsilon$-close-degradable bound.}%
	\label{fig:bounds-quantum-cap}%
\end{figure*}

\bigskip
Next, we perform numerical evaluations to see how close the three different upper bounds are to the lower bound $Q_L$ in \eqref{eq:ql}. Since there is a free parameter $\varepsilon'$ in both the $\varepsilon$-degradable bound in \eqref{eq:qu2} and the $\varepsilon$-close-degradable bound in \eqref{eq:qu3}, we optimize these bounds with respect to $\varepsilon'$ \cite{Mathematica}. In Figure \ref{fig:bounds-quantum-cap}, we plot the data-processing bound $Q_{U_1}$, the $\varepsilon$-degradable bound $Q_{U_2}$, the $\varepsilon$-close-degradable bound $Q_{U_3}$ and the lower bound $Q_L$ versus $N_S$ for certain values of the transmissivity $\eta$ and thermal noise $N_B$. In particular, we find that the data-processing bound is close to the lower bound $Q_L$ for both low and high thermal noise. This is related to Theorem~\ref{thm:1.45bits}, as the data-processing bound can  be at most 1.45 bits larger than the lower bound $Q_L$. In Figure \ref{fig:bounds-quantum-cap}(a), we plot for medium transmissivity and low thermal noise. We find that the $\varepsilon$-close-degradable bound is very near to the data-processing bound and is tighter than the $\varepsilon$-degradable bound. In Figure \ref{fig:bounds-quantum-cap}(b), we plot for medium transmissivity and high thermal noise. We find that the $\varepsilon$-degradable bound is tighter than the $\varepsilon$-close degradable bound. In Figure \ref{fig:bounds-quantum-cap}(d), we plot for high transmissivity and high thermal noise. In Figure \ref{fig:bounds-quantum-cap}(c), we plot for high transmissivity and low thermal noise. We find that all upper bounds are very near to the lower bound $Q_L$. From Figures 
\ref{fig:bounds-quantum-cap}(a) and \ref{fig:bounds-quantum-cap}(c), it is evident that in the low-noise regime, there is a strong limitation on any potential super-additivity of coherent information of a thermal channel. Similar results were obtained on quantum and private capacities of low-noise quantum channels in \cite{Felix17}. It is important to stress that the upper bound $Q_{U_3}$ can serve as a good bound only for low values of the thermal noise $N_B$, as the technique to calculate this bound requires the closeness of a thermal channel with a pure-loss channel (discussed in Theorem~\ref{thm:thermal-pureloss}), and the closeness parameter is equal to $N_B/(N_B+1)$. 

In Figure \ref{fig:eps-degradable-best}, we plot all the upper bounds and the lower bound $Q_L$ versus $N_S$, for high transmissivity and high thermal noise. In Figure \ref{fig:eps-degradable-best}(a), we find that the $\varepsilon$-degradable bound is tighter than all other bounds for high values of $N_S$. In Figure \ref{fig:eps-degradable-best}(b), we plot for the same parameter values, but for low values of $N_S$.
It is evident that for low input mean photon number, the data-processing bound is tighter than the $\varepsilon$-degradable bound.

The plots suggest that our upper bounds based on the notion of approximate degradability are good for the case of high input mean photon number. We suspect that these bounds can be further improved for the case of low input mean photon number by considering the energy-constrained diamond norm \cite{Sh17,Wetal17}. To address this question, we consider the generalized channel divergences of quantum Gaussian channels in Section \ref{sec:generalized-channel-divergence} and argue about their optimization.
	
\begin{figure*}
	\begin{center}
		\subfloat[]{\includegraphics[width=.70\columnwidth]{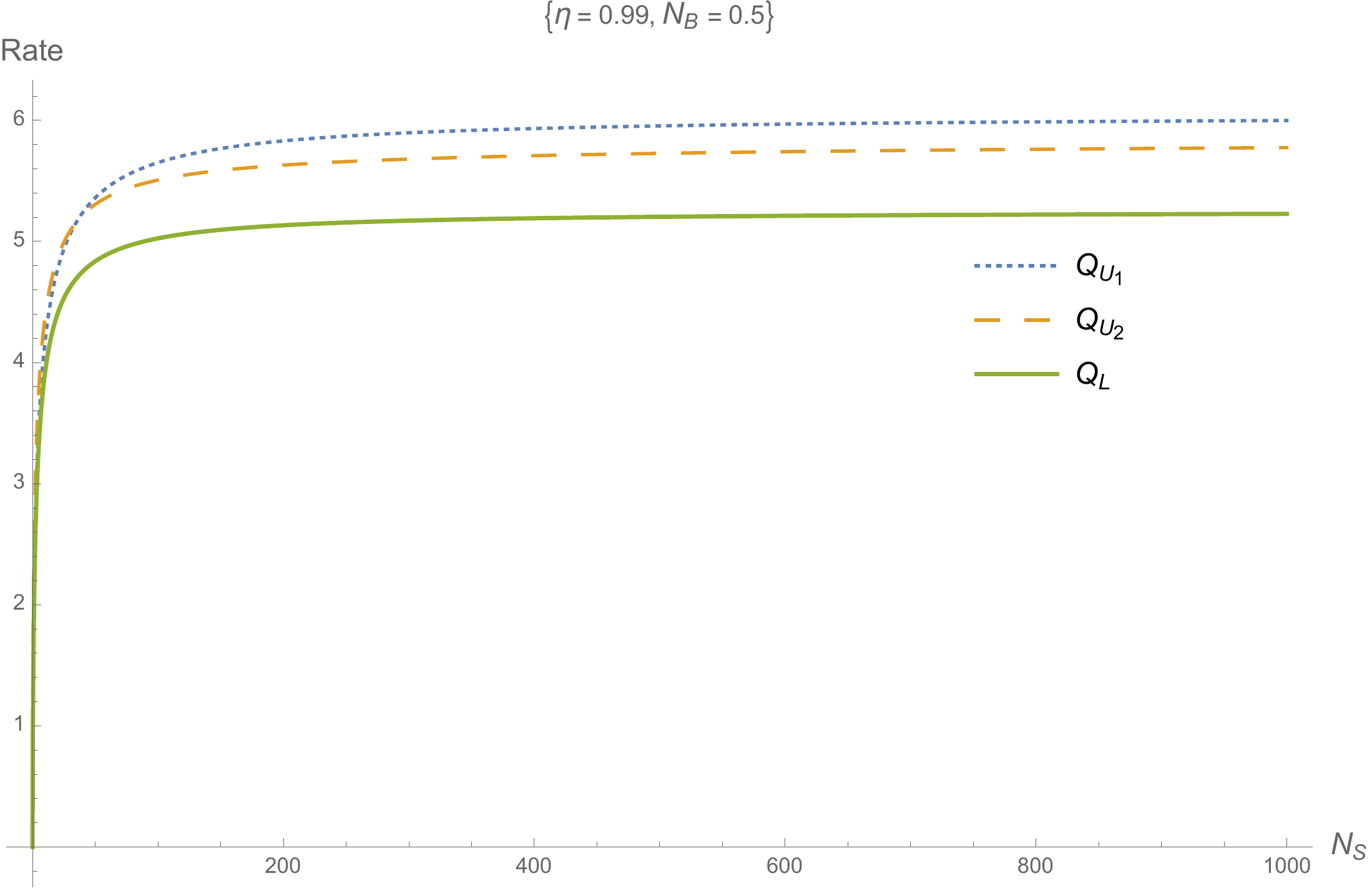}}\qquad
		\qquad\subfloat[]{\includegraphics[width=.70\columnwidth]{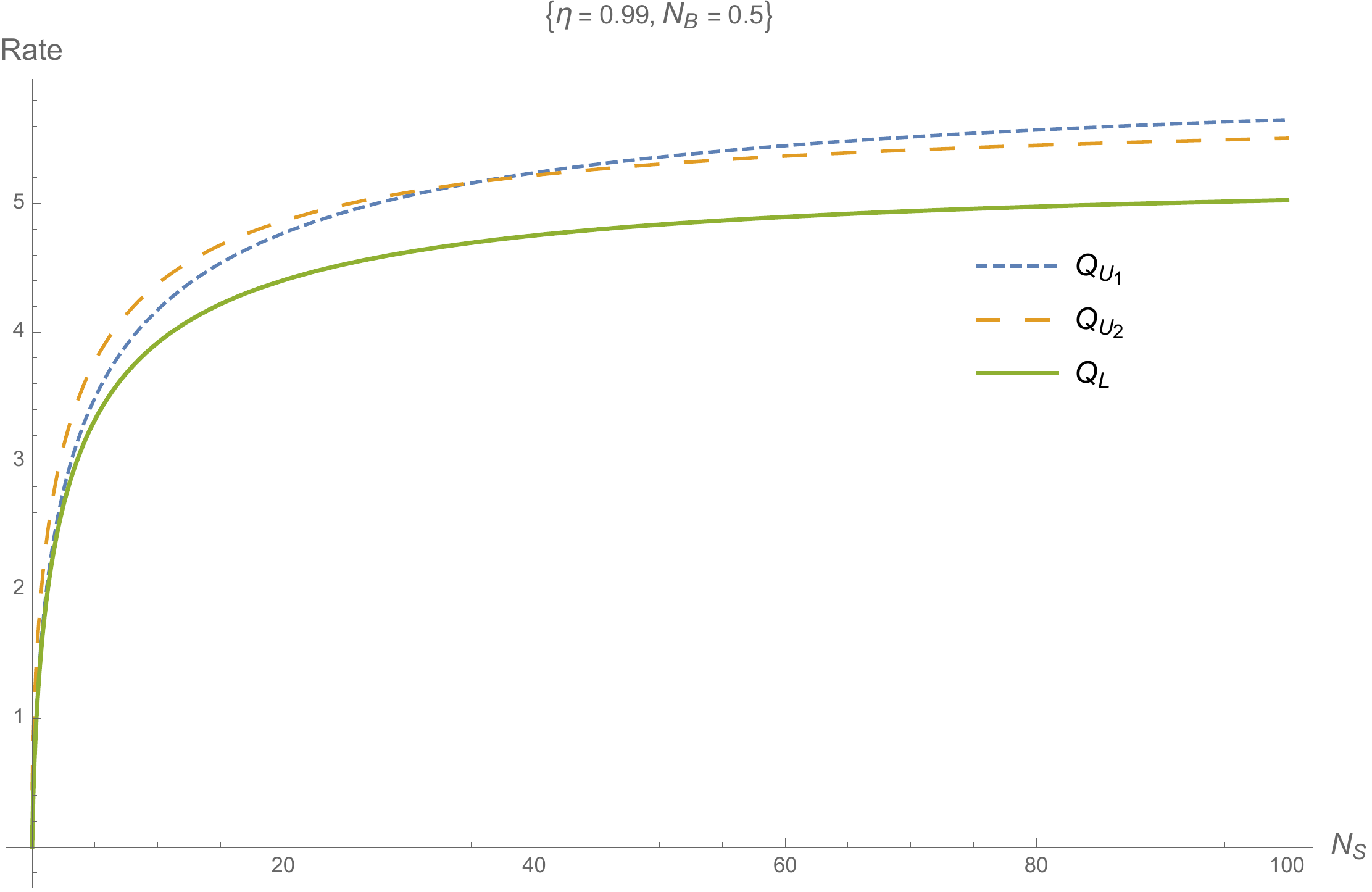}}
	\end{center}
	\caption{The figures plot the data-processing bound $(Q_{U_1})$, the $\varepsilon$-degradable bound $(Q_{U_2})$,  and the lower bound $(Q_{L})$ on energy-constrained quantum capacity of thermal channels (the $\varepsilon$-close-degradable bound $(Q_{U_3})$ is not plotted because it is much higher than the other bounds for all parameter values considered). In each figure, we select $\eta= 0.99$ and $N_B=0.5$. In (a), the $\varepsilon$-degradable upper bound is tighter than all other upper bounds. In (b), for low values of $N_S$, the data-processing bound is tighter than the $\varepsilon$-degradable bound.}%
	\label{fig:eps-degradable-best}%
\end{figure*}

\section{Upper bounds on energy-constrained private capacity of bosonic thermal channels} 

\label{sec:upper-bound-private-cap}

In this section, we provide three different upper bounds on the energy-constrained private capacity of a thermal channel. These upper bounds are derived very similarly as in Section \ref{sec:upper-bound-quantum-cap}. We call these different bounds the data-processing bound, the $\varepsilon$-degradable bound, and the $\varepsilon$-close-degradable bound, and denote them by $P_{U_1}$, $P_{U_2}$, and $P_{U_3}$, respectively. 

\subsection{Data-processing bound on the energy-constrained private capacity of bosonic thermal channels}

\begin{theorem}\label{thm:data-proc-pri-cap}
	An upper bound on the private capacity of a thermal channel $\L_{\eta, N_B}$ with transmissivity $\eta\in\lbrack1/2,1\rbrack$, environment photon number $N_B \geq 0$, and input mean photon number constraint $N_S \geq 0$  is given by
	\begin{align}
	P(\L_{\eta, N_B}, N_S )  &\leq
	\max\{0, P_{U_1}(\L_{\eta, N_B}, N_S )\}\\
	P_{U_1}(\L_{\eta, N_B}, N_S ) & \equiv g(\eta' N_{S})-g[(1-\eta')N_{S}]~, \label{eq:upper-bound-data-proc-pri-cap}
	\end{align}
	with $\eta' = \eta/((1-\eta)N_B+1)$.
\end{theorem}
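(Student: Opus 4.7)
The plan is to mirror the proof of Theorem \ref{thm:qu1} almost verbatim, replacing the quantum capacity $Q$ with the private capacity $P$ throughout. The two ingredients are again the thermal-channel decomposition \eqref{eq:decomp-gaussian} as a pure-loss channel followed by a quantum-limited amplifier, together with a known formula for the energy-constrained private capacity of a pure-loss channel. The form of the upper bound in \eqref{eq:upper-bound-data-proc-pri-cap} is in fact identical to the bound \eqref{eq:qu1} obtained for the quantum capacity, which is suggestive that the same argument pushes through.

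First, I would invoke \eqref{eq:decomp-gaussian} to write $\L_{\eta,N_B} = \A_{G,0}\circ \L_{\eta',0}$ with $G=(1-\eta)N_B+1$ and $\eta' = \eta/G$. Next, I would apply a data-processing argument for private capacity under serial composition: any $(n,M,\overline{G}_n,W,\varepsilon)$ private code for the composite channel $\A_{G,0}^{\otimes n}\circ \L_{\eta',0}^{\otimes n}$ may be regarded as a private code for $\L_{\eta',0}^{\otimes n}$ in which the receiver first applies $\A_{G,0}^{\otimes n}$ and then performs the original decoding POVM. The reliability condition \eqref{eq:private-good-comm} is unaffected because the decoder is simply extended, and the security condition \eqref{eq:security-cond} is identical in both viewpoints since the eavesdropper holds the output of the same complementary channel in either case. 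The energy constraint refers to the input system, which is unchanged under the decomposition, so it is preserved. This yields
\begin{equation}
P(\L_{\eta,N_B},N_S) = P(\A_{G,0}\circ \L_{\eta',0}, N_S) \leq P(\L_{\eta',0}, N_S).
\end{equation}

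Finally, I would substitute the known value of the energy-constrained private capacity of a pure-loss channel. When $\eta'\geq 1/2$, the pure-loss channel $\L_{\eta',0}$ is degradable \cite{WPG07}, and for degradable channels the energy-constrained quantum and private capacities coincide \cite{MH16}; the common value is $g(\eta' N_S)-g[(1-\eta')N_S]$ by \cite{Mark2012tradeoff,MH16}. When $\eta'<1/2$, the pure-loss channel is anti-degradable, so its private capacity vanishes, explaining the outer $\max\{0,\cdot\}$.

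I do not anticipate substantial technical obstacles, since each step is either a direct application of \eqref{eq:decomp-gaussian}, a routine consequence of the definitions of energy-constrained private codes, or an appeal to an already-established capacity formula. The only thing that requires slight care is the verification that the security condition transfers under post-processing; this is standard but should be explicitly noted so that the data-processing step for $P$ is on the same footing as the analogous step for $Q$ in the proof of Theorem \ref{thm:qu1}.
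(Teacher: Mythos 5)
Your proposal is correct and follows essentially the same route as the paper: decompose $\L_{\eta,N_B}=\A_{G,0}\circ\L_{\eta',0}$, apply data processing for the private capacity, and use that the pure-loss channel is degradable so its energy-constrained private and quantum capacities coincide with the known formula. The only slight imprecision is your claim that the eavesdropper holds ``the output of the same complementary channel in either case''---the complementary channel of the composite channel also includes the amplifier's environment, but the security condition for $\L_{\eta',0}$ then follows by monotonicity of the trace distance under the partial trace over that extra environment, so the step goes through as you anticipated.
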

\begin{proof}
	A proof follows from arguments similar to those in the proof of Theorem~\ref{thm:qu1}. Since a pure-loss channel is a degradable channel \cite{WPG07,GSE08}, its energy-constrained private capacity is the same as its energy-constrained quantum capacity \cite{MH16}.
\end{proof}

\begin{remark}
Applying Remarks \ref{rem:unconstrained-cap} and \ref{rem:unconstrained-qcap-thermal}, we find the following data-processing bound $P_{U_1}(\L_{\eta, N_B})$ on the unconstrained private capacity
$P(\L_{\eta, N_B})$
of a thermal channel $\L_{\eta, N_B}$: 
\begin{align}
P(\L_{\eta, N_B}) \leq P_{U_1}(\L_{\eta, N_B}) = \log_{2}(\eta/(1-\eta)) - \log_2(N_B+1)~.
\end{align}
\end{remark}

\subsection{$\varepsilon$-degradable bound on the energy-constrained private capacity of bosonic thermal channels}

\label{sec:eps-deg-pri-cap}

\begin{theorem}
	An upper bound on the private capacity of a thermal channel $\L_{\eta, N_B}$ with transmissivity $\eta \in [1/2,1]$, environment photon number $N_B \geq 0$, and input mean photon number constraint $N_S\geq 0$ is given by
	\begin{multline}
	P(\L_{\eta, N_B} , N_S) \leq P_{U_2}(\L_{\eta, N_B} , N_S) \equiv g(\eta N_S+(1-\eta)N_B) -g(\zeta_{+}) -g(\zeta_{-}) \\ +(6\varepsilon' + 12\delta) g([(1-\eta)N_S+(1+\eta)N_B]/\delta)
	+ 3 g(\varepsilon') + 6 h_2(\delta)~,
	\end{multline}
	with 
	\begin{align}
	&\varepsilon = \sqrt{1-\eta^2/\left(\eta^2 + N_B(N_B+1)\lbrack1+3 \eta^2 - 2\eta (1+\sqrt{2\eta -1})\rbrack\right)}~,\\
	&\zeta_{\pm} = \frac{1}{2}\left(-1+\sqrt{[ (1+2N_B)^2 - 2\varrho + (1+2\vartheta)^2 \pm 4(\vartheta -N_B)\sqrt{[1+N_B+\vartheta]^2-\varrho}]/2}\right)~,\\
	&\varrho = 4N_B(N_B+1)(2-1/\eta)~,\\
	&\vartheta = \eta N_B+(1-\eta)N_S~,
	\end{align}
	$\varepsilon' \in (\varepsilon,1]$, and $\delta = (\varepsilon' - \varepsilon)/(1+ \varepsilon')$.
\end{theorem}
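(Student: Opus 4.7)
The plan is to mirror the proof of Theorem \ref{thm:qu2} almost line by line, simply substituting the private-capacity version of the $\varepsilon$-degradable upper bound (Theorem \ref{thm:pu2gen}) for the quantum-capacity version (Theorem \ref{thm:qcbound-eps-approx}) invoked there. All of the nontrivial ingredients are already in hand from Section \ref{sec:eps-deg-q-cap}: the degrading channel $\D_{(1-\eta)/\eta, N_B}$ in \eqref{eq:deg-map}, the diamond-norm estimate of Theorem \ref{thm:eps-diamond}, and the optimization over input states given by Proposition \ref{thm:Ud-optimization}.

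First I would apply Theorem \ref{thm:eps-diamond} to conclude that $\L_{\eta, N_B}$ is $\varepsilon$-degradable with degrading channel $\D_{(1-\eta)/\eta, N_B}$, where
\[
\varepsilon = \sqrt{1 - \eta^{2}/\kappa(\eta, N_B)}
\]
is exactly the value appearing in the theorem statement; no new diamond-norm computation is required. Next I would track the output photon-number budgets: under the input constraint $\operatorname{Tr}\{\hat{n}\rho\}\leq N_S$ and environment photon number $N_B$, the two-mode output of $\hat{\L}_{\eta, N_B}$ has total mean photon number $(1-\eta)N_S + \eta N_B + N_B = (1-\eta)N_S + (1+\eta)N_B$, and the same total output bound holds for $\D_{(1-\eta)/\eta, N_B} \circ \L_{\eta, N_B}$, which is transparent from the simulating-channel picture in Figure~\ref{fig:simulating-channel}. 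I therefore take $W' = (1-\eta)N_S + (1+\eta)N_B$.

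With these ingredients in place, I would directly invoke Theorem \ref{thm:pu2gen} to obtain
\begin{equation*}
P(\L_{\eta, N_B}, N_S) \leq U_{\D_{(1-\eta)/\eta, N_B}}(\L_{\eta, N_B}, N_S) + (6\varepsilon' + 12\delta)\, g(W'/\delta) + 3 g(\varepsilon') + 6 h_2(\delta),
\end{equation*}
in which the enlarged coefficients (three times those of the quantum-capacity bound) reflect the three telescoping applications of the continuity lemma intrinsic to the private-capacity argument. Finally, Proposition \ref{thm:Ud-optimization} identifies the thermal state $\theta(N_S)$ as an optimizer of $U_{\D_{(1-\eta)/\eta, N_B}}(\L_{\eta, N_B}, N_S)$, and the symplectic-eigenvalue calculation already carried out in the proof of Theorem \ref{thm:qu2} yields the closed form $g(\eta N_S + (1-\eta) N_B) - g(\zeta_+) - g(\zeta_-)$, with $\zeta_{\pm}$, $\varrho$, $\vartheta$ as stated; note the innocuous algebraic identity $(2\eta - 1)/\eta = 2 - 1/\eta$ which reconciles the two displayed expressions for $\varrho$ in the quantum and private theorem statements.

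There is essentially no genuine obstacle, since the channel, the degrading channel, and the optimal input are all identical to those treated in the quantum-capacity case; the only substantive difference is the factor-of-three inflation of the continuity-correction coefficients, which is supplied automatically by Theorem \ref{thm:pu2gen}. The one place requiring minor care is verifying that the output energy bound $W' = (1-\eta)N_S + (1+\eta)N_B$ applies uniformly to both $\hat{\L}_{\eta, N_B}$ and $\D_{(1-\eta)/\eta, N_B} \circ \L_{\eta, N_B}$, which is what Theorem \ref{thm:pu2gen} requires as a hypothesis.
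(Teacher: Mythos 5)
Your proposal is correct and follows exactly the route the paper takes: the paper's own proof of this theorem simply says to repeat the argument of Theorem~\ref{thm:qu2} (same degrading channel, same diamond-norm bound from Theorem~\ref{thm:eps-diamond}, same output energy budget $W'=(1-\eta)N_S+(1+\eta)N_B$, same optimization via Proposition~\ref{thm:Ud-optimization}) while invoking Theorem~\ref{thm:pu2gen} in place of Theorem~\ref{thm:qcbound-eps-approx}. Your observation that $(2\eta-1)/\eta = 2-1/\eta$ reconciles the two displayed forms of $\varrho$ is also correct.
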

\begin{proof}
A proof follows from arguments similar to those in the proof of Theorem~\ref{thm:qu2}. The final result is obtained using Theorem~\ref{thm:pu2gen}.
\end{proof}

\subsection{$\varepsilon$-close-degradable bound on the energy-constrained private capacity of bosonic thermal channels}

\label{sec:eps-close-pri-cap}

\begin{theorem}\label{thm:eps-close-pri-cap-bound}
	An upper bound on the private capacity of a thermal channel $\L_{\eta, N_B}$ with transmissivity $\eta \in \lbrack1/2,1\rbrack$, environment photon number $N_B \geq 0$, and input mean photon number constraint $N_S\geq 0$  is given by
	\begin{multline}
	P(\L_{\eta, N_B},N_S) \leq P_{U_3}(\L_{\eta, N_B},N_S) \equiv g(\eta N_S)-g[(1-\eta)N_S] \\
	+ (8\varepsilon' + 16\delta) g[(\eta N_S +(1-\eta)N_B)/\delta] + 4g(\varepsilon') + 8h_2(\delta)~, \label{eq:eps-close-pri-cap-bound}
	\end{multline} 
	with $\varepsilon = N_B/(N_B+1)$, $\varepsilon' \in (\varepsilon,1]$, and $\delta = (\varepsilon' - \varepsilon)/(1+ \varepsilon')$.
\end{theorem}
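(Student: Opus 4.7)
The plan is to apply the general bound from Theorem~\ref{thm:eps-close-priv-gen-b} with the choice $\M = \L_{\eta, 0}$, the pure-loss channel of the same transmissivity. The needed ingredients are exactly parallel to those used in the proof of Theorem~\ref{thm:qu3} for the quantum capacity, so the strategy is to collect them and plug in.

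First I would invoke Theorem~\ref{thm:thermal-pureloss} to conclude that $\frac{1}{2}\|\L_{\eta,N_B} - \L_{\eta,0}\|_\diamond \leq N_B/(N_B+1) = \varepsilon$, which supplies the closeness parameter appearing in the statement. Next, since $\eta \in [1/2,1]$, the pure-loss channel $\L_{\eta,0}$ is degradable by \cite{WPG07,GSE08}, so the hypothesis of Theorem~\ref{thm:eps-close-priv-gen-b} that $\M$ be degradable is satisfied. The output average photon number for the thermal channel with input mean photon number constraint $N_S$ is at most $\eta N_S + (1-\eta)N_B$, while for the pure-loss channel it is at most $\eta N_S$; both are dominated by $W' = \eta N_S + (1-\eta)N_B$, which serves as the common output energy constraint (with $G'$ the photon number operator).

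With these inputs, Theorem~\ref{thm:eps-close-priv-gen-b} yields
\begin{equation}
P(\L_{\eta,N_B}, N_S) \leq I_c(\L_{\eta,0}, G, N_S) + (8\varepsilon'+16\delta) H(\gamma(W'/\delta)) + 4 g(\varepsilon') + 8 h_2(\delta),
\end{equation}
for any $\varepsilon' \in (\varepsilon,1]$ and $\delta = (\varepsilon'-\varepsilon)/(1+\varepsilon')$. The entropy of the Gibbs state $\gamma(W'/\delta)$ for the photon-number Hamiltonian is $g(W'/\delta) = g([\eta N_S+(1-\eta)N_B]/\delta)$, which matches the second term on the right-hand side of \eqref{eq:eps-close-pri-cap-bound}.

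Finally, for the pure-loss channel with $\eta \in [1/2,1]$, the energy-constrained coherent information is well known to be optimized by the thermal-state input saturating the constraint, giving $I_c(\L_{\eta,0}, G, N_S) = g(\eta N_S) - g((1-\eta)N_S)$; see \cite{WPG07,Mark2012tradeoff,MH16}. Substituting this closed-form expression yields exactly \eqref{eq:eps-close-pri-cap-bound}. I do not anticipate any serious obstacle: all the hard work has already been carried out in Theorems~\ref{thm:thermal-pureloss} and \ref{thm:eps-close-priv-gen-b} and in the known formula for the coherent information of the pure-loss channel, so the argument is essentially a one-line assembly that mirrors the proof of Theorem~\ref{thm:qu3}.
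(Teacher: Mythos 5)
Your proposal is correct and follows essentially the same route as the paper: the paper's proof also combines Theorem~\ref{thm:thermal-pureloss} (giving $\varepsilon = N_B/(N_B+1)$), the degradability of the pure-loss channel for $\eta \in [1/2,1]$, the common output energy constraint $\eta N_S + (1-\eta)N_B$, and the general bound of Theorem~\ref{thm:eps-close-priv-gen-b}, together with the known formula $I_c(\L_{\eta,0},G,N_S) = g(\eta N_S) - g((1-\eta)N_S)$. No gaps.
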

\begin{proof}
A proof follows from  arguments similar to those in the proof of Theorem~\ref{thm:qu3}. The final result is obtained using Theorem~\ref{thm:eps-close-priv-gen-b}.
\end{proof}

\bigskip

\section{Lower bound on energy-constrained private capacity of bosonic thermal channels}

\label{sec:lower-bound-private-cap}

In this section, we establish an improvement on the best known lower bound \cite{Mark2012tradeoff} on the energy-constrained private capacity of bosonic thermal  channels, by using displaced thermal states as input to the channel. We note that a similar effect has been observed in \cite{RGK05} for the finite-dimensional case.

The energy-constrained private information of a channel $\N$, 
as defined in \eqref{eq:private-information}, can also be written as
\begin{equation}
P^{(1)}(\N, G, W) \equiv \sup_{\bar{\rho}_{\E_A}: \tr\{G\bar{\rho}_{\E_A}\}\leq W}[H(\N(\bar{\rho}_{\E_A})) -H(\hat{\N}(\bar{\rho}_{\E_A}))-\int dx~ p_X(x)[ H(\N(\rho^x_A))- H(\hat{\N}(\rho^x_A))]]~, \label{eq:private-info-lower}
\end{equation}
where $\bar{\rho}_{\E_A} \equiv \int dx~p_X(x) \rho^x_A$ is an average state of the ensemble 
$\E_{A} \equiv \{p_X(x), \rho^x_A   \}$
and $\hat{\N}$ denotes a complementary channel of $\N$. If the energy-constrained private information is calculated for coherent-state inputs, then for each element of the ensemble, the following equality holds $H(\N(\rho^x_A))= H(\hat{\N}(\rho^x_A))$. Hence, the entropy difference $ H(\N(\bar{\rho}_{\E_A})) -H(\hat{\N}(\bar{\rho}_{\E_A}))$ is an achievable rate, which is the same as the energy-constrained coherent information.

However, we show that displaced thermal-state inputs provide an improved lower bound for certain values of the transmissivity $\eta$, low thermal noise $N_B$, and both low and high input mean photon number $N_S$. We start with the following ensemble of displaced thermal states,
\begin{equation}
\E \equiv \{p_{N^1_S}(\alpha), D(\alpha)~\theta(N^2_S)~ D(-\alpha)  \},\label{eq:displaced-thermal-state}
\end{equation}
chosen according to the Gaussian probability distribution 
\begin{equation}
p_{N^1_S}(\alpha) = \frac{1}{\pi N^1_S} \exp(-\vert \alpha \vert^2/N^1_S),\label{eq:gaussian-prob-dist}
\end{equation}
where $D(\alpha)$ denotes the displacement operator, $\theta(N^2_S)$ denotes the thermal state with mean photon number $N^2_S$, and $N^1_S$ and $N^2_S$ are chosen such that $N^1_S + N^2_S = N_S$, which is the mean number of photons input to the channel. By employing \eqref{eq:displaced-thermal-is-thermal}, the average of this ensemble is a thermal state with mean photon number $N_S$, i.e.,
\begin{equation}
\bar{\rho}_{\E} = \int d^2\alpha ~p_{N^1_S}(\alpha) ~D(\alpha) ~\theta(N^2_S)~D(-\alpha) = \theta(N_S)~.
\end{equation}
 Hence, this ensemble meets the constraint that the average number of photons input to the channel is equal to $N_S$.

\begin{figure*}[ptb]
	\begin{center}
		\subfloat[]{\includegraphics[width=.42\columnwidth]{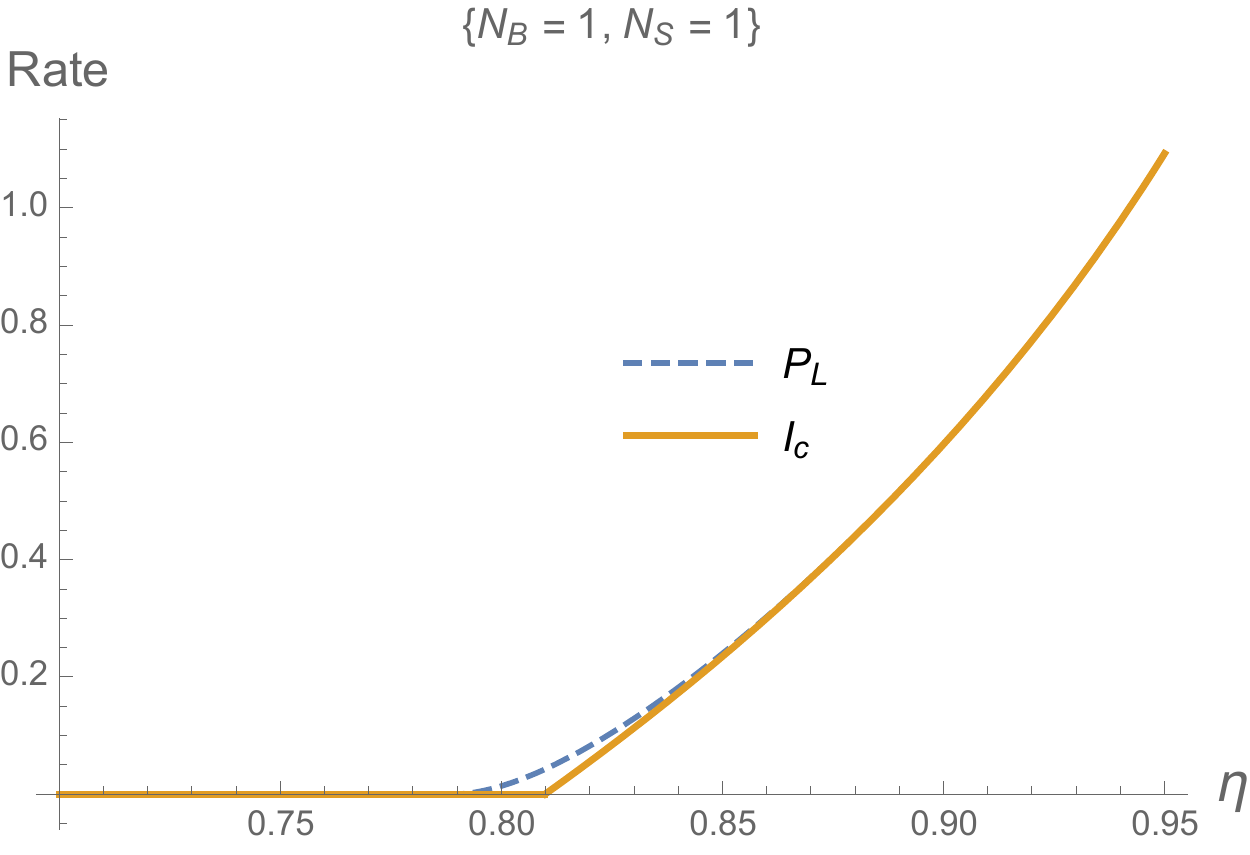}}\qquad
		\qquad\subfloat[]{\includegraphics[width=.42\columnwidth]{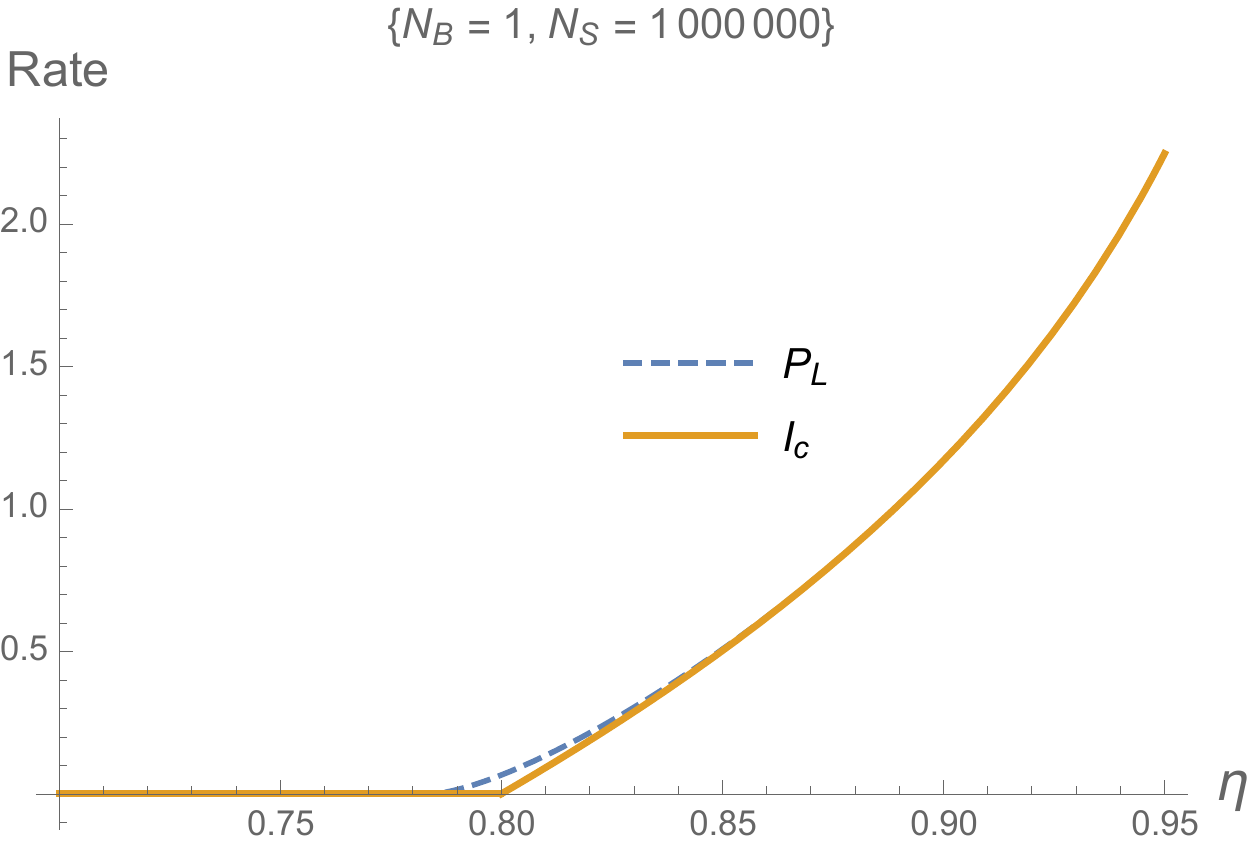}}%
	\end{center}
	\caption{The figures plot the optimized value of the lower bound on the private information $P_L(\L_{\eta, N_B}, N_S)$ (dashed line)  and coherent information $I_c(\L_{\eta, N_B}, N_S)$ (solid line) of a thermal channel versus transmissivity parameter $\eta$. In each figure, we select certain values of thermal noise $N_B$ and input mean photon number $N_S$, with the choices indicated above each figure. In all the cases, there is an improvement in the achievable rate of private communication for certain values of the transmissivity $\eta$.}%
	\label{fig:lower-bound-private-cap}%
\end{figure*}
 
 After the action of the channel on one of the states in the ensemble, the entropy of the output state is given by
 \begin{align}
H(\L_{\eta, N_B} (D(\alpha)~\theta(N^2_S)~D(-\alpha)))& = H(D(\sqrt{\eta}\alpha)\L_{\eta, N_B}(\theta(N^2_S))D(-\sqrt{\eta}\alpha))\\
& = H(\L_{\eta, N_B}(\theta(N^2_S)))~,
 \end{align}
where the first equality follows because thermal channel is covariant with respect to displacement operators, as reviewed in \eqref{eq:covariance-gaussian}. The second equality follows because $D(\sqrt{\eta}\alpha)$ is a unitary operator and entropy is invariant under the action of a unitary operator. Since $ H(\L_{\eta, N_B}(\theta(N^2_S)))$ is independent of the Gaussian probability distribution in (\ref{eq:gaussian-prob-dist}), we have that
\begin{equation}
\int d^2\alpha~ p_{N^1_S}(\alpha) H(\L_{\eta, N_B}(\theta(N^2_S))) =H(\L_{\eta, N_B}(\theta(N^2_S))) .
\end{equation}
Similar arguments can be made for the output states at the environment mode.

Hence, a lower bound on the energy-constrained private information in \eqref{eq:private-info-lower} for the bosonic thermal  channel is as follows:
\begin{align}
& P^{(1)}(\L_{\eta, N_B}, N_S)\nonumber \\
&  \geq H(\L_{\eta, N_B}(\theta(N_S))) - H(\hat{\L}_{\eta, N_B}(\theta(N_S))) - \lbrack H(\L_{\eta, N_B}(\theta(N^2_S)))-H(\hat{\L}_{\eta, N_B}(\theta(N^2_S)))\rbrack\\ 
& = I_c(\L_{\eta, N_B}, N_S) - I_c(\L_{\eta, N_B}, N^2_S) \equiv P_L(\L_{\eta, N_B}, N_S) \label{eq:private-information-final-exp}~,
\end{align}
where $\hat{\L}_{\eta, N_B}$ denotes the complementary channel of $\L_{\eta, N_B}$, and we denote the lower bound in \eqref{eq:private-information-final-exp} on the private information  by $P_L(\L_{\eta, N_B},N_S)$. The first inequality follows from \eqref{eq:private-information}. Here,  $I_c(\L_{\eta, N_B}, N_S)$ denotes the coherent information of the channel for the thermal state with mean photon number $N_S$ as input to the channel. $I_c(\L_{\eta, N_B}, N_S)$ has the same form as \eqref{eq:ql}, i.e., 
\begin{multline}
I_c(\L_{\eta, N_B}, N_S)=g(\eta N_S + (1-\eta) N_B) - g([D+(1-\eta)N_S - (1-\eta)N_B-1]/2)\\
- g([D-(1-\eta)N_S + (1-\eta)N_B-1]/2)~, \label{eq:coherent-info-private-cap}
\end{multline}
where $D^2 \equiv [(1+\eta)N_S+(1-\eta)N_B+1]^2 - 4\eta N_S(N_S+1)$. Similarly, $I_c(\L_{\eta, N_B}, N^2_S)$ is defined by replacing $N_S$ in \eqref{eq:coherent-info-private-cap} with $N^2_S$.

We optimize the lower bound in \eqref{eq:private-information-final-exp} on the private information $P_L(\L_{\eta, N_B}, N_S)$  with respect to $N^2_S$ for a fixed value of $N_S$ \cite{Mathematica}. In Figure \ref{fig:lower-bound-private-cap}, we plot the optimized value of the lower bound in \eqref{eq:private-information-final-exp} on the private information $P_L(\L_{\eta, N_B}, N_S)$ (dashed line) and the coherent information 
in \eqref{eq:coherent-info-private-cap} $I_c(\L_{\eta, N_B}, N_S)$ (solid line) of the thermal channel versus the transmissivity parameter $\eta$, for low thermal noise $N_B$ and for both low and high input mean number of photons $N_S$. We find that a larger rate for private communication can be achieved by using displaced thermal states as input to the channel instead of coherent states, for certain values of the transmissivity $\eta$.

\bigskip

\section{Upper bounds on energy-constrained quantum and private capacities of quantum amplifier channels} \label{sec:bounds-amp}

Using methods similar to those from Sections \ref{sec:upper-bound-quantum-cap} and \ref{sec:upper-bound-private-cap}, we now establish three different upper bounds on the energy-constrained quantum and private capacities of a noisy amplifier channel.

\subsection{Data-processing bound on energy-constrained quantum and private capacities of quantum amplifier channels}
\label{sec:data-proc-amp}

  In this section, we provide an upper bound using Theorem~\ref{thm:concat-loss-amp} below, which states that any phase-insensitive single-mode bosonic Gaussian channel can be decomposed as a pure-amplifier channel followed by a pure-loss channel, if the original channel is not entanglement breaking. This theorem was independently proven in \cite{RMG18, NAJ18} (see also \cite{notesSWAT17} in this context). 
  
  Before we state the theorem, let us recall that the action of a phase-insensitive channel $\N$ on the covariance matrix $\Gamma$ of a single-mode, bosonic quantum state is given by
  \begin{align} \label{eq:act-amp}
  \Gamma \longmapsto \tau~\Gamma + \nu I_2,
  \end{align}
  where $\nu$ is the variance of an additive noise, $I_2$ is the $2\times 2$ identity matrix, and $\tau$ and $\nu$ satisfy the conditions in \eqref{eq:cptp-cond-1}--\eqref{eq:cptp-cond}. Moreover, as mentioned previously, a phase-insensitive channel $\N$ is  entanglement-breaking 
  \cite{HSR03,Holevo2008}
  if 
  \begin{equation}\label{eq:ent-breaking-cond-phase-i}
  \tau+1\leq\nu~.
  \end{equation}
  
  \begin{theorem}
  \label{thm:concat-loss-amp}
  	Any single-mode, phase-insensitive bosonic Gaussian channel $\mathcal{N}$ that is not entanglement-breaking (i.e., satisfies $\tau + 1 > \nu$) can be decomposed as the concatenation of a quantum-limited amplifier channel $\A_{G,0}$ with gain $G>1$ followed by a pure-loss channel $\L_{\eta, 0}$ with transmissivity $\eta \in (0,1]$, i.e., 
  	\begin{equation} \label{eq:concat-LA}
  	\N = \L_{\eta, 0} \circ \A_{G,0},
  	\end{equation}
where $\eta = (\tau + 1 - \nu)/2$ and $G = \tau / \eta$.	
  \end{theorem}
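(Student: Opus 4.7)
The plan is to prove the decomposition at the level of the symplectic/covariance-matrix data, since all three channels involved ($\mathcal{N}$, $\mathcal{L}_{\eta,0}$, and $\mathcal{A}_{G,0}$) are phase-insensitive Gaussian channels with zero displacement, and a phase-insensitive single-mode Gaussian channel is uniquely determined by its action on covariance matrices. Recall from \eqref{eq:act-amp} that $\mathcal{N}$ sends $\Gamma \mapsto \tau \Gamma + \nu I_2$. Using the Bogoliubov transformations for the quantum-limited amplifier \eqref{eq:amp-unitary} with $N_B = 0$ and the pure-loss channel, one computes that $\mathcal{A}_{G,0}$ acts as $\Gamma \mapsto G\Gamma + (G-1)I_2$, while $\mathcal{L}_{\eta,0}$ acts as $\Gamma \mapsto \eta \Gamma + (1-\eta) I_2$. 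Composing, the channel $\mathcal{L}_{\eta,0} \circ \mathcal{A}_{G,0}$ acts as
\begin{equation}
\Gamma \;\longmapsto\; \eta G\, \Gamma + \bigl[\eta(G-1) + (1-\eta)\bigr] I_2 \;=\; \eta G\, \Gamma + (\eta G + 1 - 2\eta) I_2.
\end{equation}

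Next I would match this with $\tau \Gamma + \nu I_2$. The two equations $\eta G = \tau$ and $\eta G + 1 - 2\eta = \nu$ combine to give $1 - 2\eta = \nu - \tau$, hence $\eta = (\tau + 1 - \nu)/2$ and $G = \tau/\eta$, exactly as claimed in the statement. Because both $\mathcal{N}$ and the candidate composition are phase-insensitive Gaussian channels with $d=0$ and identical $(X,Y)$ data, they coincide as quantum channels.

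The remaining task, and the only part that requires genuine care, is to verify that the parameters $(\eta, G)$ obtained above are \emph{legitimate} channel parameters, i.e.\ $\eta \in (0,1]$ and $G > 1$. For $\eta > 0$ I would use precisely the non-entanglement-breaking hypothesis $\tau + 1 > \nu$ from \eqref{eq:ent-breaking-cond-phase-i}. For $\eta \le 1$ I would invoke the complete-positivity conditions \eqref{eq:cptp-cond-1}--\eqref{eq:cptp-cond}, which give $\nu \ge |1-\tau| \ge \tau - 1$, so $\tau + 1 - \nu \le 2$. For $G = \tau/\eta > 1$, I would rewrite the inequality as $\tau + \nu > 1$ and check this by splitting into cases $\tau \le 1$ (use $\nu \ge 1 - \tau$ from \eqref{eq:cptp-cond}) and $\tau > 1$ (use $\nu \ge 0$ directly); the boundary case $\tau + \nu = 1$ corresponds to $\mathcal{N}$ already being a pure-loss channel, in which case $G=1$ and $\mathcal{A}_{G,0}$ is the identity.

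I do not foresee a serious obstacle: the argument is essentially an algebraic matching of the two-parameter family $(\tau,\nu)$ with $(\eta, G)$, and the content lies in confirming that the CPTP window for phase-insensitive channels, together with the non-entanglement-breaking condition, is exactly wide enough to land in the window $\eta \in (0,1]$, $G \ge 1$. The one subtlety worth flagging in the write-up is that matching covariance-matrix actions is sufficient to identify the channels only because all three are Gaussian with trivial displacement; I would state this explicitly and cite the standard Gaussian-channel correspondence reviewed around \eqref{eq:gaussian-dilation}.
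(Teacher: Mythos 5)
Your proposal is correct and follows essentially the same route as the paper's own proof: match the covariance-matrix actions of $\L_{\eta,0}\circ\A_{G,0}$ and $\N$ to solve for $\eta=(\tau+1-\nu)/2$ and $G=\tau/\eta$, then use the non-entanglement-breaking condition for $\eta>0$ and the CPTP conditions \eqref{eq:cptp-cond-1}--\eqref{eq:cptp-cond} for $\eta\le 1$ and $G\ge 1$. Your explicit remark that identifying the channels from their $(X,Y,d)$ data requires the Gaussian-channel correspondence, and your flagging of the boundary case $\tau+\nu=1$ (where $G=1$ and the amplifier degenerates to the identity), are minor refinements of the same argument.
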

\begin{proof}
The action of a quantum-limited amplifier channel $\A_{G,0}$ with gain $G$ followed by a pure-loss channel $\L_{\eta, 0}$ with transmissivity $\eta$, on the convariance matrix $\Gamma$ is given by
\begin{equation} \label{eq:act-concat}
\eta (G~\Gamma + [G-1] I_2) + [1-\eta] I_2 .
\end{equation}
By comparing \eqref{eq:act-amp} and \eqref{eq:act-concat}, we find that it is necessary for the following equalities to hold
\begin{align}
 \eta G & = \tau, \\
 \eta(G-1) + 1 - \eta & =  \nu .
\end{align}
Solving these equations for $\eta$ and $G$ in terms of $\tau$ and $\nu$ then gives $\eta = (\tau + 1 - \nu)/2$ and $G = \tau / \eta$.
By the assumption that $\mathcal{N}$ is not entanglement breaking, which is that
$\tau+1 > \nu$, we find that
\begin{equation}
\eta = (\tau + 1 - \nu)/2 > 0.
\end{equation}
Now applying the conditions in \eqref{eq:cptp-cond-1} and \eqref{eq:cptp-cond} for the channel $\N$ to be a CPTP map, we find that
\begin{equation}
 \eta = (\tau + 1 - \nu)/2 \leq 
 (\tau + 1 - |1-\tau|)/2 =
 \left\{
 \begin{array}{ll}
  \tau & \text{ for } \tau \in [0,1)\\
  1 & \text{ for } \tau \geq 1 
 \end{array}\right.
\end{equation}
By the fact that $G = \tau / \eta$, the above implies that $G > 1$, so that the decomposition in \eqref{eq:concat-LA} is valid under the stated conditions.
\end{proof}

We now apply Theorem~\ref{thm:concat-loss-amp} and a data-processing argument  to a noisy amplifier channel $\A_{G, N_B}$ with gain $G> 1$, environment photon number $N_B \geq 0$,
for which $\tau = G$ and $\nu = (G-1)(2 N_B + 1)$. This channel is entanglement-breaking when $(G-1) N_B \geq 1$ \cite{Holevo2008}.

\begin{theorem}\label{thm:qu1-amp}
An upper bound on the energy-constrained quantum and private capacities of a noisy amplifier channel $\A_{G, N_B}$ with gain $G> 1$ and  environment photon number $N_B \geq 0$, such that $(G-1) N_B < 1$, and input photon number constraint $N_S\geq 0$, is given by
\begin{equation}
Q(\A_{G, N_B}, N_S) , P(\A_{G, N_B}, N_S) \leq \operatorname{max}\{0, Q_{U_1}(\A_{G, N_B}, N_S) \},
\label{eq:qu1-amp}
\end{equation} 
where
\begin{align}
 Q_{U_1}(\A_{G, N_B}, N_S) & \equiv g(G' N_S + G' - 1) - g[(G'-1)(N_S + 1)], \\
G' & = G/(1+N_B(1-G)).
\end{align}
\end{theorem}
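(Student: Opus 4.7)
The plan is to directly apply the structural Theorem~\ref{thm:concat-loss-amp} together with a data-processing argument, in complete parallel to the proof of Theorem~\ref{thm:qu1} for the thermal case. For the noisy amplifier $\A_{G,N_B}$ we have $\tau=G$ and $\nu=(G-1)(2N_B+1)$, so that $\tau+1-\nu=2(1-N_B(G-1))$. Hence the non-entanglement-breaking condition $\tau+1>\nu$ needed to invoke Theorem~\ref{thm:concat-loss-amp} is exactly the hypothesis $(G-1)N_B<1$. Applying that theorem produces the decomposition
\begin{equation}
\A_{G,N_B} \;=\; \L_{\eta,0} \circ \A_{G',0},
\end{equation}
with $\eta=(\tau+1-\nu)/2 = 1-N_B(G-1)\in(0,1]$ and $G'=\tau/\eta = G/(1+N_B(1-G))>1$, which matches the parameters in the statement.

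Next, I would appeal to data processing for the energy-constrained quantum and private capacities. Any $(n,M,\hat{n},N_S,\varepsilon)$ code for $\A_{G,N_B}$ yields a code for $\A_{G',0}$ subject to the same input mean photon number constraint $N_S$, simply by absorbing the post-processing channel $\L_{\eta,0}^{\otimes n}$ into the decoder (for quantum capacity) or into the receiver's POVM together with the security analysis (for private capacity, noting that the eavesdropper's information can only be larger for the component channel $\A_{G',0}$ than for $\A_{G,N_B}$, since the environment of $\A_{G,N_B}$ is obtained from the environments of $\A_{G',0}$ and $\L_{\eta,0}$ by a quantum channel on the former's environment alone that adds the reflected portion of $\L_{\eta,0}$). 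This yields
\begin{align}
Q(\A_{G,N_B}, N_S) &\leq Q(\A_{G',0}, N_S), \\
P(\A_{G,N_B}, N_S) &\leq P(\A_{G',0}, N_S).
\end{align}

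Finally, I would substitute the known closed form for the energy-constrained quantum capacity of the quantum-limited amplifier channel from \cite{QW16,MH16}, namely $Q(\A_{G',0},N_S) = g(G'N_S+G'-1) - g[(G'-1)(N_S+1)]$, which gives exactly $Q_{U_1}(\A_{G,N_B},N_S)$ as claimed. For the private capacity, the same quantity serves as an upper bound because the quantum-limited amplifier is degradable, so by \cite{MH16} its energy-constrained private capacity coincides with its energy-constrained quantum capacity. The outer $\max\{0,\cdot\}$ reflects non-negativity of the capacities. No substantive obstacle is anticipated, since the structural decomposition in Theorem~\ref{thm:concat-loss-amp} and the closed-form capacity formula for $\A_{G',0}$ do all the heavy lifting; the only minor care needed is verifying that $\eta>0$ (ensured by $(G-1)N_B<1$) and $G'>1$ (immediate from $\eta\leq 1$ and $G'=G/\eta$), so that the decomposition is into bona fide pure-loss and quantum-limited amplifier channels.
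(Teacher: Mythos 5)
Your proposal is correct and follows essentially the same route as the paper: invoke Theorem~\ref{thm:concat-loss-amp} with $\tau = G$, $\nu = (G-1)(2N_B+1)$ to obtain $\A_{G,N_B} = \L_{\eta,0}\circ\A_{G',0}$, absorb the trailing pure-loss channel into the decoder by data processing (leaving the input energy constraint untouched), and then apply the known energy-constrained capacity formula for the quantum-limited amplifier together with its degradability to cover the private capacity. Your added remark verifying $\eta > 0$ and $G' > 1$, and the justification for why the private-capacity data-processing step is legitimate, are consistent with (and slightly more explicit than) the paper's argument.
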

\begin{proof}
	An upper bound on the energy-constrained quantum and private capacities can be established by using \eqref{eq:concat-LA} and a data-processing argument. We find that 
	\begin{align}
	Q(\A_{G, N_B}, N_S) &= Q(\L_{\eta, 0} \circ \A_{G',0}, N_S) \\
	& \leq Q(\A_{G',0}, N_S) \\
	& = \operatorname{max}\{0, g(G'~N_S + G' - 1) - g[(G'-1)(N_S + 1)]\}~.
	\end{align}
	The first inequality follows from the definition and data processing---the energy-constrained capacity of $\L_{\eta, 0} \circ \A_{G',0}$ cannot exceed that of $\A_{G',0}$. The second equality follows from the formula for the energy-constrained quantum capacity of a quantum-limited amplifier channel with gain $G'$ and input mean photon number $N_S$ \cite{QW16}. Since a quantum-limited amplifier channel is a degradable channel \cite{CG06,WPG07}, its energy-constrained private capacity is the same as its energy-constrained quantum capacity. 
\end{proof}

\begin{remark}
	Applying Remark \ref{rem:unconstrained-cap}, we find the following data-processing bound $Q_{U_1}(\A_{G, N_B})$ on the unconstrained quantum and private capacities of amplifier channels for which $(G-1)N_B<1$:
	\begin{align}
	Q(\A_{G, N_B}), P(\A_{G, N_B}) \leq Q_{U_1}(\A_{G, N_B}) &= \sup_{N_S: N_S\in[0,\infty]} Q_{U_1}(\A_{G, N_B}, N_S)\\
	& = \lim_{N_S \to \infty} Q_{U_1}(\A_{G, N_B}, N_S) \\
	&= \log_{2}(G/(G-1)) - \log_2(N_B+1)~. \label{eq:unconstrained-cap-amp}
	\end{align}
	The second equality follows from the monotonicity of $Q_{U_1}(\A_{G, N_B}, N_S)$ with respect to $N_S$, which in turn  follows from the fact that the first derivative of $Q_{U_1}(\A_{G, N_B}, N_S)$ with respect to $N_S$ goes to zero as $N_S \to \infty$, and the second derivative is always negative.
	
	The bound
	\begin{equation}
	Q(\A_{G, N_B}), P(\A_{G, N_B}) \leq
	\log_2\!\left(\frac{G^{N_B+1}}{G-1}\right) - g(N_B) \label{eq:PLOBWTB-amp-bnd}
	\end{equation}
	was given in \cite{PLOB17,WTB17}.
	From a comparison of \eqref{eq:unconstrained-cap-amp} with \eqref{eq:PLOBWTB-amp-bnd}, we find that the bound given in \eqref{eq:PLOBWTB-amp-bnd} is always tighter than \eqref{eq:unconstrained-cap-amp}. Both the bounds in \eqref{eq:unconstrained-cap-amp} and  \eqref{eq:PLOBWTB-amp-bnd} converge to the true unconstrained quantum and private capacity in the limit as $N_B \to 0$, but \eqref{eq:PLOBWTB-amp-bnd} is tighter for $N_B > 0$.
\end{remark}

\begin{remark}
	The data-processing bound $Q_{U_1}(\A_{G, N_B}, N_S)$ on the energy-constrained quantum capacity of amplifier channels places a strong restriction on the channel parameters $G$ and $N_B$. Since the quantum capacity of a quantum-limited amplifier channel with gain $G'$ is non-zero only for $G' \neq  \infty$, the energy-constrained quantum capacity of an amplifier channel will be non-zero only for 
	\begin{align}
 1\leq G < (1+N_B)/N_B~,
	\end{align} 
	which is same as the condition given in \cite{dp2006} and is equivalent to the condition $(G-1)N_B < 1$, that the channel is not entanglement breaking.
\end{remark}
We now study the closeness of the data-proccessing bound $Q_{U_1}(\A_{G, N_B}, N_S)$ when compared to a known lower bound. In particular, we use the following lower bound on the energy-constrained quantum and private capacities of an amplifier channel \cite{HW01, MH16} and denote it by $Q_L(\A_{G, N_B},N_S)$:
\begin{multline}
Q(\A_{G, N_B}, N_S) \geq  Q_L(\A_{G, N_B}, N_S ) \equiv g(G N_S + (G-1) (N_B+1)) \\
- g([D+(G-1)(N_S+N_B+1)-1]/2)  
- g([D-(G-1)(N_S+N_B+1)-1]/2), \label{eq:ql-amp}
\end{multline}
where
\begin{equation}
D^2 \equiv [(1+G)N_S+(G-1)(N_B+1)+1]^2 - 4G N_S(N_S+1).
\end{equation}
\begin{theorem}
Let $\A_{G, N_B}$ be an amplifier channel with gain $G > 1$ and environment photon number $N_B\geq 0$, such that $(G-1)N_B < 1$, and input photon number constraint $N_S\geq 0$. Then the following relation holds between the data-processing bound $Q_{U_1}(\A_{G, N_B}, N_S)$ in \eqref{eq:qu1-amp} and the lower bound $Q_L(\A_{G, N_B},N_S)$ in \eqref{eq:ql-amp} on the energy-constrained quantum and private capacities of an amplifier channel:
\begin{align}
Q_L(\A_{G, N_B},N_S) \leq Q_{U_1}(\A_{G, N_B}, N_S) \leq Q_L(\A_{G, N_B}, N_S) + 1/\ln2~.
\end{align}
\end{theorem}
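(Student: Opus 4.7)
The plan is to mirror the proof of Theorem~\ref{thm:1.45bits} closely, replacing the thermal-channel decomposition $\L_{\eta, N_B}=\A_{G,0}\circ\L_{\eta',0}$ with the amplifier decomposition $\A_{G, N_B}=\L_{\eta,0}\circ\A_{G',0}$ guaranteed by Theorem~\ref{thm:concat-loss-amp} under the assumption $(G-1)N_B<1$, where $\eta=1-N_B(G-1)$ and $G'=G/\eta$. The left inequality $Q_L\leq Q_{U_1}$ is immediate from a sandwich: $Q_L(\A_{G, N_B},N_S)$ is the coherent information of $\A_{G, N_B}$ on the thermal-state input $\theta(N_S)$, which lower-bounds the energy-constrained quantum (and private) capacity, while $Q_{U_1}(\A_{G, N_B},N_S)$ upper-bounds the same capacity by Theorem~\ref{thm:qu1-amp}.

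For the right inequality, I would first introduce Stinespring dilations $\U^{G'}_{A\to B_1E_1}$ of $\A_{G',0}$ and $\V^{\eta}_{B_1\to B_2E_2}$ of $\L_{\eta,0}$, and consider the composite output state
\begin{equation}
\omega_{B_2E_1E_2}\equiv(\operatorname{id}_{E_1}\otimes\V^{\eta})\circ\U^{G'}(\theta(N_S)).
\end{equation}
Since $B_2E_2$ purifies $B_1$ through $\V^{\eta}$, $H(B_1)_\omega=H(B_2E_2)_\omega$, so that
\begin{align}
Q_{U_1}(\A_{G, N_B},N_S)&=H(B_2E_2)_\omega-H(E_1)_\omega,\\
Q_L(\A_{G, N_B},N_S)&=H(B_2)_\omega-H(E_1E_2)_\omega,
\end{align}
and the difference collapses to
\begin{equation}
\Delta(\A_{G, N_B},N_S)\equiv Q_{U_1}-Q_L=H(E_2\vert B_2)_\omega+H(E_2\vert E_1)_\omega.
\end{equation}

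The next step is to show that $\Delta$ is non-decreasing in $N_S$. Both $\U^{G'}$ and $\V^{\eta}$ are displacement-covariant, so an input displacement $D(\alpha)$ on system $A$ is equivalent to a tensor product of local displacements on $B_2$, $E_1$, and $E_2$ with appropriately scaled amplitudes (with a phase-conjugated amplitude on the amplifier's environment output $E_1$). Combining this covariance with the identity \eqref{eq:displaced-thermal-is-thermal}, which realizes $\theta(N'_S)$ as a complex-Gaussian mixture with variance $N'_S-N_S$ of displaced copies of $\theta(N_S)$, and then invoking invariance of conditional entropy under local unitaries followed by its concavity, exactly as in the proof of Theorem~\ref{thm:1.45bits}, yields $\Delta(\A_{G, N_B},N_S)\leq\Delta(\A_{G, N_B},N'_S)$ for every $N'_S\geq N_S$. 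Hence $\Delta$ attains its supremum in the limit $N_S\to\infty$.

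Finally, I would compute $\lim_{N_S\to\infty}\Delta$ using the expansion $g(x)=\log_2(x+1)+1/\ln 2+O(1/x)$. Substituting into the closed forms \eqref{eq:qu1-amp} and \eqref{eq:ql-amp}, expanding $D=\sqrt{[(1+G)N_S+(G-1)(N_B+1)+1]^2-4GN_S(N_S+1)}$ to leading order in $1/N_S$, and simplifying the ratios of the resulting logarithmic arguments should force every logarithmic term to cancel, leaving exactly the constant $1/\ln 2$. I expect this asymptotic computation to be the main routine obstacle; nevertheless, it parallels the thermal-channel calculation of Theorem~\ref{thm:1.45bits} under the substitution $\tau\to G$, $\nu\to(G-1)(2N_B+1)$, so no conceptually new ingredients should be required.
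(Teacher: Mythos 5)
Your proposal follows essentially the same route as the paper: the paper's proof of this theorem is a one-line deferral to the argument of Theorem~\ref{thm:1.45bits}, and you have correctly transcribed that argument to the amplifier setting, using the decomposition $\A_{G,N_B}=\L_{\eta,0}\circ\A_{G',0}$ from Theorem~\ref{thm:concat-loss-amp}, the identities $Q_{U_1}=H(B_2E_2)_\omega-H(E_1)_\omega$ and $Q_L=H(B_2)_\omega-H(E_1E_2)_\omega$, and monotonicity of $\Delta=H(E_2|B_2)_\omega+H(E_2|E_1)_\omega$ in $N_S$ via joint displacement covariance, \eqref{eq:displaced-thermal-is-thermal}, and concavity of conditional entropy.

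One caution about the final asymptotic step: the logarithmic terms do \emph{not} all cancel to leave exactly $1/\ln 2$. In $Q_L(\A_{G,N_B},N_S)$ from \eqref{eq:ql-amp}, the argument of the third $g(\cdot)$ term, $[D-(G-1)(N_S+N_B+1)-1]/2$, converges to the \emph{finite} value $N_B$ as $N_S\to\infty$, so the large-argument expansion $g(x)=\log_2(x+1)+1/\ln 2+O(1/x)$ cannot be applied to it. Carrying the computation through correctly gives
\begin{equation}
\lim_{N_S\to\infty}\Delta(\A_{G,N_B},N_S)=g(N_B)-\log_2(N_B+1)=N_B\log_2(1+1/N_B),
\end{equation}
which is strictly less than $1/\ln 2$ for finite $N_B$ and approaches it only as $N_B\to\infty$. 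The theorem still follows, since $x\ln(1+1/x)\leq 1$ for all $x\geq 0$, so $\lim_{N_S\to\infty}\Delta\leq 1/\ln 2$ and monotonicity does the rest; but you should add this last elementary inequality rather than expect exact cancellation. (The same subtlety is present in the paper's own asymptotic computation for Theorem~\ref{thm:1.45bits}, where the analogous argument also tends to $N_B$.) Also note that the left inequality $Q_L\leq Q_{U_1}$ is cleanest as a byproduct of monotonicity together with $\Delta(\A_{G,N_B},0)=0$, since the sandwich via the capacity only yields $Q_L\leq\max\{0,Q_{U_1}\}$.
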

\begin{proof}
	A proof follows from arguments similar to those in the proof of Theorem \ref{thm:1.45bits}.
\end{proof}

\subsection{$\varepsilon$-degradable bound on energy-constrained quantum and private capacities of amplifier channels}

\label{sec:eps-deg-q-cap-amp}

In this section, we provide an upper bound on the energy-constrained quantum and private capacities of a quantum amplifier channel $\A_{G, N_B}$ using the idea of $\varepsilon$-degradability. We first construct an approximate degrading channel $\D$ by following arguments similar to those in Section \ref{sec:eps-deg-q-cap}. Furthermore, we introduce a particular channel that simulates the serial concatenation of the amplifier channel $\A_{G, N_B}$ and the approximate degrading channel $\D$. We finally provide an upper bound on the energy-constrained quantum and private capacities of an amplifier channel by using all these tools and invoking Theorem \ref{thm:qcbound-eps-approx}.

Similar to Section \ref{sec:eps-deg-q-cap}, we first establish an upper bound on the diamond distance between the complementary channel of the amplifier channel and the concatenation of the amplifier channel followed by a particular approximate degrading channel. Let $\T$ and $\T'$ represent transformations of two-mode squeezers with parameter $G$ and $(2G-1)/G$, respectively. In the Heisenberg picture, the unitary transformation corresponding to $\T$ and $\T'$ follow from \eqref{eq:amp-unitary}.

Consider the following action of the noisy amplifier channel $\A_{G, N_B}$ on an input state $\phi_{RA}$:
\begin{equation} \label{eq:amp-ch}
(\operatorname{id}_R\otimes \A_{G, N_B})(\phi_{RA}) = \tr_{E_1E_2}\{ \T_{AE'\to BE_2}(\phi_{RA}\otimes \psi_{\operatorname{TMS}}(N_B)_{E'E_1})  \}~, 
\end{equation}
where $R$ is a reference system and $\psi_{\operatorname{TMS}}(N_B)_{E'E_1}$ is a two-mode squeezed vacuum state with parameter $N_B$, as defined in \eqref{eq:tms}. It is evident from \eqref{eq:amp-ch} that the output of the noisy amplifier channel $\A_{G, N_B}$ is system $B$, and the outputs of the complementary channel $\hat{\A}_{G, N_B}$ are systems $E_1$ and $E_2$.

Consider a two-mode squeezer $\mathcal{T}'$ with parameter $(2G-1)/G$, such that the output of the amplifier channel $\A_{G, N_B}$ becomes an environmental input for $\mathcal{T}'$. We consider one mode of the two-mode squeezed vacuum state $\psi_{\operatorname{TMS}}(N_B)_{FE'_1}$ as an input for $\T'$, so that the subsystem $E'_1$ mimics $E_1$. We denote our choice of degrading channel by $\D_{(2G-1)/G, N_B}: \T(B) \to \T(E'_1)\otimes \T(E'_2)$. More formally, $\D_{(2G-1)/G, N_B}$ has the following action on the output state $\A_{G, N_B}(\phi_{RA})$:
\begin{equation}
\label{eq:deg-map-amp}
(\operatorname{id}_R \otimes[\D_{(2G-1)/G,N_B}\circ\A_{G, N_B}])(\phi_{RA}) = \tr_{G}\{\T'_{BF\to E'_2G}(\A_{G, N_B}(\phi_{RA})\otimes \psi_{\operatorname{TMS}}(N_B)_{FE'_1})  )  \}~.
\end{equation}

Now, similar to Section \ref{sec:eps-deg-q-cap}, we introduce a particular channel that simulates the action of $\D_{(2G-1)/G} \circ \A_{G, N_B}$ on an input state $\phi_{RA}$. We denote this channel by $\Lambda$, and it has the following action on an input state $\phi_{RA}$:
\begin{equation}\label{eq:simulating-ch-amp}
(\operatorname{id}_R \otimes \Lambda)(\phi_{RA}) = \tr_{B}\{\T_{AE' \rightarrow BE_2} (\phi_{RA}\otimes \omega(N_B)_{E'E_1}) \}~,
\end{equation}
where $\omega(N_B)_{E'E_1}$ represents a noisy version of a two-mode squeezed vacuum state with parameter $N_B$, and is same as \eqref{eq:noisy-tms-cov}, except $\eta$ is replaced by $G$. Similar to \eqref{eq:equiv}, the following equivalence holds for any quantum input state $\phi_{RA}$:
\begin{equation}
(\operatorname{id}_R \otimes [ \D_{(2G-1)/G, N_B}\circ\A_{G, N_B}])(\phi_{RA}) = (\operatorname{id}_R \otimes \Lambda)(\phi_{RA})~.
\end{equation}
Thus, the channels $\D_{(2G-1)/G,N_B}\circ\A_{G, N_B}$ and $\Lambda$ are indeed the same. 

Similar to Theorem \ref{thm:eps-diamond}, we now establish an upper bound on the diamond distance between the complementary channel of a noisy amplifier channel and the concatenation of the amplifier channel followed by the degrading channel in \eqref{eq:deg-map-amp}.

\begin{theorem}
	Fix $G> 1$. Let $\A_{G, N_B}$ be an amplifier channel with gain $G$, and let $\D_{(2G-1)/G, N_B}$ be a degrading channel as defined in \eqref{eq:deg-map-amp}. Then
	\begin{equation}
	\frac{1}{2} \Vert \hat{\A}_{G, N_B} - \D_{(2G-1)/G, N_B}\circ \A_{G, N_B} \Vert_{\diamond} \leq \sqrt{1 - G^2/\kappa(G, N_B)}, 
	\end{equation}
	with 
	\begin{equation}
	\kappa(G, N_B) = G^2 + N_B(N_B+1)[1+ 3G^2 -2G(1+\sqrt{2G-1})].
	\end{equation}
\end{theorem}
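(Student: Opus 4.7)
The plan is to mirror the argument used to prove Theorem~\ref{thm:eps-diamond} for the thermal channel, replacing the beamsplitter by the two-mode squeezer and the transmissivity $\eta$ by the gain $G$. The key structural fact, already recorded in the discussion preceding the statement, is that the serial concatenation $\D_{(2G-1)/G, N_B} \circ \A_{G, N_B}$ is equal to the simulating channel $\Lambda$ defined in \eqref{eq:simulating-ch-amp}, where $\Lambda$ acts by tensoring the input with a noisy two-mode squeezed state $\omega(N_B)_{E'E_1}$ (obtained from \eqref{eq:noisy-tms-cov} with $\eta$ replaced by $G$), applying the two-mode squeezer $\T_{AE'\to B E_2}$, and then tracing out system $B$. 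On the other hand, from \eqref{eq:amp-ch} the complementary channel $\hat{\A}_{G, N_B}$ arises from the same dilation but uses the pure two-mode squeezed vacuum $\psi_{\operatorname{TMS}}(N_B)_{E'E_1}$ in place of $\omega(N_B)_{E'E_1}$, and again traces over $B$.

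Given this common structure, I would bound the diamond distance as follows. For an arbitrary pure input state $\phi_{RA}$, apply in order: (i) the identity $\D_{(2G-1)/G, N_B}\circ \A_{G, N_B} = \Lambda$; (ii) monotonicity of the trace norm under the partial trace over $B$; (iii) unitary invariance of the trace norm under the two-mode squeezer $\T$; and (iv) the fact that the reference system $\phi_{RA}$ factors out, leaving only the trace distance between $\psi_{\operatorname{TMS}}(N_B)_{E'E_1}$ and $\omega(N_B)_{E'E_1}$. Finally, apply the Powers--Stormer inequality to convert the trace distance into a bound involving the Uhlmann fidelity:
\begin{equation}
\left\| \psi_{\operatorname{TMS}}(N_B)_{E'E_1} - \omega(N_B)_{E'E_1} \right\|_1 \leq 2\sqrt{1 - F(\psi_{\operatorname{TMS}}(N_B)_{E'E_1},\, \omega(N_B)_{E'E_1})}.
\end{equation}
Since the chain of inequalities is uniform in the input state $\phi_{RA}$, taking the supremum over inputs yields the desired diamond-norm bound.

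The only remaining step is to evaluate the fidelity explicitly. Both states are zero-mean two-mode Gaussian states, so their fidelity is a closed-form function of their covariance matrices via the formula for Uhlmann fidelity between Gaussian states. The covariance matrix of $\psi_{\operatorname{TMS}}(N_B)$ is \eqref{eq:tms-cov}, while the covariance matrix of $\omega(N_B)$ (with $\eta$ replaced by $G$) has the same diagonal entries $2N_B + 1$ but off-diagonal entries $2\sqrt{N_B(N_B+1)(2G-1)/G^2}$. Plugging these into the two-mode Gaussian fidelity formula and simplifying should yield
\begin{equation}
F(\psi_{\operatorname{TMS}}(N_B)_{E'E_1},\, \omega(N_B)_{E'E_1}) = \frac{G^2}{G^2 + N_B(N_B+1)\bigl[1 + 3G^2 - 2G(1+\sqrt{2G-1})\bigr]} = \frac{G^2}{\kappa(G,N_B)},
\end{equation}
whence the stated bound follows.

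The main (and really only) obstacle is the algebraic simplification of the fidelity expression: the general two-mode Gaussian fidelity formula produces a rather unwieldy combination of symplectic invariants, and I would likely need to verify the simplification symbolically (as the authors did in the thermal case using Mathematica). The structural part of the proof is otherwise entirely parallel to Theorem~\ref{thm:eps-diamond}, with $G$ playing the role of $\eta$ and $\T$ playing the role of $\B$.
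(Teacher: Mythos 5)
Your proposal is correct and is essentially the paper's own proof: the paper simply states that the result ``follows from arguments similar to those in the proof of Theorem~\ref{thm:eps-diamond},'' and your expansion of that argument---simulating-channel equivalence, monotonicity under partial trace, unitary invariance under the two-mode squeezer, Powers--Stormer, and the two-mode Gaussian fidelity computation with $\eta$ replaced by $G$---is exactly the intended route.
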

\begin{proof}
A proof follows from arguments similar to those in the proof of Theorem \ref{thm:eps-diamond}.
\end{proof}

\begin{theorem}\label{them:qu2-amp}
	An upper bound on the energy-constrained quantum capacity of a noisy amplifier channel $\A_{G, N_B}$ with gain $G> 1$, environment photon number $N_B$, such that $(G-1)N_B < 1$, and input mean photon-number constraint $N_S \geq 0$ is given by
	\begin{multline}
Q(\A_{G, N_B}, N_S) \leq Q_{U_2}(\A_{G, N_B}, N_S) \equiv g(G N_S + (G-1)N_B)-g(\zeta_{+}) -g(\zeta_{-}) \\ +(2\varepsilon' + 4\delta) g([(G-1)N_S+(1+G)N_B]/\delta)
+ g(\varepsilon') + 2 h_2(\delta)~,
\end{multline}
with 
\begin{align}
&\varepsilon = \sqrt{1-G^2/\left(G^2 + N_B(N_B+1)\lbrack1+3 G^2 - 2G (1+\sqrt{2G -1})\rbrack\right)}~,\\
&\zeta_{\pm} = \frac{1}{2}\left(-1+\sqrt{[ (1+2N_B)^2 - 2\varrho + (2\vartheta-1)^2 \pm 4(\vartheta -N_B-1)\sqrt{[N_B+\vartheta]^2-\varrho}]/2}\right)~,\\
&\varrho = 4N_B(N_B+1)(2G-1)/G~,\\
&\vartheta = G(1+N_B )+(G-1) N_S~,
\end{align}
$\varepsilon' \in (\varepsilon,1]$, and $\delta = (\varepsilon' - \varepsilon)/(1+ \varepsilon')$.
\end{theorem}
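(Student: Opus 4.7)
The plan is to follow the same pattern as the proof of Theorem~\ref{thm:qu2}, adapting each step from the thermal channel setting to the amplifier channel setting. First I would invoke the diamond-distance bound stated immediately above the theorem, which asserts that $\A_{G,N_B}$ is $\varepsilon$-degradable with degrading channel $\D_{(2G-1)/G,N_B}$ from \eqref{eq:deg-map-amp} and with
\begin{equation}
\varepsilon = \sqrt{1-G^2/\kappa(G,N_B)}, \qquad \kappa(G,N_B) = G^2 + N_B(N_B+1)\bigl[1+3G^2 - 2G(1+\sqrt{2G-1})\bigr].
\end{equation}
This places the amplifier channel within the hypothesis of Theorem~\ref{thm:qcbound-eps-approx}, so the remaining task is to supply the ingredients needed to invoke that theorem: an output energy constraint and an evaluation of the conditional entropy of degradation $U_{\D_{(2G-1)/G,N_B}}(\A_{G,N_B},N_S)$.

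Next I would establish an output mean photon number constraint $W'$ common to the complementary channel $\hat{\A}_{G,N_B}$ and to $\D_{(2G-1)/G,N_B}\circ\A_{G,N_B}$. Using the Bogoliubov transformation in \eqref{eq:amp-unitary} and the fact that both the thermal environment and the ancilla used by $\D_{(2G-1)/G,N_B}$ have mean photon number $N_B$, a direct computation bounds the total output photon number by a quantity of the form $W'=(G-1)N_S+(1+G)N_B$ (up to a constant shift that one can absorb). Plugging the values of $\varepsilon$ and $W'$ into Theorem~\ref{thm:qcbound-eps-approx} then yields
\begin{equation}
Q(\A_{G,N_B},N_S) \leq U_{\D_{(2G-1)/G,N_B}}(\A_{G,N_B},N_S) + (2\varepsilon'+4\delta)\,g(W'/\delta) + g(\varepsilon') + 2h_2(\delta),
\end{equation}
with $\delta=(\varepsilon'-\varepsilon)/(1+\varepsilon')$.

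The main analytical step is to reduce the conditional entropy of degradation to a tractable Gaussian expression. I would prove the analogue of Proposition~\ref{thm:Ud-optimization}: namely, that the thermal state $\theta(N_S)$ saturating the energy constraint is optimal in the supremum defining $U_{\D_{(2G-1)/G,N_B}}(\A_{G,N_B},N_S)$. The argument used there carries over verbatim once one verifies the three ingredients flagged in the remark after Proposition~\ref{thm:Ud-optimization}: phase covariance of $\A_{G,N_B}$ and of the dilation $\W$ of the degrading channel (both are phase-insensitive Gaussian channels built from two-mode squeezers), displacement covariance of the overall channel from input $A$ to the outputs $E_1'E_2'G$ (follows from \eqref{eq:covariance-gaussian}), and concavity plus local-unitary invariance of conditional entropy. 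With $\theta(N_S)$ identified as optimal, the conditional entropy reduces to $H(G|E_1'E_2')$ on a Gaussian state whose covariance matrix is computed by propagating \eqref{eq:input-cov}, with $a=b=2N_S+1$ and $c=0$, through the two successive Gaussian unitaries. Williamson diagonalization of the resulting $4\times 4$ covariance matrix then gives two symplectic eigenvalues $2\zeta_{\pm}+1$, while $H(\A_{G,N_B}(\theta(N_S)))=g(GN_S+(G-1)(N_B+1))$ is standard, yielding $U_{\D}=g(GN_S+(G-1)N_B)-g(\zeta_+)-g(\zeta_-)$ as stated.

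The step I expect to be most laborious is the symplectic-eigenvalue computation producing the explicit formula for $\zeta_{\pm}$; this is essentially an algebraic calculation (most conveniently done with a computer algebra system, as the analogous formula in Theorem~\ref{thm:qu2} was), and no conceptual difficulty arises beyond carefully tracking the parameters $G$ and $(2G-1)/G$ through the two two-mode squeezers and the TMS ancilla. Once the symplectic eigenvalues are in hand, substitution into the bound from Theorem~\ref{thm:qcbound-eps-approx} delivers the stated upper bound $Q_{U_2}(\A_{G,N_B},N_S)$.
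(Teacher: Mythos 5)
Your proposal is correct and takes essentially the same route as the paper: the paper's own proof of this theorem is a one-line remark that the argument is analogous to that of Theorem~\ref{thm:qu2} (via the amplifier diamond-norm bound, Theorem~\ref{thm:qcbound-eps-approx}, the amplifier analogue of Proposition~\ref{thm:Ud-optimization}, and a symplectic-eigenvalue computation), which is exactly the sequence of steps you spell out. The only point worth tightening is the output-energy bookkeeping near the end (your intermediate expression $g(GN_S+(G-1)(N_B+1))$ versus the stated $g(GN_S+(G-1)N_B)$ should be reconciled explicitly rather than waved off as an absorbable shift), but this does not affect the structure of the argument.
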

\begin{proof}
	A proof follows from arguments similar to those in the proof of Theorem \ref{thm:qu2}. 
\end{proof}
\bigskip 

\begin{theorem}\label{them:qu2-private-amp}
	An upper bound on the
	energy-constrained
	 private capacity of a noisy amplifier channel $\A_{G, N_B}$ with with gain $G> 1$, environment photon number $N_B$, such that $(G-1)N_B < 1$, and input mean photon-number constraint $N_S \geq 0$ is given by
	\begin{multline}
	P(\A_{G, N_B}, N_S) \leq P_{U_2}(\A_{G, N_B}, N_S) \equiv g(G N_S + (G-1)N_B)-g(\zeta_{+}) -g(\zeta_{-}) \\ +(6\varepsilon' + 12\delta) g([(G-1)N_S+(1+G)N_B]/\delta)
	+ 3g(\varepsilon') + 6 h_2(\delta)~,
	\end{multline}
	with 
	\begin{align}
	&\varepsilon = \sqrt{1-G^2/\left(G^2 + N_B(N_B+1)\lbrack1+3 G^2 - 2G (1+\sqrt{2G -1})\rbrack\right)}~,\\
	&\zeta_{\pm} = \frac{1}{2}\left(-1+\sqrt{[ (1+2N_B)^2 - 2\varrho + (2\vartheta-1)^2 \pm 4(\vartheta -N_B-1)\sqrt{[N_B+\vartheta]^2-\varrho}]/2}\right)~,\\
	&\varrho = 4N_B(N_B+1)(2G-1)/G~,\\
	&\vartheta = G(1+N_B )+(G-1) N_S~,
	\end{align}
	$\varepsilon' \in (\varepsilon,1]$, and $\delta = (\varepsilon' - \varepsilon)/(1+ \varepsilon')$.
\end{theorem}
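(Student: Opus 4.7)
The plan is to mimic the strategy used in Theorem~\ref{them:qu2-amp} for the quantum capacity bound, but replace the application of Theorem~\ref{thm:qcbound-eps-approx} by the private-capacity analog Theorem~\ref{thm:pu2gen}. The three ingredients needed are: (i)~a diamond-distance bound between $\hat{\A}_{G,N_B}$ and $\D_{(2G-1)/G,N_B}\circ\A_{G,N_B}$; (ii)~an output energy constraint valid for both channels; (iii)~an evaluation of the conditional entropy of degradation $U_{\D_{(2G-1)/G,N_B}}(\A_{G,N_B},\hat{n},N_S)$.

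For (i), the bound
\begin{equation}
\tfrac{1}{2}\left\Vert \hat{\A}_{G,N_B} - \D_{(2G-1)/G,N_B}\circ\A_{G,N_B}\right\Vert_{\diamond} \leq \varepsilon,
\end{equation}
with $\varepsilon$ as in the theorem statement, was already established just above Theorem~\ref{them:qu2-amp} via the simulating-channel technique; I would simply invoke it. For (ii), note that with an input mean photon number constraint $N_S$, the output of $\hat{\A}_{G,N_B}$ on mode $E_2$ has mean photon number $(G-1)(N_S+1)+G N_B$ and on mode $E_1$ equals $N_B$; analogously, the output of $\D_{(2G-1)/G,N_B}\circ\A_{G,N_B}$ on its two environmental modes has the same total photon number by construction of the simulating channel. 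A straightforward accounting yields the total output photon number $(G-1)N_S+(1+G)N_B$ (matching the corresponding expression in Theorem~\ref{them:qu2-amp}), which I would use as the energy-constraint parameter $W'$ when feeding things into Theorem~\ref{thm:pu2gen}.

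For (iii), I would argue, by a verbatim adaptation of Proposition~\ref{thm:Ud-optimization}, that the thermal state $\theta(N_S)$ saturates $U_{\D_{(2G-1)/G,N_B}}(\A_{G,N_B},\hat n,N_S)$. The three conditions highlighted in the remark following Proposition~\ref{thm:Ud-optimization} all hold here: the amplifier channel (together with its Stinespring dilation and the chosen degrading channel) is phase-covariant, displacement-covariant, and conditional entropy is concave and unitarily invariant. Hence one obtains
\begin{equation}
U_{\D_{(2G-1)/G,N_B}}(\A_{G,N_B},\hat n,N_S) = g(G N_S + (G-1)N_B) - g(\zeta_+) - g(\zeta_-),
\end{equation}
with $\zeta_{\pm}$ being one-half of the nontrivial symplectic eigenvalues (minus one) of the covariance matrix of the joint output on $(E_1',E_2')$ when the input is $\theta(N_S)$. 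The explicit form of $\zeta_\pm$ given in the statement follows from diagonalizing the two-mode covariance matrix obtained by pushing $\theta(N_S)$ through the amplifier followed by the degrading beamsplitter-plus-TMS dilation; this is exactly the same matrix computation that produced $\zeta_\pm$ in Theorem~\ref{them:qu2-amp}, so I would reuse it.

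Combining the three ingredients via Theorem~\ref{thm:pu2gen} yields the claimed bound, with the coefficients $6\varepsilon'+12\delta$, $3g(\varepsilon')$, and $6h_2(\delta)$ arising directly from that theorem (as opposed to the $2\varepsilon'+4\delta$, $g(\varepsilon')$, $2h_2(\delta)$ that appeared in the quantum-capacity analog). The main technical obstacle — if one insists on carrying out the algebra from scratch — is the symplectic-eigenvalue calculation for $\zeta_\pm$; however, since that computation is already encoded in the proof of Theorem~\ref{them:qu2-amp} and is performed symbolically (e.g.\ via \cite{Mathematica}), the present proof is essentially an assembly of previously established pieces.
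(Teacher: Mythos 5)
Your proposal is correct and follows essentially the same route as the paper, whose proof of this theorem is simply the remark that it parallels Theorem~\ref{thm:qu2} (diamond-distance bound via the simulating channel, output-energy accounting, thermal-state optimality of the conditional entropy of degradation) with Theorem~\ref{thm:pu2gen} substituted for Theorem~\ref{thm:qcbound-eps-approx} to produce the coefficients $6\varepsilon'+12\delta$, $3g(\varepsilon')$, and $6h_2(\delta)$. The only small wrinkle is that your intermediate mode-by-mode count gives $(G-1)(N_S+1)+GN_B$ plus $N_B$, whose sum exceeds the quoted $W'=(G-1)N_S+(1+G)N_B$ by $G-1$; since you ultimately adopt the same $W'$ as the paper's statement, this does not change the argument you assemble.
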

\begin{proof}
	A proof follows from arguments similar to those in the proof of Theorem \ref{thm:qu2}. The final result is obtrained using Theorem \ref{thm:pu2gen}.
\end{proof}

\subsection{$\varepsilon$-close-degradable bound on energy-constrained quantum and private capacities of amplifier channels}
\label{sec:eps-close-q-cap-amp}

In this section, we first establish an upper bound on the diamond distance between a noisy amplifier channel and a quantum-limited amplifier channel. Since a quantum-limited amplifier channel is a degradable channel, an upper bound on the energy-constrained quantum capacity of a noisy amplifier channel directly follows from Theorem \ref{thm:qcbound-eps-close}.
\begin{theorem} \label{thm:amp-ql-amp}
	If a noisy amplifier channel $\A_{G, N_B}$  and a quantum-limited amplifier channel $\A_{G,0}$ have the same gain $G >  1$, then
	\begin{equation}
	\frac{1}{2}\left\Vert \A_{G, N_B} - \A_{G,0} \right\Vert_{\diamond} \leq \frac{N_B}{N_B+1}~.
	\end{equation}
\end{theorem}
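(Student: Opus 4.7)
The proof proceeds in essentially the same manner as Theorem~\ref{thm:thermal-pureloss}, with the beamsplitter $\B$ replaced by the two-mode squeezer $\T$ from \eqref{eq:amp-unitary}. The plan is to exploit the fact that both $\A_{G,N_B}$ and $\A_{G,0}$ admit Stinespring dilations that use the \emph{same} unitary $\T_{AE\to BE'}$, differing only in the state of the environment mode $E$: a thermal state $\theta_E(N_B)$ for the noisy amplifier and the vacuum $|0\rangle\langle 0|_E$ for the quantum-limited amplifier.

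For an arbitrary pure input $\psi_{RA}$, I would write
\begin{align}
& \bigl\|(\operatorname{id}_R \otimes \A_{G,N_B})(\psi_{RA}) - (\operatorname{id}_R \otimes \A_{G,0})(\psi_{RA})\bigr\|_1 \nonumber \\
&\quad = \bigl\|\operatorname{Tr}_{E'}\bigl\{\T_{AE\to BE'}(\psi_{RA}\otimes \theta_E(N_B)) - \T_{AE\to BE'}(\psi_{RA}\otimes |0\rangle\langle 0|_E)\bigr\}\bigr\|_1 \nonumber \\
&\quad \leq \bigl\|\T_{AE\to BE'}(\psi_{RA}\otimes \theta_E(N_B)) - \T_{AE\to BE'}(\psi_{RA}\otimes |0\rangle\langle 0|_E)\bigr\|_1 \nonumber \\
&\quad = \bigl\|\psi_{RA}\otimes \theta_E(N_B) - \psi_{RA}\otimes |0\rangle\langle 0|_E\bigr\|_1 \nonumber \\
&\quad = \bigl\|\theta(N_B) - |0\rangle\langle 0|\bigr\|_1,
\end{align}
using monotonicity of the trace distance under the partial trace over $E'$, followed by unitary invariance under $\T$, and finally the tensor-product property of the trace norm.

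The last quantity is a straightforward calculation in the photon-number basis, exactly as in the thermal/pure-loss case: using the representation of $\theta(N_B)$ from \eqref{eq:thermalstate}, one finds
\begin{equation}
\bigl\|\theta(N_B) - |0\rangle\langle 0|\bigr\|_1 = \frac{2N_B}{N_B+1}.
\end{equation}
Since the chain of inequalities holds for every pure $\psi_{RA}$, taking the supremum and applying the definition of the diamond norm yields the claimed bound $N_B/(N_B+1)$ on one half of the diamond distance.

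There is no real obstacle here; the argument is a direct transcription of the proof of Theorem~\ref{thm:thermal-pureloss}, with the only substantive change being that the shared dilation unitary is a two-mode squeezer rather than a beamsplitter. The computation of $\|\theta(N_B) - |0\rangle\langle 0|\|_1$ is identical in both settings, since the environment state for the noisy amplifier is again a thermal state with mean photon number $N_B$ and the quantum-limited case corresponds to vacuum input.
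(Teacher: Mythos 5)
Your proposal is correct and is essentially the paper's own proof: the paper simply states that the result follows by the same argument as Theorem~\ref{thm:thermal-pureloss}, which is exactly the transcription you carried out, replacing the beamsplitter dilation with the two-mode squeezer and comparing the thermal environment state to the vacuum. The trace-distance computation $\|\theta(N_B)-|0\rangle\langle 0|\|_1 = 2N_B/(N_B+1)$ and the monotonicity/unitary-invariance steps are all as intended.
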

\begin{proof}
	A proof follows from arguments similar to those in the proof of Theorem  \ref{thm:thermal-pureloss}.
\end{proof}
\bigskip

\begin{theorem}\label{thm:qu3-amp}
	An upper bound on the energy-constrained quantum capacity of a noisy amplifier channel $\A_{G, N_B}$ with gain $G> 1$, environment photon number $N_B$, such that $(G-1)N_B < 1$, and input mean photon-number constraint $N_S \geq 0$ is given by
	\begin{multline} \label{eq:qu3-amp}
	Q(\A_{G, N_B},N_S) \leq Q_{U_3}(\A_{G, N_B},N_S) \equiv g(G N_S + G-1)-g[(G-1)(N_S+1)] \\
	+ (4\varepsilon' + 8\delta) g[(GN_S +(G-1)N_B)/\delta] + 2g(\varepsilon') + 4h_2(\delta)~, 
	\end{multline} 
	with $\varepsilon = N_B/(N_B+1)$, $\varepsilon' \in (\varepsilon,1]$ and $\delta = (\varepsilon' - \varepsilon)/(1+ \varepsilon')$.
\end{theorem}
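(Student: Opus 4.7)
The plan is to mirror closely the proof of Theorem \ref{thm:qu3}, replacing the thermal/pure-loss pair with the noisy-amplifier/quantum-limited-amplifier pair and invoking the general $\varepsilon$-close-degradable bound from Theorem \ref{thm:qcbound-eps-close}. All the necessary ingredients are already in place: Theorem \ref{thm:amp-ql-amp} supplies the diamond-norm bound $\tfrac{1}{2}\|\A_{G,N_B}-\A_{G,0}\|_\diamond \le N_B/(N_B+1)$, and the quantum-limited amplifier channel $\A_{G,0}$ is known to be degradable \cite{CG06,WPG07}, so its energy-constrained quantum capacity coincides with its energy-constrained coherent information.

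First, I would set $\varepsilon = N_B/(N_B+1)$ so that $\tfrac{1}{2}\|\A_{G,N_B}-\A_{G,0}\|_\diamond \le \varepsilon < \varepsilon' \le 1$, and take $\M = \A_{G,0}$ in the role of the degradable channel appearing in Theorem \ref{thm:qcbound-eps-close}. Next, I would check the output energy bookkeeping: for any $n$-mode input state with average photon number at most $N_S$, both $\A_{G,N_B}^{\otimes n}$ and $\A_{G,0}^{\otimes n}$ produce outputs whose per-mode photon number is controlled, since the amplifier Bogoliubov relation \eqref{eq:amp-unitary} gives $\langle \hat b^\dagger \hat b\rangle = G N_S + (G-1)(N_B+1)$ for the noisy channel and $G N_S + (G-1)$ for the quantum-limited channel. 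Both are bounded by $W' = G N_S + (G-1)N_B$ in the regime under consideration (which is all one needs for Theorem \ref{thm:qcbound-eps-close}, since that theorem only requires an upper bound on the output energy), so the hypotheses of Theorem \ref{thm:qcbound-eps-close} are satisfied.

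Applying Theorem \ref{thm:qcbound-eps-close} then yields
\begin{equation}
Q(\A_{G,N_B},N_S) \le I_c(\A_{G,0},N_S) + (4\varepsilon' + 8\delta)\, H(\gamma(W'/\delta)) + 2 g(\varepsilon') + 4 h_2(\delta),
\end{equation}
with $\delta = (\varepsilon'-\varepsilon)/(1+\varepsilon')$. Since $G\in\P(\H_A)$ is the photon number operator, $H(\gamma(W'/\delta)) = g(W'/\delta) = g([G N_S + (G-1)N_B]/\delta)$. To close out the bound, I would substitute the known closed form for the energy-constrained coherent information (equivalently, quantum capacity) of the quantum-limited amplifier \cite{QW16,MH16}, namely $I_c(\A_{G,0},N_S) = g(G N_S + G - 1) - g[(G-1)(N_S+1)]$, which is achieved by the thermal input saturating the energy constraint. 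Combining these gives exactly \eqref{eq:qu3-amp}.

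I do not expect any genuine obstacle here: the proof is a direct translation of Theorem \ref{thm:qu3}, with the transmissivity $\eta$ replaced by the gain $G$ throughout, Theorem \ref{thm:thermal-pureloss} replaced by Theorem \ref{thm:amp-ql-amp}, and the pure-loss capacity formula replaced by the quantum-limited amplifier capacity formula. The only point that deserves a careful sentence is that the quantum-limited amplifier is degradable (so that $P=Q=I_c$ for it), and that the chosen output energy bound $W'$ majorizes the true output mean photon number of both channels, allowing Theorem \ref{thm:qcbound-eps-close} to be applied cleanly.
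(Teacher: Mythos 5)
Your overall route is exactly the one the paper intends: its proof of this theorem is literally ``arguments similar to those in the proof of Theorem~\ref{thm:qu3},'' i.e., combine the diamond-norm estimate of Theorem~\ref{thm:amp-ql-amp} with degradability of the quantum-limited amplifier and the general $\varepsilon$-close-degradable bound of Theorem~\ref{thm:qcbound-eps-close}, then insert the known formula $I_c(\A_{G,0},N_S)=g(GN_S+G-1)-g[(G-1)(N_S+1)]$. All of that matches the paper.

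The one step that fails is your energy bookkeeping. You correctly compute the output mean photon numbers as $GN_S+(G-1)(N_B+1)$ for $\A_{G,N_B}$ and $GN_S+(G-1)$ for $\A_{G,0}$, but then assert that both are bounded by $W'=GN_S+(G-1)N_B$. That inequality is false: $GN_S+(G-1)(N_B+1)=GN_S+(G-1)N_B+(G-1)>GN_S+(G-1)N_B$ whenever $G>1$, and likewise $GN_S+(G-1)>GN_S+(G-1)N_B$ whenever $N_B<1$. Theorem~\ref{thm:qcbound-eps-close} requires $W'$ to majorize the output energies of \emph{both} channels, so the legitimate choice is $W'=GN_S+(G-1)(N_B+1)$, which yields $g([GN_S+(G-1)(N_B+1)]/\delta)$ in place of the printed $g([GN_S+(G-1)N_B]/\delta)$; since $g$ is monotone increasing this only weakens the bound slightly, and the argument is otherwise intact. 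The discrepancy appears to be inherited from the theorem statement itself---the faithful analogue of the thermal-channel bookkeeping in Theorem~\ref{thm:qu3}, where $\eta N_S+(1-\eta)N_B$ genuinely majorizes both outputs, would give $(G-1)(N_B+1)$ here, consistent with the output-entropy term in the lower bound \eqref{eq:ql-amp}---but your write-up should not force the printed constant by invoking an inequality that does not hold.
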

\begin{proof}
	A proof follows from arguments similar to those in the proof of Theorem \ref{thm:qu3}. 
\end{proof}

\begin{theorem}\label{thm:qu3-private-amp}
	An upper bound on the energy-constrained private capacity of a noisy amplifier channel $\A_{G, N_B}$ with gain $G> 1$, environment photon number $N_B$, such that $(G-1)N_B < 1$, and input mean photon-number constraint $N_S \geq 0$ is given by
	\begin{multline} \label{eq:qu3-private-amp}
	P(\A_{G, N_B},N_S) \leq P_{U_3}(\A_{G, N_B},N_S) \equiv g(G N_S + G-1)-g[(G-1)(N_S+1)] \\
	+ (8\varepsilon' + 16\delta) g[(GN_S +(G-1)N_B)/\delta] + 4g(\varepsilon') + 8h_2(\delta)~, 
	\end{multline} 
	with $\varepsilon = N_B/(N_B+1)$, $\varepsilon' \in (\varepsilon,1]$ and $\delta = (\varepsilon' - \varepsilon)/(1+ \varepsilon')$.
\end{theorem}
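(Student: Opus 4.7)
The plan is to mirror the proof of Theorem \ref{thm:qu3-amp} (the quantum-capacity version), but invoke Theorem \ref{thm:eps-close-priv-gen-b} in place of Theorem \ref{thm:qcbound-eps-close}. First I invoke Theorem \ref{thm:amp-ql-amp}, which gives $\tfrac{1}{2}\left\Vert \A_{G,N_B} - \A_{G,0}\right\Vert_{\diamond} \leq N_B/(N_B+1) = \varepsilon < \varepsilon' \leq 1$, so the noisy amplifier is diamond-norm-close to the quantum-limited amplifier $\A_{G,0}$. Since $\A_{G,0}$ is known to be degradable \cite{CG06,WPG07}, this immediately certifies that $\A_{G,N_B}$ is $\varepsilon$-close-degradable in the sense required by Theorem \ref{thm:eps-close-priv-gen-b}, with the degradable approximant taken to be $\mathcal{M} = \A_{G,0}$.

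Next I verify the output-energy constraints demanded by Theorem \ref{thm:eps-close-priv-gen-b}. Given an input mean photon-number constraint $N_S$, the Bogoliubov transformation in \eqref{eq:amp-unitary} shows that for any state obeying $\operatorname{Tr}\{\hat{n}\rho\} \leq N_S$, the mean photon number of $\A_{G,N_B}(\rho)$ is bounded by $GN_S + (G-1)N_B$ (after subtracting the ground-state offset, exactly in parallel with the choice made for the quantum-capacity bound in Theorem \ref{thm:qu3-amp}), and the corresponding output of $\A_{G,0}(\rho)$ is no larger. Hence one can take $W' = GN_S + (G-1)N_B$ as a common energy ceiling for the two channels, and the Gibbs-state entropy term in Theorem \ref{thm:eps-close-priv-gen-b} becomes $g[(GN_S + (G-1)N_B)/\delta]$.

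Applying Theorem \ref{thm:eps-close-priv-gen-b} with these choices then yields
\begin{equation}
P(\A_{G,N_B}, N_S) \leq I_c(\A_{G,0}, N_S) + (8\varepsilon' + 16\delta)\, g[(GN_S + (G-1)N_B)/\delta] + 4g(\varepsilon') + 8h_2(\delta).
\end{equation}
To finish, I substitute the known closed-form expression for the energy-constrained coherent information of the quantum-limited amplifier, namely $I_c(\A_{G,0}, N_S) = g(GN_S + G-1) - g[(G-1)(N_S+1)]$, established in \cite{QW16,MH16} with the thermal state of mean photon number $N_S$ as the optimizer. This reproduces exactly the expression claimed in the theorem.

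There is no serious obstacle here: the two nontrivial ingredients (the diamond-norm estimate of Theorem \ref{thm:amp-ql-amp} and the general private-capacity continuity bound of Theorem \ref{thm:eps-close-priv-gen-b}) are already established, the degradability of $\A_{G,0}$ and the resulting formula for its energy-constrained coherent information are standard, and the only arithmetic step is the choice of the common output-energy bound $W' = GN_S + (G-1)N_B$, which is the direct analogue of the choice used in Theorem \ref{thm:qu3-amp}.
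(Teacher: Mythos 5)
Your proposal is correct and follows essentially the same route as the paper: the paper's proof of this theorem simply says to repeat the argument of the thermal-channel case (Theorem~\ref{thm:qu3}) using the diamond-norm estimate of Theorem~\ref{thm:amp-ql-amp}, the degradability of $\A_{G,0}$, and the general private-capacity bound of Theorem~\ref{thm:eps-close-priv-gen-b}, which is exactly what you do. Your filling-in of the output-energy ceiling $W' = GN_S + (G-1)N_B$ and the closed form of $I_c(\A_{G,0},N_S)$ matches the intended argument.
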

\begin{proof}
	A proof follows from arguments similar to those in the proof of Theorem \ref{thm:qu3}. The final result is obtrained using Theorem \ref{thm:eps-close-priv-gen-b}.
\end{proof}
\bigskip

\section{Data-processing bound on energy-constrained quantum and private capacities of additive-noise channels} \label{sec:bound-additive-noise}

In this section, we provide an upper bound on the energy-constrained quantum and private capacities of an additive-noise channel using Theorem \ref{thm:qu1}. Note that we only consider $\bar{n} \in (0,1)$ because the additive-noise channel is not entanglement breaking in this interval \cite{Holevo2008}.

\begin{theorem}\label{thm:bound-additive-noise}
	An upper bound on the energy-constrained quantum and private capacities of an additive-noise channel $\N_{\bar{n}}$ with noise parameter $\bar{n}\in(0,1)$, and input mean photon number constraint $N_S$ is given by
	\begin{equation}
	Q(\N_{\bar{n}}, N_S) ,P(\N_{\bar{n}}, N_S)\leq \operatorname{max}\{ 0, Q_{U_1}(\N_{\bar{n}}, N_S)\},
		\end{equation}
	where
		\begin{equation}
	Q_{U_1}(\N_{\bar{n}},N_S) \equiv g(N_S/(\bar{n}+1)) - g(\bar{n}N_S/(\bar{n}+1)). \label{eq:constrained-cap-additive-n}
		\end{equation}
\end{theorem}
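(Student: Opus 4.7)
The strategy is to realize the additive-noise channel $\N_{\bar{n}}$ as a limit of thermal channels and then push the data-processing bound of Theorem~\ref{thm:qu1} through that limit. Fix $\bar{n}\in(0,1)$ and consider the one-parameter family $\L_{\eta,N_B(\eta)}$ with $N_B(\eta) := \bar{n}/(1-\eta)$. By construction, $(1-\eta)N_B(\eta) = \bar{n}$ is constant along the family, so the action on the position-quadrature covariance matrix, $V \mapsto \eta V + (1-\eta)(2 N_B(\eta)+1)\,I_2$, tends to $V + 2\bar{n}\,I_2$ as $\eta\to 1^-$, which is exactly the action of $\N_{\bar{n}}$. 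Since a Gaussian channel is determined by its action on the first and second moments of the input, this convergence lifts to $\L_{\eta,N_B(\eta)}(\rho) \to \N_{\bar{n}}(\rho)$ in trace norm for Gaussian $\rho$ of bounded mean photon number, and, by standard density and energy-truncation arguments, to arbitrary input states of bounded mean photon number.

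Applying Theorem~\ref{thm:qu1} and Theorem~\ref{thm:data-proc-pri-cap} to each channel in this family gives
\begin{equation}
Q(\L_{\eta,N_B(\eta)}, N_S),\; P(\L_{\eta,N_B(\eta)}, N_S) \;\leq\; \max\{0,\, g(\eta_0 N_S) - g((1-\eta_0) N_S)\},
\end{equation}
where $\eta_0 := \eta/((1-\eta)N_B(\eta)+1) = \eta/(\bar{n}+1)$. As $\eta\to 1^-$, $\eta_0 \to 1/(\bar{n}+1)$, which is strictly larger than $1/2$ exactly because $\bar{n}<1$, and by continuity of $g$ the right-hand side converges to $\max\{0,\, g(N_S/(\bar{n}+1)) - g(\bar{n} N_S/(\bar{n}+1))\}$, which is the claimed bound.

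The remaining and nontrivial step is to establish
\begin{equation}
Q(\N_{\bar{n}}, N_S) \;\leq\; \liminf_{\eta\to 1^-} Q(\L_{\eta,N_B(\eta)}, N_S),
\end{equation}
and similarly for $P$. My plan is to work at the level of (regularized) coherent information and private information. Given $\varepsilon>0$, fix a near-optimal input $\rho^*$ with $\tr\{\hat{n}\rho^*\}\leq N_S$ such that $I_c(\N_{\bar{n}}, \rho^*)$ is within $\varepsilon$ of $I_c(\N_{\bar{n}}, \hat{n}, N_S)$. The trace-norm convergence of $\L_{\eta,N_B(\eta)}(\rho^*) \to \N_{\bar{n}}(\rho^*)$ together with the analogous convergence on the complementary channels, combined with the uniform output-entropy continuity bound of Theorem~\ref{thm:continuity-output-entropy} (invoked with the uniform output-energy bounds available along this family), gives $I_c(\L_{\eta,N_B(\eta)}, \rho^*) \to I_c(\N_{\bar{n}}, \rho^*)$. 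Applying the same reasoning to $n$-fold tensor products transfers the conclusion to the regularized quantities and hence to the capacity. An analogous argument using the expression for the energy-constrained private information handles $P$.

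The main obstacle is this final continuity transfer. The unconstrained diamond distance $\|\L_{\eta,N_B(\eta)} - \N_{\bar{n}}\|_{\diamond}$ does not vanish as $\eta\to 1^-$, because the environment thermal state has diverging mean photon number along the family, so Lemma~\ref{thm:continuous-cap} cannot be invoked with a global continuity estimate. One must instead thread the argument through the energy-constrained continuity tools of Theorem~\ref{thm:continuity-output-entropy}, keeping track of the output mean photon numbers on both channels uniformly in $\eta$; this is precisely the type of delicate infinite-dimensional reasoning that the energy-constrained framework of this paper is set up to accommodate.
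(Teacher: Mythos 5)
Your overall strategy is the same as the paper's: the paper also realizes $\N_{\bar{n}}$ as the limit of thermal channels $\L_{\eta,N_B}$ with $\eta\to 1$, $N_B\to\infty$ and $(1-\eta)N_B\to\bar{n}$, takes this limit in the data-processing bound of Theorem~\ref{thm:qu1}, and your computation of the limiting formula (with $\eta_0\to 1/(\bar{n}+1)$) is exactly right. The difference is in how the limit is justified at the level of capacities. The paper does not attempt to prove the continuity transfer internally; it invokes an external result (\cite[Theorem~3]{S17}, see also \cite{Wetal17}) establishing continuity of the energy-constrained quantum and private capacities under this kind of strong (pointwise) convergence of channels with uniformly bounded output energies. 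You instead propose to rebuild that continuity from Theorem~\ref{thm:continuity-output-entropy}, and here your plan as written would fail: the hypothesis of Theorem~\ref{thm:continuity-output-entropy} (and of Lemma~\ref{thm:continuous-cap}) is a bound on the \emph{unconstrained} diamond distance $\frac{1}{2}\Vert\N-\M\Vert_{\diamond}\leq\varepsilon$, and, as you yourself observe, this quantity does not tend to zero along the family $\L_{\eta,N_B(\eta)}$. So the tool you name cannot be ``threaded through'' without first being re-proved with an energy-constrained diamond norm in its hypothesis, which is a nontrivial modification that neither you nor the paper carries out. Your fallback sketch (fix a near-optimal input $\rho^*$ for $\N_{\bar{n}}$ and argue convergence of the coherent informations at $\rho^*$) also only yields a lower-bound-type statement, i.e.\ $\liminf_\eta I_c(\L_{\eta,N_B(\eta)},\hat{n},N_S)\geq I_c(\N_{\bar{n}},\hat{n},N_S)$ at the single-letter level; to conclude $Q(\N_{\bar{n}},N_S)\leq\liminf_\eta Q(\L_{\eta,N_B(\eta)},N_S)$ one needs a bound that is uniform over all $n$-letter inputs satisfying the energy constraint, which is precisely what the cited continuity theorem supplies. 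In short: right decomposition, right limit, right formula, but the continuity step should be closed by appealing to the energy-constrained capacity-continuity results of \cite{S17,Wetal17} rather than by the paper's unconstrained-diamond-norm lemmas.
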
 
\begin{proof}
	A proof follows from the fact that an additive noise channel can be obtained from a thermal noise channel in the limit $\eta\to 1$ and $N_B\to\infty$, with $(1-\eta)N_B \to \bar{n}$ \cite{GGLMS04}, as well by applying the continuity results for these capacities from \cite[Theorem~3]{S17} (see also \cite{Wetal17}). By taking these limits in \eqref{eq:qu1}, we obtain the desired result.   
\end{proof} 
 
 \begin{remark}
 	Applying Remarks \ref{rem:unconstrained-cap} and \ref{rem:unconstrained-qcap-thermal}, and Theorem \ref{thm:bound-additive-noise}, we find the following data-processing bound $Q_{U_1}(\N_{\bar{n}})$ on the unconstrained quantum and private capacities of additive-noise channels
	for $\bar{n} \in (0,1)$:
\begin{equation}
Q_{U_1}(\N_{\bar{n}}) = \log_2(1/\bar{n})~. \label{eq:unconstrained-cap-additive-n}
\end{equation}
 \end{remark}

\begin{remark}
	From Theorem \ref{thm:1.45bits}, it follows that the data-processing upper bound $Q(\N_{\bar{n}}, N_S)$ can be at most 1.45 bits larger than a known lower bound on the energy-constrained quantum and private capacities of an additive-noise channel. 
\end{remark}

\begin{remark}
The following bound was given in 
\cite{PLOB17,WTB17} for  $\bar{n} \in (0,1)$:
\begin{equation}
Q(\N_{\bar{n}}) ,P(\N_{\bar{n}})
\leq
\frac{\bar{n}-1}{\ln 2} + \log_2(1/\bar{n}). \label{eq:add-noise-PLOTWTB-bnd}
\end{equation}
	From a comparison of \eqref{eq:unconstrained-cap-additive-n} with the bound in \eqref{eq:add-noise-PLOTWTB-bnd}, we find that the bound in \eqref{eq:add-noise-PLOTWTB-bnd} is always tighter than \eqref{eq:unconstrained-cap-additive-n}.
\end{remark}

 \section{Recent developments}\label{sec:recent-dev}
 
 In this section, we first recall a recent result of \cite{RMG18} on the unconstrained quantum capacity of a thermal channel. After that, we extend these results here to obtain new bounds on the energy-constrained quantum and private capacities of a thermal channel and an additive-noise channel. Finally, we compare these new bounds with our previous bounds. 
 
 \subsection{\cite{RMG18} bounds for the unconstrained quantum capacity of a thermal channel}
 
 Recently, the following upper bound on the unconstrained quantum capacity of a thermal channel for which
 $\eta > (1-\eta)N_B$ was introduced in \cite[Eq.~(40)]{RMG18}:
 \begin{equation}
 Q_{U_1}(\L_{\eta, N_B}) = \max\left\{0, \log_2\!\left(
 \frac{\eta - (1-\eta)N_B} {(1-\eta)(N_B+1)}\right) \right\}~. \label{eq:rmg}
 \end{equation}
This bound was obtained by using the decomposition
$\L_{\eta, N_B} = 
\L_{\eta', 0}\circ\A_{G,0}$
from Theorem \ref{thm:concat-loss-amp} (found independently in \cite{RMG18}) and the bottleneck inequality $Q(\L_{\eta', 0}\circ\A_{G,0}) \leq \min\{Q(\L_{\eta', 0}), Q(\A_{G,0}) \}$ for the unconstrained quantum capacity. Note that \eqref{eq:rmg} is slightly tighter than \eqref{eq:unconstrained-qcap-thermal} for all parameter regimes. These findings were independently discovered in \cite{NAJ18}.

\subsection{Further extension to the energy-constrained quantum and private capacities of thermal channels}

We now introduce a new upper bound on the energy-constrained quantum and private capacities of thermal channels (independently discovered in \cite{NAJ18} as well).  In the energy-constrained scenario, one cannot directly apply the bottleneck inequality in order to obtain a bound for the finite-energy case, due to an important physical consideration discussed below. However, we introduce a method to tackle this issue and establish an upper bound in the following theorem:

\begin{theorem}\label{thm:qu4}
An upper bound on the
energy-constrained
quantum and private capacities of a thermal channel $\L_{\eta, N_B}$ with transmissivity $\eta\in[1/2,1]$, environment photon number $N_B\geq 0$,
such that $\eta > (1-\eta)N_B$, and input mean photon number constraint $N_S\geq 0$ is given by
\begin{equation}
Q(\L_{\eta, N_B}, N_S),
P(\L_{\eta, N_B}, N_S)
 \leq \max\{0, Q_{U_4} (\L_{\eta, N_B}, N_S)\},
\end{equation}
where
\begin{equation}
Q_{U_4}(\L_{\eta, N_B}, N_S) \equiv g(\eta N_S + (1-\eta)N_B) - g[(1/\eta'-1)(\eta N_S + (1-\eta)N_B)],\label{eq:qu4}
\end{equation}
and $\eta' = \eta-(1-\eta)N_B$.
\end{theorem}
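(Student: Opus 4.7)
The plan is to apply the decomposition from Theorem \ref{thm:concat-loss-amp} in the form $\L_{\eta, N_B} = \L_{\eta', 0} \circ \A_{G, 0}$. A direct computation, using $\tau = \eta$ and $\nu = (1-\eta)(2N_B+1)$ for the thermal channel, yields $\eta' = \eta - (1-\eta) N_B$ and $G = \eta/\eta'$; moreover, the non-entanglement-breaking hypothesis $\tau + 1 > \nu$ of Theorem \ref{thm:concat-loss-amp} is equivalent to $\eta > (1-\eta) N_B$, which is precisely the assumption of the present theorem. Hence the decomposition is available in the stated regime, with $G > 1$ and $\eta' \in (0,1]$.

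The central subtlety is that the usual bottleneck inequality $Q(\L_{\eta',0} \circ \A_{G,0}, N_S) \leq Q(\L_{\eta',0}, N_S)$ is not valid in the energy-constrained setting: the amplifier boosts the mean photon number of any input state with mean photon number at most $N_S$ to at most $G N_S + (G-1)$, so the input seen by $\L_{\eta',0}$ carries strictly more energy than $N_S$. To handle this, I would argue at the level of codes. Given any energy-constrained quantum communication code $(\E^n, \D^n)$ for $\L_{\eta,N_B}$ with input mean photon number bound $N_S$, define a new code for $\L_{\eta',0}$ with encoding $\A_{G,0}^{\otimes n} \circ \E^n$ and the same decoding $\D^n$. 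A direct covariance-matrix calculation for $\A_{G,0}$ shows the new code satisfies the inflated input constraint $N_S' \equiv GN_S + (G-1) = (\eta N_S + (1-\eta) N_B)/\eta'$, while the fidelity condition \eqref{eq:q-code-fidelity} is preserved because $\L_{\eta', 0}^{\otimes n} \circ \A_{G,0}^{\otimes n} = \L_{\eta, N_B}^{\otimes n}$ by construction. Therefore $Q(\L_{\eta, N_B}, N_S) \leq Q(\L_{\eta', 0}, N_S')$.

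The conclusion then follows from the known formula \cite{Mark2012tradeoff,MH16} for the energy-constrained quantum capacity of the pure-loss channel, namely $Q(\L_{\eta', 0}, N_S') = \max\{0, g(\eta' N_S') - g((1-\eta') N_S')\}$. Substituting $\eta' N_S' = \eta N_S + (1-\eta) N_B$ and $(1-\eta') N_S' = (1/\eta' - 1)(\eta N_S + (1-\eta) N_B)$ produces the claimed expression for $Q_{U_4}(\L_{\eta, N_B}, N_S)$ in \eqref{eq:qu4}, together with the outer $\max\{0,\cdot\}$. For the private capacity, the same code-level embedding works: the security condition \eqref{eq:security-cond} for the original private code on $\L_{\eta, N_B}$ transfers to the corresponding condition on the environment of $\L_{\eta', 0}$ after absorbing the amplifier into the encoding, since, by uniqueness of Stinespring dilation, the environment output of $\L_{\eta',0}$ acting on $\A_{G,0}^{\otimes n} \circ \E^n(m)$ is a partial trace of a complementary channel output of $\L_{\eta, N_B}$ acting on $\E^n(m)$, and monotonicity of trace distance preserves the security estimate. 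Since $\L_{\eta', 0}$ is degradable \cite{WPG07}, its energy-constrained private capacity coincides with its energy-constrained quantum capacity \cite{MH16}, yielding the same upper bound for $P(\L_{\eta, N_B}, N_S)$.

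The main obstacle is the code-level data-processing step: the usual bottleneck inequality fails under energy constraints, and one must absorb the amplifier into the encoding so that the resulting code is a legitimate energy-constrained code for $\L_{\eta', 0}$ with the correctly inflated input/output energy $N_S'$. Once this step is set up cleanly, the remainder is routine algebra on the pure-loss capacity formula.
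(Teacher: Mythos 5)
Your proposal is correct and follows essentially the same route as the paper: decompose $\L_{\eta,N_B}=\L_{\eta',0}\circ\A_{G,0}$ via Theorem~\ref{thm:concat-loss-amp}, absorb the quantum-limited amplifier into the encoder so that the resulting code for $\L_{\eta',0}$ satisfies the inflated constraint $GN_S+G-1=(\eta N_S+(1-\eta)N_B)/\eta'$, and then invoke the pure-loss capacity formula together with degradability for the private case. Your added detail on transferring the security condition through a partial trace of the complementary channel output is a correct (and slightly more explicit) elaboration of what the paper leaves implicit.
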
 
\begin{proof}
Using Theorem \ref{thm:concat-loss-amp}, a  thermal channel $\L_{\eta, N_B}$ satisfying $\eta > (1-\eta)N_B$ can be decomposed as the concatenation of a quantum-limited amplifier channel $\A_{G, 0}$ followed by a pure-loss channel $\L_{\eta', 0}$, such that 
\begin{align}
& G = \eta/\eta',\\
& \eta' = \eta-(1-\eta)N_B~.
\end{align}
Consider the following chain of inequalities:
\begin{align}
Q(\L_{\eta, N_B}, N_S) &= Q(\L_{\eta', 0} \circ \A_{G, 0}, N_S)\\
& \leq Q(\L_{\eta',0}, G N_S + G-1)\\
& = g(\eta' [G N_S + G-1]) - g[(1-\eta')(G N_S + G-1)]\\
& = g(\eta N_S + (1-\eta)N_B) - g[(1/\eta'-1)(\eta N_S + (1-\eta)N_B)]~. 
\end{align}
The first inequality is a consequence of the following argument: consider an arbitrary encoding and decoding scheme for energy-constrained quantum communication over the thermal channel~$\L_{\eta, N_B}$, which satisfies the mean input photon number constraint $N_S \geq 0$. Due to the decomposition of $\L_{\eta, N_B}$ as $\L_{\eta', 0} \circ \A_{G, 0}$, this encoding, followed by many uses of the pure-amplifier channel 
$\A_{G, 0}$
 can be considered as an encoding for the channel $\L_{\eta',0}$, which also satisfies the mean photon number constraint $G N_S + G-1$, due to the fact that the pure-amplifier channel $\A_{G, 0}$ introduces a gain. Since the energy-constrained quantum capacity of the channel $\L_{\eta',0}$ involves an optimization over all such encodings that satisfies the mean photon number constraint $ G N_S + G-1$, we arrive at the desired inequality. The second equality follows from the formula for the energy-constrained quantum capacity of a pure-loss bosonic channel with transmissivity $\eta'$ and input mean photon number $GN_S + G-1$ \cite{Mark2012tradeoff,MH16}. 
\end{proof}

\bigskip
Now, we conduct a numerical evaluation in order to compare the bound in \eqref{eq:qu4} with our other bounds on the energy-constrained quantum and private capacities of a thermal channel. Since there is a free parameter $\varepsilon'$ in both the $\varepsilon$-degradable bound in \eqref{eq:qu2} and the $\varepsilon$-close-degradable bound in \eqref{eq:qu3}, we optimize these bounds with respect to $\varepsilon'$ \cite{Mathematica}. In Figure \ref{fig:recent-develop}(a), we find that both the data-processing bound $Q_{U_1}$ and $Q_{U_4}$ are close to the lower bound for medium transimissivity and high thermal noise. Moreover, $Q_{U_4}$ is slightly tighter than $Q_{U_1}$ for some parameter regimes. In Figure \ref{fig:recent-develop}(b), we find that the $\varepsilon$-degradable bound is tighter than all other bounds for high transmissivity and high thermal noise. 
\begin{figure*}[ptb]
	\begin{center}
		\subfloat[]{\includegraphics[width=.42\columnwidth]{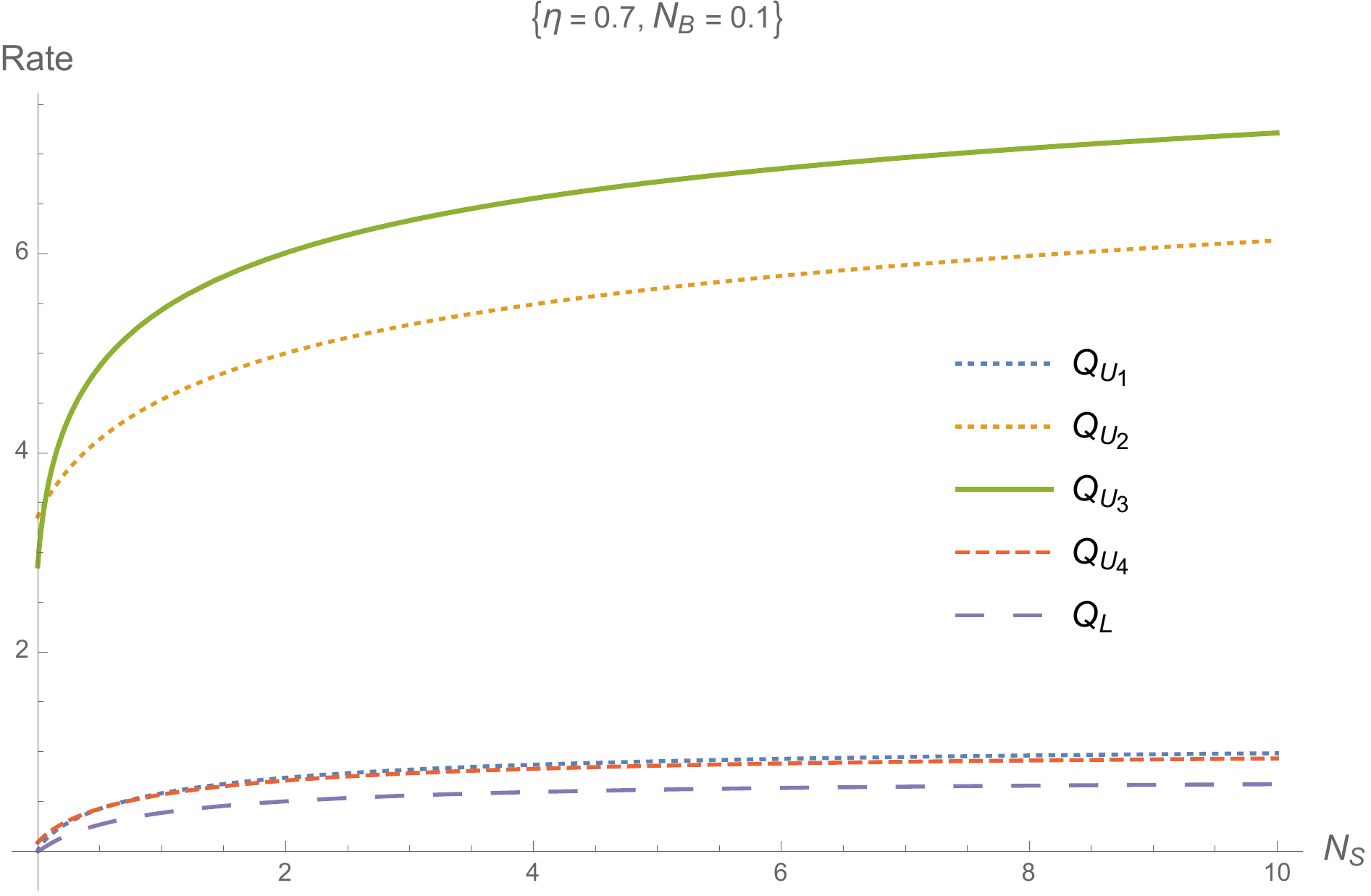}}\qquad
		\qquad\subfloat[]{\includegraphics[width=.42\columnwidth]{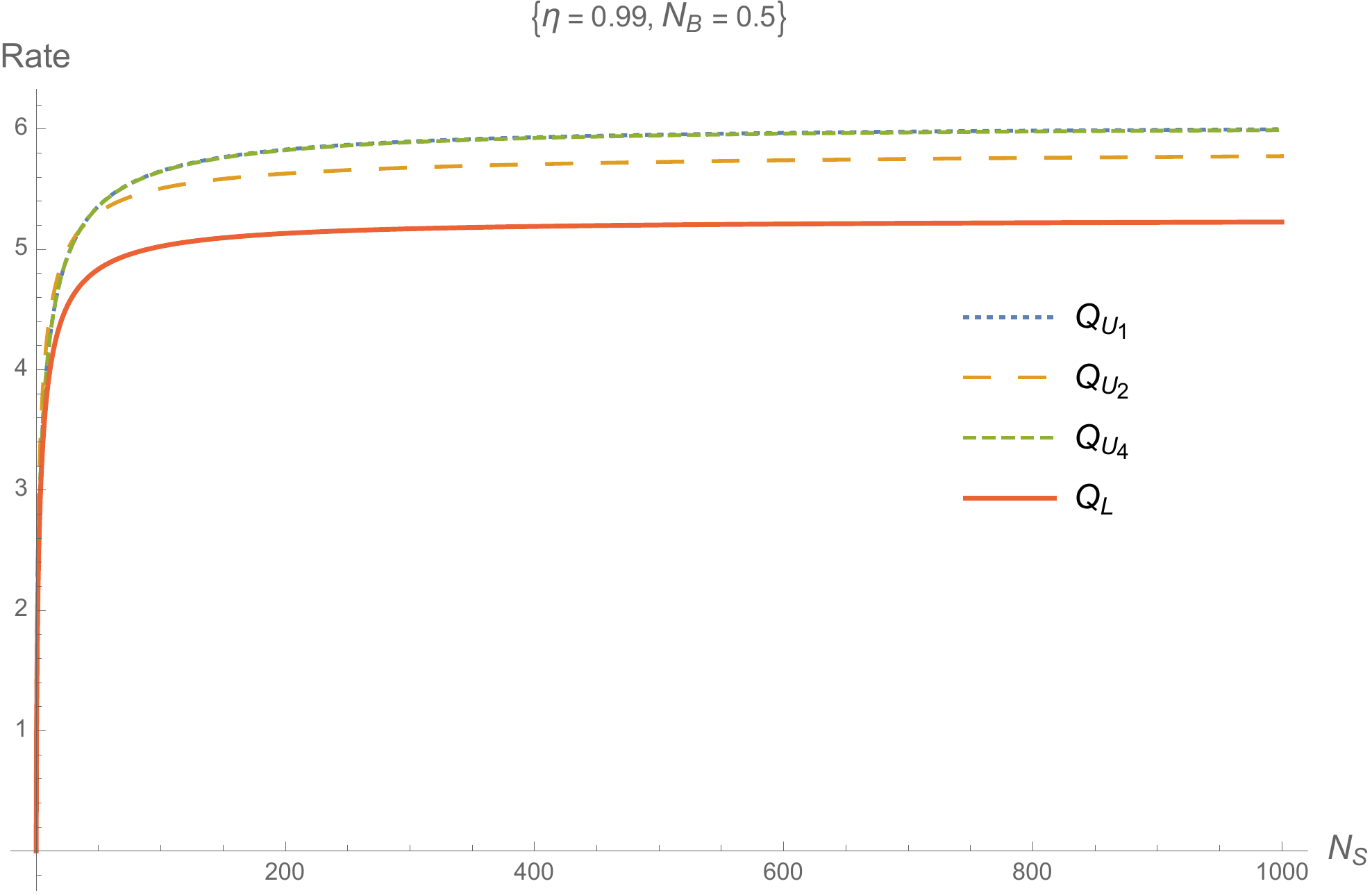}}%
	\end{center}
	\caption{The figures plot the data-processing bound $(Q_{U_1})$, the $\varepsilon$-degradable bound $(Q_{U_2})$, the $\varepsilon$-close-degradable bound $(Q_{U_3})$, the bound $Q_{U_4}$ and the lower bound $(Q_{L})$ on energy-constrained quantum capacity of thermal channels. In each figure, we select certain values of $\eta$ and $N_B$, with the choices indicated above each figure. In (a), for medium transmissivity and high thermal noise, both the data-processing bound and $Q_{U_4}$ is close to the lower bound. In (b), the $\varepsilon$-degradable upper bound is tighter than all other upper bounds ($Q_{U_3}$ is not plotted because it is much higher than the other bounds for all parameter values).}%
	\label{fig:recent-develop}%
\end{figure*}

\begin{remark}
The upper bound $Q_{U_4}(\L_{\eta, N_B}, N_S)$ on the energy-constrained quantum and private capacities of thermal channels places a strong restriction on the channel parameters $\eta$ and $N_B$. Since the quantum and private capacities of a pure-loss channel with $\eta'$ are non-zero only for $\eta' > 1/2$, the energy-constrained quantum and private capacities of a thermal channel will be non-zero only for 
\begin{equation}
1 \geq \eta > \frac{1+2N_B}{2(1+N_B)},
\end{equation}
which is same as the condition given in \cite[Section 4]{dp2006}.
\end{remark}

\subsection{Further extension on the energy-constrained quantum and private capacities of additive-noise channels}
 
In this section, we establish another upper bound on the energy-constrained quantum and private capacities of an additive-noise channel, by using Theorem \ref{thm:qu4}.

\begin{theorem}\label{thm:qu4-add-noise}
	An upper bound on the
	energy-constrained 
	quantum and private capacities of an additive-noise channel $N_{\bar{n}}$ with noise parameter $\bar{n} \in (0,1)$, and input photon number constraint $N_S \geq 0$ is given by
	\begin{equation}
	Q(\N_{\bar{n}}, N_S) , P(\N_{\bar{n}}, N_S) \leq \operatorname{max}\{0, Q_{U_4}(\N_{\bar{n}}, N_S) \},
	\end{equation}
	where
	\begin{equation}
	Q_{U_4}(\N_{\bar{n}}, N_S) \equiv g(N_S + \bar{n}) - g[\bar{n} (N_S+ \bar{n})/(1-\bar{n}) ]. \label{eq:qu4-add-noise}
	\end{equation}
\end{theorem}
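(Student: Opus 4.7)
The plan is to mimic the proof of Theorem~\ref{thm:bound-additive-noise}, where the additive-noise bound was obtained from the thermal-channel bound by taking the limit $\eta\to 1$, $N_B\to\infty$ with $(1-\eta)N_B\to\bar{n}$. Here I would apply the same limiting procedure to the newer upper bound $Q_{U_4}(\L_{\eta,N_B},N_S)$ from Theorem~\ref{thm:qu4}, together with the continuity results for energy-constrained quantum and private capacities from \cite[Theorem~3]{S17} (see also \cite{Wetal17}), to transfer the thermal-channel bound to the additive-noise channel.

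First I would set $\eta_k\to 1$ and $N_{B,k}\to\infty$ along a sequence with $(1-\eta_k)N_{B,k}\to\bar{n}$, so that $\L_{\eta_k,N_{B,k}}$ converges (in the appropriate energy-constrained sense) to $\N_{\bar n}$. Under this limit, the argument of the first $g$ becomes $\eta_k N_S+(1-\eta_k)N_{B,k}\to N_S+\bar{n}$. For the second $g$, note that $\eta'_k=\eta_k-(1-\eta_k)N_{B,k}\to 1-\bar{n}$, so that
\begin{equation}
\left(\frac{1}{\eta'_k}-1\right)\bigl(\eta_k N_S+(1-\eta_k)N_{B,k}\bigr)\ \longrightarrow\ \frac{\bar{n}}{1-\bar{n}}\,(N_S+\bar{n}).
\end{equation}
The condition $\eta_k>(1-\eta_k)N_{B,k}$ required by Theorem~\ref{thm:qu4} is preserved in the limit precisely when $\bar{n}<1$, which matches the hypothesis of the present theorem and is the non-entanglement-breaking regime for $\N_{\bar n}$. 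Assembling these pieces, the right-hand side of \eqref{eq:qu4} converges to $g(N_S+\bar{n})-g[\bar{n}(N_S+\bar{n})/(1-\bar{n})]$, which is exactly $Q_{U_4}(\N_{\bar n},N_S)$ in \eqref{eq:qu4-add-noise}.

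The main step, and the one requiring care, is justifying that the capacity limit matches the bound limit: one must invoke continuity of the energy-constrained quantum and private capacities with respect to the channel, as established in \cite[Theorem~3]{S17} and \cite{Wetal17}, to conclude that $Q(\L_{\eta_k,N_{B,k}},N_S)\to Q(\N_{\bar n},N_S)$ and likewise for $P$. Combined with the inequality $Q(\L_{\eta_k,N_{B,k}},N_S),\,P(\L_{\eta_k,N_{B,k}},N_S)\le\max\{0,Q_{U_4}(\L_{\eta_k,N_{B,k}},N_S)\}$ from Theorem~\ref{thm:qu4}, taking $k\to\infty$ on both sides and using continuity of $g$ (plus the fact that $\max\{0,\cdot\}$ is continuous) yields the claimed bound. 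The private-capacity statement follows because Theorem~\ref{thm:qu4} covers both $Q$ and $P$ simultaneously, and the continuity result applies to both.

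The main obstacle is verifying that the hypotheses of the continuity results in \cite{S17,Wetal17} are satisfied uniformly along the approximating sequence $\L_{\eta_k,N_{B,k}}\to\N_{\bar n}$ under the input energy constraint $N_S$; this was already implicitly used in the proof of Theorem~\ref{thm:bound-additive-noise}, so the same argument transfers verbatim, and no further technical work is needed beyond the limiting computation outlined above.
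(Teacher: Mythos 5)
Your proposal is correct and follows essentially the same route as the paper: the paper's proof simply states that the result follows by the same limiting argument ($\eta\to1$, $N_B\to\infty$, $(1-\eta)N_B\to\bar{n}$, plus the continuity results of \cite{S17,Wetal17}) used for Theorem~\ref{thm:bound-additive-noise}, now applied to the bound of Theorem~\ref{thm:qu4}. Your explicit computation of the limits of $\eta'$ and of both arguments of $g$, and your observation that the condition $\eta>(1-\eta)N_B$ becomes $\bar{n}<1$, correctly fills in the details the paper leaves implicit.
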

\begin{proof}
A proof follows from arguments similar to those in the proof of Theorem \ref{thm:bound-additive-noise}. The final result is obtained using Theorem \ref{thm:qu4}. 	
\end{proof}	

\begin{remark}
	From a comparision of \eqref{eq:qu4-add-noise} and \eqref{eq:constrained-cap-additive-n}, we find that $Q_{U_1}(\N_{\bar{n}}, N_S)$ is tighter than $Q_{U_4}(\N_{\bar{n}}, N_S)$ only for low noise and low input mean photon number. The bound $Q_{U_4}(\N_{\bar{n}}, N_S)$ is tighter than  $Q_{U_1}(\N_{\bar{n}}, N_S)$ for all other parameter regimes. 
\end{remark}

 \begin{remark}
	Applying Remarks \ref{rem:unconstrained-cap} and \ref{rem:unconstrained-qcap-thermal}, and Theorem \ref{thm:qu4-add-noise}, we find the following data-processing bound $Q_{U_4}(\N_{\bar{n}})$ on the unconstrained quantum and private capacities of additive-noise channels:
	\begin{equation}
	Q_{U_4}(\N_{\bar{n}}) = \log_2[(1-\bar{n})/\bar{n}]~. \label{eq:unconstrained-qu4-add-noise}
	\end{equation}
\end{remark}

\begin{remark}
	From a comparison of \eqref{eq:unconstrained-qu4-add-noise} with the bound in \eqref{eq:add-noise-PLOTWTB-bnd}, we find that \eqref{eq:unconstrained-qu4-add-noise} is tighter than \eqref{eq:add-noise-PLOTWTB-bnd} for high noise.
\end{remark}

 \section{On the optimization of generalized channel divergences of quantum Gaussian channels}\label{sec:generalized-channel-divergence}
 	
	In this section, we address the question of computing the energy-constrained diamond norm of several channels of interest that have appeared in our paper. We provide a very general argument, based on some definitions and results in \cite{LKDW17} and phrased in terms of the ``generalized channel divergence'' as a measure of the distinguishability of quantum channels. We find that, among all Gaussian input states with a fixed energy constraint, the two-mode squeezed vacuum state saturating the energy constraint is the optimal state for the energy-constrained generalized channel divergence of two particular Gaussian channels. We describe these results in more detail in what follows.
	
 	We begin by recalling some developments from \cite{LKDW17}:
 	
 	\begin{definition}
 		[Generalized divergence \cite{SW12,WWY13}]A functional $\mathbf{D}:\mathcal{D}(\H)\times
 		\mathcal{D}(\H)\rightarrow\mathbb{R}$ is a \textit{generalized divergence} if it
 		satisfies the monotonicity (data processing) inequality%
 		\begin{equation}
 		\mathbf{D}(\rho\Vert\sigma)\geq\mathbf{D}(\mathcal{N}(\rho)\Vert
 		\mathcal{N}(\sigma)),
 		\end{equation}
 		where $\mathcal{N}$ is a quantum channel.
 	\end{definition}
 	
 	Particular examples of a generalized divergence are the trace distance,
 	quantum relative entropy, and the negative root fidelity.
 	
 	We say that a generalized channel divergence possesses the direct-sum property  on classical--quantum
 	states if the following equality holds:%
 	\begin{equation}
 	\mathbf{D}\!\left(  \sum_{x}p_{X}(x)|x\rangle\langle x|_{X}\otimes\rho
 	^{x}\middle\Vert\sum_{x}p_{X}(x)|x\rangle\langle x|_{X}\otimes\sigma
 	^{x}\right)  =\sum_{x}p_{X}(x)\mathbf{D}(\rho^{x}\Vert\sigma^{x}),
 	\end{equation}
 	where $p_{X}$ is a probability distribution, $\{|x\rangle\}_{x}$ is an
 	orthonormal basis, and $\{\rho^{x}\}_{x}$ and $\{\sigma^{x}\}_{x}$ are sets of
 	states. We note that this property holds for trace distance, quantum relative
 	entropy, and the negative root fidelity.
 	
 	\begin{definition}
 		[Generalized channel divergence \cite{LKDW17}]Given quantum channels $\mathcal{N}%
 		_{A\rightarrow B}$ and $\mathcal{M}_{A\rightarrow B}$, we define the
 		generalized channel divergence as%
 		\begin{equation}
 		\mathbf{D}(\mathcal{N}\Vert\mathcal{M})\equiv\sup_{\rho_{RA}}\mathbf{D}%
 		((\operatorname{id}_{R}\otimes\mathcal{N}_{A\rightarrow B})(\rho_{RA}%
 		)\Vert(\operatorname{id}_{R}\otimes\mathcal{M}_{A\rightarrow B})(\rho_{RA})).
 		\end{equation}
 		In the above definition, the supremum is with respect to all mixed states and
 		the reference system $R$ is allowed to be arbitrarily large. However, as a
 		consequence of purification, data processing, and the Schmidt decomposition,
 		it follows that%
 		\begin{equation}
 		\mathbf{D}(\mathcal{N}\Vert\mathcal{M})=\sup_{\psi_{RA}}\mathbf{D}%
 		((\operatorname{id}_{R}\otimes\mathcal{N}_{A\rightarrow B})(\psi_{RA}%
 		)\Vert(\operatorname{id}_{R}\otimes\mathcal{M}_{A\rightarrow B})(\psi_{RA})),
 		\end{equation}
 		such that the supremum can be restricted to be with respect to pure states and
 		the reference system $R$ isomorphic to the channel input system $A$.
 	\end{definition}
 	
 	Particular cases of the generalized channel divergence are the diamond norm of
 	the difference of $\mathcal{N}_{A\rightarrow B}$ and $\mathcal{M}%
 	_{A\rightarrow B}$ as well as the R\'enyi channel divergence from \cite{CMW14}.
 	
 	Covariant quantum channels have symmetries that allow us to simplify the set
 	of states over which we need to optimize their generalized channel divergence \cite{Hol02}.
 	Let $G$ be a finite group, and for every $g\in G$, let $g\rightarrow U_{A}(g)$
 	and $g\rightarrow V_{B}(g)$ be unitary representations acting on the input and
 	output spaces of the channel, respectively. Then a\ quantum channel
 	$\mathcal{N}_{A\rightarrow B}$\ is covariant with respect to $\left\{  \left(
 	U_{A}(g),V_{B}(g)\right)  \right\}  _{g}$\ if the following relation holds for
 	all input density operators $\rho_{A}$ and group elements $g\in G$:%
 	\begin{equation}
 	\left(  \mathcal{N}_{A\rightarrow B}\circ\mathcal{U}_{A}^{g}\right)  \left(
 	\rho_{A}\right)  =\left(  \mathcal{V}_{B}^{g}\circ\mathcal{N}_{A\rightarrow
 		B}\right)  (\rho_{A}),
 	\end{equation}
 	where%
 	\begin{align}
 	\mathcal{U}_{A}^{g}(\rho_{A})  &  =U_{A}(g)\rho_{A}U_{A}^{\dag}(g),\\
 	\mathcal{V}_{B}^{g}(\sigma_{B})  &  =V_{B}(g)\sigma_{B}V_{B}^{\dag}(g).
 	\end{align}
 	We say that channels $\mathcal{N}_{A\rightarrow B}$ and $\mathcal{M}%
 	_{A\rightarrow B}$ are \textit{jointly covariant} with respect to $\left\{  \left(
 	U_{A}(g),V_{B}(g)\right)  \right\}  _{g\in G}$ if each of them is covariant
 	with respect to $\left\{  \left(  U_{A}(g),V_{B}(g)\right)  \right\}  _{g}$ \cite{2016channel-discrimination,DW17}.
 	
 	The following lemma was established in \cite{LKDW17}:
 	
 	\begin{lemma}
 		[\cite{LKDW17}]\label{lemma:cov-critical-step}Let $\mathcal{N}_{A\rightarrow
 			B}$ and $\mathcal{M}_{A\rightarrow B}$ be quantum channels, and let $\left\{
 		\left(  U_{A}(g),V_{B}(g)\right)  \right\}  _{g\in G}$ denote unitary
 		representations of a group $G$. Let $\rho_{A}$ be a density operator, and let
 		$\phi_{RA}^{\rho}$ be a purification of $\rho_{A}$. Let $\bar{\rho}_{A}$
 		denote the group average of $\rho_{A}$ according to a distribution $p_{G}$,
 		i.e.,%
 		\begin{equation}
 		\bar{\rho}_{A}=\sum_{g}p_{G}(g)\ \mathcal{U}_{A}^{g}(\rho_{A}),
 		\end{equation}
 		and let $\phi_{RA}^{\bar{\rho}}$ be a purification of $\bar{\rho}_{A}$. If the
 		generalized divergence possesses the direct-sum property on classical--quantum states, then the
 		following inequality holds%
 		\begin{multline}
 		\mathbf{D}(\mathcal{N}_{A\rightarrow B}(\phi_{RA}^{\bar{\rho}})\Vert
 		\mathcal{M}_{A\rightarrow B}(\phi_{RA}^{\bar{\rho}}))\\
 		\geq\sum_{g}p_{G}(g)\mathbf{D}\!\left(  \left(  \mathcal{V}_{B}^{g\dag}%
 		\circ\mathcal{N}_{A\rightarrow B}\circ\mathcal{U}_{A}^{g}\right)  (\phi
 		_{RA}^{\rho})\middle\Vert\left(  \mathcal{V}_{B}^{g\dag}\circ\mathcal{M}%
 		_{A\rightarrow B}\circ\mathcal{U}_{A}^{g}\right)  (\phi_{RA}^{\rho})\right)  .
 		\end{multline}
 		
 	\end{lemma}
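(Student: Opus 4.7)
The plan is to use isometric invariance of the generalized divergence (which follows from applying monotonicity in both directions through an isometric embedding and a reversing channel) to pick a specific purification of $\bar\rho_A$ that coherently records the group index $g$ in an auxiliary classical register, then dephase that register so that the direct-sum property converts the resulting divergence into the desired sum over $g$.

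Concretely, I would first observe that any two purifications of $\bar\rho_A$ are related by an isometry on the reference system, so $\mathbf{D}(\N(\phi^{\bar\rho}_{RA})\Vert\M(\phi^{\bar\rho}_{RA}))$ can be evaluated on any convenient purification. A natural choice is
\[
|\Psi\rangle_{GRA} := \sum_g \sqrt{p_G(g)}\,|g\rangle_G \otimes (I_R\otimes U_A(g))|\phi^\rho_{RA}\rangle,
\]
with $\{|g\rangle_G\}_g$ orthonormal on an auxiliary register $G$; tracing out $GR$ returns $\bar\rho_A$. Applying the completely dephasing channel on $G$ to both $(\operatorname{id}_{GR}\otimes\N)(\Psi_{GRA})$ and $(\operatorname{id}_{GR}\otimes\M)(\Psi_{GRA})$, monotonicity lower-bounds the divergence by the divergence between the resulting classical--quantum states, whose $g$-indexed blocks are $(\operatorname{id}_R\otimes\N)(\U^g_A(\phi^\rho_{RA}))$ and $(\operatorname{id}_R\otimes\M)(\U^g_A(\phi^\rho_{RA}))$ with common classical distribution $p_G$. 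The direct-sum property then splits this into $\sum_g p_G(g)\,\mathbf{D}((\operatorname{id}_R\otimes\N\circ\U^g_A)(\phi^\rho_{RA})\Vert(\operatorname{id}_R\otimes\M\circ\U^g_A)(\phi^\rho_{RA}))$, and finally inserting the unitary channel $\V^{g\dagger}_B$ into each summand (unitary invariance again follows from monotonicity in both directions) produces the claimed inequality.

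The main obstacle is foundational: all the invariance properties invoked above must be derived from the sole defining property of a generalized divergence, namely monotonicity under quantum channels. This amounts to recognizing that an isometric embedding $V$, together with a ``reversing'' CPTP map built from $V^\dagger(\cdot)V$ plus a fixed auxiliary state absorbing the orthogonal complement, are both legitimate channels, so that two applications of monotonicity force equality; the same observation handles unitary invariance on the output system. Once those invariances are in hand, the argument reduces to a clean three-step chain: (i) choose the $|\Psi\rangle_{GRA}$ purification, (ii) dephase the register $G$, and (iii) apply the direct-sum property.
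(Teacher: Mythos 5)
Your argument is correct and is essentially the standard proof of this lemma: the paper itself states the result without proof, deferring to \cite{LKDW17}, and the chain you give (isometric freedom in the choice of purification, the coherent-register purification $\ket{\Psi}_{GRA}$, dephasing of $G$ plus monotonicity, the direct-sum property, and unitary invariance to insert $\mathcal{V}_B^{g\dag}$) is exactly the argument used there. No gaps; the foundational invariances you flag do indeed follow from monotonicity applied in both directions, as you describe.
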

 	
 	By approximation, the above lemma can be extended to continuous groups for
 	several generalized channel divergences of interest:
 	
 	\begin{lemma}
 		\label{lemma:cov-critical-step-cont}Let $\mathcal{N}_{A\rightarrow B}$ and
 		$\mathcal{M}_{A\rightarrow B}$ be quantum channels, and let $\left\{  \left(
 		U_{A}(g),V_{B}(g)\right)  \right\}  _{g\in G}$ denote unitary representations
 		of a continuous group $G$. Let $\rho_{A}$ be a density operator, and let
 		$\phi_{RA}^{\rho}$ be a purification of $\rho_{A}$. Let $\bar{\rho}_{A}$
 		denote the group average of $\rho_{A}$ according to a measure $\mu(g)$, i.e.,%
 		\begin{equation}
 		\bar{\rho}_{A}=\int d\mu(g)\ \mathcal{U}_{A}^{g}(\rho_{A}),
 		\end{equation}
 		and let $\phi_{RA}^{\bar{\rho}}$ be a purification of $\bar{\rho}_{A}$. If the
 		generalized divergence possesses the direct-sum property on classical--quantum states and is a Borel function, then the
 		following inequality holds%
 		\begin{multline}
 		\mathbf{D}(\mathcal{N}_{A\rightarrow B}(\phi_{RA}^{\bar{\rho}})\Vert
 		\mathcal{M}_{A\rightarrow B}(\phi_{RA}^{\bar{\rho}}))\\
 		\geq\int d\mu(g)\ \mathbf{D}\!\left(  \left(  \mathcal{V}_{B}^{g\dag}%
 		\circ\mathcal{N}_{A\rightarrow B}\circ\mathcal{U}_{A}^{g}\right)  (\phi
 		_{RA}^{\rho})\middle\Vert\left(  \mathcal{V}_{B}^{g\dag}\circ\mathcal{M}%
 		_{A\rightarrow B}\circ\mathcal{U}_{A}^{g}\right)  (\phi_{RA}^{\rho})\right)  .
 		\end{multline}
 		
 	\end{lemma}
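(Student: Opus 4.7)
The plan is to obtain the continuous-group statement as a limit of the already-established discrete version, Lemma~\ref{lemma:cov-critical-step}. First, I would approximate the Borel measure $\mu$ on the continuous group $G$ by a sequence of finitely supported probability measures $\mu_{n} = \sum_{i=1}^{k_{n}} p_{i}^{(n)} \delta_{g_{i}^{(n)}}$ that converge to $\mu$ in the weak sense (for instance, via a sequence of finer and finer Borel partitions of $G$ with representative points, which works for all locally compact groups of interest here). Define the corresponding discrete averages
\begin{equation}
\bar{\rho}_{A}^{(n)} = \sum_{i=1}^{k_{n}} p_{i}^{(n)} \, \mathcal{U}_{A}^{g_{i}^{(n)}}(\rho_{A}).
\end{equation}
Because the representation $g \mapsto U_{A}(g)$ is strongly continuous and $\rho_{A}$ is trace-class, the map $g \mapsto \mathcal{U}_{A}^{g}(\rho_{A})$ is continuous in trace norm, and therefore $\bar{\rho}_{A}^{(n)} \to \bar{\rho}_{A}$ in trace norm by construction of $\mu_{n}$.

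Next, I would invoke a standard consequence of Uhlmann's theorem to choose purifications $\phi_{RA}^{\bar{\rho}^{(n)}}$ of $\bar{\rho}_{A}^{(n)}$ in such a way that $\phi_{RA}^{\bar{\rho}^{(n)}} \to \phi_{RA}^{\bar{\rho}}$ in trace norm (the fidelity between $\bar{\rho}_{A}^{(n)}$ and $\bar{\rho}_{A}$ tends to one, and an appropriate unitary on the reference aligns the purifications). Applying Lemma~\ref{lemma:cov-critical-step} to each $\mu_{n}$ yields
\begin{multline}
\mathbf{D}\bigl(\mathcal{N}_{A\to B}(\phi_{RA}^{\bar{\rho}^{(n)}}) \, \Vert \, \mathcal{M}_{A\to B}(\phi_{RA}^{\bar{\rho}^{(n)}})\bigr)
\\
\geq \sum_{i=1}^{k_{n}} p_{i}^{(n)} \, \mathbf{D}\!\left( (\mathcal{V}_{B}^{g_{i}^{(n)}\dag} \circ \mathcal{N}_{A\to B} \circ \mathcal{U}_{A}^{g_{i}^{(n)}})(\phi_{RA}^{\rho}) \, \middle\Vert \, (\mathcal{V}_{B}^{g_{i}^{(n)}\dag} \circ \mathcal{M}_{A\to B} \circ \mathcal{U}_{A}^{g_{i}^{(n)}})(\phi_{RA}^{\rho}) \right).
\end{multline}

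The final step is to pass to the limit $n \to \infty$ on both sides. On the left-hand side, I would use the lower semicontinuity of the generalized divergence (which holds for all of the divergences of interest, including trace distance, relative entropy, and the negative root fidelity) applied to the trace-norm convergence of the states $\mathcal{N}_{A\to B}(\phi_{RA}^{\bar{\rho}^{(n)}})$ and $\mathcal{M}_{A\to B}(\phi_{RA}^{\bar{\rho}^{(n)}})$, together with the data-processing inequality to upgrade lower semicontinuity into the exact continuity that is needed in this particular direction. On the right-hand side, the integrand
\begin{equation}
g \mapsto \mathbf{D}\bigl( (\mathcal{V}_{B}^{g\dag} \circ \mathcal{N}_{A\to B} \circ \mathcal{U}_{A}^{g})(\phi_{RA}^{\rho}) \, \Vert \, (\mathcal{V}_{B}^{g\dag} \circ \mathcal{M}_{A\to B} \circ \mathcal{U}_{A}^{g})(\phi_{RA}^{\rho}) \bigr)
\end{equation}
is a Borel function on $G$ (by the stated hypothesis), so the finite sums above are Riemann-type approximations to the integral $\int d\mu(g)\, \mathbf{D}(\cdots)$, and convergence follows once the approximating measures $\mu_{n}$ are chosen carefully (e.g.\ using Lusin's theorem to replace the Borel integrand by a continuous one up to arbitrarily small measure).

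The main obstacle I anticipate is the limit step: the generalized divergence is only assumed to satisfy data processing and the direct-sum property, and in infinite dimensions it can fail to be jointly continuous, so one must rely on lower semicontinuity and a one-sided argument. Handling the right-hand-side limit also requires care because the integrand need not be bounded, so a truncation or monotone-convergence argument may be needed for divergences such as the relative entropy. These technicalities are precisely why the paper remarks that the extension is valid for the specific divergences of interest rather than in full generality.
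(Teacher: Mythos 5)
Your overall strategy -- discretize the measure, invoke the finite-group Lemma~\ref{lemma:cov-critical-step}, and pass to the limit -- is exactly what the paper intends: its entire ``proof'' of Lemma~\ref{lemma:cov-critical-step-cont} is the single sentence ``By approximation, the above lemma can be extended to continuous groups for several generalized channel divergences of interest.'' So you are on the paper's route, and you are right to flag the limit passage as the delicate point. However, your proposed repair of that step does not work as written. After applying the discrete lemma you have $A_n \geq B_n$ with $A_n = \mathbf{D}(\mathcal{N}(\phi_{RA}^{\bar{\rho}^{(n)}})\Vert\mathcal{M}(\phi_{RA}^{\bar{\rho}^{(n)}}))$, and to conclude $A \geq B$ you need $\limsup_n A_n \leq A$, i.e.\ \emph{upper} semicontinuity of the divergence along the approximating sequence. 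Lower semicontinuity gives only $A \leq \liminf_n A_n$, which is useless in this direction, and the data-processing inequality cannot convert one into the other: for the quantum relative entropy in infinite dimensions one can have $A_n = +\infty$ for all $n$ while $A < \infty$, so the chain $A \geq \limsup_n A_n \geq \limsup_n B_n$ simply breaks. Your argument is fine for the trace distance and the negative root fidelity (which are jointly continuous in trace norm), but the stated hypotheses of the lemma do not include any continuity, and your claimed upgrade is the one step that is actually incorrect rather than merely incomplete.

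There is a reordering of the steps that avoids this problem and needs only lower semicontinuity, in the correct direction. Keep the purification $\phi_{RA}^{\bar{\rho}}$ of the \emph{true} average fixed. The continuous classical--quantum extension $\omega_{XRA}=\int d\mu(g)\,|g\rangle\langle g|_{X}\otimes\mathcal{U}_{A}^{g}(\phi_{RA}^{\rho})$ is an extension of $\bar{\rho}_{A}$, hence is obtained from $\phi_{RA}^{\bar{\rho}}$ by a channel acting only on the purifying system; data processing then gives $\mathbf{D}(\mathcal{N}(\phi_{RA}^{\bar{\rho}})\Vert\mathcal{M}(\phi_{RA}^{\bar{\rho}}))\geq\mathbf{D}(\mathcal{N}(\omega_{XRA})\Vert\mathcal{M}(\omega_{XRA}))$ \emph{exactly}, with no approximation of purifications. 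Now coarse-grain the classical register $X$ according to a finite Borel partition $\{S_i\}_i$ of $G$ (again data processing), apply the discrete direct-sum property to the resulting finite cq states, and obtain the lower bound $\sum_i \mu(S_i)\,\mathbf{D}(\mathcal{N}(\sigma^{i}_{RA})\Vert\mathcal{M}(\sigma^{i}_{RA}))$ with $\sigma^{i}_{RA}=\mu(S_i)^{-1}\int_{S_i}d\mu(g)\,\mathcal{U}_{A}^{g}(\phi_{RA}^{\rho})$. As the partition refines, $\sigma^{i}_{RA}$ converges in trace norm to $\mathcal{U}_{A}^{g}(\phi_{RA}^{\rho})$ at $\mu$-almost every $g$, and here lower semicontinuity of $\mathbf{D}$ is exactly what is needed, combined with Fatou's lemma, to recover $\int d\mu(g)\,\mathbf{D}(\cdots)$ in the limit (the unitary invariance under $\mathcal{V}_{B}^{g\dagger}$ follows from data processing in both directions, as in the discrete case). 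I recommend restructuring your proof along these lines; your concluding remarks about Borel measurability and possible unboundedness of the integrand remain pertinent and should be kept.
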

 	
 	We can apply this lemma effectively in the context of quantum Gaussian
 	channels. To this end, we consider an
	\textbf{energy-constrained generalized channel
 	divergence} for $W\in\lbrack0,\infty)$ and an energy observable $G$ as follows:%
 	\begin{equation}
 	\mathbf{D}_{G,W}(\mathcal{N}\Vert\mathcal{M})=\sup_{\psi_{RA}%
 		\ :\ \operatorname{Tr}\{G\psi_{A}\}\leq W}\mathbf{D}((\operatorname{id}%
 	_{R}\otimes\mathcal{N}_{A\rightarrow B})(\psi_{RA})\Vert(\operatorname{id}%
 	_{R}\otimes\mathcal{M}_{A\rightarrow B})(\psi_{RA})).
 	\end{equation}
 	In what follows, we specialize this measure even further to the \textbf{Gaussian
 	energy-constrained generalized channel divergence}, meaning that the
 	optimization is constrained to be with respect to Gaussian input states:%
 	\begin{equation}
 	\mathbf{D}_{G,W}^{\mathcal{G}}(\mathcal{N}\Vert\mathcal{M})=\sup_{\psi
 		_{RA}\ :\ \operatorname{Tr}\{G\psi_{A}\}\leq W,\ \psi_{RA}\in\mathcal{G}%
 	}\mathbf{D}((\operatorname{id}_{R}\otimes\mathcal{N}_{A\rightarrow B}%
 	)(\psi_{RA})\Vert(\operatorname{id}_{R}\otimes\mathcal{M}_{A\rightarrow
 		B})(\psi_{RA})),
 	\end{equation}
 	where $\mathcal{G}$ denotes the set of Gaussian states. We then establish the
 	following proposition:
 	
 	\begin{proposition}
\label{prop:gaussian-gen-div}Suppose that channels $\mathcal{N}_{A\rightarrow
B}$ and $\mathcal{M}_{A\rightarrow B}$ are Gaussian, they each take one input
mode to $m$ output modes, and they have the following action on a single-mode,
input covariance matrix $V$:
\begin{align}
V &  \rightarrow XVX^{T}+Y_{\mathcal{N}},\label{eq:Gaussian-chan-N}\\
V &  \rightarrow XVX^{T}+Y_{\mathcal{M}},\label{eq:Gaussian-chan-M}%
\end{align}
where $X$ is an $m\times1$ matrix, $Y_{\mathcal{N}}$ and $Y_{\mathcal{M}}$ are
$m\times m$ matrices such that $\mathcal{N}_{A\rightarrow B}$ and
$\mathcal{M}_{A\rightarrow B}$ are legitimate Gaussian channels. Suppose
furthermore they these channels are jointly phase covariant
(phase-insensitive), in the sense that for all $\phi\in\lbrack0,2\pi)$ and
input density operators $\rho$, the following equality holds%
\begin{align}
\mathcal{N}_{A\rightarrow B}(e^{i\hat{n}\phi}\rho e^{-i\hat{n}\phi}) &
=\left(  \bigotimes\limits_{i=1}^{m}e^{i\hat{n}_{i}\left(  -1\right)  ^{a_{i}%
}\phi}\right)  \mathcal{N}_{A\rightarrow B}(\rho)\left(  \bigotimes
\limits_{i=1}^{m}e^{-i\hat{n}_{i}\left(  -1\right)  ^{a_{i}}\phi}\right)
,\label{eq:phase-covariance-1-chan-N}\\
\mathcal{M}_{A\rightarrow B}(e^{i\hat{n}\phi}\rho e^{-i\hat{n}\phi}) &
=\left(  \bigotimes\limits_{i=1}^{m}e^{i\hat{n}_{i}\left(  -1\right)  ^{a_{i}%
}\phi}\right)  \mathcal{M}_{A\rightarrow B}(\rho)\left(  \bigotimes
\limits_{i=1}^{m}e^{-i\hat{n}_{i}\left(  -1\right)  ^{a_{i}}\phi}\right)
,\label{eq:phase-covariance-1-chan-M}%
\end{align}
where $a_{i}\in\{0,1\}$ for $i\in\{1,\ldots,m\}$ and $\hat{n}_{i}$ is the
photon number operator for the $i$th mode. Then it suffices to restrict the
optimization in the energy-constrained generalized channel divergence as
follows:%
\begin{equation}
\mathbf{D}_{\hat{n},N_{S}}(\mathcal{N}\Vert\mathcal{M})=\sup_{\psi
_{RA}:\operatorname{Tr}\{\hat{n}\psi_{A}\}=N_{S}}\mathbf{D}(\left(
\operatorname{id}_{R}\otimes\mathcal{N}_{A\rightarrow B}\right)  \left(
\psi_{RA}\right)  \Vert\left(  \operatorname{id}_{R}\otimes\mathcal{M}%
_{A\rightarrow B}\right)  \left(  \psi_{RA}\right)
),\label{eq:special-states-opt-gen-ch-div}%
\end{equation}
where $\psi_{RA}=|\psi\rangle\langle\psi|_{RA}$ and%
\begin{equation}
|\psi\rangle_{RA}=\sum_{n=0}^{\infty}\lambda_{n}|n\rangle_{R}|n\rangle
_{A},\label{eq:psi-state-conj-gen-div}%
\end{equation}
for some $\lambda_{n}\in\mathbb{R}^+$ such that $\sum_{n=0}^{\infty}
\lambda_{n} ^{2}=1$ and $\sum_{n=0}^{\infty} n \lambda
_{n} ^{2}=N_{S}$. Furthermore, the Gaussian energy-constrained
generalized channel divergence is achieved by the two-mode squeezed vacuum
state with parameter $N_{S}$, i.e.,
\begin{equation}
\mathbf{D}_{\hat{n},N_{S}}^{\mathcal{G}}(\mathcal{N}\Vert\mathcal{M}%
)=\mathbf{D}(\left(  \operatorname{id}_{R}\otimes\mathcal{N}_{A\rightarrow
B}\right)  \left(  \psi_{\operatorname{TMS}}(N_{S})\right)  \Vert\left(
\operatorname{id}_{R}\otimes\mathcal{M}_{A\rightarrow B}\right)  \left(
\psi_{\operatorname{TMS}}(N_{S})\right)
).\label{eq:TMSV-opt-gauss-gen-ch-div}%
\end{equation}

\end{proposition}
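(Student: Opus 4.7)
The plan is to prove the proposition in two stages, first establishing the reduction in \eqref{eq:special-states-opt-gen-ch-div} for the general energy-constrained generalized channel divergence by applying Lemma~\ref{lemma:cov-critical-step-cont} with the phase-rotation group, and then treating the Gaussian case \eqref{eq:TMSV-opt-gauss-gen-ch-div} via the Bloch--Messiah decomposition together with the covariance properties of $\mathcal{N}$ and $\mathcal{M}$. For the first stage, I would take $G = U(1)$ with input representation $U_A(\phi) = e^{i\phi\hat{n}}$, output representation $V_B(\phi) = \bigotimes_{i=1}^{m} e^{i(-1)^{a_i}\hat{n}_i \phi}$, and Haar measure $d\phi/(2\pi)$. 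Given any pure input $\phi_{RA}^\rho$ with $\operatorname{Tr}\{\hat{n}\rho\}\leq N_S$, the group average $\bar\rho_A = (2\pi)^{-1}\int_0^{2\pi} d\phi\, e^{i\phi\hat{n}}\rho e^{-i\phi\hat{n}} = \sum_n p_n|n\rangle\langle n|$ with $p_n = \langle n|\rho|n\rangle$ is diagonal in the number basis and has the same mean photon number as $\rho$. Its canonical purification $|\psi\rangle_{RA} = \sum_n \sqrt{p_n}\,|n\rangle_R|n\rangle_A$ has non-negative real Schmidt coefficients, matching the form in \eqref{eq:psi-state-conj-gen-div}. The joint phase covariance \eqref{eq:phase-covariance-1-chan-N}--\eqref{eq:phase-covariance-1-chan-M} combined with the unitary invariance of $\mathbf{D}$ on the output implies that each term on the right-hand side of the conclusion of Lemma~\ref{lemma:cov-critical-step-cont} equals $\mathbf{D}((\operatorname{id}_R\otimes\mathcal{N})(\phi^\rho)\|(\operatorname{id}_R\otimes\mathcal{M})(\phi^\rho))$, so the canonical purification achieves at least the original divergence; supremizing over $\rho$ yields \eqref{eq:special-states-opt-gen-ch-div}, and saturation of the energy constraint follows from a standard continuity argument.

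For the Gaussian case, I would use the Bloch--Messiah decomposition: every pure two-mode Gaussian state has the form $(U_R\otimes V_A)|\psi_{\operatorname{TMS}}(\bar{n})\rangle_{RA}$ for some local Gaussian unitaries $U_R$, $V_A$ and some $\bar{n}\geq 0$. The local unitary $U_R$ on the reference is removed by unitary invariance of the generalized divergence. Writing $V_A = D(\alpha)R(\theta)S(\zeta)$ as a product of displacement, phase rotation, and squeezing, the rotation $R(\theta)$ is absorbed using the joint phase covariance \eqref{eq:phase-covariance-1-chan-N}--\eqref{eq:phase-covariance-1-chan-M}, and the displacement $D(\alpha)$ is absorbed using the channels' displacement covariance, which follows from the Gaussian action \eqref{eq:Gaussian-chan-N}--\eqref{eq:Gaussian-chan-M} having vanishing additive shift: \eqref{eq:covariance-gaussian} then gives $\mathcal{N}(D(\alpha)\rho D^\dagger(\alpha)) = D(X\alpha)\mathcal{N}(\rho)D^\dagger(X\alpha)$, and similarly for $\mathcal{M}$. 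Thus the optimization over Gaussian inputs is reduced to those of the form $(I_R\otimes S(\zeta))|\psi_{\operatorname{TMS}}(\bar{n})\rangle$ subject to the induced energy constraint $\bar{n} + (2\bar{n}+1)\sinh^2|\zeta|\leq N_S$, with the TMSV family arising at $\zeta=0$.

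The hard part will be ruling out that nontrivial squeezing $S(\zeta)\neq I$ on the $A$-mode can strictly improve the divergence, since single-mode squeezing is not absorbed by phase-insensitive channel covariances. My approach is to combine the stage-one reduction --- which shows that any advantage from a generic Schmidt-symmetric input is already captured by states of the form \eqref{eq:psi-state-conj-gen-div} --- with the observation that, among pure two-mode Gaussian states, those whose Schmidt decomposition coincides with the number basis are exactly the TMSV states (the Bloch--Messiah normal form forces the Schmidt coefficients of a Gaussian pure state to be exponential in $n$ whenever the Schmidt basis is the number basis, which is the defining feature of a thermal-state marginal). Once this intersection is identified, the remaining comparison among TMSV inputs at different $\bar{n}\leq N_S$ reduces to a one-parameter problem in $\bar{n}$, and the energy-saturating choice $\bar{n}=N_S$ would be selected by a monotonicity argument tailored to the specific channels $\mathcal{N}$, $\mathcal{M}$.
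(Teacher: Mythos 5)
Your phase-averaging step is essentially the paper's: you invoke Lemma~\ref{lemma:cov-critical-step-cont} with the $U(1)$ representations $e^{i\phi\hat{n}}$ and $\bigotimes_i e^{i(-1)^{a_i}\hat{n}_i\phi}$, note that joint covariance collapses each term in the averaged lower bound to the original divergence, and pass to the canonical purification of the number-diagonal averaged state, which has the form \eqref{eq:psi-state-conj-gen-div}. That part is fine. The genuine gap is how you get the \emph{equality} constraint $\operatorname{Tr}\{\hat{n}\psi_A\}=N_S$ in \eqref{eq:special-states-opt-gen-ch-div} and, downstream, the selection of $\bar{n}=N_S$ in \eqref{eq:TMSV-opt-gauss-gen-ch-div}. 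You appeal to ``a standard continuity argument'' in stage one and to ``a monotonicity argument tailored to the specific channels'' in stage two. Neither is available: a generalized divergence is an arbitrary data-processed functional (quantum relative entropy, for instance, is only lower semicontinuous, and the statement must hold uniformly over all channels satisfying the hypotheses, so no channel-specific monotonicity can be invoked). The paper closes exactly this hole with a second application of Lemma~\ref{lemma:cov-critical-step-cont}, this time to the displacement group: since $\mathcal{N}$ and $\mathcal{M}$ share the same matrix $X$ in \eqref{eq:Gaussian-chan-N}--\eqref{eq:Gaussian-chan-M}, they are \emph{jointly displacement covariant}, so one may act on any input of mean photon number $N_1\le N_S$ with the additive-noise channel $\rho\mapsto\int d^2\alpha\,p_{N_S-N_1}(\alpha)D(\alpha)\rho D(-\alpha)$; this raises the marginal energy to exactly $N_S$, and the group-averaging lemma guarantees the divergence evaluated on a purification of the boosted state is no smaller. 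This is the missing idea in your proposal, and it is needed in both stages.

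Your stage two also takes a more roundabout route than necessary, and as written it does not close. The Bloch--Messiah reduction to $(I_R\otimes S(\zeta))|\psi_{\operatorname{TMS}}(\bar{n})\rangle$ is correct, and you rightly flag that single-mode squeezing cannot be absorbed by a phase-insensitive channel's covariances. But your proposed fix---intersecting the stage-one candidate class with the Gaussian family---only identifies the TMSV as the unique Gaussian member of that class; it does not by itself bound the value at a squeezed Gaussian input by the value at the TMSV, because the phase average of a squeezed state is non-Gaussian, so the stage-one comparison exits the Gaussian family. The paper's route is shorter: apply the displacement-boosting step (which preserves Gaussianity and saturates the energy) and then observe that the thermal state is the only single-mode Gaussian state diagonal in the number basis with photon number $N_S$, whose purification is $\psi_{\operatorname{TMS}}(N_S)$. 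If you add the joint-displacement-covariance argument, your stage one becomes complete and your stage two can be discarded in favor of this two-line observation.
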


\begin{proof}
This result is an application of Lemma~\ref{lemma:cov-critical-step-cont} and
previous developments in our paper. We first exploit the joint displacement
covariance of the channels $\mathcal{N}_{A\rightarrow B}$ and $\mathcal{M}%
_{A\rightarrow B}$. That is, the fact that channels $\mathcal{N}_{A\rightarrow
B}$ and $\mathcal{M}_{A\rightarrow B}$ have the same $X$ matrix as given in
\eqref{eq:Gaussian-chan-N}--\eqref{eq:Gaussian-chan-M} implies that they are
jointly covariant with respect to displacements; i.e., for all input density
operators $\rho$ and unitary displacement operators $D(\alpha)\equiv \exp(\alpha \hat{a}^\dag - \alpha^* \hat{a})$, the following equalities hold
\begin{align}
\mathcal{N}_{A\rightarrow B}(D(\alpha)\rho D(-\alpha)) &  =\left(
\bigotimes\limits_{i=1}^{m}D(f_{i}(X,\alpha))\right)  \mathcal{N}_{A\rightarrow
B}(\rho)\left(  \bigotimes\limits_{i=1}^{m}D(-f_{i}(X,\alpha))\right)  ,\\
\mathcal{M}_{A\rightarrow B}(D(\alpha)\rho D(-\alpha)) &  =\left(
\bigotimes\limits_{i=1}^{m}D(f_{i}(X,\alpha))\right)  \mathcal{M}_{A\rightarrow
B}(\rho)\left(  \bigotimes\limits_{i=1}^{m}D(-f_{i}(X,\alpha))\right)  ,
\end{align}
where $f_{i}$ for $i\in\{1,\ldots,m\}$ are functions depending on the entries
of the matrix $X$ and $\alpha$. Let $\phi_{RA}$ be an arbitrary pure state such that
$\operatorname{Tr}\{\hat{n}\phi_{A}\}=N_{1}\leq N_{S}$. Consider the following
additive-noise Gaussian channel acting on an input state $\rho_{A}$:%
\begin{equation}
\mathcal{A}(\rho_{A})=\int d^{2}\alpha\ p_{  N_2
}(\alpha)\ D(\alpha)\rho_{A}D(-\alpha),\label{eq:thermal-bigger-thermal}%
\end{equation}
where  $p_{N_2}(\alpha)=\exp
\{-\left\vert \alpha\right\vert ^{2}/N_2\}/\pi N_2$ is a complex, centered Gaussian
probability density function with variance $N_2 \equiv N_S - N_1 \geq0$. Applying this channel to $\phi_{A}$
increases its photon number from $N_{1}$ to $N_{S}$:%
\begin{equation}
\operatorname{Tr}\{\hat{n}\mathcal{A}(\phi_{A})\}=N_{S},
\end{equation}
which follows because%
\begin{align}
\operatorname{Tr}\{\hat{n}\mathcal{A}(\phi_{A})\}  & =\operatorname{Tr}%
\{\hat{a}^{\dag}\hat{a}\int d^{2}\alpha\ p_{N_2
}(\alpha)\ D(\alpha)\phi_{A}D(-\alpha)\}\\
& =\int d^{2}\alpha\ p_{N_2  }(\alpha)\operatorname{Tr}%
\{D(-\alpha)\hat{a}^{\dag}\hat{a}D(\alpha)\phi_{A}\}\\
& =\int d^{2}\alpha\ p_{N_2  }(\alpha)\operatorname{Tr}%
\{D(-\alpha)\hat{a}^{\dag}D(\alpha)D(-\alpha)\hat{a}D(\alpha)\phi_{A}\}\\
& =\int d^{2}\alpha\ p_{N_2  }(\alpha)\operatorname{Tr}%
\{[  \hat{a}^{\dag}+\alpha^{\ast}]  [  \hat{a}+\alpha]
\phi_{A}\}\\
& =\int d^{2}\alpha\ p_{N_2  }(\alpha)\left[
\operatorname{Tr}\{\hat{a}^{\dag}\hat{a}\phi_{A}\}+\alpha\operatorname{Tr}%
\{\hat{a}^{\dag}\phi_{A}\}+\alpha^{\ast}\operatorname{Tr}\{\hat{a}\phi
_{A}\}+\left\vert \alpha\right\vert ^{2}\operatorname{Tr}\{\phi_{A}\}\right]
\\
& =N_{1}+0+0+N_2=N_{S}.
\end{align}
The first three equalities use definitions, cyclicity of trace, and the fact
that $D(\alpha)D(-\alpha)=I$. The fourth equality uses the well known
identities (see, e.g., \cite{AS17})%
\begin{equation}
D(-\alpha)\hat{a}D(\alpha)=\hat{a}+\alpha,\qquad D(-\alpha)\hat{a}^{\dag
}D(\alpha)=\hat{a}^{\dag}+\alpha^{\ast}.
\end{equation}
The second-to-last equality follows because $p_{N_2
}(\alpha)$ is a probability density function with mean zero and variance
$N_2$ (we have explicitly indicated what each of the four terms
evaluate to in the following line). Let $\varphi_{RA}$ denote a purification
of the state $\mathcal{A}(\phi_{A})$. We can then exploit the joint covariance
of the channels with respect to displacements, the relation in
\eqref{eq:thermal-bigger-thermal}, and
Lemma~\ref{lemma:cov-critical-step-cont} to conclude that%
\begin{equation}
\mathbf{D}(\mathcal{N}_{A\rightarrow B}(\varphi_{RA})\Vert\mathcal{M}%
_{A\rightarrow B}(\varphi_{RA}))\geq\mathbf{D}(\mathcal{N}_{A\rightarrow
B}(\phi_{RA})\Vert\mathcal{M}_{A\rightarrow B}(\phi_{RA})),
\end{equation}
for all $N_{1}\leq N_{S}$. As a consequence of this development, we find that
it suffices to restrict the optimization of the energy-constrained,
generalized channel divergence to pure bipartite states $\varphi_{RA}$ that
meet the energy constraint with equality (i.e., $\operatorname{Tr}\{\hat
{n}\varphi_{A}\}=N_{S}$).

Now we exploit the joint phase covariance of the channels. Let $\varphi_{RA}$
be a pure bipartite state that meets the energy constraint with equality.
Consider that%
\begin{equation}
\overline{\varphi}_{A}\equiv\frac{1}{2\pi}\int_{0}^{2\pi}d\phi\ e^{i\hat
{n}\phi}\varphi_{A}e^{-i\hat{n}\phi}=\sum_{n=0}^{\infty}|n\rangle\langle
n|\varphi_{A}|n\rangle\langle n|.
\end{equation}
That is, the state after phase averaging is diagonal in the number basis, and
furthermore, the resulting state $\overline{\varphi}_{A}$ has the same photon
number $N_{S}$ as $\varphi_{A}$ because%
\begin{align}
\operatorname{Tr}\{\hat{n}\overline{\varphi}_{A}\}  & =\frac{1}{2\pi}\int
_{0}^{2\pi}d\phi\ \operatorname{Tr}\{\hat{n}e^{i\hat{n}\phi}\varphi
_{A}e^{-i\hat{n}\phi}\}\\
& =\frac{1}{2\pi}\int_{0}^{2\pi}d\phi\ \operatorname{Tr}\{e^{-i\hat{n}\phi
}\hat{n}e^{i\hat{n}\phi}\varphi_{A}\}\\
& =\frac{1}{2\pi}\int_{0}^{2\pi}d\phi\ \operatorname{Tr}\{\hat{n}\varphi
_{A}\}=\operatorname{Tr}\{\hat{n}\varphi_{A}\}.
\end{align}
Thus, $\overline{\varphi}_{A}=\sum_{n=0}^{\infty}\lambda_{n}^2 |n\rangle\langle
n|_{A}$, for some $\lambda_{n}\in\mathbb{R}^+$ such that $\sum_{n=0}^{\infty
} \lambda_{n} ^{2}=1$ and $\sum_{n=0}^{\infty}
n \lambda_{n} ^{2}=N_{S}$. Let $\xi_{RA}$ denote a pure bipartite
state that purifies $\overline{\varphi}_{A}$. By applying
Lemma~\ref{lemma:cov-critical-step-cont} and the joint phase covariance
relations in
\eqref{eq:phase-covariance-1-chan-N}--\eqref{eq:phase-covariance-1-chan-M}, we
find that the following inequality holds
\begin{equation}
\mathbf{D}(\mathcal{N}_{A\rightarrow B}(\xi_{RA})\Vert\mathcal{M}%
_{A\rightarrow B}(\xi_{RA}))\geq\mathbf{D}\!\left(  \mathcal{N}_{A\rightarrow
B}(\varphi_{RA})\middle\Vert\mathcal{M}_{A\rightarrow B}(\varphi_{RA})\right)
.
\end{equation}
Since all purifications are related by isometries acting on the purifying
system $R$, and since a generalized divergence is invariant under such an
isometry \cite{TWW17}, we find that%
\begin{equation}
\mathbf{D}(\mathcal{N}_{A\rightarrow B}(\xi_{RA})\Vert\mathcal{M}%
_{A\rightarrow B}(\xi_{RA}))=\mathbf{D}(\mathcal{N}_{A\rightarrow B}(\psi
_{RA})\Vert\mathcal{M}_{A\rightarrow B}(\psi_{RA})),
\end{equation}
where $\psi_{RA}$ is a state of the form in \eqref{eq:psi-state-conj-gen-div}.
This concludes the proof of \eqref{eq:special-states-opt-gen-ch-div}.

To conclude \eqref{eq:TMSV-opt-gauss-gen-ch-div}, consider that the thermal
state $\theta(N_{S})$ is the only Gaussian state of a single mode that is
diagonal in the number basis with photon number equal to $N_{S}$. A
purification of the thermal state $\theta(N_{S})$ is the two-mode squeezed
vacuum $\psi_{\operatorname{TMS}}(N_{S})$ with parameter $N_{S}$. So this
means that, for a fixed photon number $N_{S}$, the two-mode squeezed vacuum
with parameter $N_{S}$ is optimal among all Gaussian states with reduced state
on the channel input having the same photon number.
\end{proof}

\bigskip

We note here that joint phase covariance of two otherwise arbitrary channels
implies that states of the form in \eqref{eq:psi-state-conj-gen-div} with mean
photon number  of their reduced states $\leq N_{S}$ are optimal, while joint
displacement covariance of two otherwise arbitrary channels implies that
states with mean photon number  of their reduced states $=N_{S}$ are optimal.
In Proposition~\ref{prop:gaussian-gen-div}, we chose to present the interesting case of
Gaussian channels in which both kinds of joint covariance hold simultaneously. The
aforementioned result regarding jointly phase-covariant channels was concluded
in \cite{N11} for a special case by employing a different argument and considering the special
case of fidelity and Chernoff-information divergences, as well as the
discrimination of pure-loss channels. It is worthwhile to note that our
argument is different, relying mainly on channel symmetries and data
processing, and thus applies in far more general situations than those considered in \cite{N11}.

Proposition~\ref{prop:gaussian-gen-div} applies to the various settings and
channels that we have considered in this paper for $\varepsilon$-degradable
and $\varepsilon$-close degradable bosonic thermal channels. Thus, we can
conclude in these situations that the Gaussian energy-constrained generalized
channel divergence is achieved by the two-mode squeezed vacuum state.

Particular generalized channel divergences of interest are the
energy-constrained diamond norm \cite{Sh17,Wetal17} and the
energy-constrained, channel version of the $C$-distance
\cite{R02,R03,GLN04,R06},  respectively defined as
\begin{align}
\left\Vert \mathcal{N}-\mathcal{M}
\right\Vert _{\Diamond,G,W}  &  \equiv\sup_{\psi_{RA}\ :\ \operatorname{Tr}
\{G\psi_{A}\}\leq W}\left\Vert (\operatorname{id}_{R}\otimes\mathcal{N}
_{A\rightarrow B})(\psi_{RA})-(\operatorname{id}_{R}\otimes\mathcal{M}
_{A\rightarrow B})(\psi_{RA})\right\Vert _{1},\\
C_{G,W}(\mathcal{N},\mathcal{M})  &
\equiv\sup_{\psi_{RA}\ :\ \operatorname{Tr}\{G\psi_{A}\}\leq W}\sqrt
{1-F((\operatorname{id}_{R}\otimes\mathcal{N}_{A\rightarrow B})(\psi
_{RA}),(\operatorname{id}_{R}\otimes\mathcal{M}_{A\rightarrow B})(\psi_{RA}
))},
\end{align}
where $F$ denotes the quantum fidelity.
Proposition~\ref{prop:gaussian-gen-div}\ implies that the
Gaussian-constrained  versions of these quantities reduce to the following for
channels satisfying  the assumptions stated there:
\begin{align}
\left\Vert \mathcal{N}-\mathcal{M}\right\Vert _{\Diamond,\hat{n},N_{S}}^{\mathcal{G}}  &  =\left\Vert
(\operatorname{id}_{R}\otimes\mathcal{N}_{A\rightarrow B})(\psi
_{\operatorname{TMS}}(N_{S}))-(\operatorname{id}_{R}\otimes\mathcal{M}
_{A\rightarrow B})(\psi_{\operatorname{TMS}}(N_{S}))\right\Vert _{1},\\
C_{\hat{n},N_{S}}^{\mathcal{G}}(\mathcal{N},\mathcal{M}
)  &  =\sqrt{1-F((\operatorname{id}_{R}\otimes\mathcal{N}
_{A\rightarrow B})(\psi_{\operatorname{TMS}}(N_{S})),(\operatorname{id}
_{R}\otimes\mathcal{M}_{A\rightarrow B})(\psi_{\operatorname{TMS}}(N_{S})))}.
\end{align}
We note that the latter quantity is readily expressed as a closed formula in
terms of the Gaussian specification of the channels $\mathcal{N}_{A\rightarrow
B}$ and $\mathcal{M}_{A\rightarrow B}$ in
\eqref{eq:Gaussian-chan-N}--\eqref{eq:Gaussian-chan-M}\ and the parameter
$N_{S}$ by employing the general formula for the fidelity of zero-mean
Gaussian states from \cite{PS00}. One could also employ the formulas from
\cite{SLW17} or \cite{PhysRevA.71.062320,K06} to compute Gaussian,
energy-constrained channel divergences based on R\'enyi relative entropy or
quantum relative entropy, respectively.

The result in \eqref{eq:special-states-opt-gen-ch-div} already significantly
reduces the set of states that we need to consider in computing a given
energy-constrained, generalized channel divergence for channels satisfying the conditions of Proposition~\ref{prop:gaussian-gen-div}. However, it is a very
interesting open question to determine whether, under the conditions given in
Proposition~\ref{prop:gaussian-gen-div}, the energy-constrained generalized
channel divergence is always achieved by the two-mode squeezed vacuum state
(if the restriction to Gaussian input states is lifted). Divergences of
interest in applications are the trace distance, fidelity, quantum relative
entropy, and R\'{e}nyi relative entropies. All of these measures lead to a
very interesting suite of Gaussian optimizer questions, which we leave for
future work. If there is a positive answer to this question, then we would
expect to see, in the low-photon-number regime, significant improvements of
the $\varepsilon$-degradable and $\varepsilon$-close degradable upper bounds
on the capacities of the thermal channel.

\section{Conclusion}

\label{sec:conclusion}

In this paper, we established several bounds on the energy-constrained quantum and private capacities of single-mode, phase-insensitive bosonic Gaussian channels. The energy-constrained bounds imply bounds for the corresponding unconstrained capacities.

In particular, we began by proving several different upper bounds on the energy-constrained quantum capacity of thermal channels. We discussed the closeness of these three upper bounds with a known lower bound. In particular, we have shown that the $\varepsilon$-close degradable bound works well only in the low-noise regime and that the data-processing upper bound is close to a lower bound for both low and high thermal noise. We also discussed an interesting case in which the $\varepsilon$-degradable bound is tighter than all other upper bounds. Also, our results establish strong limitations on any potential superadditivity of coherent information of a thermal channel in the low-noise regime.

Similarly, we established several different upper bounds on the energy-constrained private capacity of thermal channels.
We have also shown an improvement in the achievable rates of private communication through quantum thermal channels by using displaced thermal states as inputs to the channel.

Additionally, we proved several different upper bounds on the energy-constrained quantum and private capacities of quantum amplifier channels. We also established a data-processing upper bound on the energy-constrained quantum and private capacities of additive-noise channels. 

We also found that the data-processing bound can be at most 1.45 bits larger than a known lower bound on the energy-constrained quantum and private capacities of all phase-insensitive Gaussian channels. 

Building on recent developments in \cite{RMG18},  we proved even more bounds on the energy-constrained quantum and private capacities of the aforementioned channels.

Since thermal noise is present in almost all communication and optical systems, our results have implications for quantum computing and quantum cryptography. The knowledge of bounds on quantum capacity can be useful to quantify the performance of distributed quantum computation between remote locations, and private communication rates are connected to the ability to generate secret key.

We finally used the generalized channel divergence from \cite{LKDW17} to address the question of  optimal input states for the energy-bounded diamond norm and other related divergences. In particular, we showed that for two Gaussian channels that are jointly phase and displacement covariant, the Gaussian energy-constrained generalized channel divergence is achieved by a two-mode squeezed vacuum state that saturates the energy constraint. It is an interesting open question to determine whether, among all input states, the two-mode squeezed vacuum is the optimal input state for several energy-constrained, generalized channel divergences of interest. Here, we have reduced the optimization to be as given in \eqref{eq:special-states-opt-gen-ch-div}.

\bigskip
\textbf{Acknowledgements.}
We thank Vittorio Giovannetti, Kenneth Goodenough, Saikat Guha, Felix Leditzky, Iman Marvian,
Ranjith Nair,
Ty Volkoff, Christian Weedbrook, and Andreas Winter for discussions related to this paper. We also acknowledge the catalyzing role of the open problems session at Beyond i.i.d.~2015 (Banff International Research Research Station, Banff, Canada) in which the question of applying approximate degradability to bosonic channel capacities was raised. KS acknowledges support from the Department of Physics and Astronomy at LSU and the National Science Foundation under Grant No.~1714215. MMW
 thanks NICT for hosting him during
Dec.~2015 and acknowledges support from the Office of Naval Research and the National Science Foundation.
SA acknowledges support from the ARO, AFOSR, DARPA, NSF, and NGAS.
MT acknowledges CREST Japan Science and Technology Agency, Grant Number JPMJCR1772, and the ImPACT Program of Council for Science, Technology, and Innovation, Japan.

\appendix

	\bibliographystyle{alpha}
	\bibliography{Ref}
\end{document}